\theoremstyle{definition}
\newtheorem{thm}{Theorem}
\newtheorem{que}{Question}
\newtheorem{prop}{Proposition}[section]
\newtheorem{lthm}[prop]{Theorem}
\newtheorem{lem}[prop]{Lemma}
\newtheorem{cor}[prop]{Corollary}
\newtheorem{rem}[prop]{Remark}
\newtheorem{eg}[prop]{Example}
\newtheorem{dfn}[prop]{Definition}
\newtheorem{obs}[prop]{Observation}
\newtheorem{ntn}[prop]{Notation}
\newcommand{\dft}[1]{\textbf{\textit{#1}}}
\newcommand{\abs}[1]{\left|#1\right|}
\newcommand{\set}[1]{\left\{#1\right\}}
\newcommand{\sucht}{\,\middle|\,}
\newcommand{\N}{\mathbf{N}}
\newcommand{\R}{\mathbf{R}}
\newcommand{\Z}{\mathbf{Z}}
\newcommand{\NP}{\mathrm{NP}}
\newcommand{\Poly}{\mathrm{P}}
\newcommand{\shBIS}{\mathrm{\#BIS}}
\newcommand{\shDOWN}{\mathrm{\#DOWN}}
\newcommand{\shSM}{\mathrm{\#SM}}
\newcommand{\shP}{\mathrm{\#P}}
\newcommand{\ConstructPathDecomposition}{\mathsf{ConstructPathDecomposition}}
\newcommand{\ConstructInstance}{\mathsf{ConstructInstance}}
\DeclareMathOperator{\pw}{pw}
\DeclareMathOperator{\down}{down}
\DeclareMathOperator{\width}{width}
\DeclareMathOperator{\Anc}{Anc}
\renewcommand{\th}{{}^{\mathrm{th}}}
\newcommand{\calI}{\mathcal{I}}
\newcommand{\calM}{\mathcal{M}}
\newcommand{\calP}{\mathcal{P}}
\newcommand{\calR}{\mathcal{R}}
\newcommand{\calX}{\mathcal{X}}
\newcommand{\calY}{\mathcal{Y}}
\newcommand{\myvec}[1]{\vec{#1}}
\newcommand{\veca}{\myvec{a}}
\newcommand{\vecb}{\myvec{b}}
\newcommand{\vecc}{\myvec{c}}
\newcommand{\vecm}{\myvec{m}}
\newcommand{\vecv}{\myvec{v}}
\newcommand{\vecw}{\myvec{w}}
\newcommand{\vecx}{\myvec{x}}
\newcommand{\vecy}{\myvec{y}}
\newcommand{\vecalpha}{\myvec{\alpha}}
\newcommand{\vecbeta}{\myvec{\beta}}
\DeclareMathOperator{\bound}{Bound}
\DeclareMathOperator{\attr}{Attr}
\DeclareMathOperator{\range}{Range}
\DeclareMathOperator{\rank}{rank}
\DeclareMathOperator{\klist}{List}
\DeclareMathOperator{\orank}{\mathbf{f}}
\newcommand{\rmax}{\orank_{\mathrm{max}}}
\newcommand{\rmin}{\orank_{\mathrm{min}}}
\DeclareMathOperator{\ext}{ext}
\title{Stable Matchings with Restricted Preferences: \\ Structure and Complexity}
\author{
  Christine T.\ Cheng\\
  University of Wisconsin, Milwaukee\\
  ccheng@uwm.edu
  \and
  Will Rosenbaum\\
  Amherst College\\
  wrosenbaum@amherst.edu
}      
\date{\today}
\begin{document}

\maketitle
\thispagestyle{empty}

\begin{abstract}
It is well known that every stable matching instance $I$ has a {\it rotation poset} $\calR(I)$ that can be computed efficiently and the downsets of $\calR(I)$ are in one-to-one correspondence with the stable matchings of $I$.  Furthermore, for every poset $\calP$, an instance $I(\calP)$ can be constructed efficiently so that the rotation poset of $I(\calP)$ is isomorphic to $\calP$.  In this case, we say that $I(\calP)$ {\it realizes} $\calP$.   Many researchers exploit the rotation poset of an instance to develop fast algorithms or to establish the hardness of stable matching problems.  

In order to gain a parameterized understanding  of the complexity of sampling stable matchings, Bhatnagar et al.~\cite{Bhatnagar2008-sampling} introduced stable matching instances whose preference lists are restricted but nevertheless model situations that arise in practice.  In this paper, we study four such parameterized restrictions; our goal is to characterize the rotation posets that arise from these models:  

  \begin{enumerate}
  \item \emph{$k$-bounded}, where each agent has at most $k$ acceptable partners;
  \item \emph{$k$-attribute}, where each agent $a$ has an associated vector $\veca \in \R^k$, and $a$ ranks agents in decreasing order according to a linear function $\varphi_a : \R^k \to \R$;
  \item \emph{$(k_1, k_2)$-list}, where the men and women can be partitioned into $k_1$ and $k_2$ sets (respectively) such that within each set all agents have identical preferences;
  \item \emph{$k$-range}, where for each woman $w$, the men's rankings of $w$ differ by at most $k-1$ (and symmetrically for the women's rankings of each man $m$).
  \end{enumerate}
  We prove that there is a constant $k$ so that {\it every} rotation poset is realized by some instance in models 1--3  ($k \geq 3$ for $k$-bounded, $k \geq 6$ for $k$-attribute, and $k_1 \geq 2, k_2 = \infty$ for $(k_1, k_2)$-list, respectively).  We describe efficient algorithms for constructing such instances given the Hasse diagram of a poset. As a consequence, the fundamental problem of counting stable matchings remains $\shBIS$-complete even for these restricted instances.

 For $k$-range preferences, we show that a poset $\calP$ is realizable if and only if the Hasse diagram of $\calP$ has pathwidth bounded by functions of $k$. Using this characterization, we show that the following problems are fixed parameter tractable when parameterized by the range of the instance: exactly counting and uniformly sampling stable matchings, finding median, sex-equal, and balanced stable matchings.
\end{abstract}

\clearpage

\setcounter{page}{1}
\setcounter{tocdepth}{1}

\tableofcontents

\section{Introduction}
\label{sec:introduction}

In the last 60 years, the stable marriage problem and its variants have emerged as central topics in economics, computer science, and mathematics. The basic problem is phrased as follows. Let $M$ and $W$ be two disjoint sets of $n$ agents, traditionally referred to as men and women, respectively. Each agent has a preference list that ranks (all or some) members of the opposite sex.   The goal is to find a matching $\mu$ between $M$ and $W$ such that no $m \in M$ and $w \in W$ mutually prefer each other to their partners in $\mu$. Such a matching is called \emph{stable}. In their seminal work, Gale and Shapley~\cite{Gale1962-college} proved that a stable matching $\mu$ always exists and provided an $O(n^2)$ time algorithm for finding one. This run-time has been shown to be optimal in a variety of computational models~\cite{Gonczarowski2019-stable, Ng1990-lower}.

While a stable matching instance of size $n$ may have $2^{\Theta(n)}$ stable matchings~\cite{Knuth1976-marriage, Karlin2018-simply}, the Gale-Shapley algorithm outputs only two kinds of stable matchings---the man-optimal/woman-pessimal and the woman-optimal/man-pessimal stable matchings.  That is,  these stable matchings are the best for one group of agents but the worst for the other group. To address this inequity, researchers  investigated various notions of ``fair''  stable matchings~\cite{Knuth1976-marriage, Irving1987-efficient, Gusfield1989-stable, Kato1993-complexity, Feder1995-stable, Teo1998-geometry, Cheng2008-generalized, Cheng2010-understanding, Manlove2013-algorithmics, McDermid2014-sex, Cheng2016-eccentricity, Gupta2017-treewidth} and studied the problem of counting and sampling stable matchings uniformly at random~\cite{Irving1986-complexity, Gusfield1987-three, Chebolu2012-complexity, Bhatnagar2008-sampling}. For all of these problems, crucial insights are gained by understanding the combinatorial structure of the set of stable matchings.
 
Irving and Leather~\cite{Irving1986-complexity} showed that every stable matching instance $I$ has an associated rotation poset $\calR(I)$ that determines the structure of the set of stable matchings for $I$. In particular, they proved that the stable matchings of $I$ are in one-to-one correspondence with the downsets of $\calR(I)$ and that $\calR(I)$ can be computed efficiently. Furthermore, given any finite poset $\calP$, there is a stable matching instance $I = I(\calP)$ whose rotation poset is isomorphic to $\calP$, and $I$ can be computed efficiently as well. We say that such an instance $I$ \dft{realizes} $\calP$.  

Many problems related to finding a fair stable matching are NP-hard. They include computing a median stable matching~\cite{Teo1998-geometry, Cheng2008-generalized, Cheng2010-understanding}, a balanced stable matching~\cite{Feder1995-stable} or a sex-equal stable matching~\cite{Kato1993-complexity, McDermid2014-sex}.   A natural approach then is to consider stable matching instances whose preference lists are restricted but  nevertheless model situations that arise in practice.   Below, we consider three such models from the work of Bhatnagar et al.~\cite{Bhatnagar2008-sampling}.  

\begin{itemize}
\item In the \emph{$k$-attribute model}, each agent has an associated $k$-dimensional profile and their preference lists are determined by applying a linear function to the profiles of their potential partners.  The model is motivated by online dating sites.  In this context, participants are frequently asked an extensive set of questions. Some  answers are  used to create a participant's profile while other answers are used to formulate a function that ranks possible dates according to the others' profiles.  

\item In the \emph{$k$-range model}, there is an objective ranking for each set of agents.  Each person ranks agents from the other group to within $k$ of their objective ranks. This model captures the scenario when participants make use of ``official rankings" to create their preference lists.  Students, for instance, might use their state's ranking of nearby high schools to guide their choices, while the schools might base their preferences according to the students' test scores. The participants do not have to copy the official rankings exactly; but they cannot deviate that much from the objective ranks when they make their choices.  

\item In the \emph{$k$-list model}, each set of agents can be partitioned into $k$ groups such that agents within each group have identical preference lists.  The model applies to situations when students who aspire to be engineers all have the same preferences, students who plan to pursue a business-related major have the same preferences, etc.  Similarly, high schools with a STEM focus may rank students the same way while high schools with a focus on the arts may have their own rankings. 
\end{itemize}

Bhatnagar et al.~\cite{Bhatnagar2008-sampling} studied the problem of sampling of stable matchings using the Markov Chain Monte Carlo (MCMC) method. For general instances, however, the method can take exponential time to converge to equilibrium. Such worst-case behavior occurs, for example, when the rotation poset of the instance  has $d$ bottom elements $b_1, b_2, \hdots, b_d$, $d$ top elements $t_1, t_2, \hdots, t_d$  and a single element that $c$ such that $b_i \prec c \prec t_j$ for all $i, j \in \{1, 2, \hdots, d\}$ with $d = \theta(n)$. We refer to this poset as the star poset $\calP_d$.

When the preference lists of the agents are restricted, we expect the rotation posets of the instances to be restricted as well.  Yet Bhatnagar et al.\ showed that there are instances from each of the three models described above that have $\calP_d$ as a rotation poset  even when the parameter $k$ is a small constant (i.e., $k = 2$ in the $k$-attribute model, $k = 5$ in the $k$-range model and $k = 4$ in the $k$-list model).  That is, even under narrow conditions, MCMC is not a feasible method for sampling stable matchings.  

Following~\cite{Bhatnagar2008-sampling}, Chebolu et al.~\cite{Chebolu2012-complexity} considered the problem of counting the number of stable matchings ($\shSM$) for instances in the $k$-attribute model and a related $k$-Euclidean model (introduced by Bogomolnaia and Laslier~\cite{Bogomolnaia2007-euclidean}). They showed that $\shSM$ is $\shBIS$-complete even when the preference lists are restricted to the $3$-attribute and $2$-Euclidean models. Thus, under a widely believed conjecture of Dyer et al.~\cite{Dyer2004-relative}, $\shSM$ does not admit a fully polynomial randomized approximation scheme (FPRAS). Chebolu et al. established their result for the $3$-attribute model by showing that for any bipartite graph $G$, there is a $3$-attribute stable matching instance whose rotation poset's Hasse diagram is isomorphic to $G$.\footnote{The intent is for the independent sets of $G$ to be in one-to-one correspondence with the downsets of the rotation poset.}  

Bhatnagar et al.~\cite{Bhatnagar2008-sampling} and Chebolu et al.~\cite{Chebolu2012-complexity} established their hardness results by constructing stable matching instances with restricted preference lists that realized \emph{certain} families of rotation posets.  Our goal is to go a step further: to \emph{characterize} the family of rotation posets realized by restricted families of stable matching instances.

\subsection{Our Contributions}


In this paper, we study the expressive power of restricted preference list models in terms of the rotation posets realized by instances of the models. We characterize the rotation posets for the $k$-attribute and $k$-range models, as well as the following preference list models:
\begin{description}
\item[$k$-bounded model:] Each agent's preference list has length at most $k$. 
\item[$(k_1, k_2)$-list model:] The men and women can be partitioned into $k_1$ and $k_2$ groups respectively such that within each group all agents have identical preference lists.   When we allow the women (or the men) to have any kind of preference lists, we refer to the model as the $(k_1, \infty)$-list (or the $(\infty, k_2)$-list) model. Notice that the $(k_1, k_2)$-list model generalizes Bhatnagar et al.'s $k$-list model---their $k$-list is precisely our $(k_1, k_2)$-list model with $k_1 = k_2 = k$.  
\end{description}



Except for the $k$-bounded model, we shall assume that the preference lists of all agents in these models are complete.\footnote{The case for incomplete preference lists will be addressed in the discussion section of the paper.} We now describe our main contributions.


\paragraph{Generic instance construction.}  First, we present  a generic construction algorithm that given $H(\calP)$, the Hasse diagram of  poset $\calP$,  and an edge coloring $\phi$ of $H(\calP)$ returns a stable matching instance $I$ whose rotation poset is isomorphic to $\calP$.  The number of agents of $I$ and the running time of the algorithm is $O(p+q)$,  where $p$ and $q$ are the number of vertices and edges of $H(\calP)$ respectively (cf.\ Theorem~\ref{thm:generic-construct}). 

More importantly for us, the algorithm is fairly simple and yet remarkably versatile. For a fixed poset $\calP$, we can produce stable matching instances with different properties that still realize $\calP$ by changing the input edge coloring and tweaking some parts of the algorithm.  For example, when $\phi$ assigns the same color to all the edges of $H(\calP)$, the algorithm produces an instance with $4p$ agents---an improvement to Irving and Leather's construction~\cite{Irving1986-complexity} when $H(\calP)$ is a dense graph because their instance's number of agents depends on both $p$ and $q$. In order to obtain instances $I$ realizable in the $k$-bounded model, we take $\phi$ to be a proper edge coloring of $H(\calP)$, while for the $(k_1, k_2)$-list model, $\phi$ assigns the color $v$ to all edges leaving node $v$. 

\paragraph{Constructing instances in restricted models.} Using the generic construction algorithm, we establish that for a given $\calP$, stable matching instances that realize $\calP$ can be constructed efficiently for  the following preference list models:
\begin{itemize}
\item $k$-bounded for any $k \geq 3$ (cf.\ Theorem~\ref{thm:k-bound}),
\item $k$-attribute for any $k \geq 6$ (cf.\ Theorem~\ref{thm:k-attr}),
\item $(k, \infty)$-list or $(\infty, k)$-list for any $k \geq 2$ (cf.\ Theorem~\ref{thm:k-list}).
\end{itemize}
In other words, the above  models are ``rich'' enough that {\it every} poset can be realized by some instance in that model.

\begin{rem}
    Irving and Leather's construction~\cite{Irving1986-complexity} produces incomplete preference lists that realize an arbitrary poset where the men's preference lists have lengths at most 3. However, the women's preference lists can be arbitrarily long in their construction (for instance if the poset is a chain of length $p$, some woman will have a preference list of length $p$). Thus our result for $k$-bounded preferences---in which all agents have bounded preference lists---is not implied by previous work.
\end{rem}

\begin{rem}
    In~\cite[Theorem~3.1]{Bhatnagar2008-sampling}, Bhatnagar et al.\ proved that for any fixed instance size $n > 2$, there is a poset $\calP$ realized by a (general) SM instance $I$ of size $n$ that cannot be realized by any $k$-attribute instance for any $k < n / 2$. On the other hand, our result for the $k$-attribute model implies that there is some $6$-attribute instance that realizes $\calP$, albeit with an instance size $n' > n$. (Some $n' = O(n^4)$ suffices.)
\end{rem}

Our results for the $k$-bounded and $(k, \infty)$-list (or $(\infty, k)$-list) models are tight.  The rotation posets of instances in the $1$-bounded model can only be the empty poset because all the instances have only one stable matching while those in the $2$-bounded model can only be antichains.  Thus, it is surprising that in the $3$-bounded model {\it all} posets can be a rotation poset of some instance.   

In the $(\infty, k)$-list model, when $k = 1$, all the women have the same preference list.  This model is equivalent to the one studied by Irving et al.~\cite{Irving2008-stable} where they show that all instances from this model have a unique stable matching. Once again, these instances have the empty poset as a rotation poset. But when $k = 2$, the women can have two different preference lists. According to our result, a drastic change occurs and {\it any} poset can be the rotation poset of some instance. 

As for the $k$-attribute model, Chebolu et al.~\cite{Chebolu2012-complexity} proved that instances in the $1$-attribute model have  paths as rotation posets.  Bhatnagar et al.~\cite{Bhatnagar2008-sampling} showed that instances in the $2$-attribute model can realize arbitrary star posets as their rotation posets. We suspect that our result can be improved to show that instances in the $k$-attribute model, with some $k < 6$, can have an arbitrary rotation poset.

Our structural results for the $k$-bounded, $k$-attribute, and $(k, \infty)$- and $(\infty, k)$-list models have the following implications: any ``structural'' stable matching problem---i.e., any problem whose solution depends only on the rotation poset of an instance---is as hard in these restricted models as the general case. We state two explicit consequences:
\begin{itemize}
\item In the $k$-bounded ($k \geq 3$), $k$-attribute ($k \geq 6$), and $(k, \infty)$- and $(\infty, k)$-list ($k \geq 2$) models, $\shSM$ is $\shBIS$-complete.
\item In the models above, it is $\shP$-hard to find generalized median stable matchings.
\end{itemize}

\paragraph{Characterization of $k$-range posets.} For the $k$-range model, we prove that there is no fixed constant $k^*$ such that every poset can be realized by some instance in the $k^*$-range model (cf.\ Corollary~\ref{cor:no-constant}). Instead, we show that the ``range'' of an stable matching instance $I$ and the pathwidth of a poset are connected in the following sense:
\begin{itemize}
  \item When $H(\calP)$ has a path decomposition of width $k$, there is an instance in the $O(k)$-range model that realizes $\calP$.  The instance has $O(kp)$ agents and can be constructed in $O(k^2p^2)$ time where $p$ is the number of elements in $\calP$ (cf.\ Theorem~\ref{thm:k-range-construct}).
 
  \item On the other hand, suppose $I$ is a $k$-range instance.   Then the Hasse diagram $H$ of $\calR(I)$ has a pathwidth of $O(k^2)$ (cf.\ Theorem~\ref{thm:k-range-upper}). Moreover, a path decomposition of $H$ can be computed in time polynomial in the instance size, independent of $k$.
\end{itemize}

Using this characterization of rotation posets realized by $k$-range instances, we show that many problems that are computationally hard for general instances are fixed-parameter tractable (FPT) in the $k$-range model. Specifically, we show that the following problems admit FPT algorithms parameterized by the range of the instance\footnote{Every stable matching instance $I$ is a $k$-range instance for some value of $k$. The \emph{range} of $I$ is the minimum $k$ for which $I$ is a $k$-range instance.}:
\begin{itemize}
\item exactly counting stable matchings,
\item sampling stable matchings exactly uniformly,
\item computing generalized median stable matchings,
\item finding balanced stable matchings,
\item finding sex-equal stable matchings.
\end{itemize}
Since every stable matching instance is $k$-range for some value of $k$, these results show that the range of an instance is a valuable parameter through which to study the parameterized complexity of stable marriage problems.

\subsection{Related Work}

Gale and Shapley~\cite{Gale1962-college} first proved that all stable marriage problem (SMP) instances admit stable matchings, and described an efficient algorithm for finding one. Knuth posed the question of characterizing the structure of the set of stable matchings as Research Problem~6 in~\cite{Knuth1976-marriage} (also in English translation~\cite{Knuth1997-stable}). Irving and Leather's work~\cite{Irving1986-complexity} introduced the notion of the rotation poset and showed how it determines the structure of the set of stable matchings. Using this structural characterization of the set of stable matchings, Irving and Leather showed that $\shSM$ is $\shP$-complete, implying that $\shSM$ cannot be solved in polynomial time unless $\Poly = \NP$~\cite{Valiant1979-complexity}. Specifically, Irving and Leather demonstrated that counting stable matchings is equivalent to $\shDOWN$ (counting downsets in posets), a problem which was shown to be $\shP$-complete by Provan and Ball~\cite{Provan1983-complexity}. $\shDOWN$ was also shown to be $\shBIS$-complete by Dyer et al.~\cite{Dyer2004-relative}, hence Irving and Leather's proof also implies the $\shBIS$-completeness of $\shSM$.\footnote{The class $\shBIS$ is the class of counting problems reducible to counting independent sets in bipartite graphs under polynomial-time approximation-preserving transformations. In~\cite{Dyer2004-relative}, Dyer et al.\ call this class $\#\mathrm{RH\Pi}_1$. However we adopt the convention, common in recent works, to associate the class $\#\mathrm{RH\Pi}_1$ with its canonical complete problem, $\shBIS$. Dyer et al.\ conjecture that no $\shBIS$-complete problem has a fully polynomial randomized approximation scheme (FPRAS), and prove that $\shDOWN$ is $\shBIS$-complete.} 

The structural properties of stable matchings revealed by Irving and Leather's work have been exploited in numerous subsequent works to understand the complexities of various tasks related to the SMP. The classical book of Gusfield and Irving~\cite{Gusfield1989-stable} (and references therein) describes some early applications, while the more recent book of Manlove~\cite{Manlove2013-algorithmics} gives an expansive overview of the SMP and its variants. Recently, Karlin et al.~\cite{Karlin2018-simply} analyzed novel features of the rotation poset in order to show that any SM instance of size $n$ has at most $c^{n}$ stable matchings for some constant $c$, thereby making significant progress towards another of Knuth's longstanding open problems.

Bhatnagar et al.~\cite{Bhatnagar2008-sampling} introduced the $k$-attribute and $k$-range preference models we study, as well as the $k$-list model which our $(k_1, k_2)$-list model generalizes. When $k_1 = 1$ or $k_2 = 1$, our $(k_1, k_2)$-model corresponds to ``master preference lists'' (cf.\ Irving et al.~\cite{Irving2008-stable}). Chebolu et al.~\cite{Chebolu2012-complexity} show that counting stable matchings is $\shBIS$ complete in the $k$-attribute model for any $k \geq 3$, as well as the $k$-Euclidean model of Bogomolnaia and Laslier~\cite{Bogomolnaia2007-euclidean}. Thus our Theorem~\ref{thm:k-attr} gives an alternative (arguably simpler) proof of Chebolu et al.'s $\shBIS$-hardness result, albeit for the weaker condition $k \geq 6$.



Finding (and verifying) stable matchings with restricted preferences has also been studied in the centralized~\cite{Kunnemann2019-subquadratic} and distributed~\cite{Khanchandani2017-distributed} settings. K\"unnemann et al.~\cite{Kunnemann2019-subquadratic} showed that for some instances of $k$-attribute, $k$-list, $k$-Euclidean preference models, stable matchings can be computed in $o(n^2)$ time when $k = O(1)$. For $k = \omega(\log n)$, however, $k$-attribute and $k$-Euclidean preferences require $\Omega(n^2)$ time, assuming the strong exponential time hypothesis. In the distributed setting, Khanchandani and Wattenhofer~\cite{Khanchandani2017-distributed} study a preference model equivalent to the $k$-range model. They show that in this model, a stable matching can be computed in $\Theta(k \cdot n)$ distributed rounds (with each node sending or receiving a single $O(\log n)$ bit message each round). 

The parameterized complexity of two $\NP$-hard variants of the SMP were studied by Marx and Schlotter~\cite{Marx2010-parametrized, Marx2011-stable}. The first paper~\cite{Marx2010-parametrized} studies the SMP with incomplete preferences and ties, a problem for which finding a maximum size stable matching is $\NP$-hard~\cite{Manlove2002-hard}. The second paper~\cite{Marx2011-stable} analyzes the Hospital/Residents problem with couples, for which it is $\NP$-hard to determine if a stable matching exists. Marx and Schlotter show that the two problems are fixed-parameter tractable (parameterized by the number of ties, the maximum length, or overall length of ties for the former problem, and the number of couples in the latter problem). That is, these problems can be solved in polynomial time whenever the relevant parameter is a fixed constant (independent of $n$). For an overview of parameterized complexity, see~\cite{Cygan2015-parametrized}. 
 
Recently, Gupta et al.~\cite{Gupta2017-treewidth} considered the parameterized complexity of several other hard variants of the SMP, parameterized by the treewidth of the ``primal graph'' (i.e., the graph of acceptable partners), and the treewidth of the Hasse diagram of the rotation poset of an instance. In particular, they give FPT algorithms for computing sex-equal and balanced stable matchings parameterized by treewidth of the Hasse diagram. Combined with our characterization of the rotation posets arising from $k$-range preferences in Theorem~\ref{thm:k-range-upper}, Gupta et al.'s results immediately imply FPT algorithms for finding sex-equal and balanced stable matchings parameterized by the range of an instance (cf.\ Corollary~\ref{cor:sex-equal-balanced}). In~\cite{Gupta2019-balanced}, Gupta et al.\ studied the parameterized complexity of finding a balanced stable matching, parameterized by the maximum balance that can be achieved.

\subsection{Paper Overview}

Here we give a high level overview of the remainder of the paper. In Section~\ref{sec:background}, we give the necessary background and notation to understand the technical portion of the paper. In order to make our paper as self-contained as possible, Section~\ref{sec:background} contains substantial background on the SMP, especially the structural results of Irving and Leather. We formally introduce the four models of restricted preference lists studied later in the paper, and give a brief overview of the results for pathwidth needed in our analysis of the $k$-range model.

In Section~\ref{sec:generic-construction}, we present the generic construction algorithm $\ConstructInstance$. The construction takes a finite poset $\calP$---or more specifically, the Hasse diagram of $\calP$, $H(\calP)$---as well as an edge coloring of $H(\calP)$, and constructs a stable matching instance whose rotation poset is isomorphic to $\calP$. Our generic construction is conceptually different from the construction of Irving and Leather~\cite{Irving1986-complexity}, though our analysis is similar to the one described in Gusfield and Irving's~\cite[Section~3.8]{Gusfield1989-stable} book for Irving and Leather's construction. 

Section~\ref{sec:k-bounded} contains our main results for the $k$-bounded and $k$-attribute models. The argument for the $k$-bounded model is straightforward and makes use of results from Section~\ref{sec:generic-construction}.  The geometric approach we take for the $k$-attribute model is similar in spirit, but more general than Bhatnagar et al.'s construction. Our key technical tool is Gale's construction of ``neighborly polytopes''~\cite{Gale1963-neighborly}. 

In Section~\ref{sec:k-list}, we develop our main result for the $(k,\infty)$-list model by first constructing an instance with incomplete preference lists using $\ConstructInstance$.  We then embed the men's lists into two complete preference lists (while completing the women's preference lists arbitrarily)  and show that the rotation poset of the new instance is the same as the original instance.

Sections~\ref{sec:k-range} and \ref{sec:k-range-algo} present our main results for the $k$-range model. In Section~\ref{sec:k-range},  we begin with a width-$k$ path decomposition of $H(\calP)$, the Hasse diagram of poset $\calP$. We tweak $\ConstructInstance$ so that it incorporates the path decomposition into the construction of a stable matching instance. We then expand each agents' preference list into a complete one, making sure that the instance has range $O(k)$.  In Section~\ref{sec:k-range-algo}, we study instances in the $k$-range model and show that a path decomposition of width $O(k^2)$ can be efficiently computed for their rotation posets.  In Section~\ref{sec:k-range-consequences}, we show how the path decompositions can be used to obtain FPT algorithms for several computationally hard problems: counting and uniformly sampling stable matchings, and finding a median, sex-equal, and balanced stable matchings.

Finally, the paper concludes with Section~\ref{sec:conclusions} which discusses related questions and directions for future work.

\section{Background and Preliminaries}
\label{sec:background}

In the classical version of the stable marriage problem, a \dft{stable matching instance} $I = (M, W, P)$ consists of a set of $n$ men $M$ and a set of $n$ women $W$ together with their  preferences $P$.  For each $m \in M$, $P$ contains a \dft{preference list} $P_m$ that is a total ordering of $W$, and symmetrically each $w \in W$ has a corresponding preference list $P_w$ that is a total ordering of $M$ in $P$.   For each  $a \in M \cup W$, we refer to $P_a(1)$ as $a$'s most preferred partner, $P_a(2)$ as $a$'s second most preferred partner etc.  If $b = P_a(i)$, we say that $a$ assigned $b$ a \dft{rank} of $i$ and write $P_a(b) = i$.  Throughout this paper, we assume that $n = \abs{M} = \abs{W}$ is the \dft{size} of the instance.

A \dft{matching} $\mu$ of $I$ is a set of $n$ man-woman pairs such that every agent is part of exactly one pair.   Suppose $(m, w'), (m', w) \in \mu$ with $m \neq m'$ and $w \neq w'$.  We call $(m, w)$ a \dft{blocking pair} of $\mu$  if $m$ and $w$ mutually prefer each other to their partners in $\mu$; that is,  $P_m(w) < P_m(w')$ and $P_w(m) < P_w(m')$. We say that $\mu$ is a \dft{stable matching} if it has no blocking pairs.  Gale and Shapley's seminal paper proved that {\it every} stable matching instance $I$  has a stable matching~\cite{Gale1962-college}.  Moreover, they presented an algorithm that finds a stable matching of $I$ in $O(n^2)$ time (assuming preferences can be accessed and compared in unit time).   

A common extension of the stable marriage problem allows for the number of men and women to be different and for each agent $a$ to have a preference list that ranks only a subset of \dft{acceptable} partners from the opposite group (i.e., the agents may have \dft{incomplete preference lists}).  The agents $m$ and $w$ form an \dft{ acceptable pair} if the two agents are acceptable to each other.  For such an instance, a \dft{matching} $\mu$ is now a set of acceptable pairs such that each agent  is part of at most one pair.  We call $(m,w)$ \dft{a blocking pair} of $\mu$ if (i) it is an acceptable pair,  (ii) $m$ is either unmatched or prefers $w$ to his partner in $\mu$ and (iii) $w$ is either unmatched or prefers $m$ to his partner in $\mu$.   Once again, a matching is \dft{stable} if it has no blocking pairs. 
Like the classical case,  every instance in this setting has a stable matching and the Gale-Shapley algorithm can easily be modified to find one.  Furthermore, a well-known result by Gale and Sotomayor~\cite{Gale1985-remarks} (cf.~\cite{Roth1986-allocation}) states that \emph{every} stable matching of the instance has the same number of pairs and matches exactly the same set of agents. 
  
To streamline our discussion, we shall use \dft{SM instances} to refer to instances in both the classical version and its extension to incomplete preference lists. When we consider instances in the classical case only, we shall say that the SM instances have \dft{complete preference lists}.   On the other hand, when we consider instances with incomplete preference lists, we shall assume that the preference lists of all agents are \dft{consistent} (that is, $a$ is in $b$'s preference list if and only if $b$ is in $a$'s preference list) and all agents appear in every stable matching of $I$.\footnote{Making preference lists consistent and removing participants that are never part of a stable matching do not affect the set of stable matchings of $I$.} Hence, we can again define the \dft{size} of $I$ as $\abs{M} = \abs{W}$.  

In general, an SM instance  can have many stable matchings. We say that a pair $(m,w)$ is a \dft{stable pair} if it is part of  some stable matching of the instance.   We also say that $m$ and $w$ are each other's \dft{stable partners}.  The Gale-Shapley algorithm can produce only two types of stable matchings.  In its man-oriented version, the output is the \dft{man-optimal stable matching}, which is also the \dft{woman-pessimal stable matching}.  That is,  every man is matched to his best stable partner while simultaneously every woman is matched to her worst stable partner.  In the woman-oriented version, the output is the \dft{woman-optimal stable matching}, which is also the \dft{man-pessimal stable matching} and defined accordingly.

\subsection{Posets and DAGs}

A \dft{partially ordered set} or \dft{poset}  $\calP = (X, \prec)$ consists of a set $X$ and a binary relation $\prec$ on the elements of $X$ that is antisymmetric and transitive.\footnote{Throughout the paper, we will assume that all posets have a finite number of elements.} Two posets $\calP = (X, \prec)$ and $\calP' = (Y, \prec')$ are \dft{isomorphic} if there exists bijection $f \colon X \to Y$ satisfying $x \prec y \iff f(x) \prec' f(y)$.  The \dft{Hasse diagram} of $\calP$ is the directed acyclic graph $H(\calP) = (X,E)$ such that 
\[
E = \set{(x, y) \in X \times X \sucht x \prec y \text{ and no } z \text{ satisfies } x \prec z \prec y}.
\]
A subset $Z \subseteq X$  is a \dft{downset} (also a \dft{closed subset} or an \dft{order ideal})  of $\calP$ if  for every  $z \in Z$ and $y \in X$ such that  $y \prec z$ then $y \in Z$.  That is, if $z \in Z$ then every \dft{predecessor} $y$ of $z$  is also in $Z$. 
 
We extend the above terminology to directed acyclic graphs (DAGs).  Let $G = (V, E)$ be a DAG. We call a subset $Z \subseteq V$ a \dft{downset} of $G$ if for every $z \in Z$ and $y \in V$ such that there is a path from $y$ to $z$ (i.e., $y$ is an \dft{ancestor} of $z$) then $y \in Z$.  We denote the number of downsets of $G$ by $\down(G)$.   
   
   The \dft{transitive closure} of DAG $G=(V,E)$ is another DAG $G' = (V, E')$ such that $(u,v) \in E'$ if and only if there is a path from $u$ to $v$ in $G$.  In the later parts of the paper,  instead of the Hasse diagram of a poset $\calP = (X, \prec)$, we will work  with a DAG ${H} = (X,E)$ so that the transitive closures of ${H}$ and $H(\calP)$  are exactly the same. That is, with slight abuse of notation,  their transitive closure is $\calP$.   We note the following.
  
\begin{obs}
Let $H$ and $H'$ be two DAGs  whose transitive closures are exactly the same.  Then $Z$ is a downset of $H$ if and only if $Z$ is a downset of $H'$. 
\end{obs}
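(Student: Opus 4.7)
The plan is to trace the definition of ``downset'' back to reachability. For a DAG $G = (V,E)$, a set $Z \subseteq V$ is a downset precisely when, for every $z \in Z$, every ancestor of $z$ in $G$ also lies in $Z$; and by definition $y$ is an ancestor of $z$ in $G$ iff there exists a directed path from $y$ to $z$ in $G$. But the existence of such a path is exactly the condition $(y,z) \in E^*$, where $E^*$ is the edge set of the transitive closure of $G$. Hence the ancestor relation $\mathrm{Anc}_G$ and therefore the collection of downsets of $G$ depends only on the transitive closure of $G$, not on $G$ itself.

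Applying this to the two DAGs $H$ and $H'$ in the statement: they share the same vertex set (implicit in the observation, since otherwise $Z$ could not be a subset of both) and, by hypothesis, the same transitive closure. Consequently $\mathrm{Anc}_H(z) = \mathrm{Anc}_{H'}(z)$ for every $z$, and the defining condition for $Z$ to be a downset of $H$ reads identically to the defining condition for $Z$ to be a downset of $H'$. Both directions of the biconditional follow simultaneously. There is no substantive obstacle; the whole argument is a short definition-chase, and the only minor point requiring care is making the equivalence ``ancestor in $G$'' $\Leftrightarrow$ ``edge in the transitive closure of $G$'' explicit so that the role of the common transitive closure is clear.
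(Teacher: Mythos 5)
Your argument is correct and is exactly the intended justification: the paper states this as an Observation without proof, since the downset condition is defined purely in terms of the ancestor (reachability) relation, which is precisely the transitive closure. Your explicit definition-chase fills in that immediate reasoning with no gaps.
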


In the problem $\shDOWN$, we are given a poset $\calP$ and the goal is to count the number of distinct downsets of $\calP$.  The above observation implies that if $H$ is a DAG whose transitive closure is $\calP$,  then $\down(H)$ is exactly the number of downsets of $\calP$.

\subsection{The Rotation Poset}

To explore all the stable matchings of an instance $I$, we need to consider its \emph{rotation poset}, described below.

\begin{dfn}
  \label{dfn:rotation}
  Let  $\mu$ be a stable matching of SM instance $I$. Let $\rho$ be a circular list of man-woman pairs from $\mu$:
  \[
  \rho = (m_1, w_1), (m_2, w_2), \ldots, (m_\ell, w_\ell). 
  \]
  We say that $\rho$ is a \dft{rotation exposed in $\mu$} if for all $i = 1, 2, \ldots, \ell$, $w_{i+1}$ is the first woman on $m_i$'s preference list after $w_i$ who prefers $m_i$ to her partner $m_{i+1}$ in $\mu$ (where by convention $m_{\ell+1} = m_1$). That is, for all $i$ we have:
  \begin{itemize}
  \item $w_{i+1}$ prefers $m_i$ to $m_{i+1}$, and
  \item there is no $w$ with $P_{m_i}(w_i) < P_{m_i}(w) < P_{m_{i}}(w_{i+1})$ such that $w$ prefers $m_i$ to her partner $m$ in $\mu$.
  \end{itemize}
\end{dfn}

If $\rho = (m_1, w_1), \ldots, (m_\ell, w_\ell)$ is a rotation exposed in $\mu$, then we can form the \dft{elimination of $\rho$}, denoted $\mu \setminus \rho$, as follows:
\[
\mu \setminus \rho = \mu \setminus \set{(m_i, w_i) \sucht i = 1, 2, \ldots, \ell} \cup \set{(m_i, w_{i+1}) \sucht i = 1, 2, \ldots, \ell}.
\]
The matching $\mu \setminus \rho$ is another stable matching. In fact, Irving and Leather showed that every stable matching $\mu$ can be obtained by starting from the man-optimal stable matching $\mu_0$ and eliminating a sequence of rotations. More formally, they proved the following.

\begin{lem}[Irving \& Leather~\cite{Irving1986-complexity}]
  \label{lem:rotation-elimination}
  Let $\mu$ be a stable matching of SM instance $I$. Then there exists a sequence of rotations $\rho_0, \rho_1, \ldots, \rho_{k-1}$ and stable matchings $\mu_0, \mu_1, \ldots, \mu_k = \mu$ such that $\mu_0$ is the man-optimal stable matching and,  for each $i$, $\rho_i$ is a rotation exposed in $\mu_i$, and $\mu_{i+1} = \mu_i \setminus \rho_i$. Moreover, the set $\set{\rho_0, \rho_1, \ldots, \rho_{k-1}}$ uniquely specifies $\mu$. 
\end{lem}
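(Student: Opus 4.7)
The plan is an iterative construction starting from $\mu_0$: at each step, identify a rotation exposed in the current matching $\mu_i$ that moves \emph{toward} $\mu$, eliminate it to obtain $\mu_{i+1}$, and continue until $\mu_i = \mu$. The key invariant I would maintain is that every $\mu_i$ is a stable matching in which each man weakly prefers his $\mu_i$-partner to his $\mu$-partner; this holds initially because $\mu_0$ is man-optimal.

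Suppose $\mu_i \neq \mu$ and let $M^{*} = \{m : \mu_i(m) \neq \mu(m)\}$; the invariant forces each $m \in M^{*}$ to strictly prefer $\mu_i(m)$ to $\mu(m)$. For each $m \in M^{*}$, define $s(m)$ to be the \emph{first} woman $w$ strictly past $\mu_i(m)$ on $P_m$ who prefers $m$ to her $\mu_i$-partner. I would argue that $s(m)$ is well-defined by observing that $w = \mu(m)$ qualifies: it appears past $\mu_i(m)$ on $P_m$ by the invariant, and by the dual lattice fact that each woman weakly prefers her $\mu$-partner to her $\mu_i$-partner (strictly when mismatched), $\mu(m)$ strictly prefers $m$ to her $\mu_i$-partner. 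Let $g(m) = \mu_i(s(m))$; after checking $g$ maps $M^{*}$ into itself, iterate $g$ from any starting point to obtain a cycle $m_1 \mapsto m_2 \mapsto \cdots \mapsto m_\ell \mapsto m_1$, and form $\rho_i = (m_1, \mu_i(m_1)), \ldots, (m_\ell, \mu_i(m_\ell))$. That $\rho_i$ is a rotation exposed in $\mu_i$ in the sense of Definition~\ref{dfn:rotation} is immediate from the word ``first'' in the definition of $s$.

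A standard Gale--Shapley-style stability check then shows $\mu_{i+1} = \mu_i \setminus \rho_i$ is stable and preserves the invariant, while $|M^{*}|$ strictly decreases; since $|M^{*}| \leq n$, the process terminates at some $\mu_k = \mu$. For the uniqueness claim, I would argue that any two rotations produced in the sequence are disjoint as sets of pairs---two rotations sharing some pair $(m, w)$ would, by the ``first-woman-after'' condition, be forced to coincide---so the $\mu$-partner of each man is determined by the set $\{\rho_0, \ldots, \rho_{k-1}\}$ alone: it is $\mu_0(m)$ if no rotation in the set contains $m$, and otherwise the terminal partner along the uniquely determined chain of pair-swaps induced by the rotations in the set that contain $m$.

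The step I expect to be the main obstacle is verifying that $\mu_{i+1}$ is again stable and still satisfies the invariant. Ruling out blocking pairs requires pulling together both the minimality built into $s$ (to prevent an intermediate woman from blocking with some $m_j$) and the dual lattice observation about women's preferences (to prevent any woman outside $\rho_i$ from blocking with an $m_j$ whose partner has just worsened); several cases interact subtly with the ``first'' in Definition~\ref{dfn:rotation}. Once that analysis is in hand, the decrease of $|M^{*}|$ and the disjointness argument underpinning uniqueness are comparatively routine.
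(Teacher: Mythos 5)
First, a caveat about the comparison: the paper states this lemma as a cited result of Irving and Leather and provides no proof of its own, so there is nothing internal to measure your argument against. Your construction is the standard ``minimal differences'' argument from the literature (cf.\ Gusfield and Irving's book), and most of it checks out: the invariant, the well-definedness of $s(m)$ via the dual fact about the women's preferences, the closure of $g$ on $M^{*}$, and the observation that the eventual cycle of $g$ is by construction a rotation exposed in $\mu_i$ are all sound. The uniqueness argument is also essentially right, though the reason two rotations in the sequence cannot share a pair $(m,w)$ is not really the ``first-woman-after'' condition (which is evaluated relative to two \emph{different} matchings for the two rotations) but your invariant: once the rotation containing $(m,w)$ is eliminated, $m$ sits strictly below $w$ on his own list in every later $\mu_j$, so the pair cannot recur.

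The genuine gap is the termination argument. It is false that $\abs{M^{*}}$ strictly decreases at each step: $\abs{M^{*}}$ drops only if some man in the eliminated rotation lands \emph{exactly} on his $\mu$-partner, and nothing forces this. Concretely, take the size-$3$ ``Latin square'' instance ($m_i$ ranks $w_i, w_{i+1}, w_{i+2}$ cyclically, $w_j$ ranks $m_{j+1}, m_{j+2}, m_j$), whose three stable matchings form a chain $\mu_0 \prec \mu_1 \prec \mu_2$ realized by two rotations each containing all three men; with target $\mu = \mu_2$, after the first elimination every man is still mismatched, so $\abs{M^{*}}$ remains $3$. More generally the number of rotations needed can be $\Theta(n^2)$, so no argument bounding the number of steps by $n$ can succeed. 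The standard repair is a potential argument: each elimination moves every man in the chosen rotation strictly down his own list, so $\sum_{m} P_m(\mu_i(m))$ strictly increases, while your invariant bounds it above by $\sum_m P_m(\mu(m))$; hence the process halts, and since a rotation can be produced whenever $M^{*} \neq \varnothing$, it can only halt with $\mu_k = \mu$. With that substitution the proof goes through.
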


We denote as $R(I)$ the set of all rotations exposed in any stable matching of $I$.  Lemma~\ref{lem:rotation-elimination} shows that every stable matching $\mu$  of $I$ can be associated with a unique subset $R_\mu \subseteq R(I)$.  For example,  the man-optimal stable matching, $\mu_0$, corresponds to the empty set  while the woman-optimal stable matching, $\mu_z$, corresponds to $R(I)$.   Not all subsets of $R(I)$, however, correspond to some stable matching of $I$. To characterize \emph{which} ones do, Irving and Leather defined a poset structure on $R(I)$ as follows.

\begin{dfn}
  \label{dfn:rotation-poset}
  Suppose $\rho, \rho' \in R(I)$. We say $\rho$ \dft{precedes} $\rho'$  and write $\rho \prec \rho'$ if for every stable matching $\mu$ in which $\rho'$ is exposed, we have $\rho \in R_\mu$. That is, $\rho \prec \rho'$ if $\rho$ was eliminated in every stable matching in which $\rho'$ is exposed.
\end{dfn}

\begin{lthm}[Irving \& Leather~\cite{Irving1986-complexity}]
  \label{thm:stable-closed}
  Let $I$ be an SM instance. Then $\calR(I) = (R(I), \prec)$ is a poset.  Moreover, there is a one-to-one correspondence between the stable matchings of $I$ and the downsets of $\calR(I)$.
\end{lthm}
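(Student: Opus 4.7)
The plan is to establish the theorem in three steps: first verify that $\prec$ is a partial order on $R(I)$, then define a map from stable matchings to downsets of $\calR(I)$ and show it is injective, and finally construct its inverse by building a stable matching from an arbitrary downset. Throughout, the main tool is Lemma~\ref{lem:rotation-elimination}, which already assigns each stable matching $\mu$ a uniquely determined set $R_\mu \subseteq R(I)$ of rotations eliminated along any sequence from $\mu_0$ to $\mu$.

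For the partial-order claim, irreflexivity is immediate: each $\rho \in R(I)$ is exposed in some stable matching $\mu$, and in that matching $\rho$ has not yet been eliminated, so $\rho \notin R_\mu$ and hence $\rho \not\prec \rho$. For transitivity, suppose $\rho \prec \rho'$ and $\rho' \prec \rho''$, and let $\mu$ be any stable matching exposing $\rho''$. Then $\rho' \in R_\mu$, so any rotation-elimination sequence $\mu_0, \mu_1, \ldots, \mu$ contains an intermediate matching $\mu_j$ just before $\rho'$ is eliminated, in which $\rho'$ is exposed. Since $R_{\mu_j} \subseteq R_\mu$ and $\rho \prec \rho'$ forces $\rho \in R_{\mu_j}$, we conclude $\rho \in R_\mu$, as needed. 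Antisymmetry follows because $\rho \prec \rho' \prec \rho$ combined with transitivity would yield $\rho \prec \rho$, contradicting irreflexivity.

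For the bijection, define $\Phi(\mu) = R_\mu$. Injectivity of $\Phi$ is exactly the final sentence of Lemma~\ref{lem:rotation-elimination}. To see that $R_\mu$ is a downset of $\calR(I)$, take $\rho' \in R_\mu$ and $\rho \prec \rho'$; by the same argument used in the transitivity step, the matching $\mu'$ just before $\rho'$ is eliminated in a sequence from $\mu_0$ to $\mu$ exposes $\rho'$, and so $\rho \in R_{\mu'} \subseteq R_\mu$.

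The main obstacle is surjectivity: given any downset $D \subseteq R(I)$, one must exhibit a stable matching $\mu$ with $R_\mu = D$. The natural strategy is to fix a linear extension $\rho_1, \rho_2, \ldots, \rho_k$ of $D$ with respect to $\calR(I)$ and eliminate these rotations in order starting from $\mu_0$, producing stable matchings $\mu_0, \mu_1, \ldots, \mu_k$ with $R_{\mu_i} = \set{\rho_1, \ldots, \rho_i}$ and $\mu_k$ the desired $\mu$. The crux is showing inductively that $\rho_i$ is actually exposed in $\mu_{i-1}$. For this I would invoke the standard observation that two distinct rotations exposed in the same stable matching involve disjoint man--woman pairs, so their eliminations commute; applied to any stable matching $\nu$ that exposes $\rho_i$ (which exists since $\rho_i \in R(I)$), repeated swaps of independent rotations transform $\nu$ into a stable matching whose rotation set is exactly $\set{\rho_1, \ldots, \rho_{i-1}}$ and which still exposes $\rho_i$. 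By the uniqueness part of Lemma~\ref{lem:rotation-elimination}, this matching must coincide with $\mu_{i-1}$, completing the induction and the proof.
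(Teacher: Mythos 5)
The paper does not prove this theorem; it is quoted from Irving and Leather with a citation, so your attempt can only be judged against the standard proof in~\cite{Irving1986-complexity} and~\cite{Gusfield1989-stable}. Your overall architecture is the right one and matches that standard proof: verify the order axioms for $\prec$, show $\mu \mapsto R_\mu$ is an injection into the downsets via Lemma~\ref{lem:rotation-elimination}, and establish surjectivity by eliminating a downset along a linear extension. The irreflexivity, transitivity, and ``$R_\mu$ is a downset'' arguments are correct (irreflexivity tacitly uses that an eliminated rotation can never be re-exposed, which follows since elimination moves every man in the rotation strictly down his list, but that is a one-line fix).

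The genuine gap is in the surjectivity step. You start from an arbitrary stable matching $\nu$ exposing $\rho_i$ and claim that ``repeated swaps of independent rotations'' transform $\nu$ into a stable matching whose rotation set is exactly $\set{\rho_1, \ldots, \rho_{i-1}}$ and which still exposes $\rho_i$. But swapping the order of two simultaneously exposed (hence disjoint) rotations only \emph{reorders} an elimination sequence; it never changes the \emph{set} of rotations eliminated. By Lemma~\ref{lem:rotation-elimination} that set is $R_\nu$, and $R_\nu$ and $\set{\rho_1, \ldots, \rho_{i-1}}$ are in general incomparable: both contain every predecessor of $\rho_i$, but $R_\nu$ may contain rotations incomparable to $\rho_i$ that lie outside $D$ or later in your linear extension, and $\set{\rho_1, \ldots, \rho_{i-1}}$ may contain rotations absent from $R_\nu$. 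So no sequence of swaps reaches $\mu_{i-1}$, and the appeal to the uniqueness clause of Lemma~\ref{lem:rotation-elimination} has nothing to bite on. Closing this hole requires more machinery than you have introduced: either the explicit description of the precedence relation via Rules~1 and~2 of Remark~\ref{rem:dag-structure}, together with a direct argument that once all such predecessors of $\rho_i$ have been eliminated the cycle $\rho_i$ is exposed (this is what Irving and Leather do), or the lattice structure of the set of stable matchings (meets and joins correspond to intersections and unions of rotation sets), from which one can pass from $\nu$ to a matching realizing exactly the predecessors of $\rho_i$ and then eliminate the remaining incomparable rotations of $\set{\rho_1,\ldots,\rho_{i-1}}$ one at a time while $\rho_i$ stays exposed.
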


We refer to $\calR(I)$ as the \dft{rotation poset of $I$}. In~\cite{Irving1987-efficient}, Irving et al.\ showed that a DAG whose transitive closure is the rotation poset $\calR(I)$ can be computed in time $O(n^3)$. This runtime was improved to $O(n^2)$ by Gusfield in~\cite{Gusfield1987-three}.

\begin{lthm}[Gusfield~\cite{Gusfield1987-three}]
  \label{thm:compute-dag}
  Given any SM instance $I$ of size $n$,  a DAG $G(I)$ whose transitive closure is $\calR(I)$ can be computed in $O(n^2)$ time. 
\end{lthm}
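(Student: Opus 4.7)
The plan is to run the Gale--Shapley algorithm to get the man-optimal matching $\mu_M$, then use the rotation-elimination process of Lemma~\ref{lem:rotation-elimination} to enumerate every rotation of $I$, and along the way record, for each newly discovered rotation $\rho'$, the rotations that were ``directly responsible'' for exposing it. These direct-responsibility arcs form a DAG $G(I)$ whose transitive closure is exactly $\calR(I)$, and the whole computation is organized to take $O(n^2)$ time by amortizing work against entries of the preference lists.

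First, compute $\mu_M$ by the man-proposing Gale--Shapley algorithm in $O(n^2)$ time. For each man $m$, maintain a \emph{shortlist} consisting of those women still to be considered --- initially, the suffix of $P_m$ beginning at $\mu_M(m)$ with all women who have already rejected $m$ deleted. Symmetrically, each woman $w$'s shortlist is the prefix of $P_w$ down to her current partner. A crucial standard fact is that throughout the entire rotation-elimination process the sum over all agents of the lengths of the shortlists is $O(n^2)$, and each (man, woman) pair enters and leaves a shortlist at most a constant number of times. This is the source of the $O(n^2)$ budget.

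Second, repeatedly find and eliminate rotations exposed in the current matching $\mu$. To expose one, pick any man $m_1$ whose shortlist has length at least two; let $w_2$ be the second entry on his shortlist and $m_2 = \mu(w_2)$; continue following the pointers $m_i \mapsto w_{i+1} \mapsto m_{i+1}$ until the sequence closes, which by Definition~\ref{dfn:rotation} yields a rotation $\rho$ exposed in $\mu$. Replace $\mu$ by $\mu\setminus\rho$, and update the shortlists: for each pair $(m_i,w_i)$ leaving the matching, remove $w_i$ from $m_i$'s shortlist, and for each pair $(m_i,w_{i+1})$ entering, delete from $w_{i+1}$'s list all men she ranks below $m_i$ (removing $w_{i+1}$ from those men's shortlists in turn). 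Repeat until every shortlist has length $1$, at which point $\mu$ is the woman-optimal matching. The sequence of discovered rotations is the entire set $R(I)$.

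Third, build $G(I)$ on the fly. Create a node for each discovered rotation, and when a rotation $\rho'$ is produced, add an arc from $\rho \to \rho'$ for each rotation $\rho$ that either (i) placed the current partner $w_i$ on some man $m_i$'s shortlist (by being the rotation whose elimination first made $(m_i,w_i)\in\mu$), or (ii) removed from some relevant shortlist an entry whose presence would have prevented the cycle found in Step~2 from closing. One then verifies, using Definition~\ref{dfn:rotation-poset}, that these arcs exactly capture the immediate precedences of $\calR(I)$, so that the transitive closure of $G(I)$ is $\calR(I)$. The main obstacle is the complexity argument: one must show that every shortlist deletion, every cycle-closing traversal in Step~2, and every arc added in Step~3 can be charged to a distinct (man, woman) entry of the original preference table, so that the total work is $O(n^2)$. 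This is done by observing that each pair $(m,w)$ enters and leaves each shortlist at most $O(1)$ times, that traversal in Step~2 advances a persistent pointer along shortlists which never moves backward, and that the arcs added in Step~3 are in bijection with a subset of the shortlist-deletion events.
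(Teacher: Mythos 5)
This theorem is stated in the paper as a citation to Gusfield~\cite{Gusfield1987-three}; the paper gives no proof of its own, only the informal description of the two edge-creation rules in Remark~\ref{rem:dag-structure}. So your proposal can only be judged as a reconstruction of Gusfield's argument. In outline it is the right approach: enumerate all rotations via the Irving--Leather shortlist (``minimal differences'') process starting from the man-optimal matching, amortize all work against deletions from an $O(n^2)$-size preference table, and add digraph arcs corresponding to the paper's Rule~1 (the rotation that moved $m$ down to $w$ precedes any rotation containing $(m,w)$) and Rule~2 (the rotation that moved $w$ above $m$ precedes the rotation that moves $m$ below $w$).

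There are, however, two genuine gaps. First, the correctness claim in your Step~3 is both unproven and overstated. You assert that the arcs ``exactly capture the immediate precedences of $\calR(I)$,'' i.e., that $G(I)$ is the Hasse diagram. That is false in general: the digraph produced by Rules~1 and~2 can contain arcs that are not covering relations, and the paper itself only records that the Hasse diagram of $\calR(I)$ is a \emph{subgraph} of $G(I)$. What actually has to be shown --- and what carries essentially all the mathematical content of the theorem --- is the two-sided claim that (a) every arc added is a valid precedence under Definition~\ref{dfn:rotation-poset}, and (b) every precedence $\rho \prec \rho'$ in $\calR(I)$ is witnessed by a directed path of Rule~1/Rule~2 arcs. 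Part (b) in particular requires an argument (it is the analogue of showing that if $\rho'$ cannot be exposed until $\rho$ is eliminated, then there is a chain of ``local'' reasons of type~1 or~2 connecting them); your proposal replaces it with ``one then verifies.'' Second, your rotation-finding step is slightly off: following the pointer sequence $m_1 \to w_2 \to m_2 \to \cdots$ need not close back at $m_1$; it closes at the first \emph{repeated} man, and the exposed rotation is only the cycle portion of the walk, not the whole sequence. This is a standard but necessary correction, and it also matters for the $O(n^2)$ accounting of the tail vertices that are traversed but not eliminated. The amortization claims themselves (each pair enters and leaves a shortlist $O(1)$ times, arcs injected into deletion events) are the right ideas and match Gusfield's analysis.
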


We shall call $G(I)$ the \dft{rotation digraph} of $I$.  Since the transitive closure of $G(I)$ is $\calR(I)$, we note that the Hasse diagram of $\calR(I)$ is in fact a subgraph of $G(I)$. 

\begin{eg}
  \label{eg:rotation-poset}
  Consider the SM instance shown below.\\
  \begin{centering}
    \begin{minipage}{0.45\textwidth}
      \[
      \begin{array}{rllll}
        m_1: & w_1 & w_2 & w_3 & w_4 \\
        m_2: & w_2 & w_4 &  w_1 & w_3 \\
        m_3: & w_3 & w_4 & w_2 & w_1 \\
        m_4: & w_4 & w_2 & w_3 & w_1 \\
      \end{array}
      \]
    \end{minipage}
    \hfill
    \begin{minipage}{0.45\textwidth}
      \[
      \begin{array}{rllll}
        w_1: & m_2 & m_1 & m_3 & m_4 \\
        w_2: & m_3 & m_1 & m_4 & m_2 \\
        w_3: & m_4 & m_1 & m_2 & m_3 \\
        w_4: & m_1 & m_3 & m_4 & m_2 \\     
      \end{array}  
      \]
    \end{minipage}
  \end{centering}\\
  It has four stable matchings:
  \[
  \begin{array}{rl} 
    \mu_0 = & \set{ (m_1, w_1), (m_2, w_2), (m_3, w_3), (m_4, w_4) } \\
    \mu_1 = & \set{ (m_1, w_2), (m_2, w_1), (m_3, w_3), (m_4, w_4) } \\
    \mu_2 = & \set{ (m_1, w_2), (m_2, w_1), (m_3, w_4), (m_4, w_3) } \\
    \mu_3 = & \set{ (m_1, w_4), (m_2, w_1), (m_3, w_2), (m_4, w_3) } \\
  \end{array}
  \]
  where $\mu_0$ is the man-optimal stable matching and $\mu_3$ is the woman-optimal stable matching. It has three rotations:
  \[
  \begin{array}{rl}
    \rho_1 = & (m_1, w_1), (m_2, w_2) \\
    \rho_2 = & (m_3, w_3), (m_4, w_4) \\
    \rho_3 = & (m_1, w_2), (m_3, w_4). \\
  \end{array}
  \]
  The rotation poset $\calR(I)$ and digraph $G(I)$ constructed by Gusfield's algorithm are depicted in Figure~\ref{fig:rotation-poset}.

  \begin{figure}[h!]
    \begin{center}
      \includegraphics{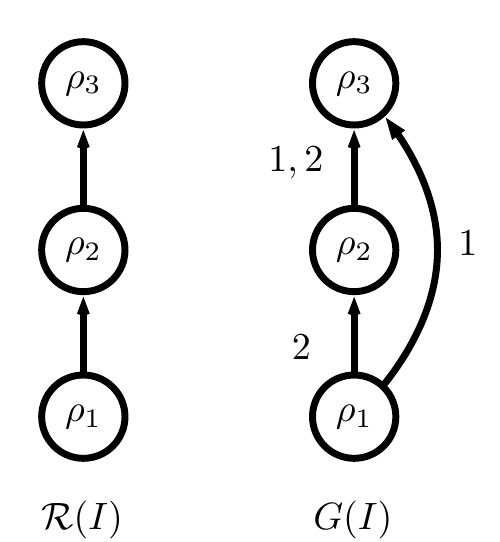}
    \end{center}
    \caption{The rotation poset (left), $\calR(I)$ and the DAG $G(I)$ (right) computed by Gusfield's algorithm. The edge labels in $G(I)$ indicate the ``rule'' associated with the edge---see Remark~\ref{rem:dag-structure}}
    \label{fig:rotation-poset}
  \end{figure}


  The correspondence between the stable matchings of $I$ and the downsets of $\calR(I)$ are as follows:  $\mu_0$ with $\varnothing$, $\mu_1$ with $\{\rho_1\}$, $\mu_2$ with $\{\rho_1, \rho_2\}$, and $\mu_3$ with $\{\rho_1, \rho_2, \rho_3\}$.  
\end{eg}

Let $ \rho = (m_1, w_1), (m_2, w_2), \ldots, (m_\ell, w_\ell)$ be a rotation of SM instance $I$.   When $\rho$ is eliminated from a stable matching, each $m_i$ is matched to $w_{i+1}$ or, equivalently, each $w_i$ is matched to $m_{i-1}$.  We say that \dft{$\rho$ moves $m_i$ down to $w_{i+1}$} because $m_i$ prefers $w_i$ to $w_{i+1}$.  If $w$ is strictly between $w_i$ and $w_{i+1}$ in $m_i$'s preference list, we also say that \dft{$\rho$ moves $m_i$ below $w$.} Similarly, \dft{$\rho$ moves $w_i$ up to $m_{i-1}$} because $w_i$ prefers $m_{i-1}$ to $m_i$. If $m$ is strictly between $m_{i-1}$ and $m_i$ in $w_i$'s preference list, \dft{$\rho$ moves $w_i$ above $m$.}

\begin{rem}
  \label{rem:dag-structure}
  The rotation digraph $G(I)$ is defined by adding edges according to the following rules~\cite[Section~3.2]{Gusfield1989-stable}. (We have labeled the edges of $G(I)$ in Figure~\ref{fig:rotation-poset} with the rules that were used to create the edges.)
  \begin{description}
  \item[Rule 1.]  Suppose $(m,w)$ is in rotation $\rho$.  If $\rho'$ is the (unique) rotation that moves $m$ to $w$, then there is a directed edge from $\rho'$ to $\rho$.   
    
    In the previous example,  there is an edge from $\rho_1$ to $\rho_3$ because of $(m_1, w_2)$ and an edge from $\rho_2$ to $\rho_3$ because of $(m_3, w_4)$.  
    
  \item[Rule 2.]  Suppose $(m,w)$ is {\it not} in rotation $\rho$. If $\rho'$ is the (unique) rotation that moves $w$ above $m$,  $\rho$ is the (unique) rotation that moves $m$ below $w$ and $\rho' \neq \rho$, then there is a directed edge from $\rho'$ to $\rho$. Note that for these steps to happen, $w$ is in $\rho'$ while $m$ is in $\rho$.

    In the previous example, there is an edge from $\rho_1$ to $\rho_2$ because of $(m_4, w_2)$.  Rotation $\rho_2$ was not exposed in $\mu_0$ because $m_4$ prefers $w_2$ to $w_4$ and $w_2$ prefers $m_4$ to $m_2$.   Rotation $\rho_1$ moves $w_2$ above $m_4$.  Only when $\rho_1$ is eliminated is $\rho_2$ exposed and $\rho_2$ moves $m_4$ below $w_2$ to $w_3$.    There is also an edge from $\rho_2$ to $\rho_3$ because of $(m_1, w_3)$.  Rotation $\rho_2$ moves $w_3$ above $m_1$ while rotation $\rho_3$ moves $m_1$ below $w_3$.  
  \end{description}
\end{rem}

The two rules above can be thought of informally as the reasons for the precedence relation among the rotations of $I$. In Section~\ref{sec:generic-construction}, we shall use Rule~2 extensively to create SM instances (whereas Irving and Leather's construction forces precedence between rotations by appealing to Rule~1).


\subsection{Realizing Posets as Rotation Posets}

Given a poset $\calP$, we say that an SM instance $I$ \dft{realizes} $\calP$ if the rotation poset of $I$, $\calR(I)$, is isomorphic to $\calP$.  Irving and Leather showed the following fundamental result.

\begin{thm}[Irving \& Leather~\cite{Irving1986-complexity}]
  \label{thm:rotation-posets-general}
  Let $\calP$ be a poset whose Hasse diagram has $p$ vertices and $q$ edges.  There is an SM instance $I$ that realizes $\calP$.  Moreover, the number of agents of $I$ is $O(p+q)$ and the instance $I$ can be constructed in $O(p+q)$ time. 
\end{thm}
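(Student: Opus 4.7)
The plan is to build one small ``rotation gadget'' per vertex of the Hasse diagram $H(\calP)$ and then link the gadgets through carefully chosen preference-list entries so that Rule~2 of Remark~\ref{rem:dag-structure} forces precisely the intended precedences. For every vertex $v$ of $H(\calP)$, I would introduce four agents $m_v^0, m_v^1, w_v^0, w_v^1$ whose top two preferences are arranged so that, in isolation, the unique exposed rotation is $\rho_v = (m_v^0, w_v^0), (m_v^1, w_v^1)$: set $m_v^0$ to prefer $w_v^0$ over $w_v^1$, set $m_v^1$ to prefer $w_v^1$ over $w_v^0$, and set $w_v^0, w_v^1$ to prefer $m_v^1, m_v^0$ respectively (the ``crossed'' choice). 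Thus the intended man-optimal matching $\mu_0$ pairs $m_v^i$ with $w_v^i$ at every gadget, and eliminating $\rho_v$ swaps the gadget to its crossed state.

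Next, I would realize each edge $(u,v)$ of $H(\calP)$ by a single insertion in each of two preference lists. For the Hasse edge $u \prec v$, insert $m_v^0$ between $m_u^0$ and $m_u^1$ on $w_u^1$'s list, and insert $w_u^1$ between $w_v^0$ and $w_v^1$ on $m_v^0$'s list. This creates a non-stable pair $(m_v^0, w_u^1)$ with the property that $\rho_u$ moves $w_u^1$ above $m_v^0$ while $\rho_v$ moves $m_v^0$ below $w_u^1$, which is exactly the Rule~2 trigger that forces $\rho_u \prec \rho_v$. The total number of insertions is $O(q)$, giving $O(p+q)$ total preference length across $4p$ agents, and the construction can clearly be carried out in time $O(p+q)$.

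The verification splits into two claims. First, I would argue inductively that, starting from $\mu_0$, for every downset $D$ of $\calP$ the matching obtained by eliminating $\{\rho_v : v \in D\}$ is stable, that the rotations exposed in this matching are exactly the minimal elements of $\calP \setminus D$, and that $\rho_v$ is exposed only when all predecessors of $v$ have been eliminated; the key computation is that the linking inserts do not create any blocking pair in $\mu_0$ (because each $w_u^1$ still prefers her gadget partner $m_u^1$ to the inserted $m_v^0$, and symmetrically for $m_v^0$) and do not expose $\rho_v$ until the corresponding $\rho_u$ fires. Second, I would show there are no rotations other than the $\rho_v$'s, by checking that every rotation exposed along any elimination sequence has both entries coming from a common gadget — since the only ``non-trivial'' ranks in any preference list are the top two gadget entries together with the insertions, and the insertions themselves cannot close up into a cycle without first being consumed by a gadget swap.

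The main obstacle I expect is the bookkeeping around the completed preference lists: the construction above yields incomplete preferences, and to obtain an SM instance in the classical sense each list must be extended to a total order over the opposite side without creating new stable pairs or new rotations. I would handle this by appending the remaining agents at the tail of each list in a globally consistent order (e.g., a fixed linear order on $M$ and on $W$) so that any would-be blocking pair at the tail is blocked by the higher-ranked gadget partner; then a short case analysis on an arbitrary exposed rotation, combined with the fact that every agent strictly prefers their $\mu_0$-partner to every tail entry, rules out any rotation using a ``tail'' pair. Once this is in place, the correspondence between downsets of $\calP$ and stable matchings of $I$ follows from Theorem~\ref{thm:stable-closed}, and the size and time bounds are immediate from the gadget accounting.
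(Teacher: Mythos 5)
Your construction is, up to renaming, exactly the paper's own $\ConstructInstance$ run with the all-one edge coloring (Section~\ref{sec:generic-construction}, Corollary~\ref{cor:generic-complete}): one four-agent, two-pair rotation gadget per vertex of $H(\calP)$, each Hasse edge enforced by a single non-stable pair via Rule~2 of Remark~\ref{rem:dag-structure}, and the incomplete lists completed at the tail as in Corollary~\ref{cor:shortlists-2}. The proof cited from Irving and Leather instead shares an agent between $\rho_u$ and $\rho_v$ and appeals to Rule~1, so relative to the citation your route is different, but it is precisely the route the paper itself takes, with the same verification strategy as Lemmas~\ref{lem:generic-exposed-matchings} and~\ref{lem:generic-isomorphism}.

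One justification in your verification is backwards. Having inserted $m_v^0$ \emph{between} $m_u^0$ and $m_u^1$ on $w_u^1$'s list, she does \emph{not} still prefer her $\mu_0$-partner $m_u^1$ to $m_v^0$; she prefers $m_v^0$ to $m_u^1$, and this is essential --- it is exactly what prevents $w_v^1$ from being the first woman after $w_v^0$ on $m_v^0$'s list who prefers him, and hence keeps $\rho_v$ unexposed while $w_u^1$ is still matched to $m_u^1$. If she preferred $m_u^1$ to the inserted man, the insertion would be inert and no precedence would be enforced. The pair $(m_v^0, w_u^1)$ fails to block $\mu_0$ only for the other reason you give, namely that $m_v^0$ holds his top choice $w_v^0$. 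With that correction the argument goes through.
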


Together, Theorems~\ref{thm:stable-closed} and~\ref{thm:rotation-posets-general} show that counting stable matchings and the downsets of a (finite) poset are essentially equivalent: any instance of one problem can be efficiently reduced to an equivalent instance of the other. 

In order to show that an SM instance has a particular rotation poset, it will be useful to have a criterion under which two SM instances $I$ and $I'$ have isomorphic rotation posets. We describe a sufficient condition below that is partly based on Irving and Leather's notion of ``shortlists"~\cite{Irving1986-complexity}.

\begin{dfn}
 \label{dfn:shortlists}
 Let  $I = (M, W, P)$ be an SM instance. Let $a \in M \cup W$.  The \dft{symmetric shortlist} of $a$, $S_a$, is  
 the sublist of $P_a$ consisting of agents $b$ such that 
 \begin{itemize}
\item $b$ is the man-optimal or woman-optimal stable partner of $a$ or $b$ lies between these two partners in $P_a$, and 
\item $a$ is  the man-optimal or woman-optimal stable partner of $b$ or $a$ lies between these two partners in $P_b$. 
\end{itemize}
   We shall use $S(I)$ to denote the set containing  all agents' symmetric shortlists.  
\end{dfn}

Here's a simple fact about symmetric shortlists.

\begin{prop}
\label{prop:shortlists}
Let $I=(M,W,P)$ be an SM instance.  For any agent $a$, $a$ is part of $S(b)$ if and only if $b$ is part of  $S(a)$. 
\end{prop}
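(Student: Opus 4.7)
The claim is essentially that the symmetric shortlist relation is symmetric, and the proof amounts to unpacking the definition and observing that it is already phrased symmetrically in the two agents. My plan is therefore direct: fix agents $a$ and $b$ (one a man, one a woman, since the preferences only rank the opposite sex), and show $b \in S_a \iff a \in S_b$ by verifying that the two conditions in Definition~\ref{dfn:shortlists} map onto each other under the swap $a \leftrightarrow b$.

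First I would expand the meaning of $b \in S_a$. By Definition~\ref{dfn:shortlists}, this requires two conditions: (i) $b$ lies between $a$'s man-optimal and woman-optimal stable partners (inclusive) in $a$'s preference list $P_a$, and (ii) $a$ lies between $b$'s man-optimal and woman-optimal stable partners (inclusive) in $b$'s preference list $P_b$. I would then expand $a \in S_b$ in the analogous way: (i$'$) $a$ lies between $b$'s man-optimal and woman-optimal stable partners (inclusive) in $P_b$, and (ii$'$) $b$ lies between $a$'s man-optimal and woman-optimal stable partners (inclusive) in $P_a$.

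The final step is to observe that condition (i) is verbatim the same assertion as condition (ii$'$), and condition (ii) is verbatim the same as condition (i$'$). Hence the conjunction (i) $\wedge$ (ii) is logically equivalent to (i$'$) $\wedge$ (ii$'$), giving $b \in S_a \iff a \in S_b$. There is no real obstacle here; if anything, the only thing worth flagging is that the two bullets in Definition~\ref{dfn:shortlists} are symmetric in the roles of $a$ and $b$ (the first bullet constrains $b$'s position in $P_a$ while the second constrains $a$'s position in $P_b$), so the proposition is literally a restatement of that symmetry. I would keep the proof to a short paragraph and omit any further calculation.
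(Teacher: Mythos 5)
Your proof is correct and matches the paper's reasoning exactly: the paper states Proposition~\ref{prop:shortlists} as a ``simple fact'' with no written proof, precisely because, as you observe, the two bullets of Definition~\ref{dfn:shortlists} defining $b \in S_a$ are verbatim the two bullets defining $a \in S_b$ with the roles swapped. Nothing further is needed.
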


\begin{eg}[Example~\ref{eg:rotation-poset} continued]
  The shortlists of the instance are shown below. \\
  \begin{centering}
    \begin{minipage}{0.45\textwidth}
      \[
      \begin{array}{rllll}
        m_1: & w_1 & w_2 & w_3 & w_4 \\
        m_2: & w_2  &  w_1  \\
        m_3: & w_3 & w_4 & w_2  \\
        m_4: & w_4 & w_2 & w_3  \\
      \end{array}
      \]
    \end{minipage}
    \hfill
    \begin{minipage}{0.45\textwidth}
      \[
      \begin{array}{rllll}
        w_1: & m_2 & m_1  \\
        w_2: & m_3 & m_1 & m_4 & m_2 \\
        w_3: & m_4 & m_1 & m_3 \\
        w_4: & m_1 & m_3 & m_4  \\     
      \end{array}  
      \]
    \end{minipage}
  \end{centering}
\end{eg}

We will rely heavily upon the following result, a restatement of~\cite[Theorem~1.2.5]{Gusfield1989-stable}.

\begin{lem}[Cf.~{\cite[Theorem~1.2.5]{Gusfield1989-stable}}]
  \label{lem:shortlists}
  Let $I = (M, W, P)$.  Let $I'$ be obtained from $I$ be replacing each agent's preference list $P_a$ with their symmetric shortlist $S_a$; i.e.,  $I' = (M,W, S(I))$.   Then $I$ and $I'$ have identical rotation posets and  stable matchings. 
\end{lem}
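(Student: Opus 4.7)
The plan is to first prove $\mathrm{SM}(I) = \mathrm{SM}(I')$, then deduce $\calR(I) = \calR(I')$. Throughout I rely on the classical fact that the stable partners of any agent form a contiguous interval of that agent's preference list, bounded by the man-optimal and woman-optimal stable partners. For the inclusion $\mathrm{SM}(I) \subseteq \mathrm{SM}(I')$, let $\mu$ be stable in $I$. For each man $m$, $\mu(m)$ is a stable partner, so $\mu(m)$ lies between $m$'s best and worst stable partners in $P_m$ and $m$ lies between $\mu(m)$'s best and worst stable partners in $P_{\mu(m)}$. Hence $\mu(m) \in S_m$, and by Proposition~\ref{prop:shortlists} also $m \in S_{\mu(m)}$, so $\mu$ is a valid matching in $I'$. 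Because each $S_a$ is a sublist of $P_a$ with the same relative order, any blocking pair of $\mu$ in $I'$ would also block $\mu$ in $I$; none exists, so $\mu$ is stable in $I'$.

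For the reverse inclusion $\mathrm{SM}(I') \subseteq \mathrm{SM}(I)$---the main obstacle---let $\mu'$ be stable in $I'$, so that $\mu'(a) \in S_a$ for every agent $a$. Suppose for contradiction that $(m, w)$ blocks $\mu'$ in $I$. Since $\mu'(m) \in S_m$ lies weakly above $m$'s worst stable partner $w^z_m$ in $P_m$, and $m$ strictly prefers $w$ to $\mu'(m)$, we conclude that $m$ strictly prefers $w$ to $w^z_m$. Symmetrically, $w$ strictly prefers $m$ to her worst stable partner $m^z_w$. But then $(m,w)$ is a blocking pair of the woman-optimal stable matching of $I$ (in which $m$ is paired with $w^z_m$ and $w$ with $m^z_w$), contradicting its stability. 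Hence $\mu'$ has no blocking pair in $I$, completing the equality of stable-matching sets.

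To finish, I would argue that for any stable matching $\mu$ (now common to both instances), the rotations exposed at $\mu$ coincide in $I$ and $I'$. Given $\rho = (m_1, w_1), \ldots, (m_\ell, w_\ell)$ exposed at $\mu$ in $I$, $w_{i+1}$ is $m_i$'s partner in the stable matching $\mu \setminus \rho$, so $(m_i, w_{i+1})$ is a stable pair, whence $w_{i+1} \in S_{m_i}$ and $m_i \in S_{w_{i+1}}$. Any woman $w'$ strictly between $w_i$ and $w_{i+1}$ in $P_{m_i}$ fails to prefer $m_i$ over her current partner in $\mu$ (by Definition~\ref{dfn:rotation}), so she is skipped when one reads $m_i$'s shortlist in $I'$, regardless of whether $w' \in S_{m_i}$. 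Moreover, both $m_i$ and $m_{i+1}$ are stable partners of $w_{i+1}$ and therefore appear in $S_{w_{i+1}}$ in the same relative order as in $P_{w_{i+1}}$. Thus $\rho$ is exposed at $\mu$ in $I'$ as well; the reverse direction is symmetric. This gives $R(I) = R(I')$, and since the precedence relation of Definition~\ref{dfn:rotation-poset} is determined entirely by the sets $R_\mu$ as $\mu$ ranges over the common stable matchings, we conclude $\calR(I) = \calR(I')$.
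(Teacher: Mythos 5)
The paper offers no proof of this lemma---it is imported verbatim from Gusfield and Irving---so the only question is whether your argument stands on its own. Your first inclusion and your treatment of exposed rotations are essentially sound, but the reverse inclusion $\mathrm{SM}(I') \subseteq \mathrm{SM}(I)$, which you correctly flag as the main obstacle, contains a genuine error. You derive that $m$ strictly prefers $w$ to his worst stable partner $w^z_m$ and that $w$ strictly prefers $m$ to her worst stable partner, and then assert that $(m,w)$ blocks the woman-optimal matching $\mu_z$, ``in which $m$ is paired with $w^z_m$ and $w$ with $m^z_w$.'' But in $\mu_z$ every woman receives her \emph{best} stable partner; her worst stable partner is her partner in the man-optimal matching $\mu_0$. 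So your two derived facts concern $m$'s partner in $\mu_z$ and $w$'s partner in $\mu_0$---partners in two different stable matchings---and they do not combine into a blocking pair of either one. Indeed, the implication you need is false: in Example~\ref{eg:rotation-poset}, the pair $(m_1, w_3)$ satisfies both conditions ($m_1$ prefers $w_3$ to his worst stable partner $w_4$, and $w_3$ prefers $m_1$ to her worst stable partner $m_3$), yet $(m_1, w_3)$ blocks no stable matching of that instance. No contradiction can be extracted from those two facts alone.

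The missing ingredient is that you never use the hypothesis that $(m,w)$ fails to block $\mu'$ \emph{in $I'$}; since relative order is preserved on sublists, this forces $w \notin S_m$. Unpacking Definition~\ref{dfn:shortlists}, either $w$ lies outside the interval of $P_m$ between $m$'s optimal and pessimal stable partners, or $m$ lies outside the corresponding interval of $P_w$. In the first case, combined with $P_m(w) < P_m(\mu'(m)) \leq P_m(w^z_m)$, the only possibility is that $m$ strictly prefers $w$ to his $\mu_0$-partner; together with the fact that $w$ strictly prefers $m$ to her $\mu_0$-partner (which your argument does establish), $(m,w)$ blocks $\mu_0$. In the second case the symmetric argument shows $(m,w)$ blocks $\mu_z$. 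Either way the desired contradiction follows. Two smaller points: the stable partners of an agent need not form a contiguous interval of the preference list (only containment in the interval between optimal and pessimal partners holds, which is all you actually use); and the ``reverse direction is symmetric'' step for exposed rotations needs its own argument, since passing from $S_{m_i}$ back to $P_{m_i}$ reintroduces women between $w_i$ and $w_{i+1}$ who could in principle prefer $m_i$ to their partners in $\mu$---one must rule this out, e.g.\ by showing that such a woman, being excluded from $S_{m_i}$, would form a blocking pair with $m_i$ in $\mu_z$.
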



\begin{cor}
\label{cor:shortlists}
Let $I = (M,W,P)$ and $I' = (M, W, P')$ such that $S(I) = S(I')$.  Then $I$ and $I'$ have the identical rotation posets and stable matchings. \end{cor}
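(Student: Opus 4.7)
The plan is to deduce this directly from Lemma~\ref{lem:shortlists} by routing both $I$ and $I'$ through their (common) shortlist instance. Concretely, let $I_S = (M, W, S(I))$ and $I'_S = (M, W, S(I'))$ denote the instances obtained from $I$ and $I'$ by replacing each agent's preference list with their symmetric shortlist.

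First I would invoke Lemma~\ref{lem:shortlists} applied to $I$: this gives that $I$ and $I_S$ have identical stable matchings and identical rotation posets. Applying the same lemma to $I'$ yields that $I'$ and $I'_S$ have identical stable matchings and identical rotation posets. Now the hypothesis $S(I) = S(I')$ is exactly the statement that the collections of symmetric shortlists produced by the two instances coincide agent-by-agent, so $I_S = I'_S$ as SM instances on the same agent sets. Chaining the three equalities (for stable matchings and separately for rotation posets) gives the conclusion.

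There is essentially no obstacle here: the content of the corollary is purely the observation that the map $I \mapsto I_S$ factors through $S(I)$, and Lemma~\ref{lem:shortlists} does all the heavy lifting by guaranteeing that this map preserves the structures in question. The only mild care needed is to verify that $S(I) = S(I')$ genuinely determines the instance $I_S$; but since $I_S$ is built coordinate-wise from the symmetric shortlists, and $M, W$ are shared between $I$ and $I'$ by assumption, this is immediate from the definition. No additional properties of stable matchings or rotations are required.
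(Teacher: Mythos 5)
Your argument is correct and is exactly the route the paper intends (the corollary is left unproved there as an immediate consequence of Lemma~\ref{lem:shortlists}): apply the lemma to both $I$ and $I'$, observe that $S(I)=S(I')$ forces the two shortlist instances to coincide, and chain the resulting identifications of stable matchings and rotation posets.
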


\begin{cor}
\label{cor:shortlists-2}
Let $I = (M,W,P)$ be an SM instance with incomplete preference lists.\footnote{Recall that we assume without loss of generality that every agent is matched in every stable matching.}  Let $I' = (M, W, P')$ be obtained from $I$ by adding to the end of each incomplete preference list its missing agents.  Then $I$ and $I'$ have identical rotation posets and stable matchings. 
\end{cor}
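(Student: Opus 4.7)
The plan is to invoke Corollary~\ref{cor:shortlists} by verifying that $S(I) = S(I')$; to set this up I first show that $I$ and $I'$ have the same stable matchings, forcing the best and worst stable partners of each agent to agree across the two instances. The forward direction is easy: every stable matching $\mu$ of $I$ remains stable in $I'$, since any candidate new blocking pair $(m,w)$ must be unacceptable in $I$, so $w$ sits strictly after every acceptable partner in $m$'s extended list, whereas $\mu(m)$ is acceptable by the standing assumption; this rules out $(m,w)$ as a blocking pair.

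For the converse, the key step is to show that the man-optimal stable matching $\mu_M^{I'}$ of $I'$ uses only $I$-acceptable pairs. If some $m$ were matched to an unacceptable $w$ under $\mu_M^{I'}$, then $\mu_M^I$---which is stable in $I'$ by the preceding paragraph---would pair $m$ with an acceptable partner strictly preferred to $w$ in $I'$, contradicting the man-optimality of $\mu_M^{I'}$. Hence $\mu_M^{I'}$ restricts to a stable matching of $I$, and by man-optimality in $I$ it equals $\mu_M^I$; the symmetric argument gives $\mu_W^{I'} = \mu_W^I$. For an arbitrary stable matching $\mu'$ of $I'$, each agent's partner is sandwiched between the man- and woman-optimal partners in $I'$-preference, and since the appended unacceptable entries sit strictly after every acceptable one, $\mu'$ uses only acceptable pairs and is thus stable in $I$.

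Once the stable matchings agree, so do the best and worst stable partners of every agent, and the symmetric shortlist $S_a$ of Definition~\ref{dfn:shortlists} depends only on the prefix of $P_a$ up to and including the woman-optimal partner---a prefix on which $P_a$ and $P_a'$ coincide. Therefore $S(I) = S(I')$, and Corollary~\ref{cor:shortlists} delivers the conclusion. The most delicate part is the argument in the second paragraph: without the standing assumption that every agent is matched in every stable matching of $I$, the completion $I'$ could acquire genuinely new stable matchings involving previously unmatched agents, and the whole argument would collapse.
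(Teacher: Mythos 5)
Your proof is correct and follows essentially the same route as the paper: establish that the man-optimal and woman-optimal stable matchings of $I$ and $I'$ coincide (the paper does this by asserting the Gale--Shapley executions agree, whereas you derive it from optimality comparisons), conclude that the appended agents never enter any symmetric shortlist so $S(I)=S(I')$, and invoke Corollary~\ref{cor:shortlists}. Your version supplies details the paper leaves implicit, but the underlying argument is the same.
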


\begin{proof}
First, we note that running the man-oriented Gale-Shapley algorithm on $I$ and $I'$ returns the same man-optimal stable matching $\mu_0$.  Similarly, running the woman-oriented Gale-Shapley algorithm on $I$ and $I'$ returns the same woman-optimal stable matching $\mu_z$.   This means that the ``extra" agents added to complete the preference lists will not be a part of the symmetric shortlist of $S(I')$ and $S(I) = S(I')$.  By Corollary~\ref{cor:shortlists}, the result follows. 
\end{proof}

Both Corollary~\ref{cor:shortlists} and \ref{cor:shortlists-2} will be useful in Sections \ref{sec:generic-construction}--\ref{sec:k-range} where our goal is to construct an SM instance that realizes a poset $\calP$ but whose preference lists obey certain properties.  Our strategy is to first create a ``smaller'' SM instance $I$ that realizes $\calP$.  We then expand each agent's preference list into a complete list while making sure their symmetric shortlists stay the same. Thus, the new SM instance still realizes $\calP$.

\subsection{Restricted Preference Models}
\label{sec:restricted-preference-models}


Let $k$, $k_1$ and $k_2$ be positive integers.  We now describe the four models of restricted preference lists that we will study in the paper.  SM instances in the $k$-bounded model have incomplete preference lists (unless $k = n$) while those in the  $k$-attribute, $(k_1, k_2)$-list  and $k$-range  have complete preference lists. 


\paragraph{\texorpdfstring{$k$}{k}-Bounded Preferences}

An SM instance $I$ has \dft{$k$-bounded preferences} if each agent's preference list has length at most $k$. 
We denote the family of $k$-bounded SM instances with $n$ men and $n$ women by $\bound(k, n)$, and $\bound(k) = \bigcup_{n \in \Z^+} \bound(k, n)$.

\paragraph{\texorpdfstring{$k$}{k}-Attribute Preferences}


An SM instance $I$ has \dft{$k$-attribute preferences} if  each agent $a \in M \cup W$ has an associated vector $\veca \in \R^k$ and linear function $\varphi_a : \R^k \to \R$. We refer to the pair $(\veca, \varphi_a)$ as $a$'s \dft{profile}. The preference list of $a$ is constructed from the profiles of the agents from the other group as follows:  $a$ has $P_a(1) = b_1, P_a(2) = b_2, \ldots, P_a(n) = b_n$  if and only if\footnote{We assume that $\varphi_a(\vecb) \neq \varphi_a(\vecb')$ for all $\vecb' \neq \vecb$, so that preferences are \emph{strict} (without ties).}
\[
\varphi_a(\vecb_1) > \varphi_a(\vecb_2) > \cdots > \varphi_a(\vecb_n).
\]
We denote the family of $k$-attribute SM instances with $n$ men and $n$ women by $\attr(k, n)$, and $\attr(k) = \bigcup_{n \in \Z^+} \attr(k, n)$. Figure~\ref{fig:k-attribute-eg} illustrates an example of $2$-attribute preferences.

\begin{figure}
  \begin{center}
    \includegraphics[scale=1]{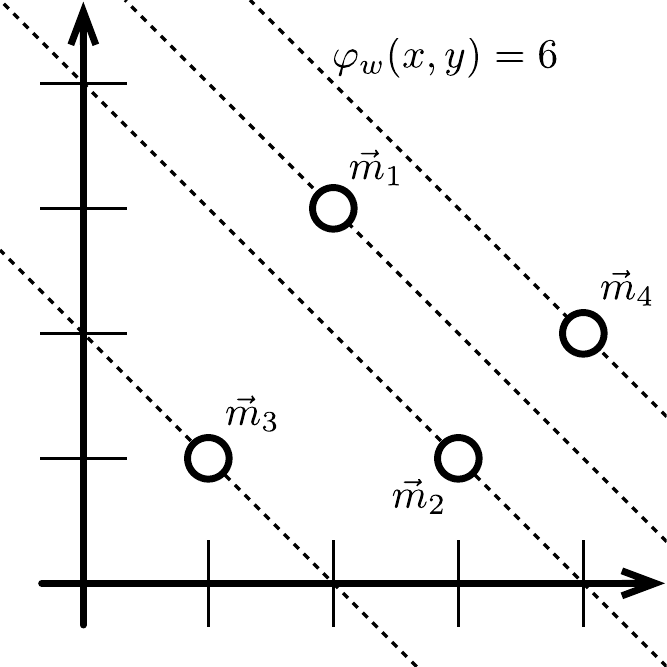}
  \end{center}
  \caption{An illustration of $2$-attribute preferences. In this example, the four men $m_1, m_2, m_3$, and $m_4$ are associated with vectors $\vecm_1 = (2, 3)$, $\vecm_2 = (3, 1)$, $\vecm_3 = (1, 1)$, and $\vecm_4 = (4, 2)$, respectively. If a woman $w$ is associated with the linear function $\varphi_w(x, y) = x + y$, then her corresponding preference list is $m_4, m_1, m_2, m_3$, because $\varphi_w(\vecm_i) = 6, 5, 4, 2$ for $i = 4, 1, 2, 3$, respectively. The dashed lines in the figure show the level sets for $\varphi_w(x, y) = 6, 5, 4, 2$.}
  \label{fig:k-attribute-eg}
\end{figure}


\paragraph{\texorpdfstring{$(k_1, k_2)$}{(k1, k2)}-List Preferences}

An SM instance $I$ has \dft{$(k_1, k_2)$-list preferences} if the men and women can be partitioned into at most $k_1$ and $k_2$ groups, respectively, such that all agents  within each group have the same preference lists.  If the women or men can be partitioned into an arbitrary number of groups, we say that the SM instance has \dft{$(k_1, \infty)$-list preferences} and \dft{$(\infty, k_2)$-list preferences} respectively.    We denote the set of SM instances with $(k_1, k_2)$-list preferences and $n$ men and $n$ women by $\klist(k_1, k_2, n)$, and $\klist(k_1, k_2) = \bigcup_{n \in \Z^+}  \klist(k_1, k_2, n)$.

We note that the $(k, k)$-list model is equivalent to the $k$-list model of Bhatnagar et al.~\cite{Bhatnagar2008-sampling} while the $(1, \infty)$-list and $(\infty, 1)$-list models correspond to  stable matchings with ``master lists'' that Irving et al investigated in~\cite{Irving2008-stable}. 

\paragraph{\texorpdfstring{$k$}{k}-Range Preferences}

Consider an instance $I$ with complete preferences. For each pair of agents $a, b$ of the opposite sex, let $P_a(b)$ denote $a$'s rank of $b$---i.e., the position in which $b$ appears in $a$'s preference list.  For a woman $w \in W$, her \dft{minrank} and \dft{maxrank} are, respectively, the minimum and maximum rank that she appears in any man's preference list:
\[
\min\rank(w) = \min_{m \in M} P_m(w) \quad\text{and}\quad \max\rank(w) = \max_{m \in M} P_m(w).
\]
The $\min\rank$ and $\max\rank$ of a man $m$ are defined analogously. The \dft{range} of $I$ is
\[
\range(I) = \max_{a \in M \cup W} (\max\rank(a) - \min\rank(a)) + 1.
\]
We say that $I$ has \dft{$k$-range preferences} if $\range(I) \leq k$.

Intuitively, $\range(I)$ gives a measure of how similar the agents' preferences are. When $k = 1$, all men and women (respectively) have the same preferences, while all instances $I$ of size $n$ have $\range(I) \leq n$. We denote the family of $k$-range SM instances with $n$ men and  $n$ women by $\range(k, n)$ and $\range(k) =  \bigcup_{n \in \Z^+} \range(k, n)$.

Note that every SM instance $I$ is in $\range(k)$ for $k = \range(I)$. Moreover, $\range(I)$ can be computed in linear time by computing the minrank and maxrank of each agent individually.




\subsection{Pathwidth}
\label{sec:pathwidth}

Here we briefly review some fundamental results regarding pathwidth, and adapt them to our discussion of $k$-range preferences in Sections~\ref{sec:k-range} and \ref{sec:k-range-algo},
\begin{dfn}
  \label{dfn:path-decomp}
  A \dft{path decomposition} of graph $G = (V, E)$ is a sequence $(X_1, X_2, \ldots, X_r)$ of subsets of $V$  such that:
  \begin{enumerate}
  \item $\bigcup_{i = 1}^r X_i = V$,
  \item for each edge $\set{u, v} \in E$, there exists $i \in [r]$ such that $u, v \in X_i$,
  \item for all $i, j, k \in [r]$ with $i \leq j \leq k$, we have $X_i \cap X_k \subseteq X_j$.
  \end{enumerate}
  The \dft{width} of the path decomposition is $\width(\calX) = \max_{i} \abs{X_i} - 1$. The \dft{pathwidth} of $G$, denoted $\pw(G)$, is the minimum width over all path decompositions of $G$.
\end{dfn}

We extend the definition of pathwidths to directed graphs and posets. 

\begin{dfn}
  \label{dfn:pathwidth-p}
 Let $H$ be a directed graph.  The \dft{pathdwidth of $H$} is simply the pathwidth of the undirected version of $H$.   
  Let $\calP$ be a poset, and $H(\calP)$ its Hasse diagram.  The \dft{pathwidth of $\calP$}  is $\pw(\calP) = \pw(H(\calP))$. 
\end{dfn}


\begin{rem}
  \label{rem:pathwidth-interval}
  Suppose $(X_1, X_2, \ldots, X_r)$ is a path decomposition of $G$. Item~3 above implies that for each vertex $v$, there is an interval $I_v \subseteq [r]$ such that $v \in X_i$ if and only if $i \in I_v$. By item~2, if $\set{u, v} \in E$, then we must have $I_u \cap I_v \neq \varnothing$. Thus, $G$ is a subgraph of the interval graph\footnote{Recall that an \dft{interval graph} on a family $\calI$ of intervals is the graph $G = (\calI, E)$ where $\set{I, J} \in E$ if and only if $I \cap J \neq \varnothing$.} defined by the intervals $\set{I_v \sucht v \in V}$. For $I_v = [i_v, j_v]$, we say that $v$ is \dft{added} to the decomposition at index $i_v$, and \dft{removed} at index $j_v + 1$. 
\end{rem}

\begin{dfn}
  \label{dfn:nice-path} 
  Let $\calX = (X_1, X_2, \ldots, X_r)$ be a path decomposition of graph $G$. We say that $\calX$ is a \dft{nice path decomposition} if 
  $\abs{X_1} =  1$, $\abs{X_r} = 0$ and 
  for all $i \in [r - 1]$, we have $\abs{X_i\,\triangle\,X_{i+1}} = 1$. That is, when $\calX$ is nice, exactly one vertex is added or removed at each index. 
\end{dfn}


\begin{lem}
  \label{lem:nice-path} Let $G$ be a graph with $n$ vertices. 
  Suppose $\calX = (X_1, X_2 \ldots, X_r)$ is a path decomposition of $G$ of width $k$. Then $G$ has a nice path decomposition $\calY = (Y_1, Y_2, \ldots, Y_{s})$ of width $k$ with $s = 2n$. Moreover, $\calY$ can be computed from $\calX$ in time $O(k n)$. 
\end{lem}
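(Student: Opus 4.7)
The plan is to construct $\calY$ by processing one vertex-change ``event'' at a time. For each vertex $v \in V$, Remark~\ref{rem:pathwidth-interval} supplies an interval $[i_v, j_v] \subseteq [r]$ of indices for which $v \in X_i$. I associate with $v$ an \emph{add event} tagged with (original) index $i_v$ and a \emph{remove event} tagged with index $j_v + 1$, for a total of $2n$ events. Then I sort the events by tag, placing remove events before add events whenever they share a tag. Starting from the empty set, I process the events in sorted order, and record the active set (vertices that have been added but not yet removed) after each event as $Y_1, Y_2, \ldots, Y_{2n}$.

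To check niceness: the pre-event set is empty, so the first event must be an add (no remove event has tag $\leq 1$, since $j_v \geq 1$), giving $\abs{Y_1} = 1$; after all $2n$ events every vertex has been added once and removed once, so $Y_{2n} = \varnothing$; and each event changes the active set by exactly one vertex, giving $\abs{Y_t \,\triangle\, Y_{t+1}} = 1$. For the path-decomposition axioms, an induction on $i$ (using the remove-before-add convention) shows that once all events tagged $i$ have been processed, the active set is exactly $X_i$: the removes at tag $i$ are for the vertices in $X_{i-1} \setminus X_i$ (those with $j_v = i-1$) and the adds at tag $i$ are for $X_i \setminus X_{i-1}$ (those with $i_v = i$). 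Hence every vertex appears in some $Y_t$, every edge $\set{u,v} \in E$ with $u, v \in X_i$ is witnessed by the snapshot equal to $X_i$, and the set of bags containing any particular $v$ is the contiguous block lying strictly between $v$'s two events.

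To bound the width, I examine the run of bags produced while processing the events tagged $i+1$, starting from the snapshot equal to $X_i$. During the remove phase the active set shrinks from $X_i$ to $X_i \cap X_{i+1}$, so every intermediate bag is a subset of $X_i$; during the add phase it grows from $X_i \cap X_{i+1}$ to $X_{i+1}$, so every intermediate bag is a subset of $X_{i+1}$. Analogous subset relations hold for the initial ramp-up from $\varnothing$ to $X_1$ and the final ramp-down from $X_r$ to $\varnothing$. Consequently $\abs{Y_t} \leq \max_i \abs{X_i} \leq k + 1$, so $\width(\calY) \leq k$.

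For the runtime, a single pass over $\calX$ computes $i_v$ and $j_v$ for every $v$ in time $O(k \cdot r)$; since path decompositions may be preprocessed to collapse consecutive identical bags and each remaining transition witnesses a distinct add or remove, we may assume $r \leq 2n+1$, making this $O(kn)$. Sorting the $2n$ events by tag with the remove-before-add tie-break takes $O(n)$ via bucket sort. Producing each $Y_t$ from $Y_{t-1}$ requires one insertion or deletion plus $O(k)$ time to output the bag explicitly, contributing $O(kn)$ in total. The one delicate point I expect in the argument is the tie-breaking rule: if adds were processed before removes at a shared tag, an intermediate bag could equal $X_i \cup X_{i+1}$, which can be as large as $2(k+1)$ and would violate the width bound.
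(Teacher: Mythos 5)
Your proof is correct and takes essentially the same approach as the paper's sketch: interpolating between consecutive bags one vertex at a time, with all removals from $X_i \setminus X_{i+1}$ performed before any additions from $X_{i+1} \setminus X_i$. Your event-based phrasing is just a repackaging of that interpolation, though you are more explicit than the paper about the remove-before-add ordering (which is indeed what keeps every intermediate bag inside some $X_i$ and hence preserves the width) and about verifying the path-decomposition axioms.
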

\begin{proof}[Proof sketch.]
  Let $\calX = (X_1, X_2, \ldots, X_r)$ be a path decomposition, and define $X_0 = \varnothing$. For each index $i = 0, 1, \ldots, r - 1$, let $\Delta_i = X_i \triangle X_{i+1}$ denote the symmetric difference of $X_i$ and $X_{i+1}$. If $\abs{\Delta_i} = 1$, then exactly one element was added or removed between $X_i$ and $X_{i+1}$, as desired. Otherwise, if $\abs{\Delta_i} > 1$, insert a sequence of $\abs{\Delta_i} - 1$ sets between $X_i$ and $X_{i+1}$ where each set is formed by removing a single element from $X_i \setminus X_{i+1}$ or adding a single element from $X_{i+1} \setminus X_i$. The resulting path decomposition $\calY$ has length $s = 2n$ because every vertex except the unique element in $Y_1$ is added exactly once, and every element is removed exactly once.
\end{proof}

The following seminal result of Bodlaender shows that computing the pathwidth and optimal path decompositions of a graph  is fixed parameter tractable.

\begin{lthm}[Bodlaender~\cite{Bodlaender1996-linear}]
  \label{thm:pathwidth}
  Let $G$ be a graph and let $k \in \N$ be a constant.  There is an algorithm that decides whether $\pw(G) \leq k$ in $O(f(k)\abs{G})$ time. If $\pw(G) \leq k$, then the algorithm outputs a path decomposition $\calX$ of $G$ of width $k$.
\end{lthm}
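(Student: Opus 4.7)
The plan is to proceed in two phases: first compute an approximate path decomposition of width $g(k)$ in linear time, then refine it via dynamic programming to produce an exact width-$k$ decomposition or certify that none exists. For the approximation phase, I would show that any graph $G$ with $\pw(G) \leq k$ contains a constant fraction of vertices that are ``locally reducible''---for instance, vertices of bounded degree whose neighborhoods can be absorbed into a small number of bags, or pairs of twin-like vertices that can be contracted. Contracting these simultaneously yields a graph $G'$ with $|G'| \leq \alpha |G|$ for some $\alpha < 1$ and $\pw(G') \leq k + O(1)$. Recursing on $G'$ and lifting the returned decomposition back to $G$ with a constant width increase gives, after $O(\log n)$ recursive levels, a decomposition of $G$ of width $g(k) = O(k)$.

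For the exact phase, I would convert the approximate decomposition into a nice path decomposition via Lemma~\ref{lem:nice-path} and run dynamic programming along the bags $X_1, X_2, \ldots, X_r$. At each bag $X_i$, the DP state records an equivalence class describing all possible width-$k$ extensions of a partial path decomposition of the subgraph induced on vertices already removed, classified by how they interact with the current bag $X_i$. Since $|X_i| \leq g(k) + 1$, the number of equivalence classes is bounded by a function $h(k)$ depending only on $k$; transitions across consecutive bags can then be computed in time $h(k)^{O(1)}$, and summing over the $O(|G|)$ bags of the nice decomposition gives a total cost of $O(h(k)\,|G|)$. If any state at the final bag corresponds to a valid width-$k$ decomposition, we trace back through the DP table to recover the actual bags $\calX$; otherwise we report that $\pw(G) > k$.

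The main obstacle is achieving truly linear---rather than $O(n \log n)$---time in the approximation phase. A naive implementation of the reduction rules performs a global scan at each recursion level and pays $\Theta(n)$ per level, giving total cost $\Theta(n \log n)$. The subtle point is an amortized argument showing that each vertex participates in reductions only $O(1)$ times across the entire recursion, so that the total reduction work telescopes to $O(|G|)$. Maintaining the combinatorial invariants that certify bounded pathwidth under these aggressive contractions---while still permitting the lifted decomposition to be reconstructed correctly at each recursive level---is the most delicate portion of the proof, and is where Bodlaender's original argument invests the bulk of its technical effort.
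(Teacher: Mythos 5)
The paper does not prove this statement: Theorem~\ref{thm:pathwidth} is imported verbatim from Bodlaender~\cite{Bodlaender1996-linear} and is used as a black box (only via Corollary~\ref{cor:nice-path}), so there is no in-paper proof to compare yours against. Judged on its own terms, your sketch does capture the correct two-phase skeleton of Bodlaender's argument---reduce-and-recurse to obtain an approximate decomposition of width $O(k)$, then run an exact dynamic program over that decomposition---but it misplaces the difficulty and contains errors in the parts it does commit to.

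Concretely: (1) Contraction can only decrease pathwidth (pathwidth is minor-monotone), so $\pw(G') \leq k$ automatically; the loss occurs in the \emph{lifting} direction, where expanding each contracted pair doubles bag sizes and yields width $2k+1$, not $k+O(1)$ as you state. (2) Your worry about $O(n\log n)$ versus $O(n)$ is a non-issue: each recursion level does work linear in the \emph{current} graph size, and since sizes shrink geometrically the total telescopes to $O(n)$ by a geometric series---no amortized argument across levels is needed. The genuinely hard parts, which your sketch passes over, are (a) the structural lemma that every graph of pathwidth at most $k$ either has many reducible (``I-simplicial'' or matchable) vertices or can be rejected outright, and (b) the exact phase, i.e.\ the Bodlaender--Kloks dynamic program, where the claim that the ``equivalence classes'' of partial width-$k$ path decompositions relative to a bag number only $h(k)$ is precisely the content of the characteristic/interval-model machinery and cannot be asserted in one sentence. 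As a citation-level summary your plan is serviceable; as a proof it has gaps exactly where the original invests its effort.
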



The next corollary is immediate from Theorem~\ref{thm:pathwidth} and Lemma~\ref{lem:nice-path}.

\begin{cor}
  \label{cor:nice-path}
  For any graph $G$, a nice path decomposition of $G$ can be computed in time $O(f(k) \abs{G})$ where $k = \pw(G)$ and $f$ is some function depending only on $k$.
\end{cor}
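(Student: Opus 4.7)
The plan is to combine the two preceding results in the natural two-step pipeline. First I would invoke Theorem~\ref{thm:pathwidth} (Bodlaender's theorem): running it for successive values $k = 1, 2, 3, \ldots$ (or, if $\pw(G)$ is already known, for that single value) produces a path decomposition $\calX = (X_1, \ldots, X_r)$ of $G$ of width exactly $k = \pw(G)$ in time $O(f(k) \abs{G})$ for some function $f$ depending only on $k$. Note that Bodlaender's algorithm only guarantees a decomposition of width $k$ whenever $\pw(G) \leq k$, so either one supplies $k$ as an advice parameter or iteratively tries increasing values, absorbing the overhead into the function $f$.

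Next I would feed $\calX$ into the procedure described in Lemma~\ref{lem:nice-path}. That lemma turns any width-$k$ path decomposition of an $n$-vertex graph into a nice path decomposition $\calY = (Y_1, \ldots, Y_{2n})$ of the same width $k$, at the cost of an additional $O(k n)$ time. Since $n \leq \abs{G}$ and $kn \leq f(k) \abs{G}$ (after enlarging $f$ if necessary), this additional cost is absorbed into the $O(f(k) \abs{G})$ bound produced by the first step.

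Combining the two steps gives a nice path decomposition of $G$ of width $\pw(G)$ in time $O(f(k) \abs{G})$, as claimed. There is no real obstacle here beyond bookkeeping: one just has to verify that the width bound output by Bodlaender's algorithm is preserved by the transformation of Lemma~\ref{lem:nice-path} (which it is, since the lemma only inserts intermediate sets obtained by single additions or removals from existing bags, and such sets are subsets of $X_i \cup X_{i+1}$, hence have size at most $k+1$). Thus the corollary follows immediately from the two cited results.
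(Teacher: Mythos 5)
Your proof is correct and matches the paper's approach exactly: the paper states that the corollary is immediate from Theorem~\ref{thm:pathwidth} and Lemma~\ref{lem:nice-path}, which is precisely the two-step pipeline you describe. Your additional remarks about iterating over $k$ and preserving the width bound are just careful bookkeeping of details the paper leaves implicit.
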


\section{A Generic Construction}
\label{sec:generic-construction}

Let $\calP$ be a finite poset and let $H(\calP) = (V, E)$ be its Hasse diagram.   An \emph{edge coloring} $\phi$ of $H(\calP)$ assigns each edge $e \in E$ a color $\phi(e) \in \Z^+$.   In this section, we describe a simple algorithm, $\ConstructInstance$, that given the Hasse diagram $H(\calP)$ and an edge coloring $\phi$ returns an  SM instance $I = I(\calP, \phi)$ whose rotation poset is isomorphic to $\calP$. The number of agents in $I$  and the running time of the algorithm are linear in the size of $H(\calP)$.  In the later sections, we will demonstrate the remarkable versatility of $\ConstructInstance$.  By choosing appropriate edge colorings and tweaking  some of its parts, we show that the algorithm is capable of producing SM instances whose preference lists have all kinds of properties.

Let $V = [p] = \{1, 2, \hdots, p\}$.  Like  Irving and Leather~\cite{Irving1986-complexity},  our goal is to create rotations $\rho_1, \rho_2, \ldots, \rho_p$ so that the mapping $f(v) = \rho_v$ for $v \in [p]$ is an isomorphism from $\calP$ to $\calR(I)$. Here, we provide a high-level description of  $\ConstructInstance$ while pseudo-code is provided in Algorithm~\ref{alg:generic}. Figure~\ref{fig:generic-construction} shows an illustration of the algorithm's output for a poset $\calP$ that will serve as a running example for the remainder of the paper. We find it instructive to view an SM instance as being defined on the \dft{primal graph} $G = (M \cup W, F)$, where $\set{m, w} \in F$ if and only if $m$ and $w$ form an acceptable pair. The preferences are then defined by having each agent rank its incident \emph{edges} in the primal graph.

Let $H(\calP) = (V, E)$ be the Hasse diagram of a poset $\calP$, and let $\phi \colon E \to \Z^+$ be an edge coloring. The structure of $G$ corresponding to the SM instance constructed by $\ConstructInstance(H(\calP), \phi)$ is as follows. For each vertex $v \in V$, there is a corresponding cycle $\rho_v$ in $G$ of length at least $4$ (i.e., with at least two men and the same number of women). For each (directed) edge $(u, v) \in E$, there is a corresponding (undirected) edge between a woman in $\rho_u$ and a man in $\rho_v$. The coloring $\phi$ determines the size and the identities of the agents in each cycle, as well as the identities of the endpoints of edges between cycles (see Algorithm~\ref{alg:generic}). Specifically, let $C_v$ denote the set of colors of edges incident to $v$ in $H(\calP)$. For technical reasons, if there are fewer than two edge colors incident to $v$, we add additional colors to $C_v$ to ensure $\abs{C_v} \geq 2$; cf.\ Lines~\ref{ln:pad-colors-start}--\ref{ln:pad-colors-end}. Each cycle $\rho_v$ contains one man and one woman for each distinct color $c \in C_v$. For $(u, v) \in E$, an edge between $\rho_u$ and $\rho_v$ connects the woman in $\rho_u$ and man in $\rho_v$ corresponding to the color $\phi((u, v))$. This specifies the structure of the primal graph $G$.

The preferences on the edges of $G$ are defined as follows. Each cycle $\rho_v$ in $G$ supports two perfect matchings, one of which is preferred by all men in $\rho_v$ (cf.\ Lines~\ref{ln:man-opt-start}--\ref{ln:man-opt-end}), the other of which is preferred by all women in $\rho_v$ (cf.\ Lines~\ref{ln:woman-opt-start}--\ref{ln:woman-opt-end}). If an agent in $\rho_v$ has a neighbor in some cycle $\rho_u$ with $u \neq v$ (added in Lines~\ref{ln:edge-loop-start}--\ref{ln:edge-loop-end}),  the agent will prefer this neighbor ``in between'' their two neighbors in $\rho_v$. In the following lemmas, we show that if preferences are defined in this way, then (1) each cycle $\rho_v$ corresponds to a rotation in the SM instance, (2) every rotation in the SM instance corresponds to $\rho_v$ for some $v$, and (3) that $\rho_v$ is exposed in a stable matching if and only if all rotations $\rho_u$ for which there is an edge between a woman $w \in \rho_u$ and man $m \in \rho_v$ have been eliminated. Thus, the SM instance realizes $\calP$.

\medskip 
\begin{algorithm}
  \caption{$\ConstructInstance(H=(V,E), \phi)$ where $H=H(\calP)$ is the Hasse diagram of poset $\calP$ and $\phi$ is an edge coloring of $H$ that uses colors from $\Z^+$.  The algorithm constructs an SM instance $I(\calP, \phi)$ whose rotation poset is isomorphic to $\calP$.\label{alg:generic}}
   \begin{algorithmic}[1]
   \STATE Assume $V = \set{1, 2, \hdots, p}$. 
   \FORALL{$e=(u,v) \in E$}
       \STATE add $\phi(e)$ to the sets $C_u$ and $C_v$
   \ENDFOR
   \FOR{$v=1$ \TO $p$}\label{ln:pad-colors-start}
       \IF{$|C_v| < 2$}
         \STATE add colors $1$ and/or $2$ to $C_v$ so $|C_v| = 2$
        \ENDIF
    \ENDFOR\label{ln:pad-colors-end}
       \STATE $M \leftarrow   \set{m_{c,1} \sucht  c \in C_1} \cup \set{m_{c,2} \sucht  c \in C_2} \cup \hdots \cup \set{m_{c,p} \sucht  c \in C_p}$
    \STATE $W \leftarrow  \set{w_{c,1} \sucht  c \in C_1} \cup \set{w_{c,2} \sucht  c \in C_2} \cup \hdots \cup \set{w_{c,p} \sucht  c \in C_p}$
    \FORALL{$m_{c,v} \in M$} \label{ln:man-opt-start}
              \STATE  add $m_{c,v}$ and $w_{c,v}$ to each other's preference lists
    \ENDFOR \label{ln:man-opt-end}
    \FORALL{$e = (u, v) \in E$}\label{ln:edge-loop-start}
       \STATE $c \leftarrow  \phi((u,v))$ \label{ln:i-def}
       \STATE add $m_{c, v}$ to the front of $w_{c, u}$'s preference list \label{ln:add-man}
       \STATE add $w_{c, u}$ to the end of $m_{c, v}$'s preference list \label{ln:add-woman} 
    \ENDFOR\label{ln:edge-loop-end}
    \FOR{$v = 1$ \TO $p$} \label{ln:woman-opt-start}    
      \STATE let $\pi_v$ be some ordering of the colors  in $C_v$
      \STATE \COMMENT{We shall refer to the $i$th element in the ordering as $\pi_v(i)$ and use $b_v$ to denote $|C_v|$.}
      \STATE $\rho_v \leftarrow (m_{\pi_v(1), v},  w_{\pi_v(1), v}), (m_{\pi_v(2), v},  w_{\pi_v(2), v}), \hdots, (m_{\pi_v(b_v), v},  w_{\pi_v(b_v), v})$\label{ln:rho-def}
      \STATE \COMMENT{Note that $\rho_v$ is a circular list.}
      \FOR{$i = 1$ \TO $b_v$}
        \STATE add $w_{\pi_v(i+1), v}$ to the  end of $m_{\pi_v(i), v}$'s preference list
        \STATE add $m_{\pi_v(i), v}$ to the beginning of $w_{\pi_v(i+1), v}$'s  preference list\label{ln:add-next-man}
      \ENDFOR 
    \ENDFOR  \label{ln:woman-opt-end}   
  \end{algorithmic}
\end{algorithm}

\begin{figure}
    \begin{centering}
    \includegraphics[scale=1]{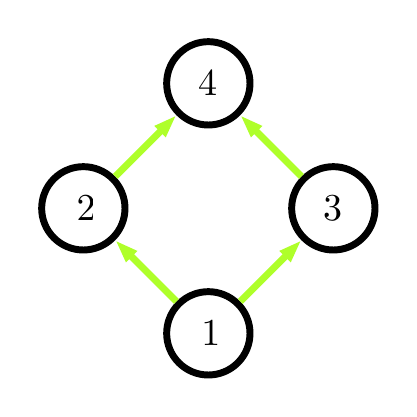}
    \hfill
    \includegraphics[scale=1]{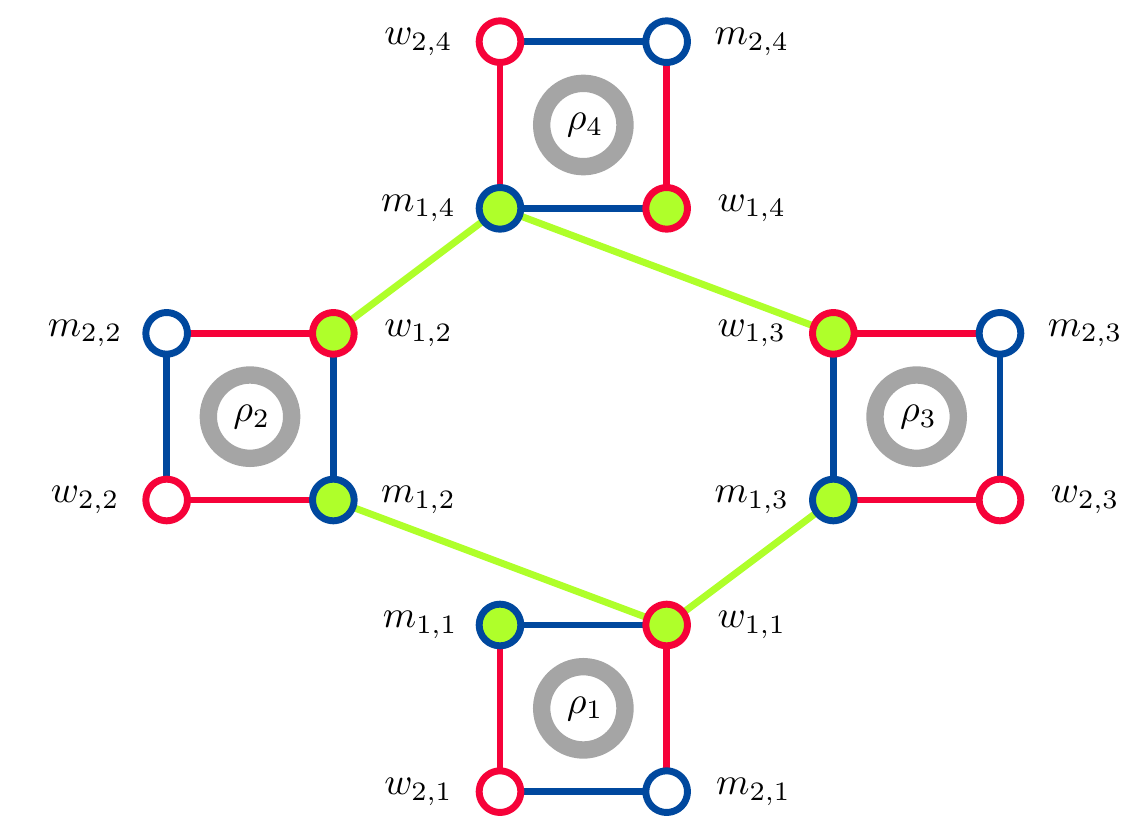}
    
    \bigskip

    \begin{minipage}{0.45\textwidth}
      \[
      \begin{array}{rllll}
        m_{1,1}: & w_{1,1} & w_{2,1}\\
        m_{2,1}: & w_{2,1} & w_{1,1} \\
        m_{1,2}: & w_{1,2} & w_{1,1} & w_{2,2}\\
        m_{2,2}: & w_{2,2} & w_{1,2}  \\
        m_{1,3}: & w_{1,3} & w_{1,1} & w_{2,3}\\
        m_{2,3}: & w_{2,3} & w_{1,3} \\
        m_{1,4}: & w_{1,4} & w_{1,2} & w_{1,3} & w_{2,4}\\
        m_{2,4}: & w_{2,4} & w_{1,4} \\
      \end{array}
      \]
    \end{minipage}
    \hfill
    \begin{minipage}{0.45\textwidth}
      \[
      \begin{array}{rllll}
        w_{1,1}: m_{2,1} &  m_{1,2} & m_{1,3} & m_{1,1} \\
        w_{2,1}: m_{1,1} & m_{2,1} \\
        w_{1,2}: m_{2,2} & m_{1,4} &  m_{1,2} \\
        w_{2,2}: m_{1,2} & m_{2,2} \\
        w_{1,3}: m_{2,3} &  m_{1,4} &  m_{1,3} \\
        w_{2,3}: m_{1,3} & m_{2,3} \\
        w_{1,4}: m_{2,4} & m_{1,4} \\
        w_{2,4}: m_{1,4} & m_{2,4} \\      
      \end{array}  
      \]
    \end{minipage}
    \end{centering}
    \caption{On the left is the Hasse diagram of a poset $\calP$  with four elements whose edges are all colored $1$ (indicated by green). Thus $C_1 = C_2 = C_3 = C_4 = \set{1, 2}$. On the right is the graph corresponding to SM instance created by $\ConstructInstance$, while the preference lists of the agents are shown below. The men are represented by nodes outlined in blue and the women by nodes outlined in red. The colors on the edges indicate preference. The blue and red edges correspond to acceptable partners added in Lines~\ref{ln:man-opt-start}--\ref{ln:man-opt-end}  and Lines~\ref{ln:woman-opt-start}--\ref{ln:woman-opt-end} respectively. The green edges correspond acceptable partners added in Lines~\ref{ln:edge-loop-start}--\ref{ln:edge-loop-end} and meant to enforce the (also green) edges of the Hasse diagram of $\calP$. The man-optimal matching consists of the blue edges, while the woman-optimal matching consists of the red edges. The rotations in the instance are alternating red-blue cycles. Notice that no green edge corresponds to a stable pair, but the green edges enforce the partial order of rotations in $\calR(I)$. For example, rotations $\rho_2$ and $\rho_3$ are only exposed in a stable matching $\mu$ in which $w_{1,1}$ is matched with $m_{2, 1}$ (her most preferred partner)---i.e., a matching in which $\rho_1$ has been eliminated.}
  \label{fig:generic-construction}
\end{figure}

Conceptually, our construction differs from that of Irving and Leather. In their algorithm, a man $m_{u,v}$ and a woman $w_{u,v}$ are created for every edge $(u,v) \in E$, and $m_{u,v}$ participates in the rotations $\rho_u$ and $\rho_v$ to enforce the precedence relation between the two rotations. That is, there is an edge from $\rho_u$ to $\rho_v$ in the rotation poset because of Rule 1 in Remark~\ref{rem:dag-structure}.    In our construction, however,   the edge is present in the rotation poset because of Rule 2 in Remark~\ref{rem:dag-structure} and the pair of agents  behind the rule is $(m_{c,v}, w_{c,u})$ which,  incidentally, is not a stable pair of the instance.

\begin{ntn}
  For a fixed poset $\calP$, element $v \in \calP$, and edge coloring $\phi$, $C_v \subseteq \Z^+$ denotes the set of colors associated with $v$. That is, $C_v$ contains the set of colors of edges incident to $v$.\footnote{In later sections, we include additional colors in $C_v$. This modification does not affect the correctness of the algorithm.} We denote $b_v = \abs{C_v}$, the number of such colors. We choose an arbitrary cyclic ordering of $C_v$, and denote this ordering by $\pi_v : [b_v] \to C_v$. That is, $\pi_v(1)$ is the ``first'' color in $C_v$, $\pi_v(2)$ is the second, and so on. Finally, given a color $c \in C_v$, we denote its ``next'' and ``previous'' colors in $\pi_v$ by $c^+$ and $c^-$, respectively.\footnote{While $c^+$ and $c^-$ depend both on $C_v$ and the permutation $\pi_v$, we suppress $\pi_v$ from the notation, as it will always be clear from context.} That is, if $c = \pi_v(i)$, then $c^+ = \pi_v(i+1)$ and $c^- = \pi_v(i-1)$. 
\end{ntn}
  
\begin{lem}
  \label{lem:generic-optimal-matchings}
  Let $I = I(\calP, \phi)$ be the SM instance created by $\ConstructInstance(H(\calP), \phi)$.   Let $\mu_0$ and $\mu_z$ denote the man-optimal and woman-optimal stable matchings of $I$ respectively.  Then
  \[
  \mu_0 = \set{\set{m_{c,v}, w_{c,v}} \sucht c \in C_v, v \in [p]}
  \]
  and 
  \[
  \mu_z = \set{\set{m_{c,v}, w_{c^+,v}} \sucht c \in C_v, v \in [p]} = \set{ \set{m_{c^-,v}, w_{c,v}} \sucht c \in C_v, v \in [p]}
  \]
  where $c^+$ and $c^-$ are respectively the colors to $c$'s next and previous colors in the (cyclic) ordering $\pi_v$ of $C_v$.  
\end{lem}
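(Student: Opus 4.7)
My plan is to verify the two matchings directly by inspecting the top of each agent's preference list after $\ConstructInstance$ has executed. The key observation is that all additions to men's lists (in Lines~\ref{ln:man-opt-start}--\ref{ln:man-opt-end}, \ref{ln:add-woman}, and the line adding $w_{\pi_v(i+1),v}$) are appended to the \emph{end} of the list, whereas the single write that can displace the head of a woman's list happens in Line~\ref{ln:add-next-man} of the final $v$-loop, which executes \emph{after} Line~\ref{ln:add-man} of the edge loop. Tracking who sits at the top of each list will immediately give $\mu_0$ and $\mu_z$ as the matchings that pair every agent on one side with their top partner.

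First I would determine the top of each man's list. The initial assignment in Lines~\ref{ln:man-opt-start}--\ref{ln:man-opt-end} places $w_{c,v}$ at the head of $m_{c,v}$'s list. All subsequent additions to $m_{c,v}$'s list (namely $w_{c,u}$ in Line~\ref{ln:add-woman} for each $(u,v)\in E$ with $\phi((u,v))=c$, and $w_{c^+,v}$ added in the final loop) are appended to the end. Hence $w_{c,v}$ is $m_{c,v}$'s top choice. Consequently, the matching
\[
\mu_0 = \bigl\{\{m_{c,v}, w_{c,v}\} \sucht c \in C_v,\ v \in [p]\bigr\}
\]
pairs every man with his most preferred partner, so no blocking pair can contain a man. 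Thus $\mu_0$ is stable, and because every man achieves his optimum over all matchings (stable or otherwise), $\mu_0$ is the man-optimal stable matching.

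Next I would determine the top of each woman's list. After Line~\ref{ln:man-opt-start}--\ref{ln:man-opt-end}, $w_{c,v}$'s list is $[m_{c,v}]$. In the edge loop, Line~\ref{ln:add-man} prepends $m_{c,u'}$ to $w_{c,v}$'s list for every edge $(v,u')\in E$ with $\phi((v,u'))=c$. Finally, when the outer loop of Lines~\ref{ln:woman-opt-start}--\ref{ln:woman-opt-end} processes $v$ at iteration $i$ with $\pi_v(i+1)=c$, Line~\ref{ln:add-next-man} prepends $m_{\pi_v(i),v}=m_{c^-,v}$ to $w_{c,v}$'s list. Since this happens \emph{after} the edge loop, $m_{c^-,v}$ ends up at the very top of $w_{c,v}$'s preference list. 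Therefore the matching
\[
\mu_z = \bigl\{\{m_{c^-,v}, w_{c,v}\} \sucht c \in C_v,\ v \in [p]\bigr\}
\]
pairs every woman with her most preferred partner and is thus stable and woman-optimal by the same reasoning as above.

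The two expressions for $\mu_z$ claimed in the lemma are equivalent: substituting $c \mapsto c^+$ in $\{m_{c^-,v}, w_{c,v}\}$ yields $\{m_{c,v}, w_{c^+,v}\}$, and since $c \mapsto c^+$ is a bijection on $C_v$ for each $v$, the two sets of pairs coincide. No step of the argument is genuinely hard; the only care required is in bookkeeping the order in which preference-list insertions happen, in particular ensuring that Line~\ref{ln:add-next-man} occurs after Line~\ref{ln:add-man} so that $m_{c^-,v}$ (rather than some cross-cycle neighbor) ends up at the head of $w_{c,v}$'s list.
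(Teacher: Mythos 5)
Your proof is correct and follows essentially the same route as the paper: both arguments identify the head of each man's list as $w_{c,v}$ and the head of each woman's list as $m_{c^-,v}$, then conclude that a matching giving every agent on one side their top choice is stable and optimal for that side. Your version merely makes the insertion-order bookkeeping (front vs.\ end, edge loop before the final $v$-loop) more explicit than the paper's terse statement.
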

\begin{proof}
  In the preference lists created by $\ConstructInstance$,  the first choice of each man $m_{c,v}$ is $w_{c,v}$ (Lines~\ref{ln:man-opt-start}--\ref{ln:man-opt-end}). Thus the matching $\mu_0$ is a perfect matching assigning each man to his most preferred partner, which is clearly stable. On the other hand, the first choice of each woman $w_{c,v}$ is $m_{c^-,v}$ (Lines~\ref{ln:woman-opt-start}--\ref{ln:woman-opt-end}).    Again, no two women have the same first choice, so $\mu_z$ is the woman-optimal stable matching.
\end{proof}

\begin{rem}
  For our discussion below,  it is helpful to also express $\mu_0$ and $\mu_z$ in terms of the circular lists $\rho_v$ for $v \in [p]$. Recall that $b_v = \abs{C_v}$, and $\pi_v: [b_v] \rightarrow C_v$ is a cyclic ordering of colors in $C_v$ with $\pi_v(i)$ indicating the color in the $i$th position of the ordering.  Then we have
  \[
  \mu_0 =  \bigcup_{v=1}^{p} \set{ (m_{\pi_v(i), j},  w_{\pi_v(i), j}) \sucht i \in [b_v]}
  \]
  and 
  \begin{align*}
    \mu_z &=  \bigcup_{v=1}^p \set{ (m_{\pi_v(i), j},  w_{\pi_v(i+1), j}) \sucht i \in [b_v]}  
    = \bigcup_{v=1}^p \set{ (m_{\pi_v(i-1), j},  w_{\pi_v(i), j}) \sucht i \in [b_v]}. 
  \end{align*}
\end{rem}

\begin{lem}
  \label{lem:generic-exposed-matchings}
  Let $I = I(\calP, \phi)$ be the SM instance created by $\ConstructInstance(H(\calP), \phi)$.  Let $\rho_1, \rho_2, \hdots, \rho_p$ be the circular lists defined in Line~\ref{ln:rho-def} of the algorithm.  Without loss of generality, assume $1, 2, \hdots, p$ is a topological ordering of the vertices of $H(\calP)$.   Then for $v = 1, \hdots, p$, $\rho_j$ is a rotation exposed in the stable matching $\mu_{v-1}$ so that $\mu_v = \mu_{v-1}/\rho_v$  is also a stable matching of $I$.  Moreover,  $\mu_z = \mu_{p}$ so  the  set $\set{\rho_v \sucht v = 1, \ldots, p}$ contains all the rotations of $I$.

\end{lem}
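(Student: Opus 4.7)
The plan is to prove the lemma by induction on $v$, showing that each intermediate matching $\mu_{v-1}$ takes a predictable ``half-eliminated'' form from which the rotation $\rho_v$ is immediately exposed. Applying the definition of rotation elimination successively to $\rho_1, \ldots, \rho_{v-1}$, the matching $\mu_{v-1}$ should equal
\[
\mu_{v-1} = \bigcup_{u < v} \set{(m_{c,u}, w_{c^+,u}) \sucht c \in C_u} \;\cup\; \bigcup_{u \geq v} \set{(m_{c,u}, w_{c,u}) \sucht c \in C_u},
\]
so each cycle with index $u < v$ is in its ``woman-optimal'' configuration while each cycle with index $u \geq v$ remains in its ``man-optimal'' configuration. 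The base case $v = 1$ is Lemma~\ref{lem:generic-optimal-matchings}.

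Next I would read off the preference lists produced by $\ConstructInstance$ and observe that each man $m_{c,v}$'s list has the form
\[
w_{c,v},\; w_{c, u_1},\, w_{c, u_2},\, \ldots,\, w_{c, u_t},\; w_{c^+, v},
\]
where $u_1, \ldots, u_t$ are exactly the predecessors of $v$ in $H(\calP)$ whose incoming edges are colored $c$ (the intermediate entries come from Line~\ref{ln:add-woman}, the final entry from Line~\ref{ln:add-next-man}). Analogously, each $w_{c,u}$'s list begins with $m_{c^-, u}$ as her top choice. By the topological ordering assumption every $u_j$ satisfies $u_j < v$, so in $\mu_{v-1}$ each $w_{c, u_j}$ is already matched with her top choice $m_{c^-, u_j}$ and will not ``accept'' $m_{c,v}$. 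Meanwhile $w_{c^+, v}$, still in cycle $v$'s man-optimal configuration, is matched with her worst-ranked partner $m_{c^+, v}$ and therefore strictly prefers $m_{c, v}$ (her top choice). Both conditions of Definition~\ref{dfn:rotation} are met, so $\rho_v$ is exposed in $\mu_{v-1}$, and $\mu_v = \mu_{v-1} \setminus \rho_v$ is stable by the discussion following Definition~\ref{dfn:rotation}.

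Setting $v = p$ leaves every cycle in its woman-optimal configuration, and comparing with Lemma~\ref{lem:generic-optimal-matchings} gives $\mu_p = \mu_z$. For the final assertion, Lemma~\ref{lem:rotation-elimination} guarantees that the set of rotations eliminated along any path from $\mu_0$ to a given stable matching is uniquely determined by that matching; since $\mu_z$ corresponds to $R(I)$ and is reached via $\rho_1, \ldots, \rho_p$, these must exhaust $R(I)$. The main obstacle is the inductive step itself: one must simultaneously track which women lie strictly between $w_{c,v}$ and $w_{c^+,v}$ in $m_{c,v}$'s list \emph{and} what partner each is currently matched to in $\mu_{v-1}$. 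The topological ordering is precisely the ingredient that synchronizes these two facts, forcing every intervening woman $w_{c,u_j}$ into her woman-optimal pairing so that she ``refuses'' $m_{c,v}$ and lets the next woman in line for $m_{c,v}$ be $w_{c^+,v}$, exactly as the cyclic structure of $\rho_v$ demands.
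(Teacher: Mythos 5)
Your proposal is correct and follows essentially the same route as the paper's proof: induction along the topological order, maintaining the explicit ``half-eliminated'' form of $\mu_{v-1}$, reading off that the intervening women $w_{c,u_j}$ (with $u_j$ a predecessor of $v$) are already matched to their top choice $m_{c^-,u_j}$ and hence refuse $m_{c,v}$, while $w_{c^+,v}$ prefers $m_{c,v}$ to $m_{c^+,v}$, so $\rho_v$ is exposed; the closing argument that $\mu_p = \mu_z$ exhausts $R(I)$ matches the paper's as well. No gaps.
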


\begin{proof}
  We prove the first part of the lemma by induction on $v$.  Notice that for every man $m_{c,v}$, his most preferred partner is his man-optimal stable partner, $w_{c,v}$,  while his least preferred is his woman-optimal stable partner, $w_{c^+, v}$.  Any women added by the algorithm in Lines~\ref{ln:edge-loop-start} to \ref{ln:edge-loop-end} are between these two women.  Consider vertex $1$ in $H$.  There are no edges entering $1$ because it is the first vertex in the topological ordering.  Thus, Lines~\ref{ln:edge-loop-start}--\ref{ln:edge-loop-end} do not add any women to $m_{c, 1}$'s list for any $c \in C_1$. That is, $m_{c,1}$'s list consists of just $w_{c,1}$ followed by $w_{c^+, 1}$. Therefore, 
  \[
  \rho_1 = (m_{\pi_1(1), 1},  w_{\pi_1(1), 1}), (m_{\pi_1(2), 1},  w_{\pi_1(2), 1}), \hdots, (m_{\pi_1(b_1), 1},  w_{\pi_1(b_1), 1})
  \]
  is a rotation exposed in $\mu_0$, hence $\mu_1 = \mu_0/ \rho_{1}$ is a stable matching of $I$.

  Now assume that for $v = 1$ to $j-1$, $\rho_{v}$ is a rotation exposed in the stable matching $\mu_{v-1}$.  Thus, $\mu_{j-1}$ is a stable matching of $I$. Let us now prove that $\rho_{j}$ is a rotation exposed in $\mu_{j-1}$. First, we note that
  \[
  \mu_{j-1} =  \bigcup_{v=1}^{j-1} \set{(m_{\pi_v(i),v}, w_{\pi_v(i+1),v}) \sucht i \in [b_v]} \cup \bigcup_{v=j}^{p} \set{(m_{\pi_v(i),v}, w_{\pi_v(i),v}) \sucht i \in [b_v]}.
  \]
  That is, every $m_{c,v}$ is matched to his woman-optimal stable partner if $v \leq j-1$ and to his man-optimal stable partner if $v \ge j$. 

  Next, consider $m_{c, j}$, $c \in C_{j}$. His preference list consists of $w_{c,j}$, followed by zero or more women added by Lines~\ref{ln:edge-loop-start}--\ref{ln:edge-loop-end}, and ends with $w_{c^+, j}$. The women added by Lines~\ref{ln:edge-loop-start}--\ref{ln:edge-loop-end}  are of the form $w_{c,u}$ where $(u, j)$ is an edge of $H$ and $c = \phi((u,j))$. But $u$ precedes $j$ in $H$ so it follows by the inductive hypothesis that $w_{c,u}$ is matched to $m_{c^{-}, u}$---her woman-optimal stable partner---in $\mu_{j-1}$. Thus, she does not prefer $m_{c,j}$ to her current partner in $\mu_{j-1}$. But $w_{c^+ j}$ does prefer $m_{c,j}$---her woman-optimal stable partner---to her current partner $m_{c^+,j}$ in $\mu_{j-1}$.  Hence, 
  \[
  \rho_{j} = (m_{\pi_{j}(1), j}, w_{\pi_{j}(1), j}), (m_{\pi_{j}(2), j}, w_{\pi_{j}(2), j}), \cdots,  (m_{\pi_{j}(b_{j}), j}, w_{\pi_{j}(b_{j}),j})
  \]
  is a rotation exposed in $\mu_{j-1}$ so $\mu_{j} =  \mu_{j-1}/ \rho_{j}$ is a stable matching of $I$.  

  By induction, the first part of the lemma is true.  It is easy to see that $\mu_z = \mu_p$,  which was obtained by eliminating all the rotations in the set $\set{\rho_v \sucht v = 1, \ldots, p}$.  Since $\mu_z$ can only be obtained by eliminating all rotations in $\calR(I)$, it follows that  $\calR(I) = \set{\rho_v \sucht v = 1, \ldots, p}$, as desired. 
\end{proof}
  
\begin{rem}
For each $v \in [p]$, let $A_v = \set{m_{c,v} \sucht c \in C_v } \cup \set{w_{c,v} \sucht c \in C_v }$.    Lemma~\ref{lem:generic-exposed-matchings}'s characterization of the rotations of $I$ means that every agent in $A_v$ is part of only one rotation: $\rho_v$.  Moreover, eliminating $\rho_v$ will shift each agent from their man-optimal stable partner to their woman-optimal stable partner.  

The above property highlights the simplicity of SM instance $I$.  In {\it every} stable matching of $I$,  {\it each} agent is matched to either their man-optimal or their woman-optimal stable partner.  But we note that it does not imply that a matching where every agent is paired to either their man-optimal or their woman-optimal stable partner is a stable matching of $I$.   For example, for any $u, v \in [p]$, if $u$ precedes $v$ in $\calP$, then there is {\it no} stable matching where the agents in $A_u$ are matched to their man-optimal stable partners while the agents in $A_v$ are matched to their woman-optimal stable partners.  
\end{rem}  
  
\begin{rem}
\label{rem:toporder}
We also note that Lemma~\ref{lem:generic-exposed-matchings} applies to every topological ordering $\tau$ of the vertices of $H$.  Let $\tau(i)$ denote the $i$th element in the ordering.  Let $\sigma_0 = \mu_0$.  Then for $i = 1, \hdots, p$,  $\rho_{\tau(i)}$ is exposed in the stable matching $\sigma_{i-1}$ and $\sigma_{i} = \sigma_{i-1} / \rho_{\tau(i)}$.  
\end{rem}

\begin{lem}
  \label{lem:generic-isomorphism}
  Let $\calP$ be a finite poset, $H(\calP) = (V, E)$  its Hasse diagram, and $\phi$ an arbitrary edge coloring of $H(\calP)$. Let $I = I(\calP, \phi)$ be the SM instance created by $\ConstructInstance(H(\calP), \phi)$. Then the relation $f(v) = \rho_v$ is an isomorphism between $\calP$ and $\calR(I)$.
\end{lem}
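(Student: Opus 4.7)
The plan is to leverage Lemma~\ref{lem:generic-exposed-matchings}, which already identifies $R(I) = \set{\rho_v \sucht v \in [p]}$, making $f$ a bijection between $V$ and $R(I)$. What remains is to show that $f$ preserves order in both directions, i.e., $u \prec_\calP v \iff \rho_u \prec \rho_v$. Since both partial orders are transitive and every relation in $\calP$ is generated by the covering edges of $H(\calP)$, this reduces to two claims: (i) for each edge $(u, v) \in E$, $\rho_u \prec \rho_v$ in $\calR(I)$; and (ii) if $u \not\prec_\calP v$, then $\rho_u \not\prec \rho_v$ in $\calR(I)$.

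For (i), I would fix an arbitrary stable matching $\mu$ in which $\rho_v$ is exposed and argue that $\rho_u$ must already have been eliminated in $\mu$. Let $c = \phi((u, v))$. Tracing the algorithm, $m_{c, v}$'s preference list places $w_{c, v}$ first (Lines~\ref{ln:man-opt-start}--\ref{ln:man-opt-end}), then $w_{c, u}$ strictly in the middle (inserted in Line~\ref{ln:add-woman}), and $w_{c^+, v}$ last (Lines~\ref{ln:woman-opt-start}--\ref{ln:woman-opt-end}). By Definition~\ref{dfn:rotation}, exposure of $\rho_v$ forces $w_{c, u}$ to \emph{not} prefer $m_{c, v}$ to her partner in $\mu$. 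However, Lemma~\ref{lem:generic-exposed-matchings} says $w_{c, u}$ participates only in rotation $\rho_u$, so her partner in $\mu$ is either $m_{c, u}$ (when $\rho_u \notin R_\mu$) or $m_{c^-, u}$ (when $\rho_u \in R_\mu$). The algorithm inserts $m_{c, v}$ into $w_{c, u}$'s list in Line~\ref{ln:add-man}, after $m_{c, u}$ was placed in Lines~\ref{ln:man-opt-start}--\ref{ln:man-opt-end} but before $m_{c^-, u}$ is pushed to the very front in Line~\ref{ln:add-next-man}. Hence $w_{c, u}$ prefers $m_{c, v}$ over $m_{c, u}$ but not over $m_{c^-, u}$, so the case $\rho_u \notin R_\mu$ is contradictory, forcing $\rho_u \in R_\mu$.

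For (ii), the plan is constructive. If $u \neq v$ and $u \not\prec_\calP v$, then no directed path in $H(\calP)$ runs from $u$ to $v$, so there is a topological ordering $\tau$ of $V$ in which $v$ precedes $u$. Applying Remark~\ref{rem:toporder} to $\tau$ yields stable matchings $\sigma_0 = \mu_0, \sigma_1 = \sigma_0 \setminus \rho_{\tau(1)}, \sigma_2 = \sigma_1 \setminus \rho_{\tau(2)}, \ldots$, where $\rho_{\tau(i)}$ is exposed in $\sigma_{i-1}$. Taking $i$ with $\tau(i) = v$ produces a stable matching that exposes $\rho_v$ but in which $\rho_u$ has not yet been eliminated, witnessing $\rho_u \not\prec \rho_v$ by Definition~\ref{dfn:rotation-poset}. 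The case $u = v$ is trivial by irreflexivity.

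The main obstacle is (i). Conceptually, the covering edge $(u, v)$ encodes precedence through Rule~2 of Remark~\ref{rem:dag-structure}, with $(m_{c, v}, w_{c, u})$ as the witnessing non-stable pair: $\rho_u$ is the unique rotation moving $w_{c, u}$ above $m_{c, v}$ on her list, while $\rho_v$ is the unique rotation moving $m_{c, v}$ below $w_{c, u}$ on his. Making this precise requires carefully tracking the three insertion phases of the algorithm---Lines~\ref{ln:man-opt-start}--\ref{ln:man-opt-end}, \ref{ln:edge-loop-start}--\ref{ln:edge-loop-end}, and \ref{ln:woman-opt-start}--\ref{ln:woman-opt-end}---to verify the positions of $w_{c,v}, w_{c,u}, w_{c^+,v}$ in $m_{c,v}$'s list and of $m_{c^-, u}, m_{c, v}, m_{c, u}$ in $w_{c, u}$'s list. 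This bookkeeping is the crux of the argument but is routine once the insertion invariants are stated.
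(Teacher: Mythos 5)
Your proposal is correct and takes essentially the same approach as the paper's proof: the forward direction is argued covering-edge by covering-edge via the non-stable pair $(m_{c,v}, w_{c,u})$ and the exposure condition from Definition~\ref{dfn:rotation} (the paper chains these steps along a path of Hasse edges, which is the same use of transitivity), and the non-precedence direction invokes Remark~\ref{rem:toporder} with a topological ordering placing $v$ before $u$, exactly as the paper does for incomparable elements. The only cosmetic difference is that you fold the case $v \prec_{\calP} u$ into the general ``$u \not\prec_{\calP} v$'' argument rather than handling it by symmetry of the forward direction.
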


\begin{proof}
We have already established that $\calP$ and $\calR = \calR(I)$ have the same number of elements. Let $v', v$ be elements of $\calP$. We will now argue that $v'$ is a predecessor of $v$ in $\calP$ if and only if $\rho_{v'}$ is a predecessor of $\rho_{v}$ in  $\calR$.

Assume $v'$ is a predecessor of $v$.  Then there exists a sequence $u_0 = v'$, $u_1, u_2, \ldots, u_{r-1}, u_r$ with $u_r = v$, such that $u_j$ is an immediate predecessor of $u_{j+1}$ for $j = 0, \hdots, r-1$.  Let $c_j = \phi((u_j, u_{j+1})$, the color assigned to edge $(u_j, u_{j+1})$ in $H(\calP)$. Then, by construction, each $w_{c_j, u_j}$ is in the preference list of $m_{c_j, u_{j+1}}$, and $m_{c_j, u_{j+1}}$ prefers $w_{c_j, u_j}$ to his woman-optimal stable partner.  He is also in $w_{c_j, u_j}$'s preference list and she prefers him to her man-optimal stable partner. By Lemma~\ref{lem:generic-exposed-matchings} the only rotation that contains $w_{c_j, u_j}$ is $\rho_{u_j}$. Thus, in order for $\rho_{u_{j+1}}$ to be exposed in a stable matching $\mu$,  $\rho_{u_j}$ must be eliminated.  Since this is true for $j = 0, \hdots, r-1$, it follows that $\rho_{v'}$ has to be eliminated before $\rho_v$ is exposed. In other words, $\rho_{v'}$ precedes $\rho_v$ in $\calR$.  
By the same argument, if $v$ is a predecessor of $v'$ then $\rho_{v}$ precedes $\rho_{v'}$ in $\calR$. 

So the only case we have to consider is when $v'$ and $v$ are incomparable in $\calP$. In this case, there is a topological orderings $\tau_1$ and $\tau_2$ such that $v'$ occurs before $v$ in $\tau_1$, while $v$ occurs before $v$ in $\tau_2$.  From  Lemma~\ref{lem:generic-exposed-matchings} and Remark~\ref{rem:toporder},  there is a stable matching where $\rho_v'$ is eliminated but not $\rho_v$ and another stable matching where $\rho_v$ is eliminated but not $\rho_{v'}$.  Thus, $\rho_v$ and $\rho_{v'}$ are not comparable in $\calR(I)$. 
\end{proof}

\begin{lthm}
  \label{thm:generic-construct}
  Let $\calP$ be a finite poset and $H(\calP) = (V,E)$ be  its Hasse diagram with $p = |V|$, and $q = |E|$.  Then $I = I(\calP, \phi)$---the SM instance created by $\ConstructInstance(H(\calP), \phi)$---realizes $\calP$. It has $O(p+q)$ agents and it can be constructed in $O(p + q)$ time.   
  \end{lthm}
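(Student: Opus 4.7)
The realization claim is essentially immediate: Lemma~\ref{lem:generic-isomorphism} already shows that the map $f(v) = \rho_v$ is an isomorphism between $\calP$ and $\calR(I)$, so $I = I(\calP,\phi)$ realizes $\calP$ by definition. Thus my plan is to focus the proof on the two remaining quantitative claims, namely that $I$ has $O(p+q)$ agents and that $\ConstructInstance(H(\calP),\phi)$ runs in $O(p+q)$ time; the argument for both comes down to bounding $\sum_{v \in V} \abs{C_v}$.

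The first step is to observe that the set of agents produced by the algorithm is $\{m_{c,v}, w_{c,v} : v \in [p],\ c \in C_v\}$, so the total number of agents is exactly $2 \sum_{v=1}^p b_v$, where $b_v = \abs{C_v}$. For each $v$, the set $C_v$ is initially populated by one color $\phi(e)$ for each edge $e \in E$ incident to $v$, and is then padded (Lines~\ref{ln:pad-colors-start}--\ref{ln:pad-colors-end}) to have size at least $2$ by inserting at most two extra colors. Hence
\[
\sum_{v=1}^p b_v \;\leq\; \sum_{v=1}^p \bigl(\deg_H(v) + 2\bigr) \;=\; 2q + 2p,
\]
where $\deg_H(v)$ denotes the degree of $v$ in (the underlying undirected graph of) $H(\calP)$. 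This gives $\abs{M \cup W} = O(p+q)$.

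Next I would analyze the running time line by line. The initial loop over $E$ (populating the sets $C_v$) and the padding loop over $V$ together take $O(p+q)$ time, assuming each $C_v$ is stored as a list with duplicates allowed or as a hash set with $O(1)$ insertion. Creating the agents $M$ and $W$ and initializing the preference lists (Lines~\ref{ln:man-opt-start}--\ref{ln:man-opt-end}) takes $O(\sum_v b_v) = O(p+q)$ time. The loop over edges (Lines~\ref{ln:edge-loop-start}--\ref{ln:edge-loop-end}) performs a constant amount of work per edge (two list insertions), costing $O(q)$. Finally, the loop constructing the cycles $\rho_v$ (Lines~\ref{ln:woman-opt-start}--\ref{ln:woman-opt-end}) does $O(b_v)$ work at vertex $v$ --- choosing an ordering $\pi_v$ of $C_v$, forming the circular list $\rho_v$, and inserting one woman at the end of each man's list and one man at the front of each woman's list --- giving $O(\sum_v b_v) = O(p+q)$ total. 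Using linked-list representations for the preference lists, each front/end insertion is $O(1)$, so no hidden factors appear.

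The only mild subtlety, which I would flag but treat as bookkeeping rather than a real obstacle, is the cost of implementing the ``add to front/end of a preference list'' operations and the bookkeeping of the color sets $C_v$ and orderings $\pi_v$ in $O(1)$ per operation; this is standard given appropriate data structures. Combining the agent-count bound, the per-line time bound, and Lemma~\ref{lem:generic-isomorphism} completes the proof.
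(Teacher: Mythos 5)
Your proposal is correct and follows essentially the same route as the paper's proof: both reduce the realization claim to Lemma~\ref{lem:generic-isomorphism} and then bound the number of agents and the running time by $\sum_v \abs{C_v} = O(p+q)$ via a per-vertex degree count and a line-by-line cost analysis. Your added remarks on data structures for $O(1)$ list insertions are harmless extra bookkeeping that the paper leaves implicit.
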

\begin{proof}
  By  Lemma~\ref{lem:generic-isomorphism}, we know that $I$ realizes $\calP$.  $\ConstructInstance$ creates at most $2 \times \max(2, \deg(v))$ agents for each $v \in [p]$.  Thus, the total number of agents is at most $2(2p + 2q) = O(p+q)$. Creating $C_v$ for $v \in [p]$ takes $O(\deg(v) + 1)$ time, so all lists are formed in time $O(\sum_{v \in V} (\deg(v)+1)) = O(p+ q)$. Creating the agents, and adding their man-optimal and woman-optimal stable partners take $O(p+q)$ time in total. Finally, adding their acceptable partners in Lines~\ref{ln:edge-loop-start}--\ref{ln:edge-loop-end} takes $O(q)$ time. Each iteration of the outer for loop in lines 20 to 29 takes $O(|C_v|)$ time so altogether the for loop runs in $O(\sum_v |C_v|) = O(p+q)$. Thus, the total running time of $\ConstructInstance$ is $O(p+q)$. 
\end{proof}

Some of  the agents in the SM instance created by $\ConstructInstance$ have incomplete preference lists.  If desired, each incomplete  preference list can be completed by appending its missing agents at the end of the list.   By Corollary~\ref{cor:shortlists-2}, the SM instance will realize the same poset.




\begin{cor}
  \label{cor:generic-complete}
  Let $\calP$ be a finite poset and $H(\calP) = (V,E)$ be  its Hasse diagram with $p = |V|$, and $q = |E|$.  Then there exists an SM instance $I(\calP)$ of size $n = 2p$ that realizes $\calP$, and $I$ can be constructed in time $O(p+q)$ given $H(\calP)$.  
\end{cor}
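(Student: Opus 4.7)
The plan is to derive Corollary~\ref{cor:generic-complete} by specializing $\ConstructInstance$ to the simplest possible edge coloring and then reading off the agent count directly. Specifically, I would invoke $\ConstructInstance(H(\calP), \phi)$ with the constant coloring $\phi \equiv 1$, i.e., every edge of the Hasse diagram receives color $1$.

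With this choice of $\phi$, after the initial loop over $E$, the set $C_v$ contains the color $1$ whenever $v$ has at least one incident edge, and is empty when $v$ is isolated. The padding step in Lines~\ref{ln:pad-colors-start}--\ref{ln:pad-colors-end} then adds color $2$ (and possibly $1$) to $C_v$ so that $\abs{C_v} = 2$ for every $v \in V$. Consequently, for each $v \in [p]$ the algorithm creates exactly two men $m_{1,v}, m_{2,v}$ and two women $w_{1,v}, w_{2,v}$, so $\abs{M} = \abs{W} = 2p$ and the resulting instance has size $n = 2p$, as required.

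Correctness of the realization claim is then immediate from Theorem~\ref{thm:generic-construct}: because $\phi$ is a valid edge coloring (assigning some positive integer to every edge), the theorem guarantees that the SM instance $I(\calP, \phi)$ realizes $\calP$ and is produced in $O(p+q)$ time. The time bound carries over unchanged, since assigning a fixed color to every edge is a linear-time preprocessing step absorbed into the $O(p+q)$ runtime of $\ConstructInstance$.

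The only mild subtlety worth spelling out is the handling of isolated vertices of $H(\calP)$ under a constant coloring: a priori $C_v$ would be empty, which would prevent any rotation $\rho_v$ from being created. This is precisely what the padding in Lines~\ref{ln:pad-colors-start}--\ref{ln:pad-colors-end} addresses, and the analysis of $\ConstructInstance$ already covers this case, so no extra argument is needed. If complete preference lists are desired, one may append the missing agents to the end of each list and appeal to Corollary~\ref{cor:shortlists-2} to preserve the rotation poset; but for the statement of Corollary~\ref{cor:generic-complete} this step is optional since the claim only concerns the existence and size of $I(\calP)$.
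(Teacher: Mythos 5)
Your proposal is correct and matches the paper's own proof essentially verbatim: both take $\phi \equiv 1$, observe that the padding step forces $C_v = \{1,2\}$ for every $v$ so that exactly $2p$ men and $2p$ women are created, and then invoke Theorem~\ref{thm:generic-construct} for correctness and the $O(p+q)$ runtime. Your extra remarks on isolated vertices and on completing the preference lists via Corollary~\ref{cor:shortlists-2} are accurate but not needed beyond what the paper states.
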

\begin{proof}
Let $\phi$ assign each edge of $H(\calP)$ the color $1$.  Then $C_v = \set{1, 2}$ for each $v \in [p]$.  The instance $I(\calP, \phi)$ realizes $\calP$.  It has $n$ men and $n$ women where $n = 2p$ and can be constructed in time $O(p + q)$ time.  
\end{proof}

\begin{rem}
  The runtime of our construction in Corollary~\ref{cor:generic-complete} matches that of Irving and Leather's construction in~\cite{Irving1986-complexity}. Our construction also yields an instance $I$ whose size (i.e., number of agents) is dependent only on $p$ whereas Irving and Leather's is dependent on $q$ and the number of bottom and top nodes of $H(\calP)$.  Thus, our construction gives a quadratic improvement when $q = \Omega(p^2)$ (for example when $H(\calP)$ is a directed complete bipartite graph).  
\end{rem}

\section{\texorpdfstring{$k$}{k}-Bounded and $k$-Attribute Preferences}
\label{sec:k-bounded}

In Corollary~\ref{cor:generic-complete}, we showed that applying our generic construction, $\ConstructInstance$, with the with the trivial edge coloring $\phi(e) = 1$ for all $e \in E$ yields an SM instance with a small number of agents ($n = O(p)$). In this section, we show that by taking $\phi$ to be a \emph{proper} edge coloring\footnote{Recall that an edge coloring is \dft{proper} if for every vertex $v$, no two edges incident to $v$ have the same color.}, $\ConstructInstance$ computes an SM instance in which all preference lists have length at most $3$---i.e., a $3$-bounded instance. Thus we can apply Theorem~\ref{thm:generic-construct} to show that for any finite poset $\calP$, there is a $3$-bounded SM instance realizing $\calP$. A proper edge coloring of any graph $H$ can be computed in linear time, so the $3$-bounded SM instance realizing $\calP$ can be computed in linear time as well.

Next, we show that the $3$-bounded instances constructed by $\ConstructInstance$ as above can be efficiently transformed into $6$-attribute instances realizing the same rotation poset. Thus, $6$-attribute preferences realize all finite posets. More generally, our reduction shows that for any $k$-bounded SM instance, there is a $2k$-attribute instance in which all agents' rankings of their first $k$ acceptable partners are the same as the $k$-bounded instance.

\subsection{\texorpdfstring{$k$}{k}-Bounded Preferences}

Let $\phi$ be an edge coloring of a directed graph $H=(V,E)$.  We say that $\phi$ is a \dft{proper in-coloring} of $H$ if for every vertex $v \in V$ and any two edges $e$ and $e'$ entering $v$, $\phi(e) \neq \phi(e')$.  Similarly, $\phi$ is a \dft{proper out-coloring} of $H$ if for every vertex $v \in V$ and any two edges $e$ and $e'$ leaving $v$, $\phi(e) \neq \phi(e')$.   We then say that $\phi$ is a \dft{super coloring} of $H$ if it is both a proper in- and out-coloring. The following proposition shows that given a proper in-coloring (respectively, out-coloring), $\ConstructInstance$ produces an SM instance in which the men (respectively, women) have preference lists of length at most $3$.

\begin{prop}
\label{prop:in-n-out-coloring}
Let $H = H(\calP)$ be the Hasse diagram of poset $\calP$ and $\phi$ be an edge coloring of $H$. Let $I = I(\calP, \phi)$ be the SM instance created by $\ConstructInstance(H, \phi)$. The men's preference lists in $I$ have length at most $3$ if and only if $\phi$ is a proper in-coloring while the women's pereference lists in $I$ have length at most $3$ if and only if $\phi$ is a proper out-coloring.
\end{prop}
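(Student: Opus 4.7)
The plan is to trace through $\ConstructInstance$ to identify exactly which agents end up on each preference list and then simply count. Fix a vertex $v \in V$ and a color $c \in C_v$, and track $m_{c,v}$'s preference list through the three phases of the algorithm. In Lines~\ref{ln:man-opt-start}--\ref{ln:man-opt-end}, $w_{c,v}$ is added first. In Lines~\ref{ln:edge-loop-start}--\ref{ln:edge-loop-end}, $w_{c,u}$ is appended to the end of $m_{c,v}$'s list exactly once for each edge $e = (u,v) \in E$ with $\phi(e) = c$; no other iteration of that loop touches $m_{c,v}$. Finally, Lines~\ref{ln:woman-opt-start}--\ref{ln:woman-opt-end} append $w_{c^+,v}$ at the end. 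Thus the total length of $m_{c,v}$'s list is
\[
2 + \#\set{e = (u,v) \in E \sucht \phi(e) = c}.
\]

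From this formula, the length is at most $3$ for all men if and only if for every vertex $v$ and every color $c \in C_v$, at most one edge of color $c$ enters $v$. This is precisely the definition of a proper in-coloring, establishing the first equivalence.

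Symmetrically, I would trace the preference list of $w_{c,v}$. Lines~\ref{ln:man-opt-start}--\ref{ln:man-opt-end} place $m_{c,v}$ on her list. Then, in Lines~\ref{ln:edge-loop-start}--\ref{ln:edge-loop-end}, for each edge $e = (v, v') \in E$ with $\phi(e) = c$, the man $m_{c,v'}$ is prepended to her list. Finally, Lines~\ref{ln:woman-opt-start}--\ref{ln:woman-opt-end} prepend $m_{c^-,v}$. Hence her final list has length
\[
2 + \#\set{e = (v,v') \in E \sucht \phi(e) = c},
\]
which is at most $3$ for every $w_{c,v}$ if and only if no two edges leaving a common vertex share a color, i.e., $\phi$ is a proper out-coloring.

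There is no real obstacle here; the only care needed is to verify that no other line of $\ConstructInstance$ silently inserts an additional acceptable partner into the lists of $m_{c,v}$ or $w_{c,v}$, which one confirms by inspecting the loops, and to check the direction of each edge in the edge-loop phase so that in-edges contribute to men's lists and out-edges contribute to women's lists.
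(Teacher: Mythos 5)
Your proof is correct and follows essentially the same route as the paper's: both arguments observe that $m_{c,v}$'s list has length $2 + \ell$ where $\ell$ is the number of $c$-colored edges entering $v$ (and dually for $w_{c,v}$ with edges leaving $v$), and then translate the length-$3$ condition directly into the definition of a proper in-coloring (respectively, out-coloring).
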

\begin{proof}
Consider an agent $m_{c,v}$ with $c  \in C_v, v \in [p]$.  By construction,  the length of $m_{c,v}$'s preference list is determined by the edges entering $v$ that $\phi$ has colored $c$.  In particular,  $m_{c,v}$'s preference list has length $\ell+2$ if and only if there are $\ell$ such edges.  Thus, $m_{c,v}$'s preference list has length $3$ if and only if $\phi$ assigned at most one edge entering $v$ the color $c$. This property holds for \emph{all} men if and only if $\phi$ is a proper in-coloring.  

Next, consider an agent $w_{c,v}$ with $c  \in C_v, v \in [p]$.  The length of $w_{c,v}$'s preference list  is determined by the edges leaving $v$ that $\phi$ colors $c$.  Using the same argument as above, we conclude that all women's preference list have length at most $3$ if and only if $\phi$ is a proper out-coloring.
\end{proof}

We can now prove  that  $3$-bounded SM instances can realize any finite poset. 

\begin{lthm}
 \label{thm:k-bound}
  Let $\calP$ be a finite poset.  Then there is an SM instance $I(\calP) \in \bound(3)$  that realizes $\calP$.  Moreover, given the Hasse diagram $H = H(\calP)$ with $p$ vertices and $q$ edges,  $I(\calP)$ has $O(p+q)$ agents and can be constructed in $O(p+q)$ time. Thus, {\it all} finite posets can be efficiently realized by instances in $\bound(k)$ for any $k \geq 3$.
\end{lthm}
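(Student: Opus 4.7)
The plan is to obtain Theorem~\ref{thm:k-bound} as an almost immediate consequence of the generic machinery built up in Section~\ref{sec:generic-construction} together with Proposition~\ref{prop:in-n-out-coloring}. Since preference-list length is bounded by $3$ on the men's side exactly when $\phi$ is a proper in-coloring, and bounded by $3$ on the women's side exactly when $\phi$ is a proper out-coloring, I will invoke $\ConstructInstance(H(\calP), \phi)$ with an edge coloring $\phi$ that is simultaneously both---i.e., a \emph{super coloring} of $H(\calP)$. Proposition~\ref{prop:in-n-out-coloring} will then force every preference list in $I(\calP, \phi)$ to have length at most $3$, and Theorem~\ref{thm:generic-construct} will guarantee that the resulting instance realizes $\calP$.

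The only thing that really needs to be produced is a super coloring that can be computed within the $O(p+q)$ budget. The simplest choice is the \emph{rainbow} coloring: assign each of the $q$ edges of $H(\calP)$ its own distinct color. Under this choice, at each vertex $v$ every in-edge has a unique color (hence no two in-edges share a color) and likewise for out-edges, so $\phi$ is trivially both a proper in-coloring and a proper out-coloring. The coloring itself takes $O(q)$ time to assign.

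To conclude, I will verify that the resulting instance meets the claimed size and runtime bounds. Under the rainbow coloring, $|C_v|$ equals the total degree of $v$ in the undirected version of $H(\calP)$ (padded to $2$ when necessary), so the number of agents is $\sum_{v \in V} 2\max(|C_v|, 2) = O(p+q)$, matching the bound from Theorem~\ref{thm:generic-construct}; the construction time is $O(p+q)$ by the same theorem, plus $O(q)$ for producing $\phi$. Finally, any $3$-bounded instance is automatically $k$-bounded for every $k \geq 3$, which gives the last clause of the statement.

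I do not anticipate a substantive technical obstacle: the theorem is essentially a corollary of Proposition~\ref{prop:in-n-out-coloring} and Theorem~\ref{thm:generic-construct}, with the rainbow coloring playing the role of a trivial but sufficient super coloring. If a more economical coloring were desired (e.g., to minimize the instance size as a function of $\Delta$ rather than $q$), one could instead use a proper edge coloring of the underlying undirected graph---any such coloring is automatically a super coloring---but this is not needed for the stated bounds.
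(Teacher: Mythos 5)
Your proposal is correct and matches the paper's own proof essentially verbatim: the paper also takes $\phi$ to assign pairwise distinct colors to the $q$ edges (the rainbow coloring), observes this is a super coloring, and then invokes Proposition~\ref{prop:in-n-out-coloring} and Theorem~\ref{thm:generic-construct} to get the $3$-bounded instance with the stated size and runtime bounds.
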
 
\begin{proof}
According to Proposition~\ref{prop:in-n-out-coloring}, it suffices to find $\phi$ that is a super coloring of $H$ so that $\ConstructInstance$ produces a $3$-bounded SM instance. So let $\phi$ assign edges of $H$ pair-wise distinct colors from $[q]$. It takes $O(p+q)$ time to create $\phi$. $\ConstructInstance$ will run in $O(p+q)$ time to create an SM instance $I(\calP, \phi)$ that is $3$-bounded and has $O(p+q)$ agents.  
\end{proof}

\begin{figure}
  \begin{centering}
    \includegraphics[scale=1]{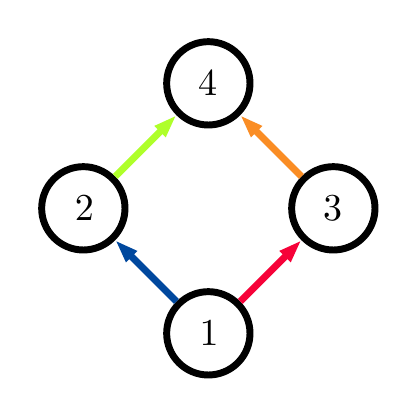}
    \hfill
    \includegraphics[scale=1]{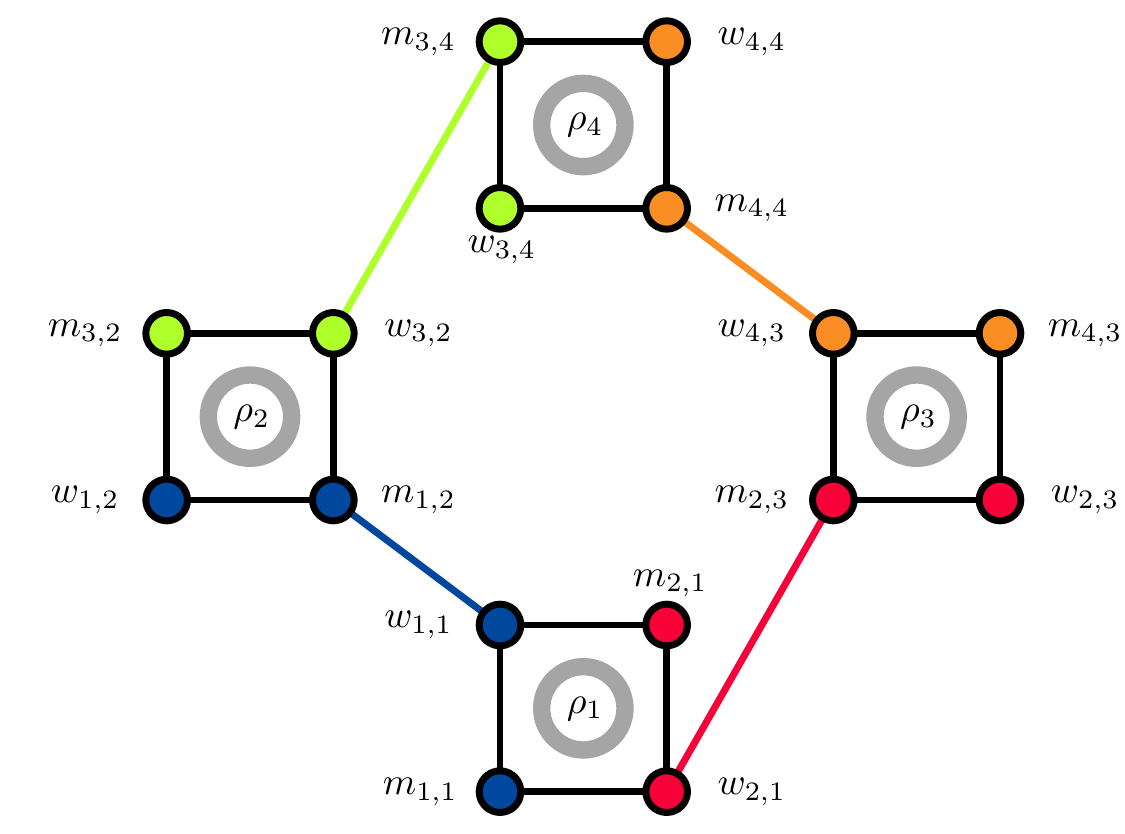}
        \bigskip

    \begin{minipage}{0.45\textwidth}
      \[
      \begin{array}{rllll}
        m_{1,1}: & w_{1,1} & w_{2,1}\\
        m_{2,1}: & w_{2,1} & w_{1,1} \\
        m_{1,2}: & w_{1,2} & w_{1,1} & w_{3,2}\\
        m_{3,2}: & w_{3,2} & w_{1,2}\\
        m_{2,3}: & w_{2,3} & w_{2,1} & w_{4,3}  \\
        m_{4,3}: & w_{4,3} & w_{2,3} \\
        m_{3,4}: & w_{3,4} & w_{3,2} & w_{4,4}\\
        m_{4,4}: & w_{4,4} & w_{4,3} & w_{3,4} \\
      \end{array}
      \]
    \end{minipage}
    \hfill
    \begin{minipage}{0.45\textwidth}
      \[
      \begin{array}{rllll}
        w_{1,1}: m_{2,1} &  m_{1,2} & m_{1,1} \\
        w_{2,1}: m_{1,1} & m_{2,3} & m_{2,1} \\
        w_{1,2}: m_{3,2} &  m_{1,2} \\
        w_{3,2}: m_{1,2} & m_{3,4} & m_{3,2} \\
        w_{2,3}: m_{4,3} & m_{2,3} \\
        w_{4,3}: m_{2,3} & m_{4,4} & m_{4,3}\\
        w_{3,4}: m_{4,4} & m_{3,4} \\
        w_{4,4}: m_{3,4} & m_{4,4} \\      
      \end{array}  
      \]
    \end{minipage}
  \end{centering}
  \caption{The $3$-bounded instance (right) formed by taking $\phi$ to be a proper edge coloing of the Hasse diagram (left) of a poset $\calP$. The simplicial complexes associated with the instance on the right are depicted in Figure~\ref{fig:complexes}}
  \label{fig:construct-bound}
\end{figure}

\subsection{\texorpdfstring{$k$}{k}-Attribute Preferences}
\label{sec:k-attribute}

Recall that for an SM instance in $\attr(k)$, every agent $a$ is assigned a profile $(\veca, \varphi_a)$ such that $\veca \in \R^k$ and $\varphi_a:  \R^k \rightarrow \R$.  Agent $a$'s preference list is then derived by applying $\varphi_a$ to all agents $b$ of the opposite gender and ranking them in descending order according to $\varphi_a(\vecb)$. Below, we show how an instance $I \in \bound(3)$ can be converted into an instance $I' \in \attr(6)$ such that $I$ and $I'$ have isomorphic rotation posets. Since we know from Theorem~\ref{thm:k-bound} that \emph{every} finite poset can be realized by some instance in $\bound(3)$, the same holds for $\attr(6)$. 

The idea behind our construction of the profiles is the following. Consider an embedding of the set $W$ into $\R^6$ such that $W$ is in convex position\footnote{We abuse notation and associate $W$ with the set of embedded points in $\R^6$.} (i.e., no $\vecw$ lies inside of the convex hull of $W \setminus\set{\vecw}$). Let $X_W \subseteq \R^6$ denote the convex hull of $W$. Suppose the points $\vecw_1, \vecw_2, \vecw_3$ form a $3$-face of the polytope $X_W$. Then there exists a linear function $\varphi : \R^6 \to \R$ such that $\varphi(\vecw_1) = \varphi(\vecw_2) = \varphi(\vecw_3) > \varphi(\vecw)$ for all $\vecw \in W$, $\vecw \neq \vecw_1, \vecw_2, \vecw_3$. By choosing a sufficiently small perturbation $\hat\varphi$ of $\varphi$, we can ensure that in fact
\[
\hat\varphi(\vecw_1) > \hat\varphi(\vecw_2) > \hat\varphi(\vecw_3) > \hat\varphi(\vecw).
\]
Thus, if a man $m \in M$ takes $\varphi_m = \hat\varphi$, he will rank $w_1$, $w_2$, $w_3$ as his top three choices. Our argument is to show that for any SM instance $I \in \bound(3)$, there exist embeddings of $W$ and $M$ in $\R^6$ such that \emph{every} triple of women (respectively, men) form a $3$-face as above. Thus, each man $m$ can choose $\varphi_m$ so that his first (at most) $3$ preferred partners according to $\varphi_m$ agree with his preference list in $I$. From this, it is easy to show that the constructed instance in $\attr(6)$ has precisely the same stable matchings as $I$.

\begin{rem}
  Our construction of $6$-attribute instances from $3$-bounded instances generalizes the technique used in Bhatnagar et al.'s construction of $3$-attribute preferences that realize star posets~\cite{Bhatnagar2008-sampling}. They further describe a modification showing that, in fact, the same rotation posets can be realized in the $2$-attribute model. We believe that a similar dimension reduction argument may allow one to reduce the dimension of our general embedding, thus showing that $k$-attribute instances realize all finite posets for some $k < 6$.
\end{rem}

\begin{figure}
  \begin{centering}
    \hfill
    \includegraphics[scale=1]{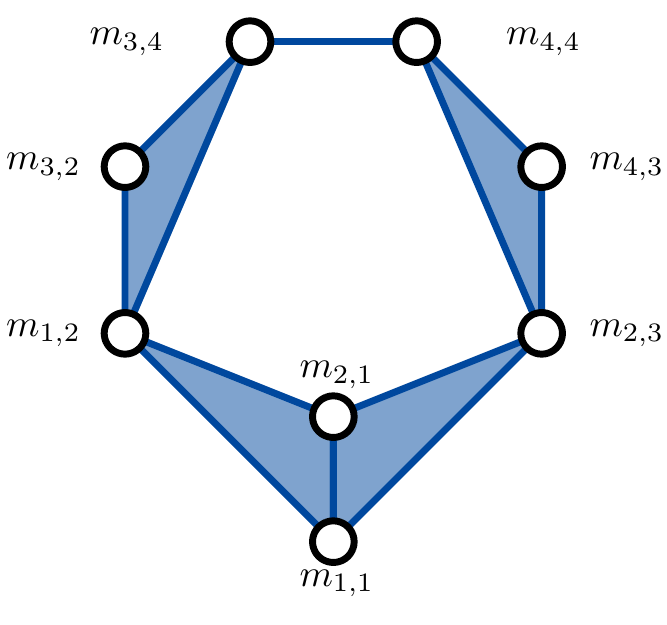}
    \hfill
    \includegraphics[scale=1]{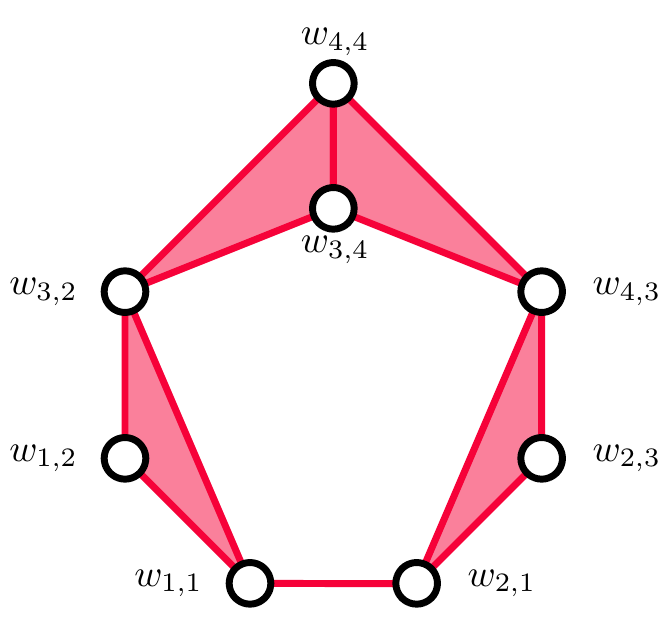}
    \hfill
  \end{centering}
  \caption{We can associate simplicial complexes $M$ (left) and $W$ (right) for the $3$-bounded SM instance depicted in Figure~\ref{fig:construct-bound}. Each face in the complexes corresponds to a set of agents appearing together in some agent's preference lists. For example, the men $\set{m_{1,1}, m_{2, 1}, m_{1,2}}$ form a face on the left, because they all appear on $w_{2,1}$'s preference list. For this example, both complexes are planar. Therefore, they can be embedded in convex position on the surface of the sphere in $\R^3$. Thus, the rotation poset can be realized in the $3$-attribute model. In general, Gale's construction implies that any $(k-1)$-dimensional complex (in particular, those associated with a $k$-bounded SM instance) can be embedded in convex position in $\R^{2k}$.}
  \label{fig:complexes}
\end{figure}

The main technical tool we require is a result of Gale~\cite{Gale1963-neighborly}, who showed that the \emph{cyclic polytopes} in $2 \ell$-dimensional space are \emph{$\ell$-neighborly}.

\begin{dfn}
  \label{dfn:cyclic-polytope}
  For any positive integers $\ell$ and $n$, consider the vertex set $V_{2 \ell, n} \subseteq \R^{2\ell}$ defined by
  \[
  V_{2 \ell, n} = \set{(i, i^2, i^3, \ldots, i^{2 \ell}) \sucht i \in [n]}.
  \]
  Then the \dft{$2\ell$-dimensional cylic polytope} of \dft{order $n$}, $X_{2 \ell, n}$, is defined to be the convex hull of $V_{2 \ell, n}$.\footnote{The proof of Corollary~\ref{cor:cyclic-polytope} implies that $V_{2 \ell, n}$ are in convex position, hence the set vertices of $X_{2 \ell, n}$ is precisely $V_{2 \ell, n}$}
\end{dfn}

Gale~\cite{Gale1963-neighborly} showed that the $2 \ell$-dimensional cyclic polytopes have the remarkable property of being \emph{$\ell$-neighborly}: every subset $V \subset V_{2 \ell, n}$ of size $\ell$ forms an $\ell$-face of $X_{2 \ell, n}$. More specifically, Gale proved the following lemma.

\begin{lem}[cf.~{\cite[Theorem~1]{Gale1963-neighborly}}]
  \label{lem:cyclic-polytope}
  Let $X_{2 \ell, n}$ be a cyclic polytope and let $V = \set{\vecv_1, \vecv_2, \ldots, \vecv_\ell}$ be any set of $\ell$ distinct vertices in $X_{2 \ell, n}$. Then there exists a linear function $\varphi \colon \R^{2 \ell} \to \R$ such that for all $\vecv_i, \vecv_j \in V$, (1) we have $\varphi(\vecv_i) = \varphi(\vecv_j)$, and (2) for all vertices $\vecv \notin V$, $\varphi(\vecv) < \varphi(\vecv_i)$. 
\end{lem}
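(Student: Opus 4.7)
The plan is to exploit the fact that the vertices of $X_{2\ell,n}$ lie on the moment curve $t \mapsto (t, t^2, \ldots, t^{2\ell})$, so that evaluating any linear functional $\varphi(x_1,\ldots,x_{2\ell}) = \sum_{j=1}^{2\ell} a_j x_j$ at a vertex of $X_{2\ell,n}$ is the same as evaluating the polynomial $q(t) = \sum_{j=1}^{2\ell} a_j t^j$ at the corresponding parameter $t \in [n]$. Thus constructing the desired $\varphi$ reduces to the following problem: given parameters $t_1,\ldots,t_\ell \in [n]$ corresponding to the vertices $\vecv_1,\ldots,\vecv_\ell \in V$, exhibit a real polynomial $q(t)$ of degree at most $2\ell$ with \emph{no constant term} which attains the same value at $t_1,\ldots,t_\ell$ and a strictly smaller value at every other $t \in [n]$.

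To produce such a polynomial, I would start from the degree-$2\ell$ polynomial
\[
p(t) = -\prod_{i=1}^{\ell}(t-t_i)^2,
\]
which satisfies $p(t_i)=0$ for every $i$ and $p(t) < 0$ for every other real $t$. Writing $p(t) = a_0 + a_1 t + a_2 t^2 + \cdots + a_{2\ell} t^{2\ell}$ and setting $q(t) = p(t) - a_0 = \sum_{j=1}^{2\ell} a_j t^j$ removes the constant term while preserving the relative ordering of values, so $q(t_i) = -a_0$ for all $i \in [\ell]$ and $q(t) < -a_0$ for every real $t \notin \{t_1,\ldots,t_\ell\}$.

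I would then define $\varphi : \R^{2\ell} \to \R$ by $\varphi(x_1,\ldots,x_{2\ell}) = \sum_{j=1}^{2\ell} a_j x_j$, which is a linear functional. For each vertex $\vecv \in V_{2\ell,n}$ of the form $(t, t^2, \ldots, t^{2\ell})$ with $t \in [n]$, we have $\varphi(\vecv) = q(t)$. Applying the properties of $q$ established above immediately gives $\varphi(\vecv_i) = \varphi(\vecv_j) = -a_0$ for all $\vecv_i,\vecv_j \in V$, and $\varphi(\vecv) < -a_0$ for every $\vecv \in V_{2\ell,n} \setminus V$, which is exactly the conclusion of the lemma.

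There is not much of a technical obstacle here; the only delicate point is choosing the polynomial $p$ so that it simultaneously (i) has degree at most $2\ell$, (ii) vanishes exactly on the prescribed parameters, and (iii) is nonpositive everywhere so the vanishing set is actually a strict maximum. The squared-product construction $-\prod(t - t_i)^2$ accomplishes all three at once, and the removal of the constant term $a_0$ at the end is what converts the affine argument into a genuinely linear functional as required by the statement.
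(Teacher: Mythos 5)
Your proof is correct and is essentially identical to the paper's argument: both construct the squared product $\prod_i (t - t_i)^2$ (up to an overall sign), drop the constant term to obtain a genuinely linear functional on the moment curve, and read off the conclusion from the sign of the polynomial. No substantive differences.
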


Gale~\cite{Gale1963-neighborly} constructs the function $\varphi$ above as follows. For each $\vecv_j \in V$, write $\vecv_j = (i_j, i_j^2, \ldots, i_j^{2 \ell})$, and consider the function
\begin{equation}
  \label{eqn:poly}
  p(t) = \prod_{j = 1}^\ell (t - i_j)^2 = \beta_0 + \beta_1 t + \cdots + \beta_{2 \ell - 1} t^{2 \ell - 1} + t^{2 \ell}.  
\end{equation}
Setting $\vecbeta = (\beta_1, \beta_2, \ldots, \beta_{2 \ell - 1}, 1) \in \R^{2\ell}$, a straightforward argument shows that $\varphi(\vecx) = - \vecx \cdot \vecbeta$ gives $\varphi(\vecv_j) = \beta_0$ for all $\vecv_j$, and $\varphi(\vecv) < \beta_0$ for every vertex $\vecv \notin V$. In particular, computing the vector $\vecbeta$ can be performed in time $O(\ell^2)$ by simply performing the multiplication in~(\ref{eqn:poly}). By choosing a sufficiently small perturbation of the function $\varphi$ of the conclusion of Lemma~\ref{lem:cyclic-polytope}, we obtain the following corollary. We give a self-contained proof, in particular to demonstrate that the desired functions $\varphi$ can be efficiently constructed.

\begin{cor}
  \label{cor:cyclic-polytope}
  Let $X_{6, n}$ be a cyclic polytope and let $V = \set{\vecv_1, \vecv_2, \vecv_3}$ be any set of $3$ distinct vertices in $X_{6, n}$. Then there exists a linear function $\varphi \colon \R^{6} \to \R$ such that
  \[
  \varphi(\vecv_1) > \varphi(\vecv_2) > \varphi(\vecv_3) > \varphi(\vecv)
  \]
  for all vertices $\vecv \notin V$. The function $\varphi$ can be computed in constant time from the set $V$.
\end{cor}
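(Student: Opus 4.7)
My plan is to apply Lemma~\ref{lem:cyclic-polytope} with $\ell = 3$ to obtain a linear function $\varphi_0$ that takes a common value on $V$ and is strictly smaller on every other vertex of $X_{6,n}$, and then break the ties within $V$ by adding a sufficiently small linear perturbation. By Gale's explicit construction in~(\ref{eqn:poly}), the resulting function satisfies $\varphi_0(\vec{v}) = \beta_0 - p(i)$ for any vertex $\vec{v} = (i, i^2, \ldots, i^6)$, where $p(t) = (t-i_1)^2(t-i_2)^2(t-i_3)^2$. Since every vertex of $X_{6,n}$ is indexed by an integer $i \in [n]$, for any $i \notin \{i_1, i_2, i_3\}$ the value $p(i)$ is a product of squares of three nonzero integers, so $p(i) \geq 1$. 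Consequently the gap
\[
\delta \;:=\; \beta_0 \;-\; \max_{\vec{v} \notin V} \varphi_0(\vec{v}) \;\geq\; 1
\]
is bounded below by a universal constant, independent of $n$.

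Next I would construct a linear tie-breaking function of the form $\psi(\vec{x}) = \gamma_1 x_1 + \gamma_2 x_2$ with $\psi(\vec{v}_1) > \psi(\vec{v}_2) > \psi(\vec{v}_3)$. Since $\psi$ restricted to the vertices of $X_{6,n}$ evaluates to the degree-$2$ polynomial $q(t) = \gamma_1 t + \gamma_2 t^2$ in the index $t = i$, one obtains $\gamma_1$ and $\gamma_2$ by Lagrange interpolation on the distinct integer inputs $i_1, i_2, i_3$ (say to make $q(i_j) = 4 - j$, absorbing any constant term as a harmless shift of $\varphi$). This requires $O(1)$ arithmetic operations and yields $|\gamma_1|, |\gamma_2| = O(1)$, so an explicit bound $|\psi(\vec{v})| \leq B = O(n^2)$ is also computable in $O(1)$ time. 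Setting
\[
\varphi \;=\; \varphi_0 + \varepsilon \, \psi \qquad \text{with} \qquad \varepsilon = \frac{\delta}{4B},
\]
a direct check gives $\varphi(\vec{v}) \leq \beta_0 - \delta + \varepsilon B < \beta_0 - \delta/2$ for every $\vec{v} \notin V$ and $\varphi(\vec{v}_3) \geq \beta_0 - \varepsilon B > \beta_0 - \delta/2$, so the strict chain $\varphi(\vec{v}_1) > \varphi(\vec{v}_2) > \varphi(\vec{v}_3) > \varphi(\vec{v})$ holds for all vertices $\vec{v} \notin V$. Every step uses only a constant number of arithmetic operations on $n, i_1, i_2, i_3$.

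The only subtle point is ensuring that a usable $\varepsilon$ can be written down in constant time, which requires a lower bound on $\delta$ that does not depend on $n$. This is exactly what integrality of the $i$-coordinates of $V_{6,n}$ provides: without it, $V$ could sit arbitrarily close to some other vertex and $\delta$ would be forced to shrink with $n$, breaking the constant-time claim. Beyond this observation the argument is a standard linear perturbation.
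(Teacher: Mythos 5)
Your proof is correct, but it takes a genuinely different route from the paper's. The paper also starts from Gale's polynomial $p(t) = \prod_j (t - i_j)^2$, but it breaks the ties by perturbing the \emph{roots}: it works with $q(t) = (t-i_1)^2(t-i_2+\delta_2)^2(t-i_3+\delta_3)^2$, which is still monic of degree $6$ and hence still induced by a linear functional on the moment curve, and then chooses $\delta_2 = 1/(4n^4)$ and $\delta_3 = 1/(2n^2)$ so that $q(i_1) < q(i_2) < q(i_3) < q(i)$ for every other integer index $i$. You instead keep $p$ intact, exploit the integrality of the vertex indices to get the uniform gap $p(i) \geq 1$ off of $V$, and add $\varepsilon$ times a degree-$\le 2$ polynomial (a linear functional in the first two coordinates) to order $i_1, i_2, i_3$; both arguments ultimately rest on the same integrality (the paper uses it as $(i - i_j + \delta)^2 > 1/2$), but your decomposition into ``separate $V$ from the rest'' plus ``break ties inside $V$'' is cleaner and generalizes more transparently to the $\ell$-neighborly setting. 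Two small repairs: (1) $\psi(\vecx) = \gamma_1 x_1 + \gamma_2 x_2$ has only two free parameters and cannot interpolate three prescribed values, so your parenthetical is doing real work---state it as interpolating the full quadratic $\gamma_0 + \gamma_1 t + \gamma_2 t^2$ through $(i_j, 4-j)$ and then dropping $\gamma_0$, which subtracts the same constant from $\psi(\vecv_1), \psi(\vecv_2), \psi(\vecv_3)$ and hence preserves their strict ordering; (2) the claim $\abs{\gamma_1} = O(1)$ is false in general (for $i_1,i_2,i_3$ clustered near $n/2$ with unequal spacings one gets $\gamma_1 = \Theta(n)$), but this is harmless since the $\gamma$'s are computed explicitly in $O(1)$ arithmetic operations, the bound $B = \abs{\gamma_1}n + \abs{\gamma_2}n^2 = O(n^2)$ still holds, and---as you correctly observe---one should set $\varepsilon = 1/(4B)$ using the lower bound $\delta \geq 1$ rather than the exact $\delta$, whose computation would cost $\Omega(n)$ time.
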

\begin{proof}
  In order to give an explicit construction of a function $\varphi$ as in Corollary~\ref{cor:cyclic-polytope}, consider the function
  \begin{equation}
    \label{eqn:q}
    q(t) = (t - i_1)^2 (t - i_2 + \delta_2)^2 (t - i_3 + \delta_3)^2 = \alpha_0 + \alpha_1 t + \cdots + \alpha_5 t^5 + t^6,
  \end{equation}
  where $\delta_2, \delta_3 > 0$ are constants to be chosen later. Setting $\vecalpha = (\alpha_1, \alpha_2, \ldots, \alpha_5, 1)$ we obtain
  \[
  q(j) = \vecalpha \cdot (j, j^2, \ldots, j^5, 1) + \alpha_0.
  \]
  Thus, it suffices to compute suitable constants $\delta_2, \delta_3$ such that
  \begin{equation}
    \label{eqn:q-ineq}
    q(i_1) < q(i_2) < q(i_3) < q(i)
  \end{equation}
  for all $i \neq i_1, i_2, i_3$ (where as above, we take $\vecv_j = (i_j, i_j^2, \ldots, i_j^6)$). Indeed, then taking $\varphi(\vecv) = - \vecalpha \cdot \vecv$ gives the desired result. Observe that from the definition of $q$, we immediately obtain
  \begin{align*}
    q(i_1) &= 0\\
    q(i_2) &= (i_2 - i_1)^2 (\delta_2^2) (i_2 - i_3 - \delta_3)^2\\
    q(i_3) &= (i_3 - i_1)^2 (i_3 - i_2 - \delta_2)^2 (\delta_3^2)\\
    q(i) &= (i - i_1)^2 (i - i_2 + \delta_2)^2 (i - i_3 + \delta_3)^2.
  \end{align*}
  Taking $\delta_2, \delta_3 \leq 1/4$ (so that, for example, $(i - \delta_2)^2 > 1/2$), we obtain the following inequalities:
  \begin{align}
    0 &< q(i_2) \leq \delta_2^2 n^4 \label{eqn:d2}\\
    \frac 1 2 \delta_3^2 &\leq q(i_3) \leq n^4 \delta_3^2 \label{eqn:d3}\\
    \frac 1 4  &< q(i) \label{eqn:qi}.
  \end{align}
  By taking $\delta_2 = \frac{1}{4 n^4}$ and $\delta_3 = \frac{1}{2 n^2}$, we obtain
  \[
  \delta_2^2 n^4 = \frac{1}{16 n^4} < \frac{1}{8 n^4} = \frac 1 2 \delta_3^2
  \]
  and $n^4 \delta_3^2 = \frac 1 4$, hence Equation~(\ref{eqn:q-ineq}) is satisfied. Since~(\ref{eqn:q}) contains $O(1)$ terms, the coefficients $\alpha$ can be computed in time $O(1)$ from $V$. This gives the desired result.
\end{proof}

We are now ready to prove the main result of this section.

\begin{lthm}
  \label{thm:k-attr}
  Let $\calP$ be a poset whose Hasse diagram $H(\calP)$ has $p$ vertices and $q$ edges. There is an SM instance $I''(\calP) \in \attr(6)$ with $O(p+q)$ agents that realizes $\calP$.  Moreover, given $H(\calP)$, the profiles of the agents in $I''(\calP)$ can also be computed in $O(p+q)$ time. Thus, {\it all} finite posets can be efficiently realized by instances in $\attr(k)$ for any $k \geq 6$.
\end{lthm}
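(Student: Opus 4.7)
The plan is to reduce to the $3$-bounded construction of Theorem~\ref{thm:k-bound}: first produce an instance $I \in \bound(3)$ with $\calR(I) \cong \calP$, then geometrize it into a $6$-attribute instance $I''$ with the same rotation poset. The number of agents and running time bounds follow immediately from Theorem~\ref{thm:k-bound} provided the geometric conversion is linear-time per agent, which the explicit cyclic-polytope construction behind Corollary~\ref{cor:cyclic-polytope} guarantees.

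For the construction, let $n = O(p+q)$ be the size of $I$, enumerate the men as $m_1, \ldots, m_n$ and the women as $w_1, \ldots, w_n$, and assign profile vectors $\vecm_i = \vecw_i = (i, i^2, i^3, \ldots, i^6) \in \R^6$, i.e., place both $M$ and $W$ on the vertex set of the $6$-dimensional cyclic polytope $X_{6,n}$. For each man $m$, let $w_{j_1}, w_{j_2}, w_{j_3}$ be the top (at most) three women on his preference list in $I$; apply Corollary~\ref{cor:cyclic-polytope} to the $3$-point set $\{\vecw_{j_1}, \vecw_{j_2}, \vecw_{j_3}\}$ (padding with an arbitrary fourth index if $|P_m| < 3$) to obtain a linear functional $\varphi_m : \R^6 \to \R$ such that $\varphi_m(\vecw_{j_1}) > \varphi_m(\vecw_{j_2}) > \varphi_m(\vecw_{j_3}) > \varphi_m(\vecw)$ for all other $w$. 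Define $\varphi_w$ analogously for each woman using her top three men in $I$. This yields an instance $I'' \in \attr(6)$ in which the first $|P_a|$ entries of each agent $a$'s preference list coincide with $P_a$ in $I$. By Corollary~\ref{cor:cyclic-polytope}, each $\varphi_a$ is computed in $O(1)$ time, so producing all profiles takes $O(p+q)$ time.

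The core correctness claim is that $\calR(I'') \cong \calR(I) \cong \calP$. I would prove this via symmetric shortlists: by Corollary~\ref{cor:shortlists}, it suffices to show that the completion $I'$ of $I$ obtained by arbitrarily appending the missing agents (which realizes $\calP$ by Corollary~\ref{cor:shortlists-2}) satisfies $S(I') = S(I'')$. Since $I$ is $3$-bounded, every agent's stable partners in $I$ lie in their preference list $P_a$, so running the man-oriented Gale--Shapley algorithm on $I'$ causes every man to propose only to women in his top $3$ and every woman to receive proposals only from men in her top $3$. Because the top-$3$ preferences of $I''$ coincide with those of $I'$, the two executions of Gale--Shapley proceed identically step for step, yielding the same man-optimal stable matching; by symmetry, the woman-oriented version gives the same woman-optimal stable matching. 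Consequently every agent's symmetric shortlist in $I''$ is contained within their top $3$ and agrees exactly with $S_a$ in $I'$, so Corollary~\ref{cor:shortlists} delivers $\calR(I'') \cong \calR(I') \cong \calP$.

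The main obstacle is the last step: carefully verifying that the lockstep execution of Gale--Shapley on $I'$ and $I''$ is legitimate, i.e., that no man in $I''$ is ever forced to propose past his top three. This requires the observation that in $I'$ a man never proposes past position $3$ (since his eventual partner in the man-optimal matching lies at position at most $3$ and Gale--Shapley monotonically descends each list), combined with the fact that women's comparisons during the algorithm only ever involve agents in their top $3$. Once this invariant is in hand, the shortlist argument closes cleanly and the size/runtime bounds follow from Theorem~\ref{thm:k-bound}.
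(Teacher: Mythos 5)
Your proposal is correct and follows essentially the same route as the paper: build the $3$-bounded instance from Theorem~\ref{thm:k-bound}, embed both sides on the cyclic polytope $X_{6,n}$, and use Corollary~\ref{cor:cyclic-polytope} to pick each agent's linear functional so that their top three choices match. The only difference is that your final lockstep Gale--Shapley argument re-derives what Corollary~\ref{cor:shortlists-2} already gives when applied directly to the pair $(I, I'')$, since $I''$ is itself a completion of $I$ obtained by appending the missing agents to the end of each list.
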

\begin{proof}
  Let  $I = I(\calP)$ be the $3$-bounded instance constructed by $\ConstructInstance(H(\calP), \phi)$, where $\phi$ is any proper edge coloring of $H$. We create an $\attr(6)$ instance $I'' = I''(\calP)$ with exactly the same agents as $I(\calP)$ as follows. Let $n$ be the number of men (and women) in $I$.  Arbitrarily label the men and women as $m_1, m_2, \hdots, m_n$ and $w_1, w_2, \hdots, w_n$ respectively. For each $m_i \in M$ and $w_i \in W$, we assign a profile vector in $\R^6$ corresponding to the $i\th$ vertex of the $6$-dimensional cyclic polytope $X_{6,n}$:
  \[
  \vecm_i, \vecw_i = (i, i^2, i^3, i^4, i^5, i^6).
  \]
  Now let $P_{m_i}$ and $P_{w_i}$ denote $m_i$'s and $w_i$'s preference lists in $I$. We assume that $\abs{P_{m_i}} = \abs{P_{w_i}} = 3$ by possibly adding another arbitrary person to the end of these preference lists. For $P_{m_i} =  w_{s_1}, w_{s_2}, w_{s_3}$ assign the profile function $\varphi_{m_i}$ to be the function asserted by Corollary~\ref{cor:cyclic-polytope} such that
  \[
  \varphi_{m_i}(\vecw_{s_1}) > \varphi_{m_i}(\vecw_{s_2}) > \varphi_{m_i}(\vecw_{s_3}) > \varphi_{m_i}(\vecw) \quad\text{for all } w \in W \setminus P_{m_i}.
  \]
  Thus, the preference list of $m_i$ in $I''$ consists of the same list in $I$ followed by other women. The women's preference lists are defined analogously.  Again, the preference list of each woman $w_i$ in $I''$ consists of the same list in $I$ followed by other men. By Corollary~\ref{cor:shortlists-2}, $I''$ realizes the same rotation poset as $I$, namely $\calP$.

  Finally computing $I$ takes time $O(p + q)$. Given $I$, each profile $(a, \varphi_a)$ can be computed in time $O(1)$ by Corollary~\ref{cor:cyclic-polytope}. Therefore, the overall runtime is $O(p + q)$, as desired.
\end{proof}


\begin{rem}
  The construction we describe generalizes in the following way: given any SM instance $I \in \bound(k)$, we can construct an instance $I' \in \attr(2 k)$ such that each preference list in $I'$ is obtained by appending the missing agents to the end of the incomplete preference lists in $I$.  We only give full details for the $k = 3$ case, as this is sufficient to prove Theorem~\ref{thm:k-attr}.
\end{rem}

\section{\texorpdfstring{$k$}{k}-List Preferences}
\label{sec:k-list}



For this section, we will prove that $(2, \infty)$-list (or, equivalently, $(\infty, 2)$-list) SM instances realize every finite poset.  We will start by creating an SM instance using $\ConstructInstance$ where the incomplete preference lists of each man is a sublist of one of two distinct complete lists of women.  (Interestingly, the incomplete preference lists of the women are sublists of also two distinct complete lists of men.) We then assign the complete lists as preference lists for the men and append  the women's preference lists with missing men so they become complete.  Hence, the men can be divided into two groups and the men in each group have the same preference list while the women can be divided into any number of groups with the same property.  

To arrive at the right SM instance for poset $\calP$, we will have to label the vertices of $H = H(\calP)$---the Hasse diagram of $\calP$---in a particular way and tweak how $\ConstructInstance$ is implemented. We describe the specifications and their implication below. 
\begin{itemize}
\item \emph{Label the vertices of $H$  so that $(p, p-1, \hdots, 1)$ is a topological ordering.} This means every (directed) edge $(u,v)$ has $u > v$.
\item \emph{For each edge $(u,v)$ of $H$, let $\phi((u,v)) = u$.}  Thus, $\phi$ is a proper in-coloring since all edges entering a particular node are assigned different colors.
\item \emph{Add  the color $p+1$ to $C_v$ for every $v \in [p]$.}  Such an addition creates $2p$ more agents and expands the rotations, but does not affect the fact that if the input to the algorithm is $(H, \phi)$, the output $I(\calP, \phi)$ still realizes $\calP$.  
\item \emph{Process the edges in $E$ in Lines \ref{ln:edge-loop-start} to \ref{ln:edge-loop-end} in lexicographically decreasing order. }  Thus, if a woman's preference list contains men different from her woman- and man-optimal stable partners, the subscripts of the other men in her preferences are lexicographically increasing. 
\item {Finally, for each $v \in [p]$, let $\pi_v$ be an ordering of $C_v$ so that the colors are listed from smallest to largest.} Notice that $p+1$ is the last color on this list. 
\end{itemize}

\begin{figure}
    \begin{centering}
    \includegraphics[scale=1]{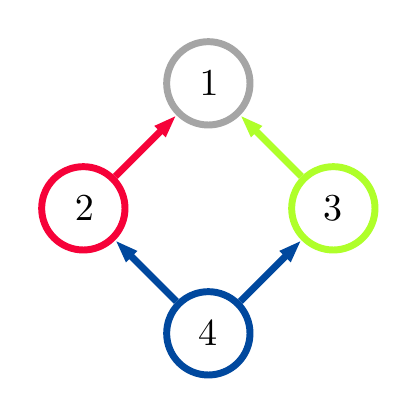}
    \hfill
    \includegraphics[scale=1]{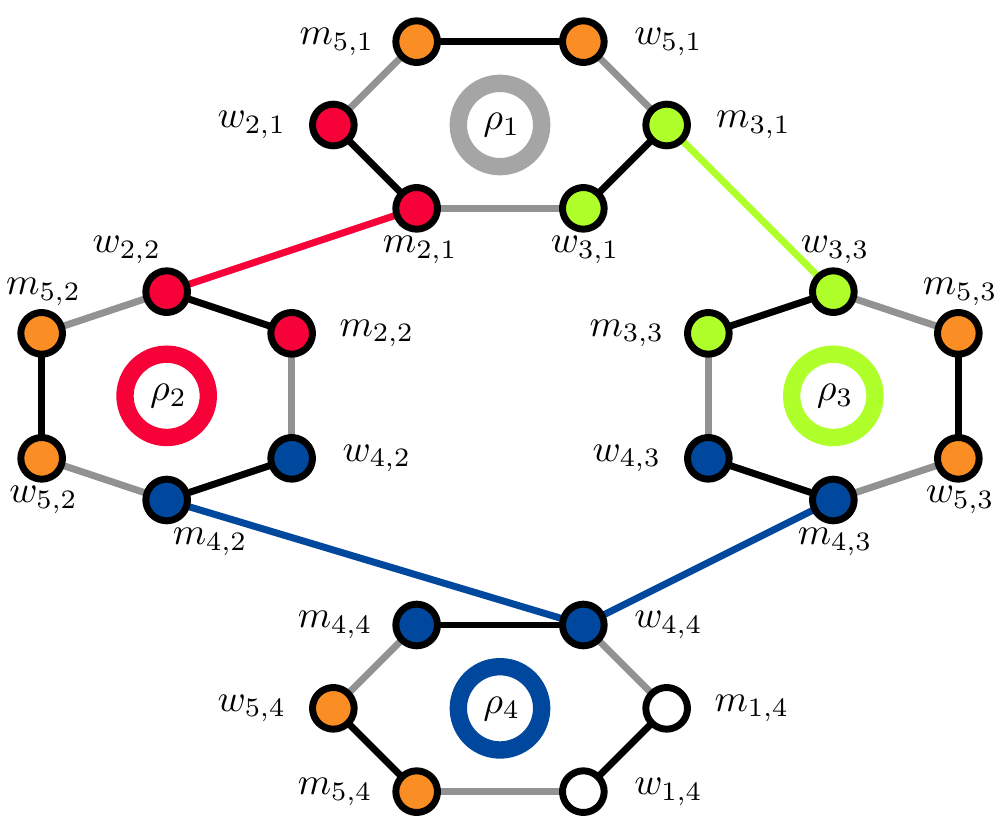}
    
    \bigskip
    
    \begin{minipage}{0.45\textwidth}
      \[
      \begin{split}
        GM_1 & \left\{
        \begin{array}{rllll}
          m_{4,4}: & w_{4,4} & w_{5,4} \\
          m_{4,2}: & w_{4,2} & w_{4,4} & w_{5,2}\\
          m_{2,2}: & w_{2,2} & w_{4,2} \\
          m_{4,3}: & w_{4,3} & w_{4,4} & w_{5,3} \\
          m_{3,3}: & w_{3,3} & w_{4,3} \\
          m_{3,1}: & w_{3,1} & w_{3,3} & w_{5,1} \\
          m_{2,1}: & w_{2,1} & w_{2,2} & w_{3,1}\\
          m_{1,4}: & w_{1,4} & w_{4,4}
        \end{array}\right.\\
        GM_2 & \left\{
        \begin{array}{rllll}
          m_{5,4}: & w_{5,4} & w_{1,4} \\
          m_{5,3}: & w_{5,3} & w_{3,3} \\
          m_{5,2}: & w_{5,2} & w_{4,2} \\
          m_{5,1}: & w_{5,1} & w_{2,1} \\
        \end{array}\right.
      \end{split}
      \]
    \end{minipage}
    \hfill
    \begin{minipage}{0.45\textwidth}
      \[
      \begin{split}
        GW_1 & \left\{
        \begin{array}{rllll}
          w_{4,4}: & m_{1,4} & m_{4,2} & m_{4,3} & m_{4,4}\\
          w_{5,4}: & m_{4,4} & m_{5,4} \\
          w_{4,3}: & m_{3,3} & m_{4,3} \\
          w_{5,3}: & m_{4,3} & m_{5,3} \\
          w_{4,2}: & m_{2,2} & m_{4,2} \\
          w_{5,2}: & m_{4,2} & m_{5,2} \\
          w_{3,1}: & m_{2,1} & m_{3,1} \\
          w_{5,1}: & m_{3,1} & m_{5,1} \\
        \end{array}\right.\\
        GW_2 & \left\{
        \begin{array}{rllll}
          w_{1,4}: & m_{5,4} & m_{1,4}\\
          w_{3,3}: & m_{5,3} & m_{3,1} & m_{3,3} \\
          w_{2,2}: & m_{5,2} & m_{2,1} & m_{2,2} \\
          w_{2,1}: & m_{5,1} & m_{2,1} \\
        \end{array}\right.
      \end{split}
      \]
    \end{minipage}
    \[
    \begin{array}{rllllllllllll}
      LM_1: & w_{1,4} & w_{2,1} & w_{2,2} & w_{3,1} & w_{3,3} & w_{4, 2} & w_{4,3} & w_{4,4} & w_{5,1} & w_{5,2} & w_{5,3} & w_{5,4}\\
      LM_2: & w_{5,1} & w_{5,2} & w_{5,3} & w_{5,4} & w_{1,4} & w_{2,1} & w_{2,2} & w_{3,1} & w_{3,3} & w_{4, 2} & w_{4,3} & w_{4,4} \\
      \hline
      LW_1: & m_{1,4} & m_{2,1} & m_{2,2} & m_{3,1} & m_{3,3} & m_{4, 2} & m_{4,3} & m_{4,4} & m_{5,1} & m_{5,2} & m_{5,3} & m_{5,4}\\
      LW_2: & m_{5,1} & m_{5,2} & m_{5,3} & m_{5,4} & m_{1,4} & m_{2,1} & m_{2,2} & m_{3,1} & m_{3,3} & m_{4, 2} & m_{4,3} & m_{4,4} \\
    \end{array}
    \]
  \end{centering}
  \caption{Illustration of the instance created by $\ConstructInstance$ for the edge-colored Hasse diagram depicted on the left. The vertices labeled $4, 3$, and $2$ correspond to the colors blue, red, and green, respectively. Orange is used to depict the ``extra'' color $5$. The white nodes in $\rho_4$ correspond to the color $1$ being added in Lines~\ref{ln:pad-colors-start}--\ref{ln:pad-colors-end} of $\ConstructInstance$. The incomplete preference lists of the agents are indicated below the figures. The ``master" preference lists for the men and women are indicated on the bottom.}
  \label{fig:list-instance}
\end{figure}

We now analyze the preference lists of the agents.  


\begin{lem}
\label{lem:list-men-pref}
For each man $m_{c,v}$, $c \in C_v$ and $v \in [p]$, the preference list of $m_{c,v}$  is 
\[
m_{c,v}: \; w_{c, v} \; \; w_{c^+, v} \qquad \mbox{ or }  \qquad m_{c,v}: \; w_{c, v} \; \; w_{c, c} \;\; w_{c^+, v}.
\]
Furthermore, if $c \neq p+1$, the subscripts of the women in his list are lexicographically increasing.  If $c = p+1$, then $m_{c,v}$'s preference list has length two, and the subscripts of the women in his list are lexicographically decreasing
\end{lem}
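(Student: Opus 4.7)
The plan is to trace through the three phases of $\ConstructInstance$ as they apply to $m_{c,v}$, exploiting the particular choices prescribed for this section---namely $\phi((u,v))=u$, the color $p{+}1$ appended to every $C_v$, and $\pi_v$ ordering $C_v$ from smallest to largest.

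First I would observe what each phase contributes. Lines~\ref{ln:man-opt-start}--\ref{ln:man-opt-end} seed the list with $w_{c,v}$ in first place. The edge loop in Lines~\ref{ln:edge-loop-start}--\ref{ln:edge-loop-end} appends $w_{\phi(e),u}$ to $m_{\phi(e),v'}$'s list for every edge $e=(u,v')\in E$; because $\phi((u,v'))=u$, this touches $m_{c,v}$ if and only if $(c,v)\in E$, and the woman appended is then $w_{c,c}$. Finally Lines~\ref{ln:woman-opt-start}--\ref{ln:woman-opt-end} append $w_{c^+,v}$ at the end. Consequently $m_{c,v}$'s list is $w_{c,v},w_{c^+,v}$ when $(c,v)\notin E$ and $w_{c,v},w_{c,c},w_{c^+,v}$ when $(c,v)\in E$, which is the first assertion of the lemma.

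For the subscript ordering I would split on whether $c=p{+}1$. When $c=p{+}1$ the edge $(c,v)$ cannot exist---$p{+}1$ is not a vertex of $H$---so the list has length two; and because $p{+}1$ is the largest element of $C_v$ while $\pi_v$ is sorted increasingly, $c^+=\pi_v(1)$ is the smallest color in $C_v$, in particular $c^+\le p<p{+}1$, making the subscripts $(p{+}1,v),(c^+,v)$ lex decreasing. When $c\le p$, the presence of $p{+}1$ in $C_v$ guarantees $c^+>c$, so $(c,v)<(c^+,v)$; and if the middle entry $w_{c,c}$ is present then $(c,v)\in E$, which under our topological labeling forces $c>v$, giving $(c,v)<(c,c)<(c^+,v)$ and hence lex increasing subscripts.

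No step of this will be a real obstacle; the whole argument is bookkeeping once the design choices are unpacked. The two things doing the work are (i) that $\phi((u,v))=u$ makes the edge loop touch each $m_{c,v}$ at most once (through the edge $(c,v)$ itself, so the appended woman has the special form $w_{c,c}$), and (ii) that placing $p{+}1$ at the end of $\pi_v$ concentrates the cyclic ``wrap-around'' of $\pi_v$ in the single exceptional case $c=p{+}1$---precisely the case the lemma isolates.
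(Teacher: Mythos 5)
Your proof is correct and follows essentially the same route as the paper's: both arguments use that $\phi((u,v))=u$ forces the only possible middle entry to be $w_{c,c}$ via the edge $(c,v)$, that the increasing ordering $\pi_v$ with $p{+}1$ appended gives $c<c^+$ for $c\neq p{+}1$, and that the topological labeling gives $c>v$ when $(c,v)\in E$. The only cosmetic difference is that you trace the algorithm's phases directly where the paper first invokes Proposition~\ref{prop:in-n-out-coloring} to cap the list length at $3$; the content is the same.
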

\begin{proof}
By Proposition~\ref{prop:in-n-out-coloring}, $m_{c,v}$'s preference list has length at most $3$ since $\phi$ is a proper in-coloring.   If he has a woman on his list different from his man-optimal and woman-optimal stable partners, then there is an edge $(u,v)$ colored $c$ so that $w_{c,u}$ is his second choice.  But by the way $\phi$ is defined $c = u$. Thus,  the preference list of $m_{c,v}$ is of the form described in the proposition. 

By our choice of $\pi_v$,  if $c \neq p+1$, then $c < c^+$.  Furthermore, if $w_{c,c}$ is $m_{c,v}$'s second choice, then  $v < c$ because $(u,v) = (c,v)$ is an edge of the graph.  Thus,  when $c \neq p+1$,  the subscripts of the women in $m_{c,v}$'s list are lexicographically increasing.   On the other hand,  $m_{p+1,v}$'s preference list has length $2$ since there is no edge assigned the color $p+1$.  Furthermore, $c^+ < p+1$ so the subscripts of the women in $m_{p+1, v}$'s list are lexicographically decreasing.
\end{proof}

\begin{lem}
\label{lem:list-women-pref}
For each woman $w_{c,v}$, $c \in C_v$ and $v \in [p]$, the preference list of $w_{c,v}$  is 
\[
w_{c,v}: \; m_{c^-, v} \; \;  m_{c,v} \;\; \mbox{ when $c \neq v$} \qquad \mbox{and } \qquad w_{c,c}: \; m_{c^-, c} \; \;  (\set{m_{c,x} \sucht (c,x) \in E})  \;\; m_{c,c} \;\; \mbox{ otherwise.}
\]
Here, $(\set{m_{c,x} \sucht (c,x) \in E})$ orders the men so that their subscripts are lexicographically increasing. 
 If the subscripts of the men in $w_{c,v}$'s list are not lexicographically increasing then $c = \min C_v$ so that $c^- = p+1$. 
\end{lem}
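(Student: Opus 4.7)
My plan is to trace $\ConstructInstance$ carefully, keeping track only of the three places where men are inserted into $w_{c,v}$'s preference list: the pairing in Lines~\ref{ln:man-opt-start}--\ref{ln:man-opt-end}, the edge loop in Lines~\ref{ln:edge-loop-start}--\ref{ln:edge-loop-end}, and the cyclic-list loop in Lines~\ref{ln:woman-opt-start}--\ref{ln:woman-opt-end}. I then organize the argument around which of these phases contribute anything to $w_{c,v}$'s list.

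The initialization inserts $m_{c,v}$, and the cyclic-list loop (specifically Line~\ref{ln:add-next-man}) eventually pushes $m_{c^-, v}$ to the very front; so every list produced starts with $m_{c^-, v}$ and ends with $m_{c,v}$. All the remaining structure comes from the edge loop. The key observation is that, by Line~\ref{ln:add-man}, an edge $(a, b) \in E$ with color $\gamma = \phi((a,b))$ pushes $m_{\gamma, b}$ onto the front of $w_{\gamma, a}$'s list. Under the current convention $\phi((a,b)) = a$, an edge affects $w_{c,v}$ only when $\gamma = c$ and $a = v$; this forces $c = v$. Consequently, when $c \neq v$ no middle men are inserted, which immediately gives the first stated form $w_{c,v}: m_{c^-, v}, m_{c,v}$.

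For the case $c = v$, let $b_1 > b_2 > \cdots > b_k$ enumerate the out-neighbors of $c$ in $H$. Since the edge loop processes $E$ in lexicographically decreasing order, it handles $(c, b_1), (c, b_2), \ldots, (c, b_k)$ in this order, each time pushing $m_{c, b_i}$ onto the front of $w_{c,c}$'s list. Reading the resulting list from front to back yields $m_{c, b_k}, m_{c, b_{k-1}}, \ldots, m_{c, b_1}, m_{c, c}$; prepending $m_{c^-, c}$ gives the asserted form. Moreover, the chosen topological labeling forces $b_i < c$ for every out-neighbor, so $b_k < \cdots < b_1 < c$ and the subscripts in this middle-plus-end block are already lex-increasing.

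The final clause then reduces to one comparison: the full list is lex-increasing iff the opening subscript $(c^-, v)$ is smaller than the next subscript, whose first coordinate is $c$. That is equivalent to $c^- < c$. Since $\pi_v$ lists $C_v$ from smallest to largest with $p+1$ appearing only at the cyclic wrap-around, $c^- \geq c$ occurs precisely when $c = \min C_v$, in which case $c^- = p+1$; this yields the contrapositive as stated. The argument is almost entirely bookkeeping, and I expect the single nontrivial step to be stating cleanly how ``push to front'' combined with lex-decreasing edge processing produces a final front-to-back order with lex-increasing second coordinates.
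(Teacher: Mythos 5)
Your proof is correct and follows essentially the same route as the paper's: both reduce to observing that Line~\ref{ln:add-man} only inserts middle men into women of the form $w_{c,c}$ (you derive this directly from the convention $\phi((u,v)) = u$, while the paper cites Lemma~\ref{lem:list-men-pref}), and both obtain the lexicographically increasing middle block from the push-to-front insertions combined with the lex-decreasing edge processing order. The handling of the final clause via $c^- < c$ unless $c = \min C_v$ also matches the paper's argument.
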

\begin{proof}
From Lemma~\ref{lem:list-men-pref}, we know that only women of the form $w_{c,c}$ can be part of a man's preference list where she is neither his man-optimal nor woman-optimal stable partner.  Thus, for $w_{c,v}$ with $c \neq v$, her preference list consists of her woman-optimal stable partner, $m_{c^-,v}$,  followed by her man-optimal stable partner, $m_{c,v}$.   But for $w_{c,c}$, she can have other men in her preference list and they are of the form $m_{c,x}$ such that $(c,x) \in E$.  Since these men were added to the front of $w_{c,x}$'s list so that their subscripts are lexicographically decreasing, the sublist $(\set{m_{c,x} \sucht (c,x) \in E})$ has subscripts that are  lexicographically increasing.   Furthermore, the subscript of $m_{c,c}$ is lexicographically larger than all the men because if $(c,x) \in E$ then $c > x$.  

The last observation follows from the fact that  if $c = \min C_v$, then $c^- = p+1$ but when $c \neq \min C_v$, then $c^- < c$ because of how we chose $\pi_v$. 
\end{proof}


 

\begin{lthm}
 \label{thm:k-list}
 Let $\calP$ be a finite poset.  There is an SM instance $I(\calP)$   that realizes $\calP$ such that every man's preference list is a sublist of two complete lists of women and every women's preference list is a sublist of two complete lists of men.   Moreover, given the Hasse diagram $H = H(\calP)$ with $p$ vertices and $q$ edges, $I(\calP)$ has $O(p+q)$ agents and can be constructed in $O(p+q)$ time.
\end{lthm}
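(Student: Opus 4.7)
The plan is to take the instance $I = I(\calP, \phi)$ produced by $\ConstructInstance$ under the setup described just before the theorem (labeling so that $p, p-1, \ldots, 1$ is a topological order; $\phi((u,v)) = u$; the extra color $p+1$ added to every $C_v$; edges processed in lex decreasing order; and $\pi_v$ listing the colors of $C_v$ in increasing order), and then to exhibit two complete master lists of women $LM_1, LM_2$ and two complete master lists of men $LW_1, LW_2$ such that every agent's preference list is a sublist of one of the two master lists on the opposite side. Because $\phi$ remains a valid edge coloring and none of the bookkeeping tweaks affect the correctness of Theorem~\ref{thm:generic-construct} (they merely enlarge each rotation $\rho_v$ by one pair and fix the order in which existing options are threaded together), $I$ realizes $\calP$, has $O(p+q)$ agents, and is constructed in $O(p+q)$ time.

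Next I would define $LM_1$ to be the list of all women in lex increasing order of the subscripts $(c, v)$, and $LM_2$ to be the concatenation of the ``extra-color'' women $w_{p+1, 1}, w_{p+1, 2}, \ldots, w_{p+1, p}$ (ordered by $v$), followed by every other woman in lex increasing subscript order. The analogous lists $LW_1$ and $LW_2$ are defined for the men. These are complete by construction, since together the two blocks of $LM_2$ cover $W$ exactly once (and similarly for $LW_2$).

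The sublist property is then a direct reading of Lemmas~\ref{lem:list-men-pref} and~\ref{lem:list-women-pref}. For a man $m_{c, v}$ with $c \neq p+1$, Lemma~\ref{lem:list-men-pref} gives that his (length-at-most-$3$) list has lex increasing subscripts, hence is a sublist of $LM_1$. For $m_{p+1, v}$ the list is exactly $w_{p+1, v}, w_{c^+, v}$ with $c^+ = \min C_v < p+1$, so the two entries sit in the first and second blocks of $LM_2$ in the correct order. Lemma~\ref{lem:list-women-pref} handles the women symmetrically: when $c \neq \min C_v$ (equivalently $c^- \neq p+1$), $w_{c, v}$'s list has lex increasing subscripts and is thus a sublist of $LW_1$; when $c = \min C_v$, her list begins with $m_{p+1, v}$ and continues with men whose first subscript is not $p+1$ in lex increasing order, which is precisely the structure of $LW_2$.

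No substantive obstacle is anticipated: the two preparatory lemmas already discharge all the bookkeeping, in particular the fact that in $w_{c,c}$'s list the inserted men $m_{c, x_k}, \ldots, m_{c, x_1}, m_{c, c}$ appear in lex increasing order (a consequence of processing edges in lex decreasing order and adding each new man to the \emph{front} of the list). The only conceptual point worth flagging in the final write-up is why two master lists are needed at all: the extra color $p+1$ is the unique color that can force a pair of subscripts in some agent's list to appear in lex \emph{decreasing} order, and $LM_2$ (respectively $LW_2$) is tailored precisely to absorb those exceptions by front-loading the $(p+1)$-block.
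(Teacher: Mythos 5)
Your proposal is correct and follows essentially the same route as the paper: take the tailored output of $\ConstructInstance$, define $LM_1, LM_2, LW_1, LW_2$ exactly as you do (lex-increasing order, with the $(p+1)$-block front-loaded in the second list), and read the sublist property off Lemmas~\ref{lem:list-men-pref} and~\ref{lem:list-women-pref}. The only point the paper treats more carefully than you do is the $O(p+q)$ runtime of the five preprocessing specifications (topological labeling, lex-decreasing edge processing, and sorting each $C_v$), which it discharges via radix sort; this is worth a line in the final write-up since it is not literally covered by Theorem~\ref{thm:generic-construct}.
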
 
 \begin{proof}
   Let $I(\calP)$ be the SM instance constructed by the specific implementation of the algorithm $\ConstructInstance(H, \phi)$  described above.  Let the first {\it group of men} be $GM_1 = \set{m_{c, v} \sucht c \neq p+1, v \in [p]}$ and the second group be $GM_2 = M \setminus GM_1$.  Similarly, denote the first {\it group of women} as $GW_1 = \set{w_{c, v} \sucht c \neq \text{ min color in } C_v, v \in [p]}$ and the second group as $GW_2 = W \setminus GW_1$.
   
    Set the first {\it list for men} $LM_1$ as the complete list of the women $W$ in which the subscripts of the women are lexicographically increasing.  Set the second list $LM_2$ as  the complete list of the women in which the women in $\set{w_{p+1,v} \sucht v \in [p]}$ appear first followed by the remaining women.  For each subgroup, the subscripts of the women are again lexicographically increasing.   (See Figure~\ref{fig:list-instance}).   Let the {\it lists for women} $LW_1$ and $LW_2$ be constructed  analogously as $LM_1$ and $LM_2$ respectively.
    



   By Lemma~\ref{lem:list-men-pref},  the preference list of every man in $GM_1$ is a sublist of $LM_1$ while those in $GM_2$ is a sublist of $LM_2$.   Similarly, by Lemma~\ref{lem:list-women-pref}, every woman in $GW_1$ and $GW_2$ have preference lists that are sublists of $LW_1$ and $LW_2$ respectively.
   
   
   
 Since $I(\calP)$ is the output of $\ConstructInstance(H, \phi)$, it realizes $\calP$.  We also noted that adding $p+1$ to each $C_v$ increases the number of agents by $2p$ so $I(\calP)$ still has $O(p+q)$ agents.  The only detail we have to verify is the time it takes to implement the specifications on top of the $O(p+q)$ running time of $\ConstructInstance$.  Topologically sorting $H$ takes $O(p+q)$ time. Properly coloring each edge $(u,v)$ of $H$ so that $\phi((u,v)) = u$  takes $O(q)$ time.  Adding $p+1$ to each $C_v$ takes $O(p)$ time.   Processing the edges  $(u,v)$ in $E$ in Lines \ref{ln:edge-loop-start} to \ref{ln:edge-loop-end} in lexicographically decreasing order can be done by radix sort.  There are $q$ edges and $u, v \in [p]$ so the radix sort can be performed in $O(p + q)$ time.  Finally, we can simultaneously sort all $C_v$'s by creating pairs $(v, c)$ for each $c \in C_v$ and sorting them lexicographically using radix sort.    We note that $|C_v| \leq  \deg(v) + 2$ 
  so  $\sum_v |C_v| \leq 2q + 2p$.  It follows that there are $O(p+q)$ pairs to sort.  Since $v \in [p]$ and $c \in [p+1]$, radix sort will again take $O(p+q)$ time.   Thus, even with the five extra specifications,  $I(\calP)$ can be constructed in $O(p+q)$ time. 
 \end{proof}
  
  Let us now consider the  SM instance $I_L(\calP)$ obtained from the SM instance $I(\calP)$ by assigning each man in $GM_1$ the list $LM_1$ and each man in $GM_2$ the list $LM_2$.   For each woman, let her preference list be the one in $I(\calP)$ followed by an arbitrary ordering of the missing men.

  
  

\begin{lem}
\label{lem:same-man-optSM}
The man-optimal stable matching of $I_L(\calP)$ is $\mu^* = \set{(m_{c,v}, w_{c,v}) \sucht c \in C_v, v \in [p]}$.  That is, $I_L(\calP)$ and $I(\calP)$ have the same man-optimal stable matching. 
\end{lem}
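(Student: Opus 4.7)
My plan is to prove the lemma in two parts: first I would show that $\mu^*$ is stable in $I_L(\calP)$, and then that it is specifically the man-optimal stable matching. For stability, I would check that $\mu^*$ admits no blocking pair. Consider any candidate $(m_{c,v}, w_{a,b}) \notin \mu^*$. The key observation is that $w_{a,b}$'s preference list in $I_L$ is her $I$-list---which by Lemma~\ref{lem:list-women-pref} terminates at her $\mu^*$-partner $m_{a,b}$---followed by an arbitrary ordering of the men not in her $I$-list. Hence for $(m_{c,v}, w_{a,b})$ to block, $m_{c,v}$ must appear strictly before $m_{a,b}$ in her $I$-list; by Lemma~\ref{lem:list-women-pref} this forces either $(a,b) = (c^+, v)$, or $(a,b) = (c,c)$ with $(c,v) \in E$. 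In both cases, a direct lexicographic comparison of subscripts shows that $w_{c,v}$ precedes $w_{a,b}$ in whichever list ($LM_1$ or $LM_2$) is assigned to $m_{c,v}$: for the first case via $c < c^+$ when $c \neq p+1$, and via the leading $(p+1)$-block of $LM_2$ when $c = p+1$; for the second via $v < c$, which holds because our labeling makes every edge $(c,v) \in E$ satisfy $c > v$. Thus $m_{c,v}$ never strictly prefers $w_{a,b}$ to his $\mu^*$-partner $w_{c,v}$, so no blocking pair exists.

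For man-optimality, let $\mu_M$ denote the man-optimal (equivalently, woman-pessimal) stable matching of $I_L$. By the stability part, $\mu_M(m) \ge_m \mu^*(m)$ for every man $m$. I would argue by contradiction, assuming $\mu_M \neq \mu^*$, so that some $w_0 = w_{a_0, b_0}$ has $\mu_M(w_0) = m_0 \notin w_0$'s $I$-list---otherwise woman-pessimality plus the fact that $m_{a_0, b_0}$ is the last entry of her $I$-list would force $\mu_M(w_0) = m_{a_0, b_0}$. I would then define inductively $m_{i+1} := \mu^*(w_i) = m_{a_i, b_i}$ and $w_{i+1} := \mu_M(m_{i+1})$. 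Stability of $\mu_M$, combined with $w_i$'s strict preference for $m_{i+1}$ (in her $I$-list) over $m_i$ (in her extension), forces $m_{i+1}$ to strictly prefer $w_{i+1}$ to $w_i$ in his $LM_1$ or $LM_2$ list. The same case analysis as in the stability argument then rules out $m_{i+1}$ lying in $w_{i+1}$'s $I$-list, so $(m_{i+1}, w_{i+1})$ is again a ``bad'' pair.

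Since $M$ is finite, the sequence $(m_i, w_i)$ must cycle. Within the cycle, the preference condition behaves as follows: whenever $a_i = p+1$ (so $m_{i+1} \in GM_2$), $w_i$ lies in the leading $(p+1)$-block of $LM_2$, and $w_{i+1}$ preceding it forces $a_{i+1} = p+1$ and $b_{i+1} < b_i$; whenever $a_i \neq p+1$ (so $m_{i+1} \in GM_1$), preferring $w_{i+1}$ to $w_i$ in $LM_1$ means $(a_{i+1}, b_{i+1}) <_{\mathrm{lex}} (a_i, b_i)$, which in particular forces $a_{i+1} \neq p+1$. Thus either all $a_i$ in the cycle equal $p+1$ (yielding a strictly decreasing cyclic sequence of $b_i$'s) or none do (yielding a strictly decreasing cyclic lex sequence of pairs)---both impossible. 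Hence no bad pair can exist, and therefore $\mu_M = \mu^*$. The main obstacle is precisely this cycle argument, where the coexistence of the two distinct preference orderings $LM_1$ and $LM_2$ could in principle obstruct the decreasing-subscript contradiction; the key resolution is the observation that $LM_2$'s leading $(p+1)$-block traps the entire cycle inside either the $GM_1$-regime or the $GM_2$-regime.
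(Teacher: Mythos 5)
Your proposal is correct, but it proves the lemma by a genuinely different route than the paper. The paper's proof is algorithmic: it runs the man-oriented Gale--Shapley algorithm with proposals issued in lexicographically increasing order of the men's subscripts, and shows by induction that each man $m_{c,v}$ is accepted by $w_{c,v}$ and never subsequently displaced (using Lemma~\ref{lem:list-women-pref} to argue that every woman ahead of $w_{c,v}$ on his list rejects him). You instead verify directly that $\mu^*$ admits no blocking pair in $I_L(\calP)$---correctly isolating the only two ways a woman can prefer a man to her $\mu^*$-partner, namely $w_{a,b} = w_{c^+,v}$ or $w_{a,b}=w_{c,c}$ with $(c,v)\in E$, and checking in each case that $w_{c,v}$ still precedes $w_{a,b}$ in $LM_1$ or $LM_2$---and then establish man-optimality by a descent argument: any woman whose man-optimal partner falls outside her $I(\calP)$-list generates, via alternating applications of $\mu^*$ and the (woman-pessimal) man-optimal matching, a cyclic sequence of women whose subscripts strictly decrease, which is impossible. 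The delicate point in your argument, which you correctly identify and resolve, is that the two master lists $LM_1$ and $LM_2$ impose different orders; your observation that the leading $(p+1)$-block of $LM_2$ traps the cycle entirely within one regime is exactly what makes the monotone descent go through. What each approach buys: the paper's simulation is uniform with the analogous arguments later in the paper (e.g., Lemma~\ref{lem:same-man-optSM-range}) and yields the man-optimal matching constructively in one pass; your approach separates the two claims cleanly (stability of $\mu^*$ is established on its own, then optimality), avoids reasoning about the dynamics of deferred acceptance, and makes explicit exactly which pairs could conceivably block, which is arguably more transparent. Both proofs rely on the same structural facts, Lemmas~\ref{lem:list-men-pref} and~\ref{lem:list-women-pref}, and both are valid.
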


\begin{proof}
Let $S = LW_1$, the sequence of men whose subscripts are lexicographically increasing. Let $S_{\ell}$ denote the $\ell\th$ man in the sequence. We run the man-oriented Gale-Shapley algorithm using $S$. In particular, we let each man $S_\ell$ make a series of proposals and get engaged before we move on to $S_{\ell + 1}$.  Of course it's possible that at the later stages of the algorithm $S_\ell$ becomes free because his fiancee rejects him.    We shall show though that this situation never happens.

Consider $S_1 = m_{c,v}$,  the man  with the lexicographically least subscript among all the men.  Since each $v \in [p]$ has $\abs{C_v} \ge 2$, we know $c < p+1$.  Thus, the preference list of $m_{c,v}$ is $LM_1$, and the first woman on $LM_1$ is $w_{c,v}$ because the subscripts of the women in $LM_1$ are also lexicographically increasing. No one has proposed to $w_{c,v}$ yet so  she will accept $m_{c,v}$'s proposal.  Assume that for $\ell' \leq \ell - 1$,  $S_{\ell'}$ is temporarily matched to his partner in $\mu^*$. We will show that the same is true for $S_{\ell}$.  

Suppose $S_{\ell} = m_{c,j}$ such that $c < p+1$.   Again, his preference list is $LM_1$.  He will propose first to women of the form $w_{s,t}$ such that  $(s,t) < (c,j)$.  By assumption, $w_{s,t}$ is matched to $m_{s,t}$.  According to Lemma~\ref{lem:list-women-pref}, the only men $w_{s,t}$ will prefer to $m_{s,t}$  will be $m_{p+1,t}$ or have subscripts lexicographically less than $m_{s,t}$'s.  In other words, $w_{s,t}$ does not prefer $m_{c,j}$ to $m_{s,t}$ so she will reject him.  Hence, $m_{c,j}$ will eventually propose to $w_{c,j}$ who will accept his proposal because she is free.


Suppose $S_{\ell} = m_{p+1, 1}$.  
His preference list is $LM_2$ so the first woman on his list is $w_{p+1, 1}$.  He will of course propose to her and she will accept it because no one has proposed to her yet.


Finally, let $S_{\ell} = m_{p+1, j}$ such that $j > 1$.  Again, his preference list is $LM_2$. He will first propose to women $w_{p+1, v}$ such that $v < j$.  By assumption, $w_{p+1, v}$ is matched to $m_{p+1,v}$.  According to Lemma~\ref{lem:list-women-pref}, the only man that $w_{p+1,v}$ prefers to $m_{p+1,v}$ is of the form $m_{c,v}$ with $c \neq p+1$.   Thus, $w_{p+1,v}$ will reject $m_{p+1, j}$'s proposal.  He will eventually propose to $w_{p+1, j}$ and she will accept the proposal because she is free. 

By induction, the man-oriented Gale-Shapley algorithm will match all men $m_{c,v}$ to $w_{c,v}$ so $\mu^*$ is the man-optimal stable matching of $I_L(\calP)$. By Lemma~\ref{lem:generic-optimal-matchings}, $\mu^*$ is also the man-optimal stable matching of $I(\calP)$. 
\end{proof}

\begin{lthm}
  \label{thm:ml}
 Let $\calP$ be a finite poset.  There is  an SM instance $I_L(\calP)$ with $(2, \infty)$-list preferences
 that realizes $\calP$. Moreover, given the Hasse diagram $H = H(\calP)$ with $p$ vertices and $q$ edges,  $I_L(\calP)$ has $O(p+q)$ agents and can be constructed in $O((p+q)^2)$ time. 
\end{lthm}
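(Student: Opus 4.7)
The plan is to show that $I_L(\calP)$ has the same rotation poset as $I(\calP)$---which realizes $\calP$ by Theorem~\ref{thm:k-list}---by verifying that both instances induce identical symmetric shortlists, then applying Corollary~\ref{cor:shortlists}. The $(2,\infty)$-list property of $I_L$ is immediate from the construction: every man in $GM_1$ (resp.\ $GM_2$) is assigned list $LM_1$ (resp.\ $LM_2$), while the women's lists may be arbitrary, so the men partition into at most two identical-preference groups.

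The first step is to pin down the man- and woman-optimal stable matchings of $I_L$. Lemma~\ref{lem:same-man-optSM} already identifies the man-optimal matching of $I_L$ with $\mu^*$. For the woman-optimal, I would observe that by Lemma~\ref{lem:list-women-pref} every woman $w_{c,v}$'s $I$-list begins with her $I$-woman-optimal partner $m_{c^-,v}$, and since $I_L$ only appends the missing men at the \emph{end} of her $I$-list, $m_{c^-,v}$ remains her first choice in $I_L$. The matching $\mu_z = \set{(m_{c^-,v}, w_{c,v})}$ is still stable in $I_L$: for any hypothetical blocking pair $(m,w)$ with $w$ absent from $m$'s $I$-list, the symmetry of the pair-additions in Lines~\ref{ln:edge-loop-start}--\ref{ln:edge-loop-end} of $\ConstructInstance$ forces $m$ to be absent from $w$'s $I$-list as well, placing $m$ after $w$'s first choice in $w$'s $I_L$ list---so $w$ cannot block. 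Since every woman's first choice in $I_L$ is her $\mu_z$-partner, no stable matching of $I_L$ can improve on $\mu_z$ for any woman, so $\mu_z$ is woman-optimal.

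The second and key step is to verify that the symmetric shortlists of $I_L$ equal the preference lists of $I$. For any man $m_{c,v}$ and any ``extra'' woman $w_{s,t}$ lying strictly between $w_{c,v}$ and $w_{c^+,v}$ in $LM_i$ but not on $m_{c,v}$'s $I$-list, Lemma~\ref{lem:list-men-pref} forces $w_{s,t} \neq w_{c,c}$ or else $(c,v) \notin E$, and in either case the symmetry of $I$-adjacencies implies $m_{c,v}$ is absent from $w_{s,t}$'s $I$-list. Therefore $m_{c,v}$ is merely appended to $w_{s,t}$'s $I_L$ list, placing him after her woman-optimal partner; the reciprocal shortlist condition fails and $w_{s,t}$ is pruned. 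A parallel argument for the women shows that each appended man in a woman's $I_L$ list lies after her man-optimal partner and is likewise pruned. This yields $S(I_L) = S(I)$, whence Corollary~\ref{cor:shortlists} gives that $I_L$ and $I$ have isomorphic rotation posets, namely $\calP$.

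Finally, $I_L$ inherits the $O(p+q)$ agents of $I(\calP)$; building $I(\calP)$ and the two master lists $LM_1, LM_2$ takes $O(p+q)$ time by Theorem~\ref{thm:k-list} plus a single radix sort, but writing out a length-$n$ preference list for each of the $n = O(p+q)$ agents dominates at $O((p+q)^2)$. The main obstacle is the second step: one must carefully track which women in $LM_i$ fall between a given man's two stable partners and verify in every case that the ``extra'' ones fail the reciprocal shortlist condition, which in turn depends on the precise ordering of colors and subscripts enforced by the five specifications preceding Theorem~\ref{thm:k-list} and on the structural descriptions in Lemmas~\ref{lem:list-men-pref} and~\ref{lem:list-women-pref}.
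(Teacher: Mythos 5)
Your proposal is correct and follows essentially the same route as the paper's proof: establish that $I_L(\calP)$ and $I(\calP)$ share the same man-optimal matching (Lemma~\ref{lem:same-man-optSM}) and woman-optimal matching, deduce $S(I_L) = S(I)$ by showing every appended agent falls outside the relevant shortlist interval, and invoke the shortlist corollary, with the identical $O((p+q)^2)$ time bound dominated by writing out complete lists. One small slip: an ``extra'' woman $w_{s,t}$ prunes $m_{c,v}$ because he is appended \emph{after her man-optimal partner} (the last entry of her $I$-list), not merely ``after her woman-optimal partner''; and for completeness you should note, as the paper does, that no agent already on an $I$-preference list gets pruned, so that $S(I) \subseteq S(I_L)$ as well.
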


\begin{proof}
Consider $I_L(\calP)$,  which clearly has $(2, \infty)$-list preferences.  It was constructed from $I(\calP)$ which realizes $\calP$. In particular, the women's preference lists in $I_L(\calP)$ is exactly like that in $I(\calP)$ followed by an arbitrary ordering of the missing men.   We will now argue that the two instances have identical shortlists.  If so, by Corollary~\ref{cor:shortlists-2}, they have identical rotation posets and, consequently, $I_L(\calP)$ also realizes $\calP$. 

 We showed in Lemma~\ref{lem:same-man-optSM} that $I_L(\calP)$ and $I(\calP)$ have the same man-optimal stable matching.  They also have the same woman-optimal stable matching.  This is the case because we constructed $I(\calP)$ using $\ConstructInstance$ which lists every woman's woman-optimal stable partner as the first person in her preference list. This property holds in $I_L(\calP)$.  Thus, when we run the woman-oriented Gale-Shapley algorithm, the result will be the same for both instances. 

For ease of discussion, let $S_a$ and $S_a^L$ denote the symmetric shortlists of agent $a$ in $I(\calP)$ and $I_L(\calP)$ respectively.  
It is easy to verify that $S_a$ is exactly the preference list of $a$ in $I(\calP)$.   For a woman $w$, her preference list in $I_L(\calP)$ consists of her preference list in $I(\calP)$, which ends with her man-optimal stable partner, followed by the missing men.  Thus, $S_w^L$ does not have the missing men.  On the other hand, every man's preference list in $I_L(\calP)$ contains his preference list in $I(\calP)$ as a sublist.  So if a man $m$ is part of $S_w$,  he will remain so in $S_w^L$.   Thus, $S_w^L = S_w$.\footnote{Note that we don't have to worry about $S_w^L$ having more men than $S_w$ because {\it all} the men between the woman-optimal and man-optimal stable partners of $w$ are part of $S_w$.}

Next, consider a man $m$.  From our discussion above, we know that $S_m$ is a sublist of $S_m^L$.  If the latter contains a woman $w'$ that is not in $S_m$, then $S_{w'}^L$ should have $m$.  But we proved that $S_{w'}^L = S_{w'}$ so $m$ is in $S_{w'}$, which means $w'$ is in $S_m$.  A contradiction.  Therefore $S_m = S_m^L$.

Constructing $I(\calP)$ takes $O(p+q)$ time. Creating $LM_1$ and $LM_2$ takes $O(p+q)$ time.  But completing the women's preference lists take $O((p+q)^2)$ time so constructing $I_L(\calP)$ takes $O((p+q)^2)$ total time. 
\end{proof}

  Let $I_{L'}(\calP)$ be the SM instance obtained from $I(\calP)$ by assigning each woman in $GW_1$ the list $LW_1$ and each woman in $GW_2$ the list $LW_2$.  Then for each man, let his preference list be the one in $I(\calP)$ followed by an arbitrary ordering of the missing woman.  We leave it up to the reader to verify that $I_{L'}(\calP)$ and $I(\calP)$ will have the same woman-optimal stable matching.   Consequently, one can show that $I_{L'}(\calP)$ is an SM with $(\infty, 2)$-list preferences  that realizes $\calP$.  Like $I_L(\calP)$, it has $O(p+q)$ agents and can be constructed in $O((p+q)^2)$ time.


\section{From Path Decompositions to \texorpdfstring{$k$}{k}-Range Preferences}
\label{sec:k-range}


Recall that an SM instance $I$ is in $\range(k)$ if for all agents $a \in M \cup W$, we have $\max\rank(a) - \min\rank(a) \leq k - 1$ (see Section~\ref{sec:restricted-preference-models}). For rotational simplicity, we will denote $\orank(a) = \min\rank(a)$. Thus $I \in \range(k)$ if and only if for all $m \in M$ and $w \in W$ we have
\begin{equation}
  \label{eqn:objective-rank}
  \begin{split}
    \orank(w) \leq & P_m(w) \leq \orank(w) + k - 1, \quad\text{and}\\
    \orank(m) \leq & P_w(m) \leq \orank(m) + k - 1.
  \end{split}
\end{equation}

In this section, we describe an algorithm for constructing a $k$-range SM instance $I$ realizing a given poset $\calP$. Unlike the previous sections, the $k$-range construction is no longer just dependent only on $\calP$. Rather, the $k$-range construction additionally requires a nice path decomposition $\calX = (X_1, X_2, \ldots, X_{2p})$ of $H(\calP)$ as input, and the $k$ for which $I \in \range(k)$ depends on the width of $\calX$. Specifically, if $\calX$ has width $k$, then the instance $I = I_R(\calP)$ we construct will satisfy $I \in \range(O(k))$. The path decomposition $\calX$ determines the color set $C_v$ for each vertex $v$, and its width affects the number of agents created as well as the range of the agents' rankings. Thus, the smaller $k$ is, the more similar are the preference lists.

  The basic idea of the construction is as follows. Given a poset $\calP$ and path decomposition $\calX = (X_1, X_2, \ldots, X_{2p})$ of $H(\calP)$ of width $k$, we associate a set of $O(\abs{X_i}) = O(k)$ agents with each $X_i$. If an element $\rho \in \calP$ is contained in $X_i, X_2, \ldots, X_j$, then there is a corresponding rotation containing one man-woman pair from each $X_{\ell}$ with $i \leq \ell \leq j$. If $H(\calP)$ contains a directed edge $(\rho, \sigma)$ with $\ell$ the minimum index satisfying $\rho, \sigma \in X_\ell$, then the preferences of the agents corresponding to $\rho$ and $\sigma$ in $X_\ell$ enforce that $\rho$ must be eliminated before $\sigma$ is exposed.

  We first construct an instance, $I_1(\calP)$, realizing $\calP$ with incomplete preferences, such that each agent $a$ associated with a set $X_i$ only has acceptable partners associated with sets $X_j$ satisfying $\abs{i - j} \leq 2$.  We then form an instance $I_2(\calP)$ in which each $a$ appends all other agents associated with $X_{i-2}, X_{i-1}, \ldots, X_{i+2}$ to her preference lists. Finally, we complete each $a$'s preferences by inserting all agents associated with $X_1, \ldots, X_{i-3}$ to the front of her preference list, and appending all agents associated with $X_{i+3}, \ldots, X_{2p}$ to the end of her preference list. Intuitively, the resulting instance $I$ is in $\range(O(k))$ because two men (say) can only disagree on the relative ranks of two women $w$ and $w'$ if $w$ and $w'$ are associated with $X_i$ and $X_j$ ( respectively) with $\abs{i - j} \leq C$ for some constant $C$; for any $i, j$ satisfying $i < j - C$, all men prefer $w$ associated with $X_i$ to $w'$ associated with $W_j$. Thus, all rankings of all agents agree up to $O(C k) = O(k)$---i.e., $I \in \range(O(k))$.


\subsection{Creating $I_1(\calP)$ and $I_2(\calP)$} 

Let $[a,b] = \{a, a+1, \hdots, b\}$ and $\ell = b- a + 1$.  The {\it bitonic sequence} corresponding to $[a,b]$ is a permutation $(s_1, s_2, \hdots, s_{\ell})$ of its elements such that $|s_i - s_{i+1}| \leq 2$ for $i = 1, \hdots, \ell $ (and $s_{\ell + 1} = s_1$).  That is, any two consecutive numbers in the (circular) sequence differ by at most $2$.   For example,  $(3, 5, 7, 6, 4)$ is a bitonic sequence for $[3,7]$.   
In general, we can construct a bitonic sequence for $[a,b]$ by starting at $a$ and incrementing each number by $2$ until we reach $b-1$ or $b$,   then adding the largest number that is not part of the sequence yet ($b$ or $b-1$),  and decrementing each number by $2$ until we reach $a+1$.   Bitonic sequences will play a role in our implementation of $\ConstructInstance$. 

We will once again make use of $\ConstructInstance$ to create the SM instance with the following specifications:
\begin{itemize}
\item  \emph{For each edge $(u,v)$, let $\phi((u,v)) = \min\set{i \sucht u, v \in X_i}$.}  We know that $\phi((u,v))$ is well-defined because $\calX$ is a path decomposition of $H$.  
\item \emph{For each $v \in [p]$, let $C_v = [a_v, b_v+1]$ where $a_v = \min \set{i \sucht v \in X_i}$ and $b_v = \max \set{i \sucht v \in X_i}$.}   Since there is some $X_i$ such that $v \in X_i$, notice that $|C_{v}| \ge 2$ for every $v \in [p]$.  Moreover,  every color assigned to some edge incident to $v$ is in $C_v$.  Thus, we are effectively skipping Lines 2 to 9 of Algorithm~\ref{alg:generic}.   (Note that for this section, we shall also use $i$ instead of $c$ to refer to the colors in $C_v$ as the colors are indices in the path decomposition $\calX$.)
\item \emph{For each $v \in [p]$, let $\pi_v$ be a bitonic sequence of $C_v$.} 
\end{itemize}
We refer to the instance created by the above specification of $\ConstructInstance(H(\calP), \phi)$ as $I_1(\calP)$. Figure~\ref{fig:k-range-instance} illustrates the construction for a poset with pathwidth 2.

\begin{prop}
  \label{prop:I1}
  The SM instance $I_1(\calP)$ realizes $\calP$.  It has $O(kp)$ agents and it can be constructed in $O(kp + q)$ time, where $k$ is the width of $\calX$---the nice path decomposition of $H(\calP)$---and $p$ and $q$ are the number of vertices and edges of $H(\calP)$ respectively.
\end{prop}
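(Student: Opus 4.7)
The realization claim will follow essentially for free from the generic construction machinery. My plan is to argue that the three specifications imposed on $\ConstructInstance$ fall within the hypotheses of Lemma~\ref{lem:generic-isomorphism} and Theorem~\ref{thm:generic-construct}. First I would check that $\phi$ is a well-defined edge coloring with values in $\Z^+$: for every edge $(u,v)\in E$, since $\calX$ is a path decomposition there is at least one index $i$ with $u,v\in X_i$, so $\phi((u,v))=\min\{i:u,v\in X_i\}\ge 1$ is well defined. Next I would verify that skipping Lines~2--9 of Algorithm~\ref{alg:generic} is harmless: these lines exist only to guarantee $|C_v|\ge 2$ and to ensure $C_v$ contains every color of every edge incident to $v$. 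Here, $C_v=[a_v,b_v+1]$ automatically has size $\ge 2$, and if $(u,v)\in E$ then $\phi((u,v))=\min\{i:u,v\in X_i\}\in[a_v,b_v]\subseteq C_v$, so both properties hold without the correction step. Finally, Lemma~\ref{lem:generic-isomorphism} makes no assumption on how $\pi_v$ orders $C_v$ cyclically, so using a bitonic sequence is just one valid choice. Consequently $I_1(\calP)$ realizes $\calP$.

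For the size bound, I would start from the fact that $\ConstructInstance$ creates exactly $2\sum_{v\in[p]}|C_v|$ agents. Because each $v$ is contained in the consecutive block of bags $X_{a_v},\ldots,X_{b_v}$, I can rewrite
\[
\sum_{v\in[p]}|C_v|=\sum_{v\in[p]}(b_v-a_v+2)=\Bigl(\sum_{v\in[p]}|\{i:v\in X_i\}|\Bigr)+p=\sum_{i}|X_i|+p.
\]
Since $\calX$ is nice, it has $2p$ bags (Lemma~\ref{lem:nice-path}), each of size at most $k+1$, yielding $\sum_i|X_i|\le 2p(k+1)$. Altogether there are $O(kp)$ agents.

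For the running time, the plan is to preprocess $\calX$ in $O(kp)$ time, scanning the bags once to record $a_v$ and $b_v$ for every $v\in[p]$. With these at hand, the coloring value $\phi((u,v))=\max(a_u,a_v)$ (the first bag after both endpoints have appeared) is computable in $O(1)$ per edge, so constructing $\phi$ takes $O(q)$ total time. Forming each $C_v=[a_v,b_v+1]$ and its bitonic ordering $\pi_v$ takes $O(|C_v|)$ time by the explicit recipe described for bitonic sequences, adding up to $O(\sum_v|C_v|)=O(kp)$ over all $v$. The remainder is a single invocation of $\ConstructInstance$, which by Theorem~\ref{thm:generic-construct} runs in time linear in the number of agents created plus the number of edges of $H(\calP)$, namely $O(kp+q)$.

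The proof is mostly bookkeeping, so I do not anticipate a genuine obstacle; the main conceptual check is the observation that the modified color-set rule $C_v=[a_v,b_v+1]$ never drops a color required by an incident edge, which follows from the defining property of a path decomposition. The bitonic ordering itself plays no role in this proposition (it is there for Section~\ref{sec:k-range}'s range bound) but it is useful to note already that the ordering is well-defined, since $|C_v|\ge 2$ and the bitonic construction works for any interval of integers.
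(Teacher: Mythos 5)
Your proposal is correct and follows essentially the same route as the paper: realization is inherited from the generic construction (after checking that the modified color sets $C_v=[a_v,b_v+1]$ still contain every incident edge color and have size at least two), and the $O(kp)$ agent count and $O(kp+q)$ runtime come from the width and niceness of $\calX$. The only cosmetic difference is that you bound the number of agents by summing $|C_v|$ over vertices via $\sum_i|X_i|$, whereas the paper bounds the number of agents associated with each fixed index $i$ by $|X_i|+1\le k+2$; both yield the same bound.
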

\begin{proof}
  Since $I_1(\calP)$ was constructed by $\ConstructInstance$,  the SM instance realizes $\calP$.    Let us now compute the instance's number of agents.  For each $v \in [p]$ and each  $i \in [a_v, b_v + 1]$, the algorithm creates two agents $m_{i, v}$ and $w_{i,v}$.  By definition, $a_v$ is the index when $v$ is added to the path decomposition $\calX$ while $b_v + 1$ is the index when $v$ is removed from $\calX$.   Now $\calX$ is a {\it nice} path decomposition so  at most one vertex is removed at index $i$, $i \in [2p]$.  It follows that, for a fixed $i$,  the number of women $w_{i,v}$ with $v \in [p]$ is at most $|X_i| + 1 \leq (k+1) + 1 = k+2$ because $\calX$ has width $k$.   The same bound holds for the number of men $m_{i,v}, v\in [p]$.    Thus, the total number of agents is at most $2p \times 2(k+2) = O(kp)$.     
  
  Finally, let us consider the running time of the algorithm with the specifications.  Creating the agents and adding their man-optimal stable partners into their preference lists takes $O(kp)$ time.  Adding their acceptable partners based on each edge of $H(\calP)$ takes $O(q)$ time.  Finally,  for each $v \in [p]$, creating $\rho_v$ based on the bitonic sequence for $[a_v, b_v+1]$ and adding the agents' woman-optimal stable partners into their preference lists  take  $O(|\rho_v|)$ time.   But $O(\sum_v |\rho_v|) = O(kp)$ since each man and each woman is part of exactly one rotation.   Thus, constructing $I_1(\calP)$ takes $O(kp + q)$ time. 
\end{proof}

\begin{prop}
  In the SM instance $I_1(\calP)$ the following are true:
  \begin{itemize}
  \item[(i)] For $m_{i,v}$, $i \in C_v, v \in [p]$,  every woman $w_{j,*}$ that appears in his preference list has $|j - i| \leq 2$.
  \item[(ii)] For  $w_{i,v}$, $i \in C_v, v \in [p]$,  every man $m_{j,*}$ that appears in her preference list has $|j - i| \leq 2$. 
  \end{itemize}  
\end{prop}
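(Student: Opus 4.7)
The plan is to trace through $\ConstructInstance$ and identify every line that inserts a woman into some $m_{i,v}$'s preference list (for part (i)) or a man into some $w_{i,v}$'s preference list (for part (ii)), then verify the bound on the first subscript case-by-case. Inspecting Algorithm~\ref{alg:generic}, there are exactly three sources of entries in $m_{i,v}$'s list: (a) his man-optimal partner $w_{i,v}$, added in Lines~\ref{ln:man-opt-start}--\ref{ln:man-opt-end}; (b) a woman $w_{i,u}$, added in Lines~\ref{ln:edge-loop-start}--\ref{ln:edge-loop-end} whenever $(u,v)\in E$ and $\phi((u,v))=i$; and (c) his woman-optimal partner $w_{i^+,v}$, appended in Lines~\ref{ln:woman-opt-start}--\ref{ln:woman-opt-end}, where $i^+$ denotes the successor of $i$ in the cyclic ordering $\pi_v$ of $C_v$.

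For cases (a) and (b) the first subscript of the inserted woman is exactly $i$, so $|j-i|=0$. The only non-trivial case is (c), and here I will invoke the specification that $\pi_v$ is a \emph{bitonic sequence} on $C_v=[a_v,b_v+1]$: by definition any two cyclically consecutive elements of $\pi_v$ differ by at most $2$, so $|i^+-i|\le 2$. Combining the three cases yields (i). Part (ii) is handled by the symmetric enumeration: a man appears in $w_{i,v}$'s list only as her man-optimal partner $m_{i,v}$ (first subscript $i$), as $m_{i,v'}$ for some edge $(v,v')\in E$ with $\phi((v,v'))=i$ (first subscript $i$), or as her woman-optimal partner $m_{i^-,v}$, where $i^-$ is the predecessor of $i$ in $\pi_v$; the last case again gives $|j-i|\le 2$ by bitonicity.

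The only place that requires any genuine care is the cyclic wrap-around: I should confirm that the gap between the last and first entries of $\pi_v$ is also at most $2$. This is immediate from the explicit construction of a bitonic sequence on $[a_v,b_v+1]$ given earlier in the section, which ends at $a_v+1$ and begins at $a_v$, giving a wrap-around difference of $1$. Since this is essentially the only substantive step and it follows directly from the definition, I do not anticipate any real obstacle; the proposition is a short case analysis whose sole purpose is to record a structural feature of $I_1(\calP)$ that will be crucial in the next subsection when the remaining preferences are filled in to obtain a genuinely $O(k)$-range instance.
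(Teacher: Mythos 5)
Your proof is correct and follows essentially the same route as the paper's: enumerate the three sources of entries in each preference list (man-optimal partner, edge-induced partners with matching color, woman-optimal partner) and observe that only the last case is nontrivial, where bitonicity of $\pi_v$ gives $|i^+ - i| \leq 2$ (and symmetrically $|i^- - i| \leq 2$). Your extra check of the cyclic wrap-around is already subsumed by the paper's definition of a bitonic sequence, which imposes the gap condition circularly (with $s_{\ell+1} = s_1$), so no further argument is needed there.
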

\begin{proof}
  The preference list of $m_{i,v}$ starts with $w_{i,v}$ followed by women of the form $w_{i,u}$ such that $(u,v) \in E$ and $\phi((u,v)) = i$ and then ending with $w_{i^+, v}$.   But $|i^+ - i| \leq 2$ because $\pi_v$ is a bitonic sequence of $[a_v, b_v +1]$.  
  
  For $w_{i,v}$, her preference list starts with $m_{i^-,v}$, followed by men of the form $m_{i, y}$ such that $(v, y) \in E$ and $\phi((v,y)) = i$ and then ending with $m_{i,v}$.  Again, $|i^- - i| \leq 2$ because $\pi_v$ is a bitonic sequence of $[a_v, b_v +1]$.
\end{proof}

For $i \in [2p]$, let $M_i = \set{ m_{i,v} \sucht i \in C_v}$.  Similarly, let $W_i = \set{ w_{i,v} \sucht i \in C_v}$.  The above proposition implies that for every man $m \in M_i$,  the women in his preference list is a subset of
\[
B_i = \bigcup \set{ W_j \sucht  j \in [2p],  |j - i| \leq 2 }.
\]
Similarly, for each woman $m \in W_i$, the men in her preference list is a subset of 
\[
D_i = \bigcup \set{ M_j \sucht  j \in [2p],  |j - i| \leq 2 }.
\]
We create $I_2(\calP)$ from $I_1(\calP)$ by appending to each $m_{i,v}$'s  preference list in $I_1(\calP)$ the missing women in $B_i$. Similarly, we append to each $w_{i,v}$'s preference list in $I_1(\calP)$ the missing men in $D_i$.

\begin{prop}
  In $I_2(\calP)$, for $i \in C_v$ and $v \in [p]$, $m_{i,v}$'s preference list contains all women in $B_i$ while $w_{i,v}$'s preference list contains all men in $D_i$.   Moreover, $I_2(\calP)$ realizes $\calP$. 
\end{prop}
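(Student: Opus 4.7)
The first claim---items (i) and (ii) of the proposition---holds by construction: $I_2(\calP)$ is obtained from $I_1(\calP)$ precisely by appending to each $m_{i,v}$'s list the women of $B_i$ not already present, and to each $w_{i,v}$'s list the men of $D_i$ not already present. My plan for the realizability claim is to establish that $I_1(\calP)$ and $I_2(\calP)$ admit identical symmetric shortlists: then $S(I_1(\calP)) = S(I_2(\calP))$ and Corollary~\ref{cor:shortlists} yields $\calR(I_2(\calP)) = \calR(I_1(\calP))$, which equals $\calP$ by Proposition~\ref{prop:I1}.

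The argument for equal shortlists has two steps. First, I would show that $I_1(\calP)$ and $I_2(\calP)$ share the same man-optimal and woman-optimal stable matchings. By Lemma~\ref{lem:generic-optimal-matchings}, the man-optimal matching of $I_1(\calP)$ pairs each $m_{i,v}$ with $w_{i,v}$---his first choice in his preference list---and symmetrically the woman-optimal matching pairs each $w_{i,v}$ with her first choice $m_{i^-,v}$. Since every agent's top choice lies at the head of her preference list and appended agents sit at the tail, these first choices are unchanged in $I_2(\calP)$. Moreover, no two men share a first choice (the map $(i,v)\mapsto w_{i,v}$ is a bijection), so the man-oriented Gale--Shapley algorithm on $I_2(\calP)$ matches every man to his first choice in a single round, producing the same matching as on $I_1(\calP)$. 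The same reasoning applies to the woman-oriented algorithm.

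Second, I need to verify that the appended agents fall outside the ``interior window'' used to define symmetric shortlists. In $I_1(\calP)$, $m_{i,v}$'s preference list begins with his man-optimal partner $w_{i,v}$ and ends with his woman-optimal partner $w_{i^+,v}$; the women of $B_i$ we append come strictly after $w_{i^+,v}$. Symmetrically, the men of $D_i$ appended to $w_{i,v}$'s list come strictly after her man-optimal partner $m_{i,v}$, whereas her woman-optimal partner $m_{i^-,v}$ remains at the head. Thus no appended agent lies between an agent's man-optimal and woman-optimal stable partners in that agent's list, so by Definition~\ref{dfn:shortlists} every symmetric shortlist in $I_2(\calP)$ coincides with the corresponding shortlist in $I_1(\calP)$. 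Applying Corollary~\ref{cor:shortlists} completes the proof.

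The main subtlety is ensuring that the optimal matchings transfer from $I_1(\calP)$ to $I_2(\calP)$; but this is immediate once one notes that top choices are preserved and no appended agent is ever proposed to during Gale--Shapley. After that, the shortlist comparison is purely bookkeeping, mirroring the proof strategy of Corollary~\ref{cor:shortlists-2} but applied to a partial (rather than full) completion of the preference lists.
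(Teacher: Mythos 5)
Your proposal is correct and follows essentially the same route as the paper: the paper's proof handles the first claim by construction and then invokes ``the same argument we used to prove Corollary~\ref{cor:shortlists-2},'' i.e., that the man-optimal and woman-optimal stable matchings are unchanged by the appended agents, so the symmetric shortlists of $I_1(\calP)$ and $I_2(\calP)$ coincide and Corollary~\ref{cor:shortlists} applies. Your write-up simply makes that argument explicit (correctly noting that distinct first choices force Gale--Shapley to terminate identically on both instances and that appended agents lie outside the optimal-partner windows), so there is nothing to fix.
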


\begin{proof}
  The first part of the proposition is true by construction.  Applying the same argument we used to prove Corollary~\ref{cor:shortlists-2}, $I_1(\calP)$ and $I_2(\calP)$ have identical shortlists.  Thus, like $I_1(\calP)$, the instance $I_2(\calP)$ also realizes $\calP$. 
\end{proof}


\begin{figure}
  \begin{minipage}{0.3\textwidth}
    \begin{center}
      \includegraphics{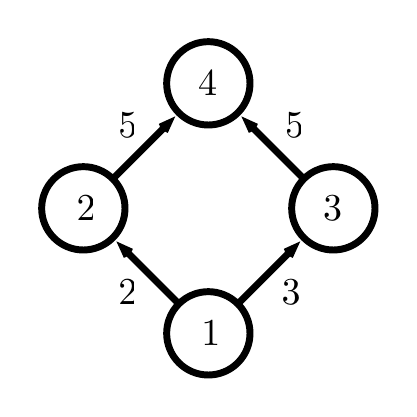}
    \end{center}
  \end{minipage}
  \hfill
  \begin{minipage}{0.65\textwidth}
    \[
    \calX = \set{1}, \set{1,2}, \set{1,2,3}, \set{2, 3}, \set{2, 3, 4}, \set{3, 4}, \set{4}, \varnothing
    \]
    \[
    \begin{array}{cccccccccccccccccc}
      C_4 : & & & & & & & &  & \{ & 5 & & 6 & & 7 & & 8 &\}\\
      C_3 : & &  & &  &  \{ & 3 & & 4 && 5 && 6 && 7 & \} & \\
      C_2 : &&   & \{ & 2 && 3 && 4 && 5 && 6 &\} &  & \\
      C_1 : & \{ & 1 && 2 && 3 & &4 &\}\\
    \end{array}
    \]

    \begin{align*}
      \pi_4 &= (5, 7, 8, 6)\\
      \pi_3 &= (3, 5, 7, 6, 4)\\
      \pi_2 &= (2, 4, 6, 5, 3)\\
      \pi_1 &= (1, 3, 4, 2)
    \end{align*}
  \end{minipage}

  \begin{center}
    \includegraphics{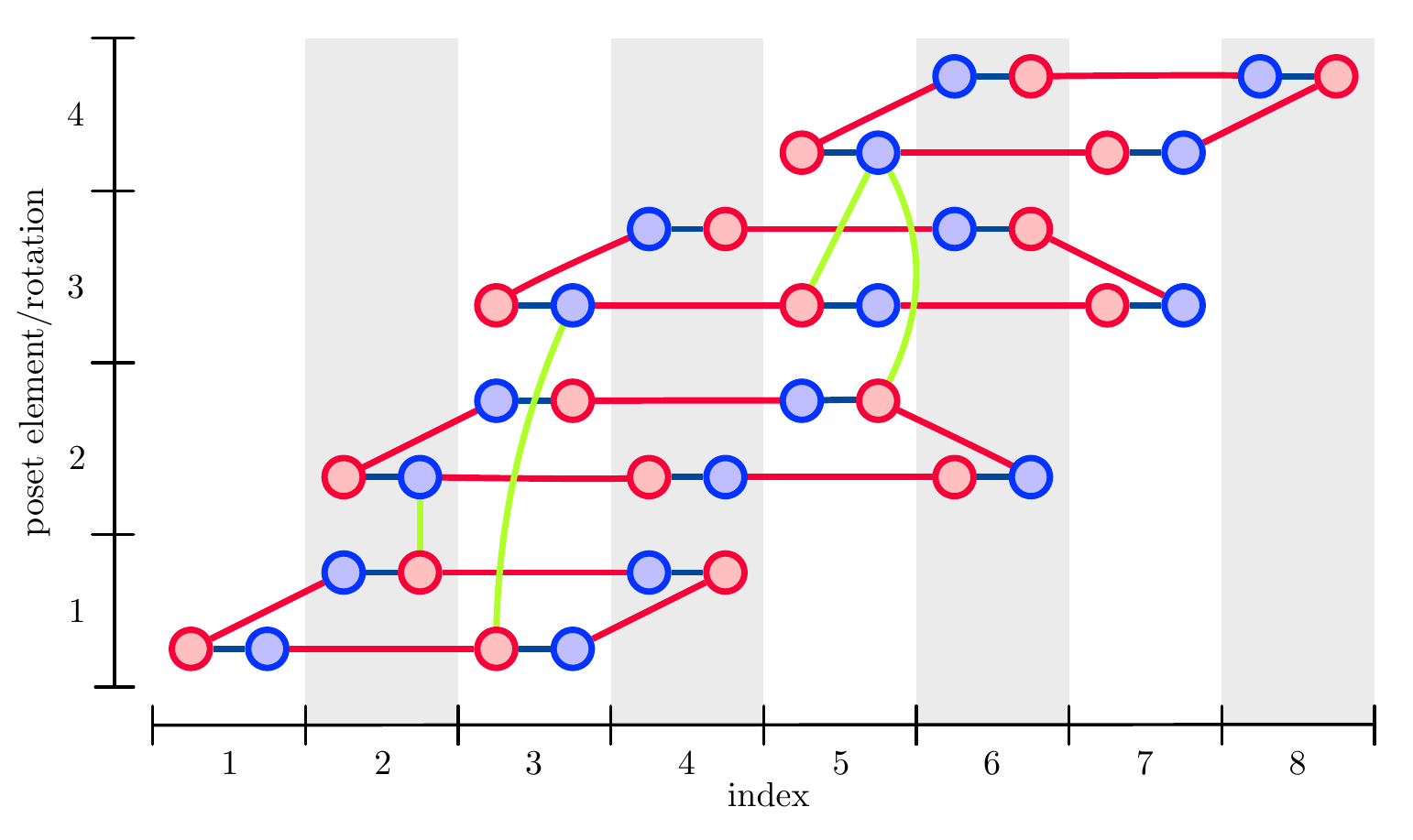}
  \end{center}
  \caption{Example of our $k$-range construction using the poset $\calP$ as our running example. The top diagram shows the Hasse diagram $H(\calP)$. The top right gives a path decomposition of $H(\calP)$ of width $2$. Each edge $(u, v)$ in $H(\calP)$ is labeled with the minimum index $i$ such that $u, v \in X_i$. We also show the bitonic sequence $\pi_v$ for each set $C_v$ of indices. The bottom image shows the incomplete preferences computed by $I_1(\calP)$. As before, men are depicted as blue nodes, while women are depicted as red. The men prefer blue to green to red edges, while the women prefer the edges in the opposite order.}
  \label{fig:k-range-instance}
\end{figure}

\subsection{Creating $I_R(\calP)$} 

We now derive $I_R(\calP)$ from $I_2(\calP)$ as follows:   for each $m_{i,v}$, we create a preference list  that  has the following structure:
\begin{equation*}
  \begin{array}{cccccccccc}
    m_{i,v} & : & W_1 & W_2 & \cdots & W_{i - 3} & \fbox{$m_{i,v}$'s preference list in $I_2(\calP)$}
    & W_{i+3} & \cdots & W_{2p}. 
  \end{array}
\end{equation*}
For each $j \in [1, i-3] \cup [i+3, 2p]$, the women in $W_j$ are arranged so that their subscripts are lexicographically increasing.  Similarly, for each woman $w_{i,v}$, her preference list has the following structure: 
\begin{equation*}
  \begin{array}{cccccccccc}
    w_{i,v} & : & M_1 & M_2 & \cdots & M_{i - 3} & \fbox{$w_{i,v}$'s preference list in $I_2(\calP)$} & M_{i+3} & \cdots & M_{2p}. 
  \end{array}
\end{equation*}
Again, the men in $M_j$, $j \in [1, i-3] \cup [i+3, 2p]$ are arranged so that their subscripts are lexicographically increasing.  Hence, agents from each group will have similar, but nonetheless distinct complete preference lists.

\begin{lem}
  \label{lem:block-critical}
  $I_R(\calP) \in \range(9 (k + 2))$.
\end{lem}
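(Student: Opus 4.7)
The plan is to track positions precisely using the block structure of preference lists in $I_R(\calP)$. Let $n_\ell = \abs{W_\ell} = \abs{M_\ell}$ and $N_\ell = \sum_{\ell'=1}^{\ell} n_{\ell'}$ (with the convention $N_\ell = 0$ for $\ell \le 0$). The counting argument from the proof for $I_1(\calP)$ already shows $n_\ell \le k+2$, because $\calX$ is a nice path decomposition of width $k$. I would first describe the preference list of $m_{i,v}$ in $I_R(\calP)$ as three consecutive blocks: an initial block $W_1, W_2, \ldots, W_{i-3}$ listed in lexicographic order of subscripts, occupying positions $1$ through $N_{i-3}$; a middle block containing exactly the women in $B_i = W_{i-2} \cup W_{i-1} \cup W_i \cup W_{i+1} \cup W_{i+2}$ in whatever order $I_2(\calP)$ imposes, occupying positions $N_{i-3}+1$ through $N_{i+2}$; and a final block $W_{i+3}, \ldots, W_{2p}$ (lexicographic), occupying positions $N_{i+2}+1$ through $N_{2p}$. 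The crucial identity is that the middle block has size exactly $\abs{B_i} = N_{i+2} - N_{i-3}$: this holds because $I_2(\calP)$ was built from $I_1(\calP)$ by appending to $m_{i,v}$'s list precisely the women of $B_i$ that were missing in $I_1(\calP)$.

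Next I would fix a woman $w_{j,u} \in W_j$ and bound $P_{m_{i,v}}(w_{j,u})$ across all valid $(i,v)$ by a case split on the relative position of $i$ and $j$. If $i \le j-3$, then $j \ge i+3$ puts $w_{j,u}$ in the final block, so her rank equals $N_{j-1}$ plus her position within $W_j$, giving a rank in $[N_{j-1}+1, N_j]$; symmetrically, if $i \ge j+3$ she lies in the initial block at a rank again in $[N_{j-1}+1, N_j]$. If $\abs{i-j} \le 2$, then $w_{j,u}$ lies in the middle block, so her rank is in $[N_{i-3}+1, N_{i+2}] \subseteq [N_{j-5}+1, N_{j+4}]$, using $i \ge j-2$ and $i \le j+2$ respectively. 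Taking the union over all $i$, every possible rank $P_{m_{i,v}}(w_{j,u})$ lies in $[N_{j-5}+1, N_{j+4}]$, so
\[
\max\rank(w_{j,u}) - \min\rank(w_{j,u}) \le N_{j+4} - N_{j-5} - 1 = \sum_{\ell=j-4}^{j+4} n_\ell - 1 \le 9(k+2) - 1.
\]

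A fully symmetric argument, swapping the roles of men and women and of the $W$- and $M$-blocks, gives the identical bound for every man $m_{j,u}$. Combining the two bounds yields $\range(I_R(\calP)) = \max_a(\max\rank(a) - \min\rank(a)) + 1 \le 9(k+2)$, i.e., $I_R(\calP) \in \range(9(k+2))$. I expect the main (minor) bookkeeping obstacle to be verifying the block-size identity $\abs{B_i} = N_{i+2} - N_{i-3}$ together with correctly handling edge effects when $j-5 < 0$ or $j+4 > 2p$ (the prefix/suffix sums just get truncated, preserving the bound); once these are in place, the case analysis is routine.
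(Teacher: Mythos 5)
Your proof is correct and follows essentially the same approach as the paper's: both arguments exploit the three-block structure of each preference list in $I_R(\calP)$ to show that a fixed woman's rank across all men is confined to an interval of length $\sum_{\ell=j-4}^{j+4}\abs{W_\ell} \leq 9(k+2)$, with the extremes attained by men in $M_{j-2}$ and $M_{j+2}$. Your $N_\ell$ bookkeeping and explicit case split are just a cleaner rendering of the same computation.
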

\begin{proof}
  Fix $w_{i,v} \in W_i$.   Recall that $P_m(w_{i,v})$ is the {\it rank} assigned by $m$ to $w_{i,v}$.  In what follows, we analyze $\min_{m \in M} P_m(w_{i,v})$ and $\max_{m \in M} P_m(w_{i,v})$ as their difference will determine the range of $I_R(\calP)$.  By convention, the smaller the value of $ P_m(w_{i,v})$, the more desirable she is to $m$.  Thus, when $P_m(w_{i,v})$ is small, we say that $m$ ranked her {\it high}; conversely, when $P_m(w_{i,v})$ is large, $m$ ranked her {\it low}.  
  
  The smallest index $j$ so that $w_{i,v} \in B_j$ is $i-2$.  Thus, for all men in $M_1 \cup M_2 \cup \hdots, M_{i-3}$,  the rank of $w_{i,v}$ in their preference lists is the same and  is equal to $|W_1| + |W_2| + \hdots + |W_{i-1}| + |W'_{i}| + 1$ where $W'_i$ contains the women in $W_i$ with subscripts lexicographically less than $w_{i,v}$.  
  
  However,  a man in $M_{i-2}$ can rank $w_{i,v}$ as high as $|W_1| + |W_2| + \hdots + |W_{i-5}| +1$ because $w_{i,v} \in B_{i-2}$.  It is easy to see that none of the men in $M_{i-1} \cup M_i \cup \hdots \cup M_{2p}$ can provide a higher rank for $w_{i,v}$.  
  
  Using the same analysis, a man in $M_{i+2}$ can rank $w_{i,v}$ as low as $|W_1| + |W_2| + \hdots + |W_{i+4}|$ and no man can give her a worse ranking.  Thus, in $I_R(\calP)$ the difference between the worst possible and the best possible rank of $w_{i,v}$ is 
  \begin{eqnarray*}
    |W_{i-4}| + |W_{i-3}| + \hdots + |W_{i+4}| - 1 &\leq & 9(k+2)
  \end{eqnarray*}
  because, as we already noted in the proof of Proposition~\ref{prop:I1}, each $|W_i| \leq k+2$.  By setting $\orank(w_{i,v})$ to the best rank it received from a man,  we have now shown that the range of the rankings of each woman is at most $9(k+2)$.  A similar analysis holds for the range of the rankings of each man in the women's preference lists. 
\end{proof}

\begin{lem}
  \label{lem:same-man-optSM-range}
  The man-optimal stable matching of $I_R(\calP)$  is $\mu^* = \set{ (m_{i,v}, w_{i,v}), i \in C_v, v\in [p]}$.  That is, $I_R(\calP)$ and $I_1(\calP)$ have the same man-optimal stable matching.
\end{lem}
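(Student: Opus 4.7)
The plan is to run the man-oriented Gale--Shapley algorithm on $I_R(\calP)$ in a convenient schedule---first every man of $M_1$ until engaged, then every man of $M_2$, and so on---and to prove by induction on $i$ that, after block $M_i$ has been processed, every man $m_{j,v}$ with $j\le i$ is permanently engaged to $w_{j,v}$. Since the outcome of Gale--Shapley does not depend on the order in which free men propose, this identifies the man-optimal stable matching of $I_R(\calP)$ as $\mu^*$, matching the man-optimal stable matching of $I_1(\calP)$ given by Lemma~\ref{lem:generic-optimal-matchings}.

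For the inductive step, I will use that in $I_R(\calP)$ the preference list of $m_{i,v}$ has the shape
\[
  W_1,\; W_2,\; \ldots,\; W_{i-3},\;\; \bigl[\,\text{the $I_2(\calP)$ list of $m_{i,v}$, which begins with $w_{i,v}$}\,\bigr],\;\; W_{i+3},\; \ldots,\; W_{2p},
\]
with the women inside each appended block ordered lexicographically by subscript. So $m_{i,v}$ first proposes, in order, to each $w_{j,u}\in W_j$ with $j\le i-3$. By the inductive hypothesis $w_{j,u}$ is already engaged to $m_{j,u}$, and in her $I_R(\calP)$ list $m_{j,u}$ sits inside her $I_2(\calP)$-portion while $m_{i,v}$ lies in the appended suffix $M_{j+3}\cup\cdots\cup M_{2p}$ (using $i\ge j+3$). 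She therefore rejects $m_{i,v}$ without breaking her engagement. Having been rejected by all of $W_1\cup\cdots\cup W_{i-3}$, $m_{i,v}$ next proposes to the first woman on his $I_2(\calP)$ list, namely $w_{i,v}$.

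What remains is to check that $w_{i,v}$ is free at this instant and then remains engaged to $m_{i,v}$ forever. At the moment the proposal arrives, the inductive hypothesis says every already-processed man is engaged to the woman of his own block, so nobody outside $M_i$ has proposed to $w_{i,v}$; within $M_i$, distinct men have distinct first $I_2(\calP)$-choices (they differ in the subscript $v$), so no other member of $M_i$ has proposed to her either. Hence $w_{i,v}$ accepts. Running the same block-by-block argument for each subsequent block $M_{i+1},M_{i+2},\ldots$ shows that no future man ever proposes to $w_{i,v}$, making the engagement permanent. The subtlety I expect to handle carefully is the relative ordering inside the $I_2(\calP)$-portions of the $I_R(\calP)$ lists: the $I_R(\calP)$-construction merely prepends $W_1,\ldots,W_{j-3}$ and appends $W_{j+3},\ldots,W_{2p}$ to each $I_2(\calP)$ list, preserving all relative orderings within it, and this is exactly what licenses both the ``$w_{j,u}$ prefers her $I_2(\calP)$-fianc\'e to any suffix man'' step and the ``$w_{i,v}$ is the very first $I_2(\calP)$-choice of $m_{i,v}$'' step.
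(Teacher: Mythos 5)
Your proposal is correct and follows essentially the same route as the paper: run the man-oriented Gale--Shapley algorithm block by block ($M_1$, then $M_2$, \dots) and induct on the block index, using the prefix/suffix structure of the $I_R(\calP)$ preference lists to show each $m_{i,v}$ is rejected by every woman in $W_1\cup\cdots\cup W_{i-3}$ and then accepted by the still-free $w_{i,v}$. One wording slip worth noting: later men in $M_{i+3},\ldots,M_{2p}$ \emph{do} propose to $w_{i,v}$ (she lies in their prepended prefix), so her engagement is permanent not because no future man proposes but because she ranks all such men in her appended suffix, below $m_{i,v}$ --- which is exactly the rejection step your inductive argument already establishes.
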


\begin{proof}
  We shall prove the above lemma like Lemma~\ref{lem:same-man-optSM}.  Let $S$ be the sequence of men whose subscripts are lexicographically increasing.  We then run the male-oriented Gale-Shapley algorithm on $S$.  For our purposes, it is  useful to think of $S$ as consisting of men from $M_1$, followed by men from $M_2$, etc. and ending with men from $M_{2p}$.   By doing induction on $i \in [2p]$, we will prove that each man  $m_{i,v} \in M_i$ has $w_{i,v}$ as his man-optimal stable partner. 

  Our basis step involves $M_1 \cup M_2 \cup M_3$.  Let $m_{i,v}$ be a man from one of the sets; i.e., $i = 1, 2$ or $3$.  Notice that his preference list in $I_R(\calP)$ begins with his preference list from $I_2(\calP)$  followed by women from $W_{i+3} \cup W_{i+4} \cup \hdots \cup W_{2p}$. But his preference list from $I_2(\calP)$ begins with his preference list from $I_1(\calP)$ followed by some arbitrary ordering of the missing women from $B_i$.  Finally, $I_1(\calP)$ was constructed by $\ConstructInstance$ so his preference list begins with $w_{i,v}$.  Thus, $m_{i,v}$ will propose to $w_{i,v}$ first and she will accept because no one has proposed to her yet.  

  Assume that for $\ell' \leq \ell-1$ all men $m_{\ell', v} \in M_{\ell'}$ is temporarily matched to $w_{\ell', v}$.  Let us now consider a man $m_{\ell,v} \in M_{\ell}$.  Recall that his preference list in $I_R(\calP)$ has the following structure:
  \begin{equation*}
    \begin{array}{ccccccccc}
      m_{\ell,v} & : & W_1 &  \cdots & W_{\ell - 3} & \fbox{\fbox{$m_{\ell,v}$'s  list in $I_1(\calP)$} (missing women in $B_{\ell}$)}     & W_{\ell+3} & \cdots & W_{2p}. 
    \end{array}
  \end{equation*}

  When he proposes to some women $w_{i,v} \in W_1 \cup W_2 \cup \hdots, W_{\ell-3}$, she is already matched to $m_{i,v}$ by assumption.  But in $w_{i,v}$'s preference list in $I_R(\calP)$,  all the men in $M_{\ell}$ are ranked lower than $m_{i,v}$, which is the last man in her preference list in $I_1(\calP)$,  so she will reject $m_{\ell,v}$. If $w_{i,v} \in W_{\ell-3}$, for example, her preference list looks like the one below: 
  \begin{equation*}
    \begin{array}{ccccccccc}
      w_{i,v} & : & M_1 &  \cdots & M_{\ell - 6} & \fbox{\fbox{$w_{i,v}$'s  list in $I_1(\calP)$} (missing men in $D_{\ell-3}$)}     & M_{\ell} & \cdots & M_{2p}. 
    \end{array}
  \end{equation*}
  Thus, $m_{\ell,v}$ will propose to the first woman in his preference list in $I_1(\calP)$, which is $w_{\ell,v}$.  She is currently free and will accept his proposal.  So $m_{\ell,v}$ is temporarily matched to $w_{\ell,v}$.  We emphasize that this is true for {\it every} man in $M_{\ell}$. 

  By induction,  we have shown that $\mu^* = \set{ (m_{i,v}, w_{i,v}), i \in C_v, v\in [p]}$ is the man-optimal stable matching of $I_R(\calP)$. 
\end{proof}

\begin{lem}
  \label{lem:same-woman-optSM-range}
  $I_R(\calP)$ and $I_1(\calP)$ have the same woman-optimal stable matching.
\end{lem}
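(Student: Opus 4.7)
The plan is to mirror the argument of Lemma~\ref{lem:same-man-optSM-range}, swapping the roles of men and women. I would run the woman-oriented Gale-Shapley algorithm on $I_R(\calP)$ and show that it terminates at $\mu_z = \set{(m_{c^-, v}, w_{c, v}) \sucht c \in C_v,\, v \in [p]}$, which is the woman-optimal stable matching of $I_1(\calP)$ by Lemma~\ref{lem:generic-optimal-matchings}. Using the fact that the output of GS is independent of the order in which free women propose, I would process women in blocks, first all of $W_1$, then all of $W_2$, and so on through $W_{2p}$ (with arbitrary tie-breaking inside each block). The statement to prove by induction on the processing sequence is: after processing all of $W_1, \ldots, W_{\ell-1}$ together with some subset $S \subseteq W_\ell$, the tentative matching is exactly $\set{(m_{(\ell')^-, v'}, w_{\ell', v'}) \sucht \ell' < \ell} \cup \set{(m_{\ell^-, v}, w_{\ell, v}) \sucht w_{\ell, v} \in S}$.

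The induction step examines what happens when $w_{i, v}$ proposes. By construction of $I_R(\calP)$, her preference list begins with the preamble $M_1, M_2, \ldots, M_{i-3}$, then her $I_1$-preference list (whose first entry is her $I_1$ woman-optimal partner $m_{i^-, v}$), then the tail $M_{i+3}, \ldots, M_{2p}$. For each man $m_{j, u}$ with $j \leq i-3$ that she approaches, the bitonic property of $\pi_u$ gives $|j^+ - j| \leq 2$, so $j^+ \leq j + 2 \leq i - 1$. By the inductive hypothesis, $w_{j^+, u}$ has already been processed and is tentatively matched to $m_{j, u}$. Moreover, in $m_{j, u}$'s $I_R$-list, $w_{j^+, u}$ lies in his $I_2$-preferences block (because $j^+ \in \set{j-2, \ldots, j+2} \subseteq B_j$), while $w_{i, v} \in W_i$ lies in the tail block $W_{j+3} \cup \cdots \cup W_{2p}$. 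Hence $m_{j, u}$ prefers his current partner and rejects $w_{i, v}$. When $w_{i, v}$ finally proposes to $m_{i^-, v}$, he accepts, and any later woman $w_{\ell', v''}$ with $\ell' \geq i^- + 3$ who subsequently proposes to $m_{i^-, v}$ is rejected by the same preference-block argument, since $w_{i, v}$ sits in $m_{i^-, v}$'s $I_2$-block while $w_{\ell', v''}$ sits in his tail block.

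The main obstacle is verifying that $m_{i^-, v}$ really is free at the moment $w_{i, v}$ reaches him: I must rule out the possibility that some previously-processed woman $w_{\ell, v'}$ with $\ell < i$ has tentatively matched with $m_{i^-, v}$ during her own preamble phase. This is precisely where the strong form of the inductive hypothesis is essential: it asserts not merely that each processed woman is matched to \emph{some} man, but to her specific $\mu_z$-partner $m_{\ell^-, v'}$, and since $(m_{\ell^-, v'}) \neq m_{i^-, v}$ whenever $(\ell, v') \neq (i, v)$, no prior woman occupies $m_{i^-, v}$. Once this bookkeeping is secured, the induction closes and the algorithm terminates at $\mu_z$, establishing that $I_R(\calP)$ and $I_1(\calP)$ share the woman-optimal stable matching.
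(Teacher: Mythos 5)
Your proposal is correct and follows essentially the same route as the paper: the paper's proof of Lemma~\ref{lem:same-woman-optSM-range} simply observes that every woman's $I_1(\calP)$ list begins with her woman-optimal partner $m_{i^-,v}$ and then invokes, mutatis mutandis, the induction from the proof of Lemma~\ref{lem:same-man-optSM-range} with the woman-oriented Gale--Shapley algorithm run on the lexicographically ordered sequence of women. You have merely written out explicitly the symmetric induction (preamble proposals rejected via the block structure and the bitonic bound $|j^+ - j|\leq 2$, the target $m_{i^-,v}$ free by injectivity of $w_{c,v}\mapsto m_{c^-,v}$) that the paper leaves implicit.
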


\begin{proof}
  The women's preference lists in $I_R(\calP)$ were constructed from $I_1(\calP)$ like the men's preference lists.  In particular, every woman $w_{i,v}$'s most preferred partner in $I_1(\calP)$ is her woman-optimal stable partner $m_{i^-,v}$ (just like $m_{i,v}$'s most preferred partner in $I_1(\calP)$ is his man-optimal stable partner $w_{i,v}$.)   Let $S'$ be the sequence of women whose subscripts are lexicographically increasing.    Applying the same analysis as the proof of Lemma~\ref{lem:same-man-optSM-range}, we can show that running the the woman-oriented Gale-Shapley algorithm on $S'$ will result in a stable matching that matches each woman $w_{i,v}$ to $m_{i^-,v}$. 
\end{proof}

\begin{lthm}
  \label{thm:k-range-construct}
  Let $H(\calP)$ be the Hasse diagram of a finite poset $\calP$ with $p$ vertices and $q$ edges.  Let  $\calX = (X_1, X_2, \hdots, X_{2p})$ be a nice path decomposition of $H(\calP)$ whose width is $k$.  Then there exists an SM instance $I_R(\calP) \in \range(O(k))$ that realizes $\calP$. $I_R(\calP)$ has $O(kp)$ agents and can be constructed in $O(k^2p^2)$ time. 
\end{lthm}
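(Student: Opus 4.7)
The plan is to assemble the pieces already established in this section. Observe that $I_R(\calP)$ uses exactly the same set of agents as $I_1(\calP)$ and $I_2(\calP)$: only the preference lists change, since we merely prepend and append the missing agents from ``distant'' blocks $W_j$ (resp.\ $M_j$). Hence by Proposition~\ref{prop:I1}, the number of agents in $I_R(\calP)$ is $O(kp)$, and the number of men equals the number of women, so $I_R(\calP)$ is a well-defined SM instance with complete preference lists.

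Next, to show $I_R(\calP)$ realizes $\calP$, the key is Corollary~\ref{cor:shortlists}: it suffices to verify that $I_R(\calP)$ and $I_1(\calP)$ have identical symmetric shortlists. Lemmas~\ref{lem:same-man-optSM-range} and~\ref{lem:same-woman-optSM-range} establish that the man-optimal and woman-optimal stable matchings of $I_R(\calP)$ coincide with those of $I_1(\calP)$. Now consider any man $m_{i,v}$. In $I_R(\calP)$, every woman $w_{j,*}$ with $j < i - 2$ or $j > i + 2$ that we added to $m_{i,v}$'s preferences lies strictly outside the interval between his man-optimal and woman-optimal stable partners, so she cannot appear in his symmetric shortlist; moreover, for such a $w_{j,*}$, the man $m_{i,v}$ similarly lies outside the interval between her two extremal stable partners in $I_R(\calP)$, so $m_{i,v}$ is likewise not in her symmetric shortlist. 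Thus the ``extra'' acceptable partners introduced when passing from $I_1(\calP)$ to $I_R(\calP)$ never appear in any symmetric shortlist, and the shortlists are preserved. By Corollary~\ref{cor:shortlists}, $I_R(\calP)$ and $I_1(\calP)$ have isomorphic rotation posets, so $I_R(\calP)$ realizes $\calP$.

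The range bound $I_R(\calP) \in \range(O(k))$ is exactly the content of Lemma~\ref{lem:block-critical}, which gives $\range(I_R(\calP)) \leq 9(k+2) = O(k)$.

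Finally, for the runtime: by Proposition~\ref{prop:I1}, $I_1(\calP)$ is built in $O(kp + q)$ time, and $I_2(\calP)$ is obtained by appending at most $O(k)$ missing agents from $B_i$ or $D_i$ to each of the $O(kp)$ preference lists, costing $O(k^2 p)$ additional time. The step from $I_2(\calP)$ to $I_R(\calP)$ prepends and appends the remaining blocks $W_1, \ldots, W_{i-3}$ and $W_{i+3}, \ldots, W_{2p}$ (resp.\ the analogous $M_j$'s) in lexicographically increasing order of subscripts; since the total number of agents is $O(kp)$, each preference list in $I_R(\calP)$ has length $O(kp)$, and writing down the $O(kp)$ complete preference lists costs $O(k^2 p^2)$ time overall. (The lexicographic sorts of each block can be reused across agents, so they are dominated by the list-writing cost.) Noting $q = O(p^2)$, the overall construction time is $O(k^2 p^2)$, completing the proof. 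The main subtlety in the argument is the shortlist-preservation step above, but once one observes that the newly added acceptable pairs connect agents on opposite ``sides'' of the interval of stable partners, both directions follow immediately from the characterization of symmetric shortlists.
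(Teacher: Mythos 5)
Your proof is correct and follows essentially the same route as the paper's: both invoke Lemma~\ref{lem:block-critical} for the range bound, Lemmas~\ref{lem:same-man-optSM-range} and~\ref{lem:same-woman-optSM-range} to fix the extremal stable matchings, the preservation of symmetric shortlists (via Corollary~\ref{cor:shortlists}) to transfer the rotation poset from $I_1(\calP)$, and the $O((kp)^2)$ cost of writing complete preference lists as the runtime bottleneck. The only cosmetic difference is that you spell out the shortlist-preservation step in more detail (though you should also note that the agents appended in the $I_1\to I_2$ step land after the woman-optimal partner, not just the distant blocks added in $I_2\to I_R$); the paper compresses this to the observation that every agent has the same set of agents between their two extremal stable partners in both instances.
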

\begin{proof}
  In Lemma~\ref{lem:block-critical}, we established that $I_R(\calP) \in \range(9(k+2))$.   According to Lemmas~\ref{lem:same-man-optSM-range} and~\ref{lem:same-woman-optSM-range}, the SM instances $I_R(\calP)$ and $I_1(\calP)$ have the same man-optimal and woman-optimal stable matchings.   By construction, every agent $a$ has the same set of agents between their man-optimal and woman-optimal stable partners in $I_R(\calP)$ and $I_1(\calP)$.  It follows that the two instances have identical symmetric shortlists.  By Corollary~\ref{cor:shortlists}, they also have identical rotation posets.  From Proposition \ref{prop:I1}, $I_1(\calP)$ realizes $\calP$ and has $O(kp)$ agents  so $I_R(\calP)$ does too.   Finally, constructing $I_1(\calP)$ takes $O(kp + q)$ time but assigning each agent a complete preference list in $I_R(\calP)$ takes $O((kp)^2)$, which is the bottleneck.  Thus, the running time of $O(k^2p^2)$ follows.
\end{proof}

\begin{cor}
  Let $\calP$ be a finite poset on $p$ elements with pathwidth $\pw$. Then there is an SM instance $I$ with $I \in \range(O(\pw))$ that realizes $\calP$.  $I$ can be constructed in time $O(f(\pw) p^2 + \pw^2 p^2)$ for some function $f$ depending only on the pathwidth of $\calP$.
\end{cor}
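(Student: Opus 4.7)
The corollary is essentially a plug-and-play composition of two prior results: Bodlaender's fixed-parameter algorithm for pathwidth together with Lemma~\ref{lem:nice-path} (packaged as Corollary~\ref{cor:nice-path}), and the main construction of this section, Theorem~\ref{thm:k-range-construct}. The plan is to first compute a nice path decomposition of the Hasse diagram $H(\calP)$ whose width is exactly $\pw = \pw(\calP)$, and then feed that decomposition into the construction $I_R(\calP)$ of Theorem~\ref{thm:k-range-construct}.

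In detail, I would first invoke Corollary~\ref{cor:nice-path} on $H(\calP)$. That yields a nice path decomposition $\calX = (X_1, X_2, \ldots, X_{2p})$ of $H(\calP)$ of width $\pw$, computed in time $O(g(\pw) \cdot |H(\calP)|)$ for some function $g$ depending only on $\pw$. To express this in terms of $p$ alone, I would use the standard fact that a graph of pathwidth $\pw$ on $p$ vertices has at most $\pw \cdot p$ edges (every vertex introduced into a bag of size at most $\pw+1$ contributes at most $\pw$ new edges), so $|H(\calP)| = O(\pw \cdot p)$. Absorbing the factor of $\pw$ into $g$, step one runs in $O(f(\pw) \cdot p)$ time for a suitable function $f$.

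Next, I would apply Theorem~\ref{thm:k-range-construct} to the pair $(H(\calP), \calX)$. The theorem produces an SM instance $I = I_R(\calP)$ with $O(\pw \cdot p)$ agents such that $I \in \range(O(\pw))$ and $\calR(I) \cong \calP$, and it does so in time $O(\pw^2 p^2)$. This gives the desired instance and range bound directly.

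Summing the two phases, the total construction time is $O(f(\pw) \cdot p + \pw^2 p^2)$, which is bounded by $O(f(\pw) \cdot p^2 + \pw^2 p^2)$ as claimed. There is no genuine obstacle: the only technical step worth stating carefully is the edge-count bound $|H(\calP)| = O(\pw \cdot p)$, which is what lets us present Bodlaender's runtime as $f(\pw) \cdot p$ (polynomial in $p$, with all $\pw$-dependence hidden in $f$) rather than as a runtime involving $|H(\calP)|$ explicitly.
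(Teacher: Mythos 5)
Your proposal is correct and is essentially identical to the paper's own proof, which likewise just invokes Corollary~\ref{cor:nice-path} to obtain a nice path decomposition of $H(\calP)$ and then applies Theorem~\ref{thm:k-range-construct}. Your additional observation that $|H(\calP)| = O(\pw \cdot p)$ is a valid (and slightly sharper) refinement of the first phase's runtime, but the paper simply bounds that phase by $O(f(\pw)p^2)$, which already suffices.
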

\begin{proof}
  By Corollary~\ref{cor:nice-path}, a nice path decomposition $\calX$ of $H(\calP)$ can be computed in time $O(f(\pw) p^2)$. The desired result then follows from Theorem~\ref{thm:k-range-construct}.
\end{proof}

\section{From \texorpdfstring{$k$}{k}-Range Preferences to Path Decompositions}
\label{sec:k-range-algo}


In the previous section, we showed how starting with a finite poset $\calP$ with pathwidth $k$, we can construct an SM instance $I_R(\calP) \in \range(9k+2)$.   In this section, we show that the connection between $\calP$'s pathwidth and the range of the SM instance created is not incidental:  if $I \in \range(k)$ then $\calR(I)$ has a path decomposition of width $O(k^2)$ that can be constructed efficiently.  

Throughout the section, we assume the SM instances have complete preference lists. We begin by noting that \emph{every} SM instance  belongs to $\range(k, n)$ for some $k \in [1, n-1]$. Let $\range(I)$ denote the smallest such $k$ for which $I \in \range(k)$.  The parameter $\range(I)$ can be thought of as a similarity measure of the preference lists of the agents. The smaller $\range(I)$ is, the more alike the agents' preferences lists are to each other. As remarked in Section~\ref{sec:restricted-preference-models}, computing $\range(I)$ can be done efficiently. The following lemma is a formal statement of the discussion in Section~\ref{sec:restricted-preference-models}.

\begin{lem}
  \label{lem:compute-range}
  Given an SM instance $I$ of size $n$,  computing $k = \range(I)$ and the minrank function $\orank$ can be done in $O(n^2)$ time.
\end{lem}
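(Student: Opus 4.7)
The plan is to compute the $\min\rank$ and $\max\rank$ values for every agent in a single pass over the preference lists, since the total size of the input (the $2n$ preference lists, each of length $n$) is $\Theta(n^2)$.

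First, I would allocate arrays indexed by agents, storing a running minimum and maximum for each $a \in M \cup W$. Initialize $\min\rank(a) = +\infty$ and $\max\rank(a) = -\infty$ for all $a$; this takes $O(n)$ time.

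Next, iterate over the men. For each $m \in M$, scan its preference list $P_m$ from position $1$ to $n$. At position $i$, let $w = P_m(i)$, so by definition $P_m(w) = i$. Update $\min\rank(w) \gets \min(\min\rank(w), i)$ and $\max\rank(w) \gets \max(\max\rank(w), i)$ in constant time. After processing all $n$ men, each value $\min\rank(w)$ and $\max\rank(w)$ is exactly $\min_{m} P_m(w)$ and $\max_{m} P_m(w)$, matching the definitions from Section~\ref{sec:restricted-preference-models}. This phase takes $O(n)$ per man, hence $O(n^2)$ total. Perform the symmetric scan over the women's preference lists to fill in $\min\rank(m)$ and $\max\rank(m)$ for each $m \in M$, again in $O(n^2)$ time.

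Finally, compute
\[
k = \range(I) = \max_{a \in M \cup W} \bigl(\max\rank(a) - \min\rank(a)\bigr) + 1
\]
by iterating once through the $2n$ agents, which takes $O(n)$ time. Setting $\orank(a) = \min\rank(a)$ gives the minrank function directly. The total running time is $O(n^2)$, as claimed. There is no real obstacle here: the lemma merely formalizes the observation that $\min\rank$ and $\max\rank$ are aggregate statistics over the $n^2$ ranking values, each of which is touched in constant time during a linear scan of the lists.
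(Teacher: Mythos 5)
Your proof is correct and matches the paper's approach: the paper gives no formal proof of this lemma, simply noting in Section~\ref{sec:restricted-preference-models} that $\range(I)$ is obtained by computing the minrank and maxrank of each agent in a linear scan of the preference lists, which is exactly the single-pass aggregation you describe. No gaps.
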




Let $I \in \range(k)$.  Our goal in this section is to create a path decomposition for $G(I)$, the rotation digraph of $I$, whose width is $O(k^2)$. Since the Hasse diagram of $\calR(I)$ is a subgraph of $G(I)$, it follows that the pathwidth of $\calR(I)$ is also $O(k^2)$. Towards bounding the pathwidth, we will assign each rotation $\rho \in \calR(I)$ an interval that is based on the minranks of the agents in the rotation and then show that the maximum number of intervals that contain a particular $i \in [1,n]$, where $n$ is the size of $I$, is $O(k^2)$.

We now present a series of structural results leading to our main result, Theorem~\ref{thm:k-range-upper}. To orient the reader, we give a brief overview of the results. Our first structural result, Proposition~\ref{prop:nonempty-support}, bounds the number of agents with minranks at most $i$ for instances $I \in \range(k)$. Using this result, we show that for all stable pairs $(m, w)$, the minranks of $m$ and $w$ cannot differ too much (Lemma~\ref{lem:k-range-match}).\footnote{Lemma~\ref{lem:k-range-match} generalizes the result that instances with master preferences (i.e., with $k = 1$) have a unique stable matching---the matching in which stable partners both have the same (min)rank.} As a consequence of Lemma~\ref{lem:k-range-match}, we argue that (1) adjacent men (say) $m_i, m_{i+1}$ in any rotation $\rho$ must have similar minranks (Corollary~\ref{cor:rotation-density}), and (2) no agent can have many stable partners. Hence each agent cannot be part many rotations (Corollary~\ref{cor:few-rotations}).

Combining these structural results, Theorem~\ref{thm:k-range-upper} follows by showing that not too many ($O(k^2)$) rotations can ``overlap'' in the sense of containing agents with similar minranks (Cf. Definition~\ref{dfn:extent} and Lemmas~\ref{lem:extent} and~ \ref{lem:limited-overlap}). In particular, the structural results imply that a simple and efficient procedure yields a path decomposition of $G(I)$ of width $O(k^2)$---a construction we leverage for our algorithmic results in the following section.

\begin{prop}
  \label{prop:nonempty-support}
 Let $I \in \range(k, n)$ with minrank function $\orank$.  Then  for all $i \in [n]$
 \begin{equation}
    \label{eqn:nonempty-support}
    \begin{split}
    i \leq  \abs{\set{w \in W \sucht \orank(w) \leq i}} &  \leq i + k - 1,  \quad\text{and}\\
     i \leq  \abs{\set{m \in M \sucht \orank(m) \leq i}} & \leq i + k - 1.
     \end{split}
  \end{equation}
\end{prop}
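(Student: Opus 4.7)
The plan is to prove both inequalities by pairing each woman $w$ with the ranks $P_m(w)$ she receives from some fixed reference man $m \in M$, and exploiting the fact that $\orank(w)$ and $P_m(w)$ cannot differ by more than $k-1$. I will only argue the ``women'' inequality since the ``men'' inequality follows by a symmetric argument.

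For the \emph{lower bound}, I would fix any man $m \in M$ and use the fact that $P_m$ is a bijection from $W$ to $[n]$. In particular, for each $j \in [i]$ there is a unique woman $w_j$ with $P_m(w_j) = j$. Since $\orank(w_j) = \min_{m' \in M} P_{m'}(w_j) \leq P_m(w_j) = j \leq i$, each such $w_j$ lies in $\set{w \in W \sucht \orank(w) \leq i}$, producing $i$ distinct members of that set.

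For the \emph{upper bound}, I again fix any man $m \in M$. Now I use the $k$-range hypothesis in the other direction: for every woman $w$, $P_m(w) \leq \orank(w) + k - 1$. Hence if $\orank(w) \leq i$ then $P_m(w) \leq i + k - 1$. Since $P_m$ is injective, the number of women $w$ satisfying $P_m(w) \leq i + k - 1$ is at most $i + k - 1$ (and actually exactly $\min(i+k-1, n)$), which yields the desired bound.

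There is no real obstacle here; the proposition is essentially a direct unpacking of the definition of $k$-range via Equation~\eqref{eqn:objective-rank}. The only thing to be slightly careful about is invoking both directions of the range inequality (the trivial direction $\orank(w) \leq P_m(w)$ for the lower bound, and the substantive direction $P_m(w) \leq \orank(w) + k - 1$ for the upper bound), and noting that the men's version is obtained verbatim by swapping the roles of $M$ and $W$ throughout.
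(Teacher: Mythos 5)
Your proof is correct and uses the same ingredients as the paper's: fixing a reference man $m$, the bijectivity of $P_m$, and both directions of the inequality $\orank(w) \leq P_m(w) \leq \orank(w) + k - 1$. The only difference is presentational---the paper argues both bounds by contradiction (a pigeonhole on $P_m$), whereas you count directly---so this is essentially the same argument.
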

\begin{proof}
  We will show that the bounds for $\abs{\set{w \in W \sucht \orank(w) \leq i}}$ are correct.   The second set of inequalities follow by interchanging the roles of the men and women.

  Recall that every man $m$'s ranking function $P_m: W \rightarrow [n]$ is a bijection. Now, consider a woman $w$ with $\orank(w) \leq i$.  According to (\ref{eqn:objective-rank}), $P_m(w)  \leq i + k - 1$.  If $ \abs{\set{w \in W \sucht \orank(w) \leq i}}  > i + k-1$, $P_m$ will have to assign at least two women with $\orank(w) \leq i$ the same rank, contradicting the fact that $P_m$ is a bijection.  
 
  On the other hand,  suppose $ \abs{\set{w \in W \sucht \orank(w) \leq i}}  < i$ so that $ \abs{\set{w \in W \sucht \orank(w) > i}}  \ge n - i +1$.    Every woman $w$ with $\orank(w) > i$ will have $P_m(w) > i$ according to  (\ref{eqn:objective-rank}).   Thus, $P_m$ will have to assign at least $n-i+1$ women the ranks $i+1, \hdots, n$, which again contradicts the fact that $P_m$ is a bijection.  It follows that the  bounds for  $\abs{\set{w \in W \sucht \orank(w) \leq i}}$ hold.
\end{proof}

\begin{lem}
  \label{lem:k-range-match}  Let $I \in \range(k,n)$ with minrank function $\orank$. Then for any stable pair $(m,w)$ of $I$, 
  \begin{equation}
    \label{eqn:k-range-match}
    \abs{\orank(m) - \orank(w)} \leq 2 k - 2.
  \end{equation}
\end{lem}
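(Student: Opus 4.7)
I will prove the bound via a counting argument using stability together with Proposition~\ref{prop:nonempty-support}. It suffices to show $\orank(w) - \orank(m) \leq 2k - 2$, since the reverse inequality $\orank(m) - \orank(w) \leq 2k - 2$ follows by an entirely symmetric argument with the roles of $M$ and $W$ interchanged (swapping which half of Proposition~\ref{prop:nonempty-support} is invoked). Fix a stable matching $\mu$ of $I$ containing $(m, w)$.

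Consider the set $S = \set{w' \in W \sucht P_m(w') < P_m(w)}$ of women that $m$ strictly prefers to $w$. Since $\orank(w) = \min_{m' \in M} P_{m'}(w) \leq P_m(w)$, we have
$$|S| = P_m(w) - 1 \geq \orank(w) - 1.$$
For each $w' \in S$, if $w'$ also preferred $m$ to her partner $\mu(w')$, then $(m, w')$ would block $\mu$. Stability therefore forces $P_{w'}(\mu(w')) < P_{w'}(m)$. Applying $P_{w'}(m) \leq \orank(m) + k - 1$ from~(\ref{eqn:objective-rank}), we obtain
$$\orank(\mu(w')) \leq P_{w'}(\mu(w')) \leq \orank(m) + k - 2.$$

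Hence the $|S|$ distinct men $\set{\mu(w') \sucht w' \in S}$ all lie in the set $T = \set{m' \in M \sucht \orank(m') \leq \orank(m) + k - 2}$. The upper bound of Proposition~\ref{prop:nonempty-support} (applied to the index $i = \orank(m) + k - 2$) gives $|T| \leq (\orank(m) + k - 2) + (k - 1) = \orank(m) + 2k - 3$. Combining with $|S| \geq \orank(w) - 1$ yields
$$\orank(w) - 1 \leq |S| \leq |T| \leq \orank(m) + 2k - 3,$$
which rearranges to $\orank(w) - \orank(m) \leq 2k - 2$, as required.

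The argument is essentially a clean double-counting with no significant obstacle. The only bookkeeping concerns boundary values of the shifted index $\orank(m) + k - 2$: if it falls below $1$, then $T = \varnothing$ and the chain of inequalities forces $\orank(w) = 1$ (consistent with the claim); if it exceeds $n$, then $|T| \leq n$ and the bound is trivially loose. Both edge cases pose no difficulty.
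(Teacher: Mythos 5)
Your proof is correct and follows essentially the same route as the paper's: bound $P_m(w)-1$ (the number of women $m$ prefers to $w$) by noting stability forces each such woman's partner to have minrank at most $\orank(m)+k-2$, then apply Proposition~\ref{prop:nonempty-support}. The only cosmetic difference is that you make the injection $w' \mapsto \mu(w')$ and the resulting inequality $\abs{S} \leq \abs{T}$ explicit, which the paper leaves implicit.
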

\begin{proof}
  We will show that
  \begin{equation}
    \label{eqn:rel-rank}
    \orank(w) \leq \orank(m) + 2 k - 2.    
  \end{equation}
  The proof of the other inequality is identical, interchanging the roles of $m$ and $w$. 
  
  Let $\mu$ be a stable matching of $I$ such that $(m,w) \in \mu$.  To prove (\ref{eqn:rel-rank}),  we shall bound $P_m(w)$ below using $\orank(w)$ and above using $\orank(m)$.  The former is straightforward---by (\ref{eqn:objective-rank}),  $\orank(w) \leq P_m(w)$.   For the latter, we consider the number of women $w'$ that $m$ prefers to $w$, which we know is $P_m(w) - 1$.   Each such woman $w'$ must be matched to a man $m'$ that she prefers to $m$.   Now, by  (\ref{eqn:objective-rank}), $P_{w'}(m) \leq \orank(m) + k-1$ so $\orank(m') \leq P_{w'}(m') < \orank(m) + k-1$.   The number of men that will satisfy this property is bounded by 
  
   \begin{equation}
    \label{eqn:subj-rank}
    \abs{\set{m' \sucht \orank(m') \leq \orank(m) + k - 2}}  \leq \orank(m) + 2 k - 3.  
  \end{equation}
It follows that $m$ prefers at most $\orank(m) + 2k-3$ women to $w$ so 
   \[
  \orank(w)  \leq P_m(w)  \leq \orank(m) + 2 k - 2,
  \]
   whence~(\ref{eqn:rel-rank}) follows.
    \end{proof}

Proposition~\ref{prop:nonempty-support} and Lemma~\ref{lem:k-range-match} have the following consequences, which will be useful in our description of rotation posets arising from SM instances with $k$-range preferences.

\begin{cor}
  \label{cor:rotation-density}
  Let $I \in \range(k,n)$ with minrank function $\orank$. Suppose
  \[
  \rho = (m_0, w_0), (m_1, w_1), \ldots, (m_{\ell - 1}, w_{\ell-1})
  \]
  is a rotation of $I$.  Then for all $i = 0, 1, \ldots, \ell - 1$, we have
  \begin{equation}
    \begin{split}
      \abs{\orank(m_i) - \orank(m_{i+1})} &\leq 4 k - 4, \quad\text{and}\\
      \abs{\orank(w_i) - \orank(w_{i+1})} &\leq 4 k - 4.
    \end{split}
  \end{equation}
\end{cor}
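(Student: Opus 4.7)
The plan is to derive the corollary directly from Lemma~\ref{lem:k-range-match} by identifying two stable pairs that share an endpoint with each pair of consecutive agents in $\rho$, and then invoking the triangle inequality on $\orank$.

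First I would recall the definition of rotation elimination: if $\rho$ is exposed in some stable matching $\mu$, then $\mu$ contains each pair $(m_i, w_i)$, while the stable matching $\mu \setminus \rho$ contains each pair $(m_i, w_{i+1})$ (indices taken cyclically modulo $\ell$). Hence \emph{both} $(m_i, w_i)$ and $(m_i, w_{i+1})$ are stable pairs of $I$. Applying Lemma~\ref{lem:k-range-match} to each gives
\[
\abs{\orank(m_i) - \orank(w_i)} \leq 2k-2 \quad\text{and}\quad \abs{\orank(m_i) - \orank(w_{i+1})} \leq 2k-2.
\]
A single application of the triangle inequality then yields $\abs{\orank(w_i) - \orank(w_{i+1})} \leq 4k - 4$, which is the women's bound.

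For the men's bound I would instead pair the stable pair $(m_i, w_{i+1})$ (from $\mu\setminus\rho$) with the stable pair $(m_{i+1}, w_{i+1})$ (from $\mu$), which share the woman $w_{i+1}$. Lemma~\ref{lem:k-range-match} bounds each of $\abs{\orank(m_i) - \orank(w_{i+1})}$ and $\abs{\orank(m_{i+1}) - \orank(w_{i+1})}$ by $2k-2$, and the triangle inequality gives $\abs{\orank(m_i) - \orank(m_{i+1})} \leq 4k-4$, as required. There is no real obstacle here: the argument is essentially a two-hop triangle-inequality chase, and the only thing to be careful about is selecting, for the men's inequality, the ``crossing'' pair $(m_i, w_{i+1})$ coming from $\mu\setminus\rho$ rather than the two ``diagonal'' pairs from $\mu$, which do not share a common endpoint.
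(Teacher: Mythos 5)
Your proof is correct and matches the paper's argument exactly: the paper likewise observes that $(m_{i+1}, w_{i+1}) \in \mu$ and $(m_i, w_{i+1}) \in \mu \setminus \rho$ are both stable pairs sharing $w_{i+1}$, applies Lemma~\ref{lem:k-range-match} twice, and concludes by the triangle inequality. Your explicit treatment of the women's bound via the common man $m_i$ is the symmetric case the paper leaves implicit.
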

\begin{proof}
Let $\rho$ be exposed in the stable matching $\mu$ of $I$.  Then $(m_{i+1}, w_{i+1}) \in \mu$ while $(m_i, w_{i+1}) \in \mu \setminus \rho$.  
  Applying Lemma~\ref{lem:k-range-match} twice (along with the triangle inequality) gives
  \begin{align*}
    \abs{\orank(m_i) - \orank(m_{i+1})} &= \abs{\orank(m_i) - \orank(w_{i+1}) + \orank(w_{i+1}) - \orank(m_{i+1})}\\
    &\leq \abs{\orank(m_i) - \orank(w_{i+1})} + \abs{\orank(w_{i+1}) - \orank(m_{i+1})}\\
    &\leq 2 (2 k - 2),
  \end{align*}
  as desired.
\end{proof}

\begin{cor}
  \label{cor:few-rotations} Let $I \in \range(k,n)$.  Then each agent in $I$ has at most $5 k - 4$ stable partners. In particular, each agent can appear in at most $5 k - 5$ rotations.
\end{cor}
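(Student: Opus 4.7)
The plan is to prove both statements by combining the two structural lemmas already established: the matching inequality (Lemma~\ref{lem:k-range-match}) constraining the difference of minranks of stable partners, and the counting bound from Proposition~\ref{prop:nonempty-support} on the number of agents whose minrank lies in a given interval.

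Fix a woman $w$ (the argument for a man is symmetric by swapping the roles). First I would observe that any stable partner $m$ of $w$ satisfies
\[
\orank(w) - (2k-2) \;\leq\; \orank(m) \;\leq\; \orank(w) + (2k-2)
\]
by Lemma~\ref{lem:k-range-match}. Thus the set of stable partners of $w$ is contained in the set $S = \{m \in M \sucht a \leq \orank(m) \leq b\}$, where $a = \orank(w) - (2k-2)$ and $b = \orank(w) + (2k-2)$. Writing $f(i) = |\{m \in M \sucht \orank(m) \leq i\}|$, Proposition~\ref{prop:nonempty-support} gives $f(b) \leq b + k - 1$ and $f(a - 1) \geq a - 1$, hence
\[
|S| \;=\; f(b) - f(a-1) \;\leq\; (b - a) + k \;=\; (4k-4) + k \;=\; 5k - 4.
\]
This proves the first claim.

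For the second claim, I would appeal to the well-known fact that the stable partners of a fixed agent form a chain in the partial order induced by that agent's preferences: in every stable matching $w$ is matched with some partner from her stable set, and the sequence of her partners across the stable matchings reachable by iterative rotation elimination is monotone (each rotation that contains $w$ moves her strictly up her preference list). Consequently, the rotations containing $w$ are in bijection with the transitions between consecutive stable partners of $w$, of which there are at most $|S| - 1 \leq 5k - 5$. If one does not wish to cite this fact, it can be derived directly from Definition~\ref{dfn:rotation}: the woman-component of any rotation containing $w$ is determined by $w$'s current partner and her next stable partner, so the rotations in $R(I)$ containing $w$ inject into the set of ordered pairs of consecutive stable partners of $w$.

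The main (and really the only) obstacle is ensuring the interval count in the first step is tight enough; the arithmetic is tight precisely because Proposition~\ref{prop:nonempty-support} gives a two-sided bound of width $k$ on the ``staircase function'' $f$, and Lemma~\ref{lem:k-range-match} gives a window of width $4k-4$ for admissible stable partners, combining to exactly $5k - 4$. The passage from stable partners to rotations then loses one, giving the stated $5k - 5$ bound.
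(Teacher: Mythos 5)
Your proof is correct and follows essentially the same route as the paper's: both apply Lemma~\ref{lem:k-range-match} to confine the minranks of an agent's stable partners to a window of width $4k-4$, then use the two-sided counting bound of Proposition~\ref{prop:nonempty-support} to get $5k-4$ partners, and finally observe that the rotations containing a fixed agent correspond to consecutive transitions between that agent's stable partners, losing one. The only cosmetic difference is that you phrase the counting step via the staircase function $f$ and fix a woman rather than a man.
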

\begin{proof} Assume the minrank function of $I$ is $\orank$. Let us prove the result for a man $m$.  Let $W_m$ denote his stable partners in $I$. By Lemma~\ref{lem:k-range-match}, all women $w \in W_m$ satisfy
  \begin{equation}
    \label{eqn:match-rank}
    \orank(m) - 2 k + 2 \leq \orank(w) \leq \orank(m) + 2 k - 2.
  \end{equation}
  By Proposition~\ref{prop:nonempty-support}, we have
  \begin{equation*}
    \abs{\set{w \in W \sucht \orank(w) \leq \orank(m) + 2 k - 2}} \leq \orank(m) + 3 k - 3.
  \end{equation*}
  Similarly, Proposition~\ref{prop:nonempty-support} implies that
  \begin{equation*}
    \orank(m) - 2 k + 1 \leq \abs{\set{w \in W \sucht \orank(w) < \orank(m) - 2 k + 2}}.
  \end{equation*}
  Combining the previous two expressions, the number of women $w$ satisfying~(\ref{eqn:match-rank}) is at most
  \[
  (\orank(m) + 3 k - 3) -  (\orank(m) - 2 k + 1) = 5 k - 4.
  \]
  Except for his partner in the women-optimal stable matching, $m$ appears with each of its stable partners in a rotation.  Thus, $m$ can be part of at most $5k-5$ rotations. 
\end{proof}

We now define the interval that will be assigned to each rotation of $I$.

\begin{dfn}
  \label{dfn:extent}
  Let $\rho = (m_0, w_0), (m_1, w_1), \ldots, (m_{\ell - 1}, w_{\ell-1})$ be a rotation of $I \in \range(k,n)$ with minrank function $\orank$.  Define the \dft{extent} of $\rho$, denoted $\ext(\rho)$,  as the interval
  \[
  \left[\rmin(\rho) - 2 k + 1, \rmax(\rho) + 2 k - 1\right]
  \]
  where
  \[
  \rmin(\rho) = \min_{0 \leq i \leq \ell - 1} \set{\orank(m_i), \orank(w_i)} \quad\text{and}\quad \rmax(\rho) = \max_{0 \leq i \leq \ell - 1} \set{\orank(m_i), \orank(w_i)}.
  \]
\end{dfn}

\begin{lem}
  \label{lem:extent}
  Let $I \in \range(k, n)$ and $G(I)$ be the rotation digraph of $I$.  
  Suppose $(\rho, \sigma)$ is a directed edge in $G(I)$. Then there exists $i \in [n]$ such that $i \in \ext(\rho) \cap \ext(\sigma)$.
\end{lem}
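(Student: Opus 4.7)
The plan is to exploit the two-rule characterization of edges in $G(I)$ from Remark~\ref{rem:dag-structure}. Every edge $(\rho,\sigma)$ comes with an associated man--woman pair $(m,w)$, and in both rules the woman $w$ belongs to $\rho$. Hence $\orank(w) \in [\rmin(\rho),\rmax(\rho)] \subseteq \ext(\rho)$ for free, and the whole task reduces to choosing $i = \orank(w)$ and showing $\orank(w) \in \ext(\sigma)$.

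Case 1 (Rule~1) is essentially trivial: here $(m,w)$ is a pair in the rotation $\sigma$ itself, so $w \in \sigma$ and thus $\orank(w) \in [\rmin(\sigma),\rmax(\sigma)] \subseteq \ext(\sigma)$ directly, no inequalities required.

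Case 2 (Rule~2) is where the slack of $2k-1$ in the definition of $\ext$ is actually used. Here $m \in \sigma$ but $(m,w) \notin \sigma$; write $\sigma = \ldots,(m_j,w_j),(m_{j+1},w_{j+1}),\ldots$ with $m = m_j$, so the rotation moves $m$ from $w_j$ to $w_{j+1}$. The defining condition of Rule~2 is that $w$ sits strictly between these two on $m$'s list, i.e. $P_m(w_j) < P_m(w) < P_m(w_{j+1})$. Combining this with the $k$-range inequalities (\ref{eqn:objective-rank}) applied to $m$'s rankings of $w$, $w_j$, and $w_{j+1}$, I would derive
\[
\orank(w_j) - (k-2) \;\leq\; \orank(w) \;\leq\; \orank(w_{j+1}) + (k-2),
\]
and since $w_j,w_{j+1} \in \sigma$, their $\orank$ values lie in $[\rmin(\sigma),\rmax(\sigma)]$. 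This immediately yields $\orank(w) \in [\rmin(\sigma)-k+2,\rmax(\sigma)+k-2]$, which is contained in $\ext(\sigma) = [\rmin(\sigma) - 2k+1,\rmax(\sigma) + 2k-1]$.

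I do not expect any real obstacle here; the only thing to be careful about is keeping the direction of the inequalities straight (which partner of $m$ is above $w$ and which is below) and correctly translating between \emph{rank in $m$'s list} and $\orank$ via (\ref{eqn:objective-rank}). The $\pm (2k-1)$ buffer in the definition of $\ext$ is chosen precisely to absorb the looseness introduced by converting $P_m(\cdot)$ bounds into $\orank(\cdot)$ bounds, so after the computation one sees in retrospect why the extent was defined with that particular width.
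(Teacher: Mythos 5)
Your proof is correct, and while it follows the same overall skeleton as the paper's (split on Rules~1 and~2 of Remark~\ref{rem:dag-structure}, then reason about minranks), your handling of Case~2 is genuinely different and arguably cleaner. The paper routes through Lemma~\ref{lem:k-range-match}: it uses the fact that $w_j$ and $w_{j+1}$ are stable partners of $m$ to bound $\abs{P_m(w) - \orank(m)} \leq 2k-2$, deduces $\abs{\orank(w) - \orank(m)} \leq 3k-3$ by the triangle inequality, and then argues that the two $\pm(2k-1)$-padded intervals around $[\rmin(\rho),\rmax(\rho)]$ and $[\rmin(\sigma),\rmax(\sigma)]$ must overlap somewhere in $[n]$. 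You instead sandwich $\orank(w)$ directly between $\orank(w_j) - (k-2)$ and $\orank(w_{j+1}) + (k-2)$ using only the defining inequality~(\ref{eqn:objective-rank}), which exhibits the single explicit witness $i = \orank(w) \in [n]$ lying in both extents (it is in $[\rmin(\rho),\rmax(\rho)]$ for free since $w \in \rho$ under both rules). Your derivation checks out: from $P_m(w_j) < P_m(w)$ and $\orank(w_j) \leq P_m(w_j)$, $P_m(w) \leq \orank(w) + k - 1$ you get the lower bound, and symmetrically for the upper bound. What your approach buys is independence from Lemma~\ref{lem:k-range-match} and a tighter constant --- slack $k-2$ rather than $2k-1$ suffices in this case --- which suggests the extent could be defined with a narrower buffer, slightly improving the constant in Lemma~\ref{lem:limited-overlap} and Theorem~\ref{thm:k-range-upper}; what the paper's version buys is reuse of machinery (Lemma~\ref{lem:k-range-match}) already established for Corollaries~\ref{cor:rotation-density} and~\ref{cor:few-rotations}.
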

\begin{proof}
 Let $\orank$ be the minrank function. The DAG $G(I)$ contains an edge $(\rho, \sigma)$ only when Rule~1 or~2 of Remark~\ref{rem:dag-structure} is satisfied. We consider the cases corresponding to these rules below.
  \begin{description}
  \item[Case 1.] If Rule~1 applies, then there exists $m \in M$ and $w, w' \in W$ such that $(m, w) \in \rho$ and $(m, w') \in \sigma$. Since $m$ appears in both $\rho$ and $\sigma$, we have $\orank(m) \in \ext(\rho) \cap \ext(\sigma)$, as desired.
  \item[Case 2.] Suppose Rule~2 applies. Then there exists $m$ and $w$ such that $w$ is part of the rotation $\rho$ while $m = m_{i}$ is  in the rotation $\sigma =  (m_1, w_1), (m_2, w_2), \ldots, (m_\ell, w_\ell)$ and
    \begin{equation}
      \label{eqn:relative-ranks}
      P_{m}(w_i) < P_{m}(w) < P_m(w_{i+1}).
    \end{equation}
    We know that $\orank(w_i) \leq P_m(w_i)$.  Since $w_i$ is a stable partner of $m$, by  Lemma~\ref{lem:k-range-match}
      \begin{equation}
      \label{eqn:extreme-ranks}
      \orank(m) - 2 k + 2 \leq P_m(w_i).
      \end{equation}
    On the other hand, $w_{i+1}$ is also a stable partner of $m$. From the proof of Lemma~\ref{lem:k-range-match} we also know that 
\begin{equation}
      \label{eqn:extreme-ranks-2}
      P_m(w_{i+1}) \leq \orank(m) + 2 k - 2.
    \end{equation}
    
    Combining~(\ref{eqn:relative-ranks}), (\ref{eqn:extreme-ranks}) and (\ref{eqn:extreme-ranks-2})  gives
    \begin{equation}
      \label{eqn:rank-bounds}
      \abs{P_m(w) - \orank(m)} \leq 2 k - 2.
    \end{equation}
    Finally, applying the triangle inequality and the definition of $k$-range preferences, we obtain
    \begin{align*}
      \abs{\orank(w) - \orank(m)} &\leq \abs{\orank(w) - P_m(w)} + \abs{P_m(w) - \orank(m)}\\
      &\leq (k - 1) + (2 k - 2) \\
      & = 3k-3.
    \end{align*}
    Since $[\orank(w) - 2k+1, \orank(w) + 2k-1] \subseteq \ext(\rho)$ while $[\orank(m) - 2k+1, \orank(m) + 2k-1] \subseteq \ext(\sigma)$,  $\abs{\orank(w) - \orank(m)} \leq 3 k - 3$ implies that  $\ext(\rho) \cap \ext(\sigma) \neq \varnothing$.
  \end{description}
  Thus, for all edges $(\rho, \sigma)$ in $G$, we have $\ext(\rho) \cap \ext(\sigma) \neq \varnothing$, as desired.  Now we note that the non-empty intersection can happen in three ways: 
  \begin{itemize}
  \item $\left[\rmin(\rho), \rmax(\rho) \right] \cap \left[\rmin(\sigma), \rmax(\sigma)\right] \neq \varnothing$ 
  \item  $[\rmax(\rho), \rmax(\rho) + 2k -1] \cap [\rmin(\sigma) - 2k + 1, \rmin(\sigma)] \neq \varnothing$ or 
  \item  $[\rmax(\sigma), \rmax(\sigma) + 2k -1] \cap [\rmin(\rho) - 2k + 1, \rmin(\rho)] \neq \varnothing$. 
 \end{itemize}
  In each case, some $i \in [n]$ has to be part of this intersection since the minranks of all agents are in $[n]$. 
\end{proof} 

\begin{lem}
  \label{lem:limited-overlap}
  Let $I \in \range(k, n)$ and integer $i \in [n]$. Then there are at most $50 k^2$ rotations $\rho$ such that $i \in \ext{\rho}$.
\end{lem}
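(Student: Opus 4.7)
The plan is to show that every rotation $\rho$ with $i \in \ext(\rho)$ must contain an agent whose minrank lies in a short window $J_i$ of length $4k - 1$ centered near $i$, and then to bound the total by counting (agent, rotation) incidences.

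The hypothesis $i \in \ext(\rho)$ unfolds to $\rmin(\rho) \leq i + 2k - 1$ and $\rmax(\rho) \geq i - 2k + 1$. The easy case is $i \notin [\rmin(\rho), \rmax(\rho)]$: the extremal agent of $\rho$ already has minrank within $2k - 1$ of $i$. The interesting case is $\rmin(\rho) \leq i \leq \rmax(\rho)$. Here I would split on whether some cyclic sequence of same-gender minranks in $\rho$ straddles $i$. Suppose the men's minranks do: pick consecutive men $m_j$, $m_{j+1}$ in $\rho$ with $\orank(m_j) \leq i < \orank(m_{j+1})$ and exploit the fact that $w_{j+1}$ is a stable partner of both $m_j$ (in $\mu \setminus \rho$, where $\mu$ exposes $\rho$) and $m_{j+1}$ (in $\mu$ itself). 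Two applications of Lemma~\ref{lem:k-range-match} then sandwich her minrank:
\[
\orank(m_{j+1}) - (2k - 2) \;\leq\; \orank(w_{j+1}) \;\leq\; \orank(m_j) + (2k - 2),
\]
which places $\orank(w_{j+1})$ in $[i - 2k + 3,\, i + 2k - 2]$. The case of straddling women is symmetric. If neither sequence straddles $i$, then all men sit on one side of $i$ and all women on the other; any matched pair $(m_j, w_j) \in \rho$ then satisfies $|\orank(m_j) - \orank(w_j)| \leq 2k - 2$ by Lemma~\ref{lem:k-range-match}, placing the agent on the ``wrong'' side of $i$ within distance $2k - 2$ of $i$. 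In every case, some agent $a \in \rho$ has $\orank(a) \in J_i := [i - 2k + 1,\, i + 2k - 1]$.

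To finish, I would use Proposition~\ref{prop:nonempty-support} to bound the number of men (and, symmetrically, of women) with minrank in $J_i$ by $|J_i| + (k - 1) = 5k - 2$, giving at most $10k - 4$ candidate agents overall. By Corollary~\ref{cor:few-rotations}, each such agent appears in at most $5k - 5$ rotations, so the number of incidences $(a, \rho)$ with $a \in \rho$ and $\orank(a) \in J_i$ is at most $(10k - 4)(5k - 5) \leq 50 k^2$. Since every rotation with $i \in \ext(\rho)$ contributes at least one such incidence, this upper-bounds the number of rotations as claimed.

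The main obstacle is the tight case analysis needed to locate an agent within distance $2k - 1$ of $i$. A direct appeal to Corollary~\ref{cor:rotation-density} alone would permit jumps of up to $4k - 4$ between consecutive men of $\rho$, producing a window of length roughly $8k$ around $i$ and only a weaker bound on the order of $90 k^2$. The sandwich estimate on $\orank(w_{j+1})$ obtained from applying Lemma~\ref{lem:k-range-match} to \emph{both} $(m_j, w_{j+1})$ and $(m_{j+1}, w_{j+1})$ is what shrinks the window by a factor of two and recovers the stated constant $50$.
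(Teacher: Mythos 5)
Your proposal is correct and follows essentially the same route as the paper's proof: locate, in each rotation $\rho$ with $i \in \ext(\rho)$, an agent whose minrank lies within $2k-1$ of $i$; bound the number of such agents per gender via Proposition~\ref{prop:nonempty-support}; and multiply by the per-agent rotation bound of Corollary~\ref{cor:few-rotations}. The only real difference is that the paper asserts the existence of such an agent ``by the definition of $\ext$,'' which on its face only covers the case $i \notin [\rmin(\rho), \rmax(\rho)]$, whereas your straddling/sandwich argument (two applications of Lemma~\ref{lem:k-range-match} to the common stable partner $w_{j+1}$) supplies the justification for the interior case; moreover your per-gender count of $5k-2$ is the correct one for the window $[i-2k+1, i+2k-1]$ of length $4k-1$ (the paper's stated $5k-4$ corresponds to a narrower window), and either way the product stays below $50k^2$.
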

\begin{proof}
  Let $\orank$ be the minrank function. Consider the set $\{\rho_1, \rho_2, \ldots, \rho_\ell\}$ of rotations whose extents contain $i$. That is, for all $j$ we have $i \in \ext{\rho_j}$. By the definition of $\ext$, for each $\rho_j$, there exists an agent $a_j$  in $\rho_j$ such that   
  \begin{equation}
    \label{eqn:rank-range}
    \abs{\orank(a_j) - i} \leq 2 k - 1    
  \end{equation}
  Following the proof of Corollary~\ref{cor:few-rotations}, the number of women that satisfy~(\ref{eqn:rank-range}) is at most $5k-4$.  Similarly, the number of men that satisfies the same inequality is at most $5k-4$. Thus, at most $10k-8$ agents satisfy the inequality. But each such agent can be part of at most $5k-5$ rotations.  Therefore, $\ell \leq (10k-8)(5k-5)  = 50k^2 -90k + 40 \leq 50 k^2$ when $k \ge 1$.    
\end{proof}

We now present the  main result of this section.  

\begin{lthm}
  \label{thm:k-range-upper}
  Suppose $I \in \range(k,n)$ and $\calR = \calR(I)$ is its rotation poset. Then $\pw(\calR) \leq 50 k^2$. 
\end{lthm}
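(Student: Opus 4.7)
The plan is to use the extents $\ext(\rho)$ from Definition~\ref{dfn:extent} directly as a path decomposition. Specifically, I would define, for each $i \in [n]$, the set
\[
X_i = \set{\rho \in \calR(I) \sucht i \in \ext(\rho)},
\]
and then show that $\calX = (X_1, X_2, \ldots, X_n)$ is a path decomposition of the Hasse diagram $H(\calR)$ of $\calR(I)$ (or more conveniently, of the rotation digraph $G(I)$, which contains $H(\calR)$ as a subgraph, so that any path decomposition of $G(I)$ also path-decomposes $H(\calR)$).

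To verify the three axioms of Definition~\ref{dfn:path-decomp}: the covering property holds because every rotation $\rho$ has a nonempty extent contained in $[n]$, so $\rho$ lies in at least one $X_i$. The edge-coverage property is exactly Lemma~\ref{lem:extent}: for every edge $(\rho, \sigma)$ of $G(I)$, there exists some $i \in \ext(\rho) \cap \ext(\sigma)$, which means $\rho, \sigma \in X_i$. The interval property is immediate from the definition: each $\ext(\rho)$ is an interval $[\rmin(\rho) - 2k+1,\, \rmax(\rho) + 2k-1]$ in $[n]$, so the set of indices $i$ for which $\rho \in X_i$ is itself an interval, which gives the monotone nesting condition in item~3 of Definition~\ref{dfn:path-decomp}.

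Finally, the width bound follows from Lemma~\ref{lem:limited-overlap}: for each $i \in [n]$, $\abs{X_i} \leq 50 k^2$, so
\[
\pw(H(\calR)) \leq \width(\calX) = \max_i \abs{X_i} - 1 \leq 50 k^2 - 1 \leq 50 k^2,
\]
as required. Because $\pw(\calR) = \pw(H(\calR))$ by Definition~\ref{dfn:pathwidth-p}, this finishes the proof.

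There is no real obstacle here, since the heavy lifting has already been done in Lemmas~\ref{lem:extent} and~\ref{lem:limited-overlap}; the only thing to be careful about is checking the interval property of path decompositions, and this is built into our choice of $\ext(\rho)$ as an interval. I would also remark in passing that the decomposition $\calX$ can be computed in time polynomial in $n$ once $G(I)$ and the minrank function $\orank$ are known (using Theorem~\ref{thm:compute-dag} and Lemma~\ref{lem:compute-range}), since each $\ext(\rho)$ is determined by $2 \abs{\rho}$ minrank values; this observation sets up the algorithmic consequences in Section~\ref{sec:k-range-consequences}.
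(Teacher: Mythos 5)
Your proposal is correct and follows essentially the same route as the paper: the paper's proof also takes $X_i = \set{\rho \sucht i \in \ext(\rho)}$, verifies the three path-decomposition axioms exactly as you do (with Lemma~\ref{lem:extent} giving edge coverage and the interval structure of $\ext(\rho)$ giving the nesting condition), invokes Lemma~\ref{lem:limited-overlap} for the width bound, and passes from $G(I)$ to $H(\calR)$ via the subgraph relation. No gaps; your remark about efficient computability matches what the paper records separately in Corollary~\ref{cor:k-range-decomp}.
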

\begin{proof}
  Let $I \in \range(k)$ have minrank function $\orank$.  Let $G = G(I) = (V, E)$. Consider the sequence $\calX = (X_1, X_2, \ldots, X_n)$ defined by
  \[
  X_i = \set{\rho \in V \sucht i \in \ext(\rho)}.
  \]
  We claim that $\calX$ is a path decomposition of $G$. We argue that each of the three conditions in Definition~\ref{dfn:path-decomp} are satisfied:
  \begin{enumerate}
  \item $\bigcup_{i = 1}^n X_i = V$ holds because each rotation $\rho \in V$ has a non-empty extent, and intersects the interval $\set{1, \ldots, n}$.
  \item For each edge $\set{\rho, \sigma} \in E$, there exists $i \in [n]$ such that $\rho, \sigma \in X_i$. This holds by Lemma~\ref{lem:extent}.
  \item For all $i, j, k \in [n]$ with $i \leq j \leq k$, we have $X_i \cap X_k \subseteq X_j$. To see this, suppose $\rho \in X_i \cap X_k$. Then $i, k \in \ext(\rho)$ by the definitions of $X_i$ and $X_k$. Therefore, $\rho \in X_j$, because $\ext(\rho)$ is an interval, as desired.
  \end{enumerate}
  By Lemma~\ref{lem:limited-overlap}, $\calX$ has width at most $50 k^2$. Thus, $G$  has pathwidth at most $50 k^2$.  And since the Hasse diagram of $\calR$, $H(\calR)$, is a subgraph of $G$, it follows that $H(\calR)$ has a path decomposition whose width is at most $50k^2$.  Thus,  $\calR$ also has pathwidth at most $50 k^2$, the desired bound.
\end{proof}

\begin{cor}
 \label{cor:no-constant}
There is no constant $c$ such that every poset can be realized by an instance in $\range(c)$. 
\end{cor}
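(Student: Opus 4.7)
The plan is to apply the contrapositive of Theorem~\ref{thm:k-range-upper}: if $I \in \range(k)$, then $\pw(\calR(I)) \leq 50 k^2$. Hence every poset realized by some instance in $\range(c)$ has pathwidth at most $50 c^2$. To rule out any universal constant $c$, it therefore suffices to exhibit a family of posets $\set{\calP_n}_{n \in \N}$ whose pathwidths grow without bound.

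For the concrete family, I would take $\calP_n$ to be the bipartite poset consisting of $n$ minimal elements $b_1,\ldots,b_n$ and $n$ maximal elements $t_1,\ldots,t_n$, with $b_i \prec t_j$ for all $i,j$. The Hasse diagram $H(\calP_n)$ is the complete bipartite graph $K_{n,n}$, which is well known to have pathwidth $n$. So $\pw(\calP_n) = n$ grows without bound.

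Given any purported constant $c$, choose $n > 50 c^2$. By Theorem~\ref{thm:k-range-upper}, if some $I \in \range(c)$ realized $\calP_n$, we would have $\pw(\calP_n) \leq 50 c^2 < n$, contradicting $\pw(\calP_n) = n$. Hence no instance in $\range(c)$ realizes $\calP_n$, proving the corollary.

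The only mildly non-trivial step is the pathwidth lower bound for $K_{n,n}$, but this follows from standard arguments: any path decomposition of $K_{n,n}$ must contain a bag holding one entire side of the bipartition (otherwise one could find $b_i, t_j$ such that no bag contains the edge $\set{b_i, t_j}$, using the interval structure from Remark~\ref{rem:pathwidth-interval}). So the main ingredient needed beyond the already-proved Theorem~\ref{thm:k-range-upper} is exhibiting one explicit family of posets with unbounded Hasse-diagram pathwidth, and $K_{n,n}$ fits the bill cleanly.
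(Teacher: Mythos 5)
Your proof is correct, and the overall skeleton is identical to the paper's: both arguments are simply the contrapositive of Theorem~\ref{thm:k-range-upper} applied to a family of posets of unbounded pathwidth. The difference lies in the witness family. The paper takes $\calP_n$ to be a poset whose Hasse diagram is a tree $T_n$ of pathwidth at least $n$ (citing a known result for the existence of such trees), oriented away from an arbitrary root; you take the two-layer poset whose Hasse diagram is $K_{n,n}$ and prove the pathwidth lower bound from scratch via the interval structure of path decompositions. Your route is more self-contained --- the Helly-type argument that either all $I_{b_i}$ share a point or some bag contains all of $\set{t_1,\ldots,t_n}$ gives a bag of size at least $n$, hence width at least $n-1$, which is all that unboundedness requires (your claim that $\pw(K_{n,n})$ equals $n$ exactly is true but not needed). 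The paper's choice of trees buys a qualitatively stronger moral: even posets with extremely sparse, acyclic Hasse diagrams escape every $\range(c)$, whereas $K_{n,n}$ is dense. One small point worth making explicit in your write-up: the relations $b_i \prec t_j$ are all covering relations (there is no intermediate element), so the Hasse diagram really is the full $K_{n,n}$ and $\pw(\calP_n) = \pw(K_{n,n})$ by Definition~\ref{dfn:pathwidth-p}; this is the analogue of the paper's remark that $\vec{T}_n$ equals its own transitive reduction.
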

\begin{proof}
For any positive integer $n$, there is a tree $T_n$ whose pathwidth is at least $n$~\cite{Scheffler1990-linear}. Take any vertex $z$ of $T_n$ and orient all edges away from $z$ and call the new directed acyclic graph $\vec{T}_n$. Notice that the transitive reduction\footnote{Recall that the \emph{transitive reduction} of a DAG $G = (V, E)$ is the DAG $G' = (V, E')$ formed by removing all edges $(u, v)$ from $E$ such that there exists a path of length at least $2$ from $u$ to $v$ in $G$.} of $\vec{T}_n$ is itself since $T_n$ contains no cycles. Let $\calP_n$ be a poset whose Hasse diagram is $\vec{T}_n$.  By definition, the pathwidth of $\calP_n$ is equal to the pathwidth of $\vec{T}_n$, which is at least $n$.   Let $c$ be a fixed constant.  By Theorem~\ref{thm:k-range-upper}, every SM instance $I$ in $\range(c)$ has a rotation poset whose pathwidth is at most $50 c^2$.  Thus, if $n > 50 c^2$, no instance in $\range(c)$ can realize $\calP_n$. 
\end{proof}

Given an SM instance $I$, Algorithm $\ConstructPathDecomposition$ incorporates the steps from the discussion above to create a nice path decomposition for $G(I)$, the rotation digraph of $I$, whose width is $O(k^2)$ where $k = \range(I)$.

\begin{algorithm}
  \caption{$\ConstructPathDecomposition(I=(M,W,P))$. Construct rotation digraph $G(I)$ and nice path decomposition $\calX = (X_1, X_2, \hdots, X_n)$ of width $O(k)$ where $k = \range(I)$.}
  \begin{algorithmic}[1]
    \STATE  Compute $G(I)$ using Gusfield's algorithm (Theorem~\ref{thm:stable-closed}).
    \STATE  Compute $k = \range(I)$ and the minrank function $\orank$ (Lemma~\ref{lem:compute-range}).
    \FORALL{rotations $\rho \in G(I)$}
    \STATE $\orank_{\min}(\rho)  \leftarrow \min \set{\orank(a) \sucht a \in \rho}$
    \STATE $\orank_{\max}(\rho)  \leftarrow \max \set{\orank(a) \sucht a \in \rho}$
    \STATE $\ext(\rho) \leftarrow [\orank_{\min}(\rho) - 2k+1, \orank_{\max}(\rho) + 2k - 1]$
    \ENDFOR
    \STATE  Initialize all the sets in $(X_1, X_2, \hdots, X_n)$ to empty sets.
    \FORALL{rotations $\rho \in G(I)$}
    \FORALL{$i \in \ext(\rho) \cap [1,n]$}
    \STATE Add $\rho$ to $X_i$
    \ENDFOR
    \ENDFOR
    \STATE Convert $(X_1, X_2, \hdots, X_n)$ into a nice path decomposition (Lemma~\ref{lem:nice-path}).
  \end{algorithmic}
\end{algorithm}

\begin{cor}
  \label{cor:k-range-decomp}
  Given an SM instance $I$ of size $n$, there 
  exists an $O(k^2 n + n^2)$-time algorithm that computes the rotation digraph $G(I)$ and a nice path decomposition $\calX$ of $G(I)$ with width at most $50k^2$, where $k = \range(I)$.  
\end{cor}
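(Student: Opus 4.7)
The plan is to verify that $\ConstructPathDecomposition$ (the algorithm stated just before the corollary) produces the claimed output, and then to bound its runtime step by step. Correctness of the width bound is essentially free from Theorem~\ref{thm:k-range-upper}: the proof of that theorem constructs exactly the sequence $\calX = (X_1, \ldots, X_n)$ with $X_i = \set{\rho \sucht i \in \ext(\rho)}$ and shows it is a path decomposition of $G(I)$ of width at most $50 k^2$. So what remains is to (a) compute this sequence efficiently and (b) turn it into a nice path decomposition without losing the width bound.

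For the runtime analysis, I would process the five stages of the algorithm in order. Step~1 (building $G(I)$) costs $O(n^2)$ by Theorem~\ref{thm:compute-dag}, and Step~2 (computing $\range(I)$ and $\orank$) costs $O(n^2)$ by Lemma~\ref{lem:compute-range}. For Step~3 I would invoke Corollary~\ref{cor:few-rotations}: each of the $2n$ agents appears in at most $5k-5$ rotations, so
\[
\sum_{\rho \in \calR(I)} \abs{\rho} \;\leq\; (5k - 5)\cdot 2n \;=\; O(k n),
\]
and hence all of the values $\rmin(\rho)$, $\rmax(\rho)$, and $\ext(\rho)$ can be computed in $O(k n)$ total time. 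For Step~5 (inserting each $\rho$ into its $X_i$'s) the total work is $\sum_\rho \abs{\ext(\rho) \cap [1,n]} = \sum_i \abs{X_i}$, and Lemma~\ref{lem:limited-overlap} bounds this sum by $50 k^2 \cdot n = O(k^2 n)$.

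Finally, Step~6 converts $\calX$ into a nice path decomposition. Applying Lemma~\ref{lem:nice-path} with the graph $G(I)$ (which has $O(kn)$ vertices by Corollary~\ref{cor:few-rotations}, since each agent contributes to $O(k)$ rotations) and width bound $50 k^2$ produces a nice path decomposition of length $O(kn)$ in time $O(k^2 \cdot k n) = O(k^3 n)$ if we apply the lemma naively. To stay within $O(k^2 n + n^2)$, I would instead note that the number of rotations is at most $n^2$ (a standard bound, since each rotation is determined by some distinct pair of stable matchings arising in the rotation lattice) and moreover the nice version can be produced directly while scanning the indices $1, 2, \ldots, n$: at index $i$ we need only record the symmetric difference $X_{i-1} \triangle X_i$, whose total size across all $i$ is $O(\sum_i \abs{X_i}) = O(k^2 n)$. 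Inserting the requisite intermediate sets to make these transitions of size $1$ produces a nice path decomposition in $O(k^2 n)$ additional time, while preserving width $50 k^2$.

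Summing the stages yields a total runtime of $O(n^2 + k^2 n)$, matching the claim. The main point to get right is Step~5's bookkeeping: a naive analysis would charge $O(k)$ for the length of each extent times the number of rotations (which could be as large as $n^2$), so it is essential to amortize via Lemma~\ref{lem:limited-overlap} and express the work as $\sum_i \abs{X_i}$ rather than $\sum_\rho \abs{\ext(\rho)}$.
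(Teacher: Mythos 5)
Your proof is correct and follows essentially the same route as the paper's: verify that $\ConstructPathDecomposition$ is correct via Theorem~\ref{thm:k-range-upper} and then charge each stage of the algorithm separately, using Lemma~\ref{lem:limited-overlap} to amortize the cost of populating the bags as $\sum_i \abs{X_i} = O(k^2 n)$. The one place you go beyond the paper is the final conversion step: the paper simply cites Lemma~\ref{lem:nice-path} and asserts $O(k^2 n)$, whereas a literal application of that lemma to $G(I)$ --- which can have $\Theta(kn)$ vertices by Corollary~\ref{cor:few-rotations} --- gives $O(k^2 \cdot kn) = O(k^3 n)$ if every bag of the nice decomposition is written out explicitly; your observation that one should instead emit the decomposition via single-element symmetric differences (total size $O(\sum_i \abs{X_i}) = O(k^2 n)$) is the right way to stay within the stated budget, and is a genuine tightening of the paper's accounting. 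One small quibble: your parenthetical justification that there are at most $n^2$ rotations ``since each rotation is determined by some distinct pair of stable matchings'' is not the right argument (the number of stable matchings can be exponential); the standard reason is that each man--woman pair belongs to at most one rotation. That remark is not load-bearing, since you discard the $n^2$ bound in favor of the amortized argument anyway.
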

\begin{proof}
Gusfield's algorithm takes $O(n^2)$ time to compute $G(I)$ (Theorem~\ref{thm:stable-closed}).  Computing $k$ and objective function $\orank$ also takes $O(n^2)$ time (Lemma~\ref{lem:compute-range}). Determining $\ext(\rho)$ takes $O(|\rho|)$ time so doing this for all rotations takes $O(\sum_{\rho} |\rho|) = O(n^2)$ because each man-woman pair can be in at most one rotation.  Computing the path decomposition $(X_1, X_2, \hdots, X_n)$ takes $O(k^2n)$ since each $|X_i| \leq 50k^2$ according to Theorem~\ref{thm:k-range-upper}. Finally converting $(X_1, X_2, \hdots, X_n)$ to a nice path decomposition takes $O(k^2 n)$ time (Lemma~\ref{lem:nice-path}).  Thus, the total running time of $\ConstructPathDecomposition$ is $O(k^2n + n^2)$. 
\end{proof}

\section{Algorithmic Implications}
\label{sec:k-range-consequences}

We now describe FPT algorithms for several computationally hard stable matching problems parameterized by $\range(I)$. The results rely on Theorem~\ref{thm:k-range-upper} and Corollary~\ref{cor:k-range-decomp}, as well as FPT algorithms in~\cite{Cheng2020-simple, Gupta2017-treewidth}.

\subsection{Counting and Sampling Stable Matchings}

In a companion paper~\cite{Cheng2020-simple}, we prove the following results:

\begin{lthm}[\cite{Cheng2020-simple}]
  \label{thm:downsets}
  Let $G$ be a directed acyclic graph with $n$ vertices.  Given $G$ and a simple path decomposition $\calX$ of $G$ of width $r$, there is an algorithm that computes the number of closed subsets of $G$ in $O(2^{r} r n )$ time. 
\end{lthm}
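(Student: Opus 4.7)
The plan is to design a left-to-right dynamic program over the path decomposition. First, we apply Lemma~\ref{lem:nice-path} to convert $\calX$ into a nice path decomposition of width $r$ with length $s \leq 2n$; this preprocessing takes $O(rn)$ time and does not affect the claimed bound. From here on, each transition from $X_i$ to $X_{i+1}$ either introduces or forgets a single vertex.

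Let $V_i = \bigcup_{j \leq i} X_j$ denote the set of vertices introduced by step $i$, and write $G[V_i]$ for the induced subgraph of $G$ on $V_i$. For each $i$ and each $S \subseteq X_i$, define
\[
f_i(S) = \abs{\set{D \subseteq V_i \sucht D \text{ is a downset of } G[V_i] \text{ and } D \cap X_i = S}}.
\]
Since $X_s = \varnothing$ and $V_s = V$, the final answer will be $f_s(\varnothing)$. The base case is $f_0(\varnothing) = 1$.

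For the transitions, suppose $X_{i+1}$ is obtained from $X_i$ by introducing a new vertex $v$. By the interval property of path decompositions (Remark~\ref{rem:pathwidth-interval}), every neighbor of $v$ that already lies in $V_i$ must also lie in $X_{i+1}$, hence in $X_i$; thus the only new edges contributed to $G[V_{i+1}]$ join $v$ to vertices of $X_i$. For $S' \subseteq X_{i+1}$, I would set $f_{i+1}(S') = f_i(S' \setminus \set{v})$ provided that (i) whenever $v \in S'$, every in-neighbor $u \in X_i$ of $v$ lies in $S'$, and (ii) whenever $v \notin S'$, no out-neighbor $u \in X_i$ of $v$ lies in $S'$; otherwise $f_{i+1}(S') = 0$. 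If instead $X_{i+1}$ is obtained by forgetting $v$, then $G[V_{i+1}] = G[V_i]$ and we simply sum over the two possibilities for $v$:
\[
f_{i+1}(S') = f_i(S') + f_i(S' \cup \set{v}).
\]
A straightforward induction on $i$ shows that these recurrences compute $f_i$ correctly; the key observation is that once $v$ is forgotten it cannot reappear, so no future edge involves $v$ and the partial counts are already consistent with every edge constraint of $G[V_i]$.

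Each bag has size at most $r+1$, so the table at step $i$ has at most $2^{r+1}$ entries. An introduce step computes each entry in $O(r)$ time (checking at most $r$ edges between $v$ and the other bag members), for $O(2^r r)$ work; a forget step uses $O(2^r)$ work. Summed over the $O(n)$ steps of the nice decomposition, the total running time is $O(2^r r n)$, as required. The only delicate point — and the main bookkeeping obstacle — is to ensure each edge is processed exactly once; this follows because every edge $(u,v) \in E$ is witnessed by some bag containing both endpoints, and the check above is performed precisely when the later-introduced endpoint first joins the bag.
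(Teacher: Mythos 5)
Your argument is correct. Note first that this paper does not actually prove Theorem~\ref{thm:downsets}; it is imported from the companion paper~\cite{Cheng2020-simple}, so there is no in-paper proof to compare against. That said, your dynamic program is the natural argument and it goes through: a subset $D$ is a downset of a DAG exactly when every edge $(u,v)$ satisfies the local constraint $v \in D \Rightarrow u \in D$, so the problem reduces to counting solutions of per-edge constraints, which is precisely what a left-to-right DP over a nice path decomposition handles. Your two key structural observations are the right ones and are both justified: by the interval property (Remark~\ref{rem:pathwidth-interval}) every already-introduced neighbor of a newly introduced vertex lies in the current bag, so conditions (i) and (ii) check exactly the new edge constraints of $G[V_{i+1}]$ and the map $D' \mapsto D' \cap V_i$ is the required bijection; and once a vertex is forgotten no later edge can touch it, so the state $D \cap X_i$ is a sufficient statistic and the forget-step sum is valid. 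The accounting ($2^{r+1}$ table entries, $O(r)$ work per entry on introduce steps, $O(2n)$ steps after the Lemma~\ref{lem:nice-path} preprocessing) yields $O(2^r r n)$ as claimed. The only cosmetic gap is the unexplained term ``simple path decomposition'' in the statement, which you reasonably treat as an ordinary path decomposition to be niceified; that is a terminology issue inherited from the companion paper, not a flaw in your proof.
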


\begin{lthm}[\cite{Cheng2020-simple}]
  \label{thm:sampling-downsets}
  Let $G$ be a directed acyclic graph with $n$ vertices.  Given $G$ and a simple path decomposition $\calX$ of $G$ of width $r$, there is an algorithm that returns a downset of $G$ sampled uniformly at random in time $2^{O(r)} n^{O(1)}$.
\end{lthm}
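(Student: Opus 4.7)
The plan is to upgrade the counting dynamic program of Theorem~\ref{thm:downsets} into a uniform sampler via the standard ``sample-from-DP'' technique: first run the counting algorithm to tabulate all sub-counts at every bag, then trace a single random path backwards through those counts to produce one downset. To avoid overloading the symbol $r$ from the theorem statement (the width), I will write a nice path decomposition of $G$ as $\calX = (X_1, X_2, \ldots, X_L)$ with $L = O(n)$, obtained from the given $\calX$ by Lemma~\ref{lem:nice-path} if needed, while keeping $r$ for the width.

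First, I would revisit the DP underlying Theorem~\ref{thm:downsets}. For each bag $X_i$ and each subset $S \subseteq X_i$, let $N_i(S)$ denote the number of downsets $D$ of the induced subgraph $G[V_i]$, where $V_i = X_1 \cup \cdots \cup X_i$, satisfying $D \cap X_i = S$. The counting algorithm computes these values in $2^{O(r)} r n$ time via a forward recurrence that either extends the state when a vertex is introduced, or marginalizes when a vertex is forgotten; crucially, a forgotten vertex never reappears (by the interval property), so its membership is permanently fixed at that step. The total number of downsets is $Z = \sum_{S \subseteq X_L} N_L(S)$.

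Second, I would sample by reverse traversal of the decomposition. Draw $S_L \subseteq X_L$ with probability $N_L(S_L)/Z$. Then, for $i = L-1, L-2, \ldots, 1$, given the already-sampled $S_{i+1}$, draw $S_i \subseteq X_i$ with probability proportional to $N_i(S_i) \cdot \mathbf{1}[S_i \text{ is compatible with } S_{i+1}]$, where \emph{compatible} means that $S_i$ and $S_{i+1}$ agree on $X_i \cap X_{i+1}$ and that the single vertex introduced or forgotten between the two bags respects the downset condition against the vertices currently in the bag (which is all it needs to interact with, since every edge of $G$ lies inside some bag). The output downset is $D = \bigcup_i S_i$; by the interval property, $v \in D$ iff $v \in S_i$ for some (equivalently, every) $i$ with $v \in X_i$.

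Third, I would verify correctness and the runtime bound. For any downset $D$ the probability of producing $D$ is a telescoping product whose numerators and denominators cancel to $1/Z$, so the sampler is exactly uniform. Preprocessing costs $2^{O(r)} r n$ by Theorem~\ref{thm:downsets}; the sampling phase performs $L = O(n)$ choices, each requiring $2^{O(r)}$ work to enumerate compatible predecessors and renormalize, for a total of $2^{O(r)} n^{O(1)}$ as claimed. The main obstacle I anticipate is bookkeeping rather than any conceptual barrier: one must carefully define ``compatible'' at both the introduce-vertex and forget-vertex transitions so that the \emph{global} downset constraint is faithfully enforced by local checks, and then confirm that the conditional probabilities telescope as desired. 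Both are standard for path-decomposition DPs once the counting table of Theorem~\ref{thm:downsets} is in hand, so the proof is essentially a careful instantiation of the generic ``count-then-sample'' paradigm.
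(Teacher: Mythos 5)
This theorem is imported from the companion paper~\cite{Cheng2020-simple}; the present paper gives no proof of it, so there is nothing internal to compare against. Your count-then-sample reduction---tabulate the per-bag counts $N_i(S)$ from the dynamic program of Theorem~\ref{thm:downsets}, then trace backwards choosing each $S_i$ with probability proportional to $N_i(S_i)$ among compatible states so that the conditional probabilities telescope to $1/Z$---is the standard and correct way to obtain an exactly uniform sampler within the claimed $2^{O(r)} n^{O(1)}$ bound, and it is almost certainly the route taken in~\cite{Cheng2020-simple}. The one point to pin down in a full write-up is the bookkeeping you already flag: an edge $(y,z)$ is enforced at the introduce step of its later-introduced endpoint (at which time the earlier endpoint is still in the bag, by the interval property), not entirely at the moment each vertex first appears.
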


\begin{cor}
  \label{cor:k-range-count}
  Let $I$ be an SM instance of size $n$ and  $k = \range(I)$.  The number of stable matchings of $I$ can be computed in  $O(2^{50k^2}k^2n + n^2)$ time. 
  \end{cor}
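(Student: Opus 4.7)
The plan is to combine Corollary~\ref{cor:k-range-decomp} with Theorem~\ref{thm:downsets} in a single pipeline: build the right combinatorial object, produce a narrow path decomposition of it, then count its downsets using the tree-decomposition-style dynamic program. First, I would invoke $\ConstructPathDecomposition(I)$, which by Corollary~\ref{cor:k-range-decomp} returns, in $O(k^2 n + n^2)$ time, both the rotation digraph $G(I)$ and a nice path decomposition $\calX$ of $G(I)$ of width $r \leq 50 k^2$.

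Next I would observe that counting stable matchings of $I$ is exactly counting downsets of $G(I)$. Indeed, Theorem~\ref{thm:stable-closed} gives a bijection between stable matchings of $I$ and downsets of the rotation poset $\calR(I)$, and by definition $\calR(I)$ is the transitive closure of $G(I)$. By the observation preceding $\shDOWN$ in Section~\ref{sec:background}, two DAGs with the same transitive closure have the same downsets, so $\down(G(I))$ equals the number of stable matchings of $I$.

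I would then feed $G(I)$ and $\calX$ into the algorithm of Theorem~\ref{thm:downsets}, which runs in $O(2^{r} r N)$ time, where $N$ is the number of vertices of $G(I)$ (i.e., the number of rotations of $I$). Corollary~\ref{cor:few-rotations} implies that each agent of $I$ appears in at most $5k - 5$ rotations; since every rotation contains at least two men and two women, the total number of rotations is $N = O(k n)$. Substituting $r \leq 50 k^2$ and $N = O(kn)$, the downset-counting step takes $O(2^{50 k^2} \cdot k^2 \cdot n)$ time (absorbing the extra $k$ factor from $N$ into the $2^{50 k^2}$ term, which is harmless because $k = O(2^{k})$).

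Adding the $O(k^2 n + n^2)$ preprocessing cost of $\ConstructPathDecomposition$ yields the claimed overall runtime of $O(2^{50 k^2} k^2 n + n^2)$. There is no real obstacle here: the result is a direct consequence of the two tools developed earlier, and the only bookkeeping is the bound on the number of rotations via Corollary~\ref{cor:few-rotations} to convert the vertex count of $G(I)$ into a function of the instance size $n$ and the range $k$.
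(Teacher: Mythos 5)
Your proposal is correct and follows essentially the same route as the paper: apply Corollary~\ref{cor:k-range-decomp} to get $G(I)$ and a width-$50k^2$ nice path decomposition, identify the count with $\down(G(I))$ via Theorem~\ref{thm:stable-closed}, and finish with Theorem~\ref{thm:downsets}. Your explicit bound of $O(kn)$ on the number of rotations via Corollary~\ref{cor:few-rotations} (and absorbing the spare factor of $k$ into the exponential) is a point the paper's proof glosses over, and is a welcome bit of extra care rather than a deviation.
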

\begin{proof}
  By Corollary~\ref{cor:k-range-decomp}, we can compute the rotation digraph $G(I)$ and a simple path decomposition of $G(I)$ of width at most $50 k^2$ in $O(k^2 n + n^2)$ time.  By Theorem~\ref{thm:stable-closed}, the number of stable matchings for $I$ is $\down(G(I))$, which we can compute in time $O(2^{50k^2}k^2 n)$  by Theorem~\ref{thm:downsets}.   Thus, computing the number of stable matchings of $I$ can be done in $O(2^{50k^2}k^2n + n^2)$ time.
\end{proof}

\begin{cor}
  \label{cor:k-range-sample}  Let $I$ be an SM instance of size $n$ and  $k = \range(I)$. There is an algorithm that samples the stable matchings of $I$ uniformly at random in $2^{O(k^2)} n^{O(1)}$ time.  
\end{cor}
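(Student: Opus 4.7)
The plan is to combine Corollary~\ref{cor:k-range-decomp} with Theorem~\ref{thm:sampling-downsets} in essentially the same manner as Corollary~\ref{cor:k-range-count}. By Theorem~\ref{thm:stable-closed}, the stable matchings of $I$ are in bijection with the downsets of $\calR(I)$, and since $G(I)$ and $H(\calR(I))$ have the same transitive closure, the downsets of $G(I)$ are exactly the downsets of $\calR(I)$. Therefore, uniformly sampling a stable matching of $I$ reduces to uniformly sampling a downset of $G(I)$.

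First, I would run $\ConstructPathDecomposition$ on $I$ to obtain both $G(I)$ and a nice path decomposition $\calX$ of $G(I)$ of width $r \leq 50 k^2$; by Corollary~\ref{cor:k-range-decomp} this takes $O(k^2 n + n^2)$ time. Next, I would invoke the downset-sampling algorithm of Theorem~\ref{thm:sampling-downsets} on $(G(I), \calX)$ to draw a downset $D \subseteq R(I)$ uniformly at random in time $2^{O(r)} n^{O(1)} = 2^{O(k^2)} n^{O(1)}$. Finally, I would convert $D$ into its corresponding stable matching by starting from the man-optimal matching $\mu_0$ and eliminating the rotations in $D$ in any order consistent with a topological sort of $G(I)$, as guaranteed by Lemma~\ref{lem:rotation-elimination}; this takes only polynomial time in $n$.

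Since each of these three steps takes time at most $2^{O(k^2)} n^{O(1)}$, the overall algorithm runs within the stated budget, and by the bijection between downsets and stable matchings, the output distribution is exactly uniform over the stable matchings of $I$. No individual step presents a serious obstacle: the structural work has already been done in Section~\ref{sec:k-range-algo}, and the sampling engine is supplied by the companion paper~\cite{Cheng2020-simple}. The only mild point to verify is that the rotation-elimination step preserves the uniform distribution, which follows because the map from downsets of $\calR(I)$ to stable matchings is a bijection by Theorem~\ref{thm:stable-closed}.
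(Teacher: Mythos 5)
Your proposal is correct and follows essentially the same route as the paper: compute $G(I)$ and its width-$O(k^2)$ path decomposition via Corollary~\ref{cor:k-range-decomp}, sample a downset uniformly using Theorem~\ref{thm:sampling-downsets}, and recover the stable matching by eliminating the sampled rotations from the man-optimal matching, with uniformity guaranteed by the bijection of Theorem~\ref{thm:stable-closed}. No substantive differences from the paper's argument.
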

\begin{proof}
Again, we compute $G(I)$ and a simple path decomposition of $G(I)$ of width at most $50k^2$ in $O(k^2 n + n^2)$ time. 
  By Theorem~\ref{thm:sampling-downsets}, we can sample a downset $Z$ of $G(I)$ uniformly at random in time $2^{O(k^2)} n^{O(1)}$.  Since there is a  one-to-one correspondence between the downsets of $G(I)$ and the stable matchings of $I$, it follows that the stable matching $\mu$ formed by eliminating all the rotations in $Z$ from the man-optimal stable matching $\mu_M$ is a stable matching of $I$ chosen uniformly at random.  Computing $\mu$ from $\mu_M$ and $Z$ takes $O(n^2)$ time.
\end{proof}

\subsubsection{Finding Median Stable Matchings}

For SM instance $I$, let $\calM(I)$ contain all the stable matchings of $I$ and $N = | \calM(I)|$. For each stable matching $\mu \in \calM(I)$ and for each agent $a$, let $\mu(a)$ denote $a$'s partner in $\mu$.  Sort the multiset of agents $\{ \mu(a), \mu \in \calM(I)\}$  from $a$'s most preferred partner to $a$'s least preferred partner. The agent(s) in the middle of this sorted list is referred to as the \dft{(lower or upper) median stable partner of $a$}.\footnote{If the number of agents is odd, there is a unique median stable partner, whereas if the number of agents is even, there are two median stable partners: the upper and lower stable partner.} A \dft{median stable matching} is a stable matching where every agent is paired with a median stable partner.

Teo and Sethuraman~\cite{Teo1998-geometry} were the first to recognize that every SM instance $I$ has a median stable matching.  Cheng~\cite{Cheng2008-generalized, Cheng2010-understanding} showed that median stable matchings of $I$ are also remarkable in that they are exactly the median elements in the lattice of stable matchings of $I$.  That is,  their average distance to all the stable matchings of $I$ is the least.\footnote{In a poset $\calP$, the distance between two elements $x$ and $y$, $d(x,y)$, is defined as the length of the shortest path between $x$ and $y$ in the undirected version of $H(\calP)$.   Thus, the total distance of $x$ to all elements of $\calP$ is $\sum_y d(x,y)$ and the average distance of $x$ to these elements are computed similarly.} Thus, median stable matchings are fair in a very strong sense. She also characterized these stable matchings in terms of the rotations of $I$ and used it to show that computing a median stable matching is $\shP$-hard. 

\begin{lthm}
[Cheng~\cite{Cheng2008-generalized, Cheng2010-understanding}]
 \label{thm:medianSM}
Suppose an SM instance $I$ has $N$ stable matchings.  For each rotation $\rho$ of $I$, let $n_\rho$ denote the number of downsets of $G(I)$ that contain $\rho$.    When $N$ is odd, $I$ has only one  median stable matching and it corresponds to the downset 
\[\set{ \rho \sucht n_\rho \ge (N+1)/2}.\]   
On the other hand, when $N$ is even,  every median stable matching of $I$ corresponds to the downset of the form
\[
\set{\rho \sucht n_\rho \ge  (N/2) + 1} \cup S \text{ where } S \subseteq \set{\rho \sucht n_\rho = N/2}.
\]
\end{lthm}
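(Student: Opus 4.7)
The proof rests on the chain structure of rotations containing a single agent: for any fixed man $m$, the rotations that contain $m$ form a chain $\rho_1^m \prec \rho_2^m \prec \cdots \prec \rho_{t_m}^m$ in $\calR(I)$, where $\rho_i^m$ moves $m$ from his $i$-th best stable partner $w^{(i)}$ to his $(i+1)$-th best $w^{(i+1)}$, and an analogous (reversed) statement holds for each woman. I would begin by using this to show that if $\mu_D$ denotes the stable matching corresponding to downset $D$, then $m$'s partner in $\mu_D$ is $w^{(k+1)}$ where $k$ is the largest index with $\rho_k^m \in D$ (and $k = 0$ otherwise). Because $\rho_i^m \prec \rho_{i+1}^m$ forces every downset containing $\rho_{i+1}^m$ to contain $\rho_i^m$, the sequence $n_{\rho_1^m} \geq n_{\rho_2^m} \geq \cdots \geq n_{\rho_{t_m}^m}$ is weakly decreasing, and exactly $n_{\rho_{j-1}^m} - n_{\rho_j^m}$ stable matchings pair $m$ with $w^{(j)}$ (with the conventions $n_{\rho_0^m} = N$ and $n_{\rho_{t_m+1}^m} = 0$). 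Sorting the multiset $\set{\mu(m) \sucht \mu \in \calM(I)}$ from $m$'s most to least preferred partner, $w^{(j)}$ therefore occupies positions $N - n_{\rho_{j-1}^m} + 1$ through $N - n_{\rho_j^m}$.

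For odd $N = 2M+1$, the unique median position $M+1$ falls in the range for a single $w^{(j^*)}$, characterized by $n_{\rho_{j^*-1}^m} \geq (N+1)/2 > n_{\rho_{j^*}^m}$. Equivalently, the rotations of $m$'s chain lying in $D^* := \set{\rho \sucht n_\rho \geq (N+1)/2}$ are exactly $\rho_1^m, \ldots, \rho_{j^*-1}^m$, so $\mu_{D^*}$ pairs $m$ with his unique median partner $w^{(j^*)}$. The set $D^*$ is itself a downset: if $n_\rho \geq (N+1)/2$ and $\sigma \prec \rho$, then every downset containing $\rho$ contains $\sigma$, giving $n_\sigma \geq n_\rho$. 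Applying the same analysis to each woman's chain (reversed in preference direction but using the same set of rotations) shows that $\mu_{D^*}$ simultaneously pairs every agent with their unique median, establishing both existence and uniqueness.

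For even $N = 2M$, let $[a, b]$ be the (possibly empty) maximal interval of chain indices with $n_{\rho_j^m} = M$. A careful case analysis of where the two median positions $M$ and $M+1$ fall shows that $m$'s median partners are exactly $w^{(a)}$ (upper median, at position $M$) and $w^{(b+1)}$ (lower median, at position $M+1$), while each interior $w^{(j)}$ with $a < j \leq b$ is not a median. Hence if $\mu_D$ is median, then $D$ must contain $\rho_1^m, \ldots, \rho_{a-1}^m$ and exclude $\rho_{b+1}^m, \ldots, \rho_{t_m}^m$; equivalently, every rotation in $m$'s chain with $n_\rho \geq M+1$ lies in $D$ and every rotation with $n_\rho \leq M-1$ lies outside $D$. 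Because every rotation appears in some agent's chain, this forces $D = \set{\rho \sucht n_\rho \geq M+1} \cup S$ for some $S \subseteq \set{\rho \sucht n_\rho = M}$, proving the even case. The step I anticipate being the main obstacle is this even-case analysis, in particular verifying that the constraint derived from each man's chain is automatically compatible with the one derived from each woman's chain sharing a rotation with him; compatibility is what makes the threshold condition on $n_\rho$ depend only on the rotation poset rather than on the identity of the agent whose chain contains $\rho$.
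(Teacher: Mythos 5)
This theorem is imported: the paper attributes it to Cheng~\cite{Cheng2008-generalized, Cheng2010-understanding} and gives no proof of its own, so there is nothing in the paper to compare your argument against line by line. Judged on its own, your proof is correct and is the standard argument one would expect (and essentially the one in the cited work): the rotations containing a fixed agent form a chain in $\calR(I)$, the number of stable matchings pairing $m$ with his $j$\th{} stable partner is the telescoping difference $n_{\rho_{j-1}^m} - n_{\rho_j^m}$, and reading off which block of the sorted multiset contains the middle position(s) translates directly into the threshold conditions on $n_\rho$. The downset property of $\set{\rho \sucht n_\rho \ge (N+1)/2}$ via monotonicity of $n$ under $\prec$, and the reduction of the global claim to per-chain claims using the fact that every rotation lies in some man's chain, are both right.

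Two small remarks. First, your even-case bookkeeping is more general than necessary: since each stable partner $w^{(j)}$ of $m$ is by definition realized in at least one stable matching, the sequence $N = n_{\rho_0^m} > n_{\rho_1^m} > \cdots > n_{\rho_{t_m}^m} > 0$ is \emph{strictly} decreasing along $m$'s chain, so your interval $[a,b]$ of indices with $n_{\rho_j^m} = N/2$ contains at most one index and the ``interior'' case is vacuous. Second, the compatibility issue you flag at the end is not actually an obstacle for what the theorem asserts: in the even case the statement is only a necessary condition on the downset of a median matching, and the constraint extracted from each agent's chain (``$n_\rho \ge N/2+1$ forces $\rho \in D$, $n_\rho \le N/2 - 1$ forces $\rho \notin D$'') is phrased purely in terms of $n_\rho$, so the constraints from different agents' chains can never conflict. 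Compatibility across agents would only need to be verified for a converse (that every such downset yields a median matching), which the statement does not claim for even $N$; for odd $N$ you already verify it by running the same positional computation on the women's (reversed) chains.
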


 \begin{cor}
 Given SM instance $I$ of size $n$,  a median stable matching of $I$ can be computed in $2^{O(k^2)} n^{O(1)}$, where $k = \range(I)$. 
 \end{cor}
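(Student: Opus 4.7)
The plan is to leverage Theorem~\ref{thm:medianSM}, which characterizes median stable matchings purely in terms of the statistics $N = \abs{\calM(I)}$ and $n_\rho$ (the number of downsets of $G(I)$ containing rotation $\rho$) for each rotation $\rho$. So the whole algorithm reduces to (i) computing $N$, (ii) computing $n_\rho$ for every $\rho$, and (iii) extracting the median downset and turning it into a matching. The preprocessing is already in hand: $\ConstructPathDecomposition$ (Corollary~\ref{cor:k-range-decomp}) produces, in $O(k^2 n + n^2)$ time, the rotation digraph $G = G(I)$ together with a nice path decomposition $\calX$ of $G$ of width at most $50k^2$.

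For step (i), Corollary~\ref{cor:k-range-count} gives $N$ in $2^{O(k^2)} n^{O(1)}$ time. For step (ii), I will reduce counting downsets containing a prescribed rotation $\rho$ to counting downsets of a smaller DAG. Let $\Anc(\rho) = \set{\sigma \in V(G) \sucht \sigma \preceq \rho}$. Any downset $Z$ containing $\rho$ must contain $\Anc(\rho)$, and the map $Z \mapsto Z \setminus \Anc(\rho)$ is a bijection between downsets of $G$ containing $\rho$ and downsets of the induced sub-DAG $G_\rho = G[V(G) \setminus \Anc(\rho)]$. Restricting each bag of $\calX$ to $V(G) \setminus \Anc(\rho)$ yields a nice path decomposition of $G_\rho$ of width at most $50k^2$, so Theorem~\ref{thm:downsets} computes $n_\rho$ in $2^{O(k^2)} n^{O(1)}$ time. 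By Corollary~\ref{cor:few-rotations}, each agent appears in at most $5k - 5$ rotations, so $\abs{V(G)} = O(kn)$; iterating over all rotations gives all $n_\rho$ values in the claimed $2^{O(k^2)} n^{O(1)}$ time budget.

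For step (iii), I will compute $Z^* = \set{\rho \sucht n_\rho \geq \lceil (N+1)/2 \rceil}$, which by Theorem~\ref{thm:medianSM} corresponds to a median stable matching (when $N$ is even, the same set equals $\set{\rho : n_\rho \geq N/2 + 1}$, which is a legal choice with $S = \varnothing$). I must verify $Z^*$ is actually a downset of $G$, which follows from monotonicity of $n_\rho$: if $\rho \prec \sigma$ then every downset containing $\sigma$ also contains $\rho$, so $n_\rho \geq n_\sigma$, whence $Z^*$ is closed under predecessors. Finally, I recover the median stable matching $\mu^*$ by starting from the man-optimal matching $\mu_0$ and eliminating the rotations in $Z^*$ in any topological order of $G$, in $O(n^2)$ time.

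The main obstacle I expect is simply tracking the correctness of the reduction in step (ii)---confirming that restricting the path decomposition preserves both its niceness-up-to-rebalancing and the width bound, and checking the bijection between downsets of $G$ containing $\rho$ and downsets of $G_\rho$. Once this is in place, everything else is bookkeeping: the $2^{O(k^2)}$ factor comes once from each invocation of the downset-counting routine, the polynomial factor absorbs the $O(kn)$ outer loop over rotations and the $O(n^2)$ cost of the Gusfield/path-decomposition preprocessing and rotation-elimination postprocessing, giving the overall $2^{O(k^2)} n^{O(1)}$ bound.
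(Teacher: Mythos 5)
Your proposal is correct and follows essentially the same route as the paper's proof: compute $N$ via Corollary~\ref{cor:k-range-count}, obtain each $n_\rho$ as the number of downsets of $G(I) \setminus \Anc(\rho)$ using the restricted path decomposition and Theorem~\ref{thm:downsets}, then apply Theorem~\ref{thm:medianSM} and eliminate the resulting set of rotations from the man-optimal matching. The extra details you supply (the explicit bijection, the check that the candidate set is a downset via monotonicity of $n_\rho$) are consistent with, and slightly more careful than, the paper's argument.
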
  
\begin{proof}
Our starting point is once again the rotation digraph $G(I)$ and a nice path decomposition $\calX$ of $G(I)$ whose width is $O(k^2)$.   These structures can be constructed in $O(k^2n + n^2)$ time.  By Corollary~\ref{cor:k-range-count}, we can then compute $N$, the number of stable matchings of $I$ in $2^{O(k^2)} n^{O(1)}$ time.

Next, for each rotation $\rho$, let $\Anc(\rho)$ consist of $\rho$ and its ancestors in $G(I)$.   Notice that every downset of $G(I)$ that contains $\rho$ is of the form $\Anc(\rho) \cup T$ where $T$ is a downset of $G(I) \setminus \Anc(\rho)$.  Thus, $n_\rho = \down(G_\rho)$, where    $G_\rho = G(I) \setminus \Anc(\rho)$.   To compute $n_\rho$, we first construct $G_\rho$ and a nice path decomposition $\calX_\rho$ for $G_\rho$ from $G(I)$ and $\calX$.  This will take $O(n^2 + kn)$ time.  Then we use them to compute $\down(G_\rho)$ in $2^{O(k^2)} n^{O(1)}$.  So the total time for computing $n_\rho$ is $2^{O(k^2)} n^{O(1)}$.   

We now have the ingredients for computing a median stable matching of $I$.  We start by finding $\mu_M$, the man optimal stable matching.  Then we determine the set $Z= \set{\rho \sucht n_\rho \ge (N+1)/2}$.  By Theorem~\ref{thm:medianSM},  the stable matching obtained by eliminating $Z$ from $\mu_M$ is a median stable matching.  These steps take $O(n^2)$ time.  

The bottleneck for our procedure is the computation of $N$ and $n_\rho$, $\rho \in G(I)$.  It follows that the running time is $2^{O(k^2)} n^{O(1)}$. \end{proof}  

\subsubsection{Finding Sex-Equal and Balanced Stable Matchings}

Given a stable matching $\mu$ of $I$, the satisfaction of an agent $a$ with $\mu$ can be represented by $P_a(\mu(a))$, the rank of $a$'s partner in $\mu$.  The \dft{total satisfaction of the men and women with $\mu$}  then are $S_M(\mu) = \sum_{m \in M} P_m(\mu(m))$ and $S_W(\mu) = \sum_{w \in W} P_w(\mu(w))$ respectively.   A natural strategy for computing a fair stable matching of $I$ is to create a measure for $\mu$ that captures how fair $\mu$ is to the agents of $I$.

In the \dft{sex-equal stable marriage problem (SESM)},  the fairness measure of  $\mu$ is
\[
\delta(\mu) = |S_M(\mu) - S_W(\mu)|
\]
and the goal is to find a stable matching of $I$ whose SESM fairness measure is as small as possible.  Kato~\cite{Kato1993-complexity} showed that SESM is NP-hard. 

In the \dft{balanced stable marriage problem (BSM)}, the fairness measure of $\mu$ is 
\[
\beta(\mu) = \max\set{ S_M(\mu), S_W(\mu)}.
\]
The goal this time is to find a stable matching of $I$ whose BSM fairness measure is as small as possible. Feder~\cite{Feder1995-stable} showed that BSM is NP-hard. 

Gupta et al.~\cite{Gupta2017-treewidth} recently considered the parametrized complexity of several hard variants of the SMP parametrized by the treewidth of the rotation poset.  Specifically, they proved the following result.

\begin{lthm}[Gupta et al.~{\cite[Theorem~3]{Gupta2017-treewidth}}]
  \label{thm:gupta-treewidth}
  The sex equal and balanced stable matching problems are both solvable in time $O(2^{\mathrm{tw}} n^6)$ where $\mathrm{tw}$ is the treewidth of the rotation poset of the SM instance of size $n$. 
\end{lthm}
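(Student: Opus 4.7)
The plan is to reduce both problems to optimization over downsets of the rotation poset and solve via standard dynamic programming on a tree decomposition of the Hasse diagram $H(\calR(I))$. The key observation is that both objectives are expressible as functions of two linear functionals of the indicator vector of the chosen downset. Specifically, for each rotation $\rho = (m_1,w_1),(m_2,w_2),\ldots,(m_\ell,w_\ell)$, define
\[
\Delta_M(\rho) = \sum_{i=1}^\ell \bigl(P_{m_i}(w_{i+1}) - P_{m_i}(w_i)\bigr), \qquad \Delta_W(\rho) = \sum_{i=1}^\ell \bigl(P_{w_{i+1}}(m_i) - P_{w_{i+1}}(m_{i+1})\bigr),
\]
the changes in total men's and women's dissatisfaction upon eliminating $\rho$ from any stable matching where it is exposed. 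By Lemma~\ref{lem:rotation-elimination}, the stable matching $\mu_D$ corresponding to a downset $D\subseteq\calR(I)$ satisfies $S_M(\mu_D)=S_M(\mu_0)+\sum_{\rho\in D}\Delta_M(\rho)$ and $S_W(\mu_D)=S_W(\mu_0)+\sum_{\rho\in D}\Delta_W(\rho)$. All weights are integers of absolute value at most $n^2$ and can be precomputed from the rotation digraph in $O(n^2)$ time. Thus SESM asks for a downset minimizing $|S_M(\mu_0)-S_W(\mu_0)+\sum_{\rho\in D}(\Delta_M(\rho)-\Delta_W(\rho))|$, and BSM for one minimizing $\max\bigl(S_M(\mu_0)+\sum\Delta_M,\,S_W(\mu_0)+\sum\Delta_W\bigr)$.

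Next I would run the standard bag DP on a nice tree decomposition $(T,\{X_t\})$ of width $\mathrm{tw}$ of $H(\calR(I))$. The DP state at node $t$ is indexed by a subset $Y\subseteq X_t$ (recording which rotations currently in the bag are tentatively included in $D$), together with the running partial sums $(s_M,s_W)\in[-n^2,n^2]^2$ over rotations already forgotten plus those in $Y$; the table stores a boolean indicating whether such a state is realizable by some partial downset. Introduce nodes branch on whether the new rotation $\rho$ is included (adjusting $(s_M,s_W)$ by $(\Delta_M(\rho),\Delta_W(\rho))$), forget nodes project out $\rho$ after enforcing the downset constraint against $\rho$'s neighbors in the bag, and join nodes combine states that agree on $Y$ by summing the $\Delta$-contributions of the separately-processed sides. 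Because every Hasse edge lies entirely in some bag, the downset condition $\sigma\prec\rho\wedge\rho\in D\Rightarrow\sigma\in D$ is captured locally at forget steps. At the root, I scan the realizable pairs $(s_M,s_W)$ and evaluate the SESM/BSM objective. With $|X_t|\le\mathrm{tw}+1$, $O(n^4)$ sum pairs, $O(n)$ bags, and $O(\mathrm{tw})$ work per transition, the total runtime is $O(2^{\mathrm{tw}}\cdot n^6)$.

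The main obstacle is the nonlinearity of the BSM objective and the bookkeeping for the downset condition across the decomposition. The former is handled by carrying \emph{both} coordinates $(s_M,s_W)$ in the DP state rather than a single scalar (one could equivalently binary search on a target $\beta$ and ask only for feasibility of $S_M,S_W\le\beta$, shaving a factor). The latter is the crux: one must verify that whenever a rotation $\rho$ is forgotten as ``included,'' all of its predecessors in $H(\calR(I))$ have already been committed-as-included, and symmetrically for ``excluded.'' Since predecessor relations may extend beyond a single bag, correctness relies on the fact that each Hasse edge is contained in some bag, so any conflict between $\rho$ and a predecessor $\sigma$ is detected at the last bag containing both, at which point the forget operation can reject inconsistent assignments. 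With this care, the DP is routine and yields the claimed $O(2^{\mathrm{tw}} n^6)$ bound.
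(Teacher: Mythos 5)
First, a framing note: the paper does not prove this statement---it is imported verbatim from Gupta et al.\ and used as a black box to derive Corollary~\ref{cor:sex-equal-balanced}---so the comparison here is against the cited source rather than an in-paper argument. Your reduction is the right one and matches the standard route: the additivity $S_M(\mu_D) = S_M(\mu_0) + \sum_{\rho\in D}\Delta_M(\rho)$ (and likewise for $S_W$) is correct, because the rank changes caused by eliminating a rotation depend only on the pairs in that rotation and not on the stable matching in which it is exposed, and each man--woman pair lies in at most one rotation. Reducing both objectives to weighted optimization over downsets of $H(\calR(I))$ and running a bag DP on a nice tree decomposition, with the downset constraint checked locally because every Hasse edge lies in some bag, is exactly the intended argument.

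The genuine gap is in the runtime analysis at join nodes. If the table for a bag subset $Y$ is a boolean array over pairs $(s_M,s_W)\in[-n^2,n^2]^2$, a join node must decide, for each target pair, whether it decomposes as a sum of realizable pairs from the two children (minus the doubly counted contribution of $Y$, which you also do not subtract). Done naively this is an $O(n^8)$ two-dimensional convolution per subset per join node, giving $O(2^{\mathrm{tw}}n^9)$ overall, not the claimed $O(2^{\mathrm{tw}}n^6)$; your accounting (``$O(n^4)$ sum pairs, $O(n)$ bags, $O(\mathrm{tw})$ work per transition'') counts states but not the cost of combining them. The parenthetical fix of binary searching on $\beta$ and testing feasibility of $S_M,S_W\le\beta$ does not help, since joint feasibility of two sum constraints still forces you to track both coordinates. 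The repair is to collapse the state to one dimension: for SESM the objective depends only on the scalar $\sum_{\rho\in D}\bigl(\Delta_M(\rho)-\Delta_W(\rho)\bigr)$, so a one-dimensional table suffices; for BSM, index the table by $(Y,s_M)$ and store the \emph{minimum} realizable $s_W$ (valid because $\max(s_M,s_W)$ is monotone in $s_W$ for fixed $s_M$), so that joins become min-plus convolutions along a single axis of length $O(n^2)$, costing $O(n^4)$ per subset and yielding the stated bound. With that modification, and subtracting the bag's contribution once at each join, your argument goes through.
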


Since the pathwidth of a graph is an upper bound to its treewidth, Theorem~\ref{thm:gupta-treewidth} together with Theorem~\ref{thm:k-range-upper} immediately gives the following corollary.

\begin{cor}
  \label{cor:sex-equal-balanced}
   Let $I$ be an SM instance of size $n$ and  $k = \range(I)$. Then computing a sex-equal and balanced stable matching for $I$ can be done in time $2^{O(k^2)} n^6$.
\end{cor}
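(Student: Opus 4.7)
The plan is to combine the two preceding results directly. I would start by observing that for any graph $G$, the treewidth satisfies $\mathrm{tw}(G) \leq \pw(G)$, since every path decomposition is (trivially) a tree decomposition. By Theorem~\ref{thm:k-range-upper}, the rotation poset $\calR(I)$ of an instance $I \in \range(k)$ has $\pw(\calR(I)) \leq 50k^2$, and hence also $\mathrm{tw}(\calR(I)) \leq 50k^2$. Plugging this bound into Theorem~\ref{thm:gupta-treewidth} yields a runtime of $O(2^{50k^2} n^6) = 2^{O(k^2)} n^6$ for each of the sex-equal and balanced stable matching problems.

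The only subtlety to address is the preprocessing cost of actually producing the decomposition that Gupta et al.'s algorithm requires as input. Here I would appeal to Corollary~\ref{cor:k-range-decomp}, which constructs a nice path decomposition of the rotation digraph $G(I)$ of width at most $50k^2$ in time $O(k^2 n + n^2)$. Since the Hasse diagram $H(\calR(I))$ is a subgraph of $G(I)$, restricting this decomposition gives a valid (path, hence tree) decomposition of $\calR(I)$ of the same width. The preprocessing time $O(k^2 n + n^2)$ is dominated by the $O(2^{50k^2} n^6)$ runtime of the Gupta et al.\ algorithm, so the overall bound $2^{O(k^2)} n^6$ follows. There is no real obstacle here; the result is an immediate composition, and the statement in the paper that the corollary is ``immediate'' from the two cited theorems is accurate.
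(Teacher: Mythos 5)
Your proposal is correct and follows exactly the paper's argument: the paper derives this corollary immediately from Theorem~\ref{thm:gupta-treewidth} and Theorem~\ref{thm:k-range-upper} via the observation that pathwidth upper-bounds treewidth. Your additional remark about constructing the decomposition via Corollary~\ref{cor:k-range-decomp} and absorbing the preprocessing cost is a sensible (and harmless) elaboration of the same route.
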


 \section{Discussion and Questions}
\label{sec:conclusions}

\paragraph{Counting stable matchings in graphs}

Our $k$-bounded model is equivalent to SM instances defined on (bipartite) graphs with maximum degree at most $k$. Thus Theorem~\ref{thm:k-bound} implies that $\shSM$ remains $\shBIS$-complete even when restricted to graphs with maximum degree $3$. Since graphs with maximum degree $2$ (i.e., $2$-bounded instances) are disjoint unions of paths and cycles, $\shSM$ can be solved on $2$-bounded instances in polynomial time. (Paths have unique stable matchings, and each cycle can support $1$ or $2$ stable matchings.) Thus $\shSM$ has a sharp hardness threshold in the $k$-bounded model at $k = 3$. Interestingly, $\shBIS$ shows a similar threshold around $k = 6$: counting independent sets in bipartite graphs of maximum degree $k \geq 6$ is $\shBIS$-complete~\cite{Cai2016-bis}, while instances of maximum degree $k \leq 5$ admit an FPRAS, even if vertices on only one side of the bipartition obey the degree bound~\cite{Liu2015-fptas}. Recently, Curticapean et al.~\cite{Curticapean2019-fixed} considered other parameterized variants of $\shBIS$, and showed that the variants admit efficient fixed-parameter algorithms in bounded-degree graphs.

It would be interesting to see if $\shSM$ becomes easy in any ``natural'' restricted graph class.

\begin{que}
  For what families of (bipartite) graphs can $\shSM$ be computed or approximated efficiently? Can $\shSM$ be efficiently computed in planar graphs? Graphs of bounded genus? What rotation posets can be realized by these restricted instances?
\end{que}

It seems likely that planar SM instances (i.e., instances where the graph $G$ of acceptable partners is planar) realize a restricted family of rotation posets. To see why, note that every rotation corresponds to a cycle in $G$, and a rotation $\rho$ can be immediate predecessor of $\rho'$ only if either the corresponding cycles intersect, or there is an edge connecting a node in one cycle to a node in another. This restriction seems to limit what rotation posets are realizable by planar instances. For example, we believe that a poset $\calP$ whose Hasse diagram is a sufficiently large complete bipartite graph minus a perfect matching cannot be realized by any planar instance. Does the family of rotation posets realized by planar SM instances admit a clean description? Can such a characterization be used to solve structural SM problems more efficiently than the general case?

\paragraph{Incomplete preferences and ties}

The $k$-attribute, $(k_1, k_2)$-list, and $k$-range preference models can also be used to describe \dft{SMI instances}---that is, SM instances with incomplete preferences. Specifically, an SMI instance arising from one of these models is an instance for which every preference list is a sub-list of an instance in the model with complete preferences. Our construction for the $(k_1, k_2)$-list model in Section~\ref{sec:k-list} has the property that there are two master preferences for each gender such that all (incomplete) preference lists are sub-lists of one of the two master lists. Thus, \emph{every} finite poset is realizable in the SMI variant of the $(2, 2)$-list model. Since complete preferences are a special case of SMI instances, our $k$-attribute construction also implies that SMI $k$-attribute preferences realize all finite posets.

\begin{que}
  What rotation posets can be realized in the SMI variant of the $k$-range model?
\end{que}

These restricted preference models can also be generalized to allow for preferences with ties: that is, an agent may be indifferent to a choice of several partners. It is well-known that many stable marriage problems become intractible in instances with both incomplete preferences and ties (see, e.g.,~\cite{Manlove2002-hard}).

\begin{que}
  Consider SM instances with ties and incomplete preferences. What stable matching problems remain NP-hard when restricted to instances arising from the $k$-attribute, $(k_1, k_2)$-list, or $k$-range models?
\end{que}

\paragraph{\texorpdfstring{$k$}{k}-attribute and Euclidean Preferences}

Theorem~\ref{thm:k-attr} characterizes the rotation posets realized in the $k$-attribute model for $k \geq 6$, but does not say anything about $\attr(k)$ for $k \leq 5$.


\begin{que}
  What rotation posets are realized by $\attr(k)$ for $k = 2, 3, \ldots, 5$?
\end{que}

We conjecture that $\attr(k)$ realizes arbitrary rotation posets for any $k \geq 3$. This conjecture is consistent with the results of Chebolu et al.~\cite{Chebolu2012-complexity}, who show that $\shSM$ is $\shBIS$-complete in the $k$-attributed model for any $k \geq 3$. Our use of cyclic polytopes in Section~\ref{sec:k-attribute} to establish Theorem~\ref{thm:k-attr} is wasteful in the sense that cyclic polytopes give much stronger guarantees than our result actually requires. Indeed, the construction allows us to embed \emph{any} $2$-dimensional simplicial complex into $\R^6$ in convex position. However, our argument only requires that the $2$-dimensional faces of the complexes corresponding to preference lists output by $\ConstructInstance$ be embedded in convex position. Thus it seems that a more careful analysis of the output of $\ConstructInstance$ could establish Theorem~\ref{thm:k-attr} for smaller values of $k$.

Chebolu et al.~\cite{Chebolu2012-complexity} and K\"unnemann et al.~\cite{Kunnemann2019-subquadratic} also consider the $k$-Euclidean preference model defined by Bogomolnaia and Laslier~\cite{Bogomolnaia2007-euclidean}. In this model, each agent is also associated with a point in $\R^k$. Preferences are determined by the Euclidean distance between the points: $a$ prefers agents in order $b_1, b_2, \ldots, b_n$ where $b_1$ is $a$'s nearest agent, $b_2$ the second nearest, and so on. Chebolu et al.~\cite{Chebolu2012-complexity} showed that $\shSM$ is $\shBIS$ complete in the $k$-Euclidean model for $k \geq 2$.

\begin{que}
  What rotation posets are realized in the $k$-Euclidean model?
\end{que}

We suspect that $k$-Euclidean preferences realize arbitrary posets for a relatively small fixed constant $k$.

K\"unnemann et al.~\cite{Kunnemann2019-subquadratic} also consider an asymmetric version of the $k$-Euclidean model in which each agent $a$ has two associated points in $R^k$: a \emph{position} $\vecx_a$ and an \emph{opinion} $\vecy_a$. The agent $a$ ranks agents in order of their position's distance to $a$'s \emph{opinion}. That is $a$ ranks $b_1$ with $\vecy_{b_1}$ closest to $\vecx_a$ first, and so on.

Our $6$-attribute construction can be modified to realize arbitrary rotation posets in the asymmetric $6$-Euclidean model. In the modified construction, the positions of each $m \in M$ and $w \in W$ are the same as the embedding in Section~\ref{sec:k-attribute} (i.e., each agent corresponds to a vertex of a cyclic polytope). If $m_1, m_2, m_3$ appear on $w$'s preference list in the $3$-bounded instance constructed by $\ConstructInstance$, then she can choose her opinion $\vecx_w$ so that (the positions of) $\vecx_{m_1}$, $\vecx_{m_2}$, and $\vecx_{m_3}$ are the three closest positions to $\vecx_w$ as follows. Let $\vecc_w$ denote the circumcenter of the triangle in the plane determined by $\vecx_{m_1}$, $\vecx_{m_2}$, and $\vecx_{m_3}$, and let $\vecbeta_w \in \R^6$ be the vector defined immediately after~(\ref{eqn:poly}), so that $-\vecbeta_w$ is an outward facing normal vector for the plane containing $\vecx_{m_1}, m_2$, and $m_3$. Thus, for any point $p \in \R^6$ of the form $\vecy = \vecc_w - t \vecbeta_w$ with $t \in \R$, the points $m_1, m_2$ and $m_3$ are all equidistant from $\vecy$. By choosing $t$ sufficiently large, we can ensure that, in fact, $\vecx_{m_1}$, $\vecx_{m_2}$ and $\vecx_{m_3}$ are the three closest points in $M$ to $\vecy$. By taking $\vecy_w$ to be a slight perturbation of such a $\vecy$, we can ensure that for all $m \neq m_1, m_2, m_3$, we have
\[
\abs{\vecx_{m_1} - \vecy_w} < \abs{\vecx_{m_2} - \vecy_w} < \abs{\vecx_{m_3} - \vecy_w} < \abs{\vecx_m - \vecy_w}.
\]
That is, $w$ ranks $m_1$, $m_2$, and $m_3$ (in that order) ahead of all other men. As with the $6$-attribute case, repeating the procedure for all men and women is sufficient to prove that the corresponding asymmetric $6$-Euclidean instance realizes the same rotation poset as the underlying $3$-bounded instance.

\paragraph{\texorpdfstring{$k$}{k}-list Preferences}

While Theorem~\ref{thm:k-list} characterizes the rotation posets realizable by $(k_1, \infty)$-list SM instances, it does not say anything about $(k_1, k_2)$-list preferences for constant $k_2$, nor the original $k$-list preference structure of Bhatnagar et al.~\cite{Bhatnagar2008-sampling}.

\begin{que}
  What rotation posets can be realized in the $(k_1, k_2)$-list model for $k_1, k_2 < n$?
\end{que}

We suspect that for $k_1 = k_2 = 2$, the rotation poset is restricted, but we are unsure of how to prove even this seemingly simple fact. We note that K\"unnemann et al.~\cite{Kunnemann2019-subquadratic} had similar difficulty with the $k$-range model, and posed as an open question whether or not $2$-list preferences admit an $o(n^2)$-time algorithm for finding a stable matching.

\paragraph{\texorpdfstring{$k$}{k}-range Preferences}

Our construction in the proof of Theorem~\ref{thm:k-range-upper} shows that every $k$-range instance has a rotation poset whose pathwidth is at most $50 k^2$. Can this bound be improved?

\begin{que}
  What is the maximum pathwidth of $\calR(I)$ for $I \in \range(k)$? Is it $\Omega(k^2)$? Or can our $O(k^2)$ upper bound be imporved to, say, $O(k)$?
\end{que}

\paragraph{Distributed Stable Matchings}

The stable marriage problem has a natural interpretation as a distributed problem, where each agent is represented by a processor and agents communicate via point-to-point communication. In fact, the Gale-Shapley algorithm~\cite{Gale1962-college} has a natural interpretation in such a computational model. It is straightforward to show that finding a stable matching requires a number of rounds proportional to the network diameter, even when there are no constraints on local computation and communication (i.e., in the LOCAL model described in~\cite{Peleg2000-distributed}). If bathdwidth is restricted to $O(\log n)$ bits per edge per communication round (i.e., the CONGEST model~\cite{Peleg2000-distributed}), finding a stable matching still requires $\Omega(\sqrt{n})$ rounds, even in networks of diameter $O(\log n)$~\cite{Kipnis2009-note}.  Can stable matchings be found faster in distributed models of computation if the preferences are restricted?

\begin{que}
  Consider the distributed stable marriage problem in which each agent is represented by a processor. How many communication rounds are needed to find a stable matching if the preferences are assigned according to the $k$-attribute, $(k_1, k_2)$-list, or $k$-range models?
\end{que}

We note that ``almost stable'' matchings can be computed in $O(1)$ rounds in bounded-degree networks~\cite{Floreen2010-almost} and $\log^{O(1)}(n)$ rounds in general networks~\cite{Ostrovsky2015-fast}. The work of Khanchandani and Wattenhofer~\cite{Khanchandani2017-distributed} shows that stable matchings in the $k$-range model can be computed with less total communication than the Gale-Shapley algorithm.



\urlstyle{same}
\bibliographystyle{plainnat}
\bibliography{restricted-stable-matchings}



\end{document}